\newtheorem{theorem}{Theorem}[section]
\newtheorem{axiom}[theorem]{Axiom}
\newtheorem{conjecture}[theorem]{Conjecture}
\newtheorem{corollary}[theorem]{Corollary}
\newtheorem{definition}[theorem]{Definition}
\newtheorem{example}[theorem]{Example}
\newtheorem{exercise}[theorem]{Exercise}
\newtheorem{lemma}[theorem]{Lemma}
\newtheorem{notation}[theorem]{Notation}
\newtheorem{proposition}[theorem]{Proposition}
\newtheorem{remark}[theorem]{Remark}
\newenvironment{proof}[1][Proof]{\noindent\textbf{#1.} }{\ \rule{0.5em}{0.5em}}
\chardef\@x10\chardef\@xv60
\def\tcitime{
\def\@time{%
  \@minute\time\@hour\@minute\divide\@hour\@xv
  \ifnum\@hour<\@x 0\fi\the\@hour:%
  \multiply\@hour\@xv\advance\@minute-\@hour
  \ifnum\@minute<\@x 0\fi\the\@minute
  }}%
\def\QCTOpt[#1]#2{%
  \def\QCTOptB{#1}
  \def\QCTOptA{#2}
}
\def\QCTNOpt#1{%
  \def\QCTOptA{#1}
  \let\QCTOptB\empty
}
\def\Qct{%
  \@ifnextchar[{%
    \QCTOpt}{\QCTNOpt}
}
\def\QCBOpt[#1]#2{%
  \def\QCBOptB{#1}
  \def\QCBOptA{#2}
}
\def\QCBNOpt#1{%
  \def\QCBOptA{#1}
  \let\QCBOptB\empty
}
\def\Qcb{%
  \@ifnextchar[{%
    \QCBOpt}{\QCBNOpt}
}
\def\PrepCapArgs{%
  \ifx\QCBOptA\empty
    \ifx\QCTOptA\empty
      {}%
    \else
      \ifx\QCTOptB\empty
        {\QCTOptA}%
      \else
        [\QCTOptB]{\QCTOptA}%
      \fi
    \fi
  \else
    \ifx\QCBOptA\empty
      {}%
    \else
      \ifx\QCBOptB\empty
        {\QCBOptA}%
      \else
        [\QCBOptB]{\QCBOptA}%
      \fi
    \fi
  \fi
}
\def\GRAPHICSPS#1{%
 \ifcase\GRAPHICSTYPE
   \special{ps: #1}%
 \or
   \special{language "PS", include "#1"}%
 \fi
}%
\def\graffile#1#2#3#4{%
    \leavevmode
    \raise -#4 \BOXTHEFRAME{%
        \hbox to #2{\raise #3\hbox to #2{\null #1\hfil}}}%
}%
\def\draftbox#1#2#3#4{%
 \leavevmode\raise -#4 \hbox{%
  \frame{\rlap{\protect\tiny #1}\hbox to #2%
   {\vrule height#3 width\z@ depth\z@\hfil}%
  }%
 }%
}%
\newif\ifwasdraft
\def\GRAPHIC#1#2#3#4#5{%
 \ifnum\draft=\@ne\draftbox{#2}{#3}{#4}{#5}%
  \else\graffile{#1}{#3}{#4}{#5}%
  \fi
 }%
\def\addtoLaTeXparams#1{%
    \edef\LaTeXparams{\LaTeXparams #1}}%
\newif\ifBoxFrame \BoxFramefalse
\newif\ifOverFrame \OverFramefalse
\newif\ifUnderFrame \UnderFramefalse
\def\BOXTHEFRAME#1{%
   \hbox{%
      \ifBoxFrame
         \frame{#1}%
      \else
         {#1}%
      \fi
   }%
}
\def\doFRAMEparams#1{\BoxFramefalse\OverFramefalse\UnderFramefalse\readFRAMEparams#1\end}%
\def\readFRAMEparams#1{%
 \ifx#1\end%
  \let\next=\relax
  \else
  \ifx#1i\dispkind=\z@\fi
  \ifx#1d\dispkind=\@ne\fi
  \ifx#1f\dispkind=\tw@\fi
  \ifx#1t\addtoLaTeXparams{t}\fi
  \ifx#1b\addtoLaTeXparams{b}\fi
  \ifx#1p\addtoLaTeXparams{p}\fi
  \ifx#1h\addtoLaTeXparams{h}\fi
  \ifx#1X\BoxFrametrue\fi
  \ifx#1O\OverFrametrue\fi
  \ifx#1U\UnderFrametrue\fi
  \ifx#1w
    \ifnum\draft=1\wasdrafttrue\else\wasdraftfalse\fi
    \draft=\@ne
  \fi
  \let\next=\readFRAMEparams
  \fi
 \next
 }%
\def\IFRAME#1#2#3#4#5#6{%
      \bgroup
      \let\QCTOptA\empty
      \let\QCTOptB\empty
      \let\QCBOptA\empty
      \let\QCBOptB\empty
      #6%
      \parindent=0pt%
      \leftskip=0pt
      \rightskip=0pt
      \setbox0 = \hbox{\QCBOptA}%
      \@tempdima = #1\relax
      \ifOverFrame
          \typeout{This is not implemented yet}%
          \show\HELP
      \else
         \ifdim\wd0>\@tempdima
            \advance\@tempdima by \@tempdima
            \ifdim\wd0 >\@tempdima
               \textwidth=\@tempdima
               \setbox1 =\vbox{%
                  \noindent\hbox to \@tempdima{\hfill\GRAPHIC{#5}{#4}{#1}{#2}{#3}\hfill}\\%
                  \noindent\hbox to \@tempdima{\parbox[b]{\@tempdima}{\QCBOptA}}%
               }%
               \wd1=\@tempdima
            \else
               \textwidth=\wd0
               \setbox1 =\vbox{%
                 \noindent\hbox to \wd0{\hfill\GRAPHIC{#5}{#4}{#1}{#2}{#3}\hfill}\\%
                 \noindent\hbox{\QCBOptA}%
               }%
               \wd1=\wd0
            \fi
         \else
            \ifdim\wd0>0pt
              \hsize=\@tempdima
              \setbox1 =\vbox{%
                \unskip\GRAPHIC{#5}{#4}{#1}{#2}{0pt}%
                \break
                \unskip\hbox to \@tempdima{\hfill \QCBOptA\hfill}%
              }%
              \wd1=\@tempdima
           \else
              \hsize=\@tempdima
              \setbox1 =\vbox{%
                \unskip\GRAPHIC{#5}{#4}{#1}{#2}{0pt}%
              }%
              \wd1=\@tempdima
           \fi
         \fi
         \@tempdimb=\ht1
         \advance\@tempdimb by \dp1
         \advance\@tempdimb by -#2%
         \advance\@tempdimb by #3%
         \leavevmode
         \raise -\@tempdimb \hbox{\box1}%
      \fi
      \egroup%
}%
\def\DFRAME#1#2#3#4#5{%
 \begin{center}
     \let\QCTOptA\empty
     \let\QCTOptB\empty
     \let\QCBOptA\empty
     \let\QCBOptB\empty
     \ifOverFrame 
        #5\QCTOptA\par
     \fi
     \GRAPHIC{#4}{#3}{#1}{#2}{\z@}
     \ifUnderFrame 
        \nobreak\par #5\QCBOptA
     \fi
 \end{center}%
 }%
\def\FFRAME#1#2#3#4#5#6#7{%
 \begin{figure}[#1]%
  \let\QCTOptA\empty
  \let\QCTOptB\empty
  \let\QCBOptA\empty
  \let\QCBOptB\empty
  \ifOverFrame
    #4
    \ifx\QCTOptA\empty
    \else
      \ifx\QCTOptB\empty
        \caption{\QCTOptA}%
      \else
        \caption[\QCTOptB]{\QCTOptA}%
      \fi
    \fi
    \ifUnderFrame\else
      \label{#5}%
    \fi
  \else
    \UnderFrametrue%
  \fi
  \begin{center}\GRAPHIC{#7}{#6}{#2}{#3}{\z@}\end{center}%
  \ifUnderFrame
    #4
    \ifx\QCBOptA\empty
      \caption{}%
    \else
      \ifx\QCBOptB\empty
        \caption{\QCBOptA}%
      \else
        \caption[\QCBOptB]{\QCBOptA}%
      \fi
    \fi
    \label{#5}%
  \fi
  \end{figure}%
 }%
\def\makeactives{
  \catcode`\"=\active
  \catcode`\;=\active
  \catcode`\:=\active
  \catcode`\'=\active
  \catcode`\~=\active
}
   \gdef\activesoff{%
      \def"{\string"}
      \def;{\string;}
      \def:{\string:}
      \def'{\string'}
      \def~{\string~}
    }
\def\FRAME#1#2#3#4#5#6#7#8{%
 \bgroup
 \@ifundefined{bbl@deactivate}{}{\activesoff}
 \ifnum\draft=\@ne
   \wasdrafttrue
 \else
   \wasdraftfalse%
 \fi
 \def\LaTeXparams{}%
 \dispkind=\z@
 \def\LaTeXparams{}%
 \doFRAMEparams{#1}%
 \ifnum\dispkind=\z@\IFRAME{#2}{#3}{#4}{#7}{#8}{#5}\else
  \ifnum\dispkind=\@ne\DFRAME{#2}{#3}{#7}{#8}{#5}\else
   \ifnum\dispkind=\tw@
    \edef\@tempa{\noexpand\FFRAME{\LaTeXparams}}%
    \@tempa{#2}{#3}{#5}{#6}{#7}{#8}%
    \fi
   \fi
  \fi
  \ifwasdraft\draft=1\else\draft=0\fi{}%
  \egroup
 }%
\def\TEXUX#1{"texux"}
\long\def\QQQ#1#2{%
     \long\expandafter\def\csname#1\endcsname{#2}}%
\long\def\QQA#1#2{}%
\def\QTR#1#2{{\csname#1\endcsname #2}}
\def\EXPAND#1[#2]#3{}%
\def\NOEXPAND#1[#2]#3{}%
\def\LaTeXparent#1{}%
\def\ChildStyles#1{}%
\def\ChildDefaults#1{}%
\def\QTagDef#1#2#3{}%
\def\QQfnmark#1{\footnotemark}
\def\makeatletter\input gnuindex.sty\makeatother\makeindex{\makeatletter\input gnuindex.sty\makeatother\makeindex}%
\def\initial#1{\bigbreak{\raggedright\large\bf #1}\kern 2\p@\penalty3000}}%
 \def\abstract{%
  \if@twocolumn
   \section*{Abstract (Not appropriate in this style!)}%
   \else \small 
   \begin{center}{\bf Abstract\vspace{-.5em}\vspace{\z@}}\end{center}%
   \quotation 
   \fi
  }%
   \def\registered{\relax\ifmmode{}\r@gistered
                    \else$\m@th\r@gistered$\fi}%
 \def\r@gistered{^{\ooalign
  {\hfil\raise.07ex\hbox{$\scriptstyle\rm\text{R}$}\hfil\crcr
  \mathhexbox20D}}}}{}%
\newdimen\theight
\def\Column{%
 \vadjust{\setbox\z@=\hbox{\scriptsize\quad\quad tcol}%
  \theight=\ht\z@\advance\theight by \dp\z@\advance\theight by \lineskip
  \kern -\theight \vbox to \theight{%
   \rightline{\rlap{\box\z@}}%
   \vss
   }%
  }%
 }%
\def\qed{%
 \ifhmode\unskip\nobreak\fi\ifmmode\ifinner\else\hskip5\p@\fi\fi
 \hbox{\hskip5\p@\vrule width4\p@ height6\p@ depth1.5\p@\hskip\p@}%
 }%
\def\miss{\hbox{\vrule height2\p@ width 2\p@ depth\z@}}%
\def\tcol#1{{\baselineskip=6\p@ \vcenter{#1}} \Column}  %
\def\newfmtname{LaTeX2e}
\def\chkcompat{%
   \if@compatibility
   \else
     \usepackage{latexsym}
   \fi
}
  \DeclareOldFontCommand{\rm}{\normalfont\rmfamily}{\mathrm}
  \DeclareOldFontCommand{\sf}{\normalfont\sffamily}{\mathsf}
  \DeclareOldFontCommand{\tt}{\normalfont\ttfamily}{\mathtt}
  \DeclareOldFontCommand{\bf}{\normalfont\bfseries}{\mathbf}
  \DeclareOldFontCommand{\it}{\normalfont\itshape}{\mathit}
  \DeclareOldFontCommand{\sl}{\normalfont\slshape}{\@nomath\sl}
  \DeclareOldFontCommand{\sc}{\normalfont\scshape}{\@nomath\sc}
\def\alpha{{\Greekmath 010B}}%
\def\beta{{\Greekmath 010C}}%
\def\gamma{{\Greekmath 010D}}%
\def\delta{{\Greekmath 010E}}%
\def\epsilon{{\Greekmath 010F}}%
\def\zeta{{\Greekmath 0110}}%
\def\eta{{\Greekmath 0111}}%
\def\theta{{\Greekmath 0112}}%
\def\iota{{\Greekmath 0113}}%
\def\kappa{{\Greekmath 0114}}%
\def\lambda{{\Greekmath 0115}}%
\def\mu{{\Greekmath 0116}}%
\def\nu{{\Greekmath 0117}}%
\def\xi{{\Greekmath 0118}}%
\def\pi{{\Greekmath 0119}}%
\def\rho{{\Greekmath 011A}}%
\def\sigma{{\Greekmath 011B}}%
\def\tau{{\Greekmath 011C}}%
\def\upsilon{{\Greekmath 011D}}%
\def\phi{{\Greekmath 011E}}%
\def\chi{{\Greekmath 011F}}%
\def\psi{{\Greekmath 0120}}%
\def\omega{{\Greekmath 0121}}%
\def\varepsilon{{\Greekmath 0122}}%
\def\vartheta{{\Greekmath 0123}}%
\def\varpi{{\Greekmath 0124}}%
\def\varrho{{\Greekmath 0125}}%
\def\varsigma{{\Greekmath 0126}}%
\def\varphi{{\Greekmath 0127}}%
\def\nabla{{\Greekmath 0272}}
\def\FindBoldGroup{%
   {\setbox0=\hbox{$\mathbf{x\global\edef\theboldgroup{\the\mathgroup}}$}}%
}
\def\Greekmath#1#2#3#4{%
    \if@compatibility
        \ifnum\mathgroup=\symbold
           \mathchoice{\mbox{\boldmath$\displaystyle\mathchar"#1#2#3#4$}}%
                      {\mbox{\boldmath$\textstyle\mathchar"#1#2#3#4$}}%
                      {\mbox{\boldmath$\scriptstyle\mathchar"#1#2#3#4$}}%
                      {\mbox{\boldmath$\scriptscriptstyle\mathchar"#1#2#3#4$}}%
        \else
           \mathchar"#1#2#3#4%
        \fi 
    \else 
        \FindBoldGroup
        \ifnum\mathgroup=\theboldgroup 
           \mathchoice{\mbox{\boldmath$\displaystyle\mathchar"#1#2#3#4$}}%
                      {\mbox{\boldmath$\textstyle\mathchar"#1#2#3#4$}}%
                      {\mbox{\boldmath$\scriptstyle\mathchar"#1#2#3#4$}}%
                      {\mbox{\boldmath$\scriptscriptstyle\mathchar"#1#2#3#4$}}%
        \else
           \mathchar"#1#2#3#4%
        \fi     	    
	  \fi}
\newif\ifGreekBold  \GreekBoldfalse
\let\SAVEPBF=\pbf
\def\pbf{\GreekBoldtrue\SAVEPBF}%
  \newcounter{equationnumber}  
  \def\mathletters{%
     \addtocounter{equation}{1}
     \edef\@currentlabel{\theequation}%
     \setcounter{equationnumber}{\c@equation}
     \setcounter{equation}{0}%
     \edef\theequation{\@currentlabel\noexpand\alph{equation}}%
  }
    \def\BibTeX{{\rm B\kern-.05em{\sc i\kern-.025em b}\kern-.08em
                 T\kern-.1667em\lower.7ex\hbox{E}\kern-.125emX}}}{}%
\def\AmS{{\protect\usefont{OMS}{cmsy}{m}{n}%
                A\kern-.1667em\lower.5ex\hbox{M}\kern-.125emS}}}{}%
\let\DOTSI\relax
\def\RIfM@{\relax\ifmmode}%
\def\FN@{\futurelet\next}%
\def\iint{\DOTSI\intno@\tw@\FN@\ints@}%
\def\iiint{\DOTSI\intno@\thr@@\FN@\ints@}%
\def\iiiint{\DOTSI\intno@4 \FN@\ints@}%
\def\idotsint{\DOTSI\intno@\z@\FN@\ints@}%
\def\ints@{\findlimits@\ints@@}%
\newif\iflimtoken@
\newif\iflimits@
\def\findlimits@{\limtoken@true\ifx\next\limits\limits@true
 \else\ifx\next\nolimits\limits@false\else
 \limtoken@false\ifx\ilimits@\nolimits\limits@false\else
 \ifinner\limits@false\else\limits@true\fi\fi\fi\fi}%
\def\multint@{\int\ifnum\intno@=\z@\intdots@                          
 \else\intkern@\fi                                                    
 \ifnum\intno@>\tw@\int\intkern@\fi                                   
 \ifnum\intno@>\thr@@\int\intkern@\fi                                 
 \int}
\def\multintlimits@{\intop\ifnum\intno@=\z@\intdots@\else\intkern@\fi
 \ifnum\intno@>\tw@\intop\intkern@\fi
 \ifnum\intno@>\thr@@\intop\intkern@\fi\intop}%
\def\intic@{%
    \mathchoice{\hskip.5em}{\hskip.4em}{\hskip.4em}{\hskip.4em}}%
\def\negintic@{\mathchoice
 {\hskip-.5em}{\hskip-.4em}{\hskip-.4em}{\hskip-.4em}}%
\def\ints@@{\iflimtoken@                                              
 \def\ints@@@{\iflimits@\negintic@
   \mathop{\intic@\multintlimits@}\limits                             
  \else\multint@\nolimits\fi                                          
  \eat@}
 \else                                                                
 \def\ints@@@{\iflimits@\negintic@
  \mathop{\intic@\multintlimits@}\limits\else
  \multint@\nolimits\fi}\fi\ints@@@}%
\def\intkern@{\mathchoice{\!\!\!}{\!\!}{\!\!}{\!\!}}%
\def\plaincdots@{\mathinner{\cdotp\cdotp\cdotp}}%
\def\intdots@{\mathchoice{\plaincdots@}%
 {{\cdotp}\mkern1.5mu{\cdotp}\mkern1.5mu{\cdotp}}%
 {{\cdotp}\mkern1mu{\cdotp}\mkern1mu{\cdotp}}%
 {{\cdotp}\mkern1mu{\cdotp}\mkern1mu{\cdotp}}}%
\def\RIfM@{\relax\protect\ifmmode}
\def\text{\RIfM@\expandafter\text@\else\expandafter\mbox\fi}
\let\nfss@text\text
\def\text@#1{\mathchoice
   {\textdef@\displaystyle\f@size{#1}}%
   {\textdef@\textstyle\tf@size{\firstchoice@false #1}}%
   {\textdef@\textstyle\sf@size{\firstchoice@false #1}}%
   {\textdef@\textstyle \ssf@size{\firstchoice@false #1}}%
   \glb@settings}
\def\textdef@#1#2#3{\hbox{{%
                    \everymath{#1}%
                    \let\f@size#2\selectfont
                    #3}}}
\newif\iffirstchoice@
\def\Let@{\relax\iffalse{\fi\let\\=\cr\iffalse}\fi}%
\def\vspace@{\def\vspace##1{\crcr\noalign{\vskip##1\relax}}}%
\def\multilimits@{\bgroup\vspace@\Let@
 \baselineskip\fontdimen10 \scriptfont\tw@
 \advance\baselineskip\fontdimen12 \scriptfont\tw@
 \lineskip\thr@@\fontdimen8 \scriptfont\thr@@
 \lineskiplimit\lineskip
 \vbox\bgroup\ialign\bgroup\hfil$\m@th\scriptstyle{##}$\hfil\crcr}%
\def\Sb{_\multilimits@}%
\def\endSb{\crcr\egroup\egroup\egroup}%
\def\Sp{^\multilimits@}%
\newdimen\ex@
\def\rightarrowfill@#1{$#1\m@th\mathord-\mkern-6mu\cleaders
 \hbox{$#1\mkern-2mu\mathord-\mkern-2mu$}\hfill
 \mkern-6mu\mathord\rightarrow$}%
\def\leftarrowfill@#1{$#1\m@th\mathord\leftarrow\mkern-6mu\cleaders
 \hbox{$#1\mkern-2mu\mathord-\mkern-2mu$}\hfill\mkern-6mu\mathord-$}%
\def\leftrightarrowfill@#1{$#1\m@th\mathord\leftarrow
\mkern-6mu\cleaders
 \hbox{$#1\mkern-2mu\mathord-\mkern-2mu$}\hfill
 \mkern-6mu\mathord\rightarrow$}%
\def\overrightarrow{\mathpalette\overrightarrow@}%
\def\overrightarrow@#1#2{\vbox{\ialign{##\crcr\rightarrowfill@#1\crcr
 \noalign{\kern-\ex@\nointerlineskip}$\m@th\hfil#1#2\hfil$\crcr}}}%
\def\overleftarrow{\mathpalette\overleftarrow@}%
\def\overleftarrow@#1#2{\vbox{\ialign{##\crcr\leftarrowfill@#1\crcr
 \noalign{\kern-\ex@\nointerlineskip}$\m@th\hfil#1#2\hfil$\crcr}}}%
\def\overleftrightarrow{\mathpalette\overleftrightarrow@}%
\def\overleftrightarrow@#1#2{\vbox{\ialign{##\crcr
   \leftrightarrowfill@#1\crcr
 \noalign{\kern-\ex@\nointerlineskip}$\m@th\hfil#1#2\hfil$\crcr}}}%
\def\underrightarrow{\mathpalette\underrightarrow@}%
\def\underrightarrow@#1#2{\vtop{\ialign{##\crcr$\m@th\hfil#1#2\hfil
  $\crcr\noalign{\nointerlineskip}\rightarrowfill@#1\crcr}}}%
\def\underleftarrow{\mathpalette\underleftarrow@}%
\def\underleftarrow@#1#2{\vtop{\ialign{##\crcr$\m@th\hfil#1#2\hfil
  $\crcr\noalign{\nointerlineskip}\leftarrowfill@#1\crcr}}}%
\def\underleftrightarrow{\mathpalette\underleftrightarrow@}%
\def\underleftrightarrow@#1#2{\vtop{\ialign{##\crcr$\m@th
  \hfil#1#2\hfil$\crcr
 \noalign{\nointerlineskip}\leftrightarrowfill@#1\crcr}}}%
\def\qopnamewl@#1{\mathop{\operator@font#1}\nlimits@}
\let\nlimits@\displaylimits
\def\setboxz@h{\setbox\z@\hbox}
\def\varlim@#1#2{\mathop{\vtop{\ialign{##\crcr
 \hfil$#1\m@th\operator@font lim$\hfil\crcr
 \noalign{\nointerlineskip}#2#1\crcr
 \noalign{\nointerlineskip\kern-\ex@}\crcr}}}}
 \def\rightarrowfill@#1{\m@th\setboxz@h{$#1-$}\ht\z@\z@
  $#1\copy\z@\mkern-6mu\cleaders
  \hbox{$#1\mkern-2mu\box\z@\mkern-2mu$}\hfill
  \mkern-6mu\mathord\rightarrow$}
\def\leftarrowfill@#1{\m@th\setboxz@h{$#1-$}\ht\z@\z@
  $#1\mathord\leftarrow\mkern-6mu\cleaders
  \hbox{$#1\mkern-2mu\copy\z@\mkern-2mu$}\hfill
  \mkern-6mu\box\z@$}
\def\projlim{\qopnamewl@{proj\,lim}}
\def\injlim{\qopnamewl@{inj\,lim}}
\def\varinjlim{\mathpalette\varlim@\rightarrowfill@}
\def\varprojlim{\mathpalette\varlim@\leftarrowfill@}
\def\varliminf{\mathpalette\varliminf@{}}
\def\varliminf@#1{\mathop{\underline{\vrule\@depth.2\ex@\@width\z@
   \hbox{$#1\m@th\operator@font lim$}}}}
\def\varlimsup{\mathpalette\varlimsup@{}}
\def\varlimsup@#1{\mathop{\overline
  {\hbox{$#1\m@th\operator@font lim$}}}}
\def\align{\@verbatim \frenchspacing\@vobeyspaces \@alignverbatim
You are using the "align" environment in a style in which it is not defined.}
\let\csname endalign*\endcsname =\endtrivlist
\def\alignat{\@verbatim \frenchspacing\@vobeyspaces \@alignatverbatim
You are using the "alignat" environment in a style in which it is not defined.}
\let\csname endalignat*\endcsname =\endtrivlist
\def\xalignat{\@verbatim \frenchspacing\@vobeyspaces \@xalignatverbatim
You are using the "xalignat" environment in a style in which it is not defined.}
\let\csname endxalignat*\endcsname =\endtrivlist
\def\gather{\@verbatim \frenchspacing\@vobeyspaces \@gatherverbatim
You are using the "gather" environment in a style in which it is not defined.}
\let\csname endgather*\endcsname =\endtrivlist
\def\multiline{\@verbatim \frenchspacing\@vobeyspaces \@multilineverbatim
You are using the "multiline" environment in a style in which it is not defined.}
\let\csname endmultiline*\endcsname =\endtrivlist
\def\arrax{\@verbatim \frenchspacing\@vobeyspaces \@arraxverbatim
You are using a type of "array" construct that is only allowed in AmS-LaTeX.}
\def\tabulax{\@verbatim \frenchspacing\@vobeyspaces \@tabulaxverbatim
You are using a type of "tabular" construct that is only allowed in AmS-LaTeX.}
\let\csname endarrax*\endcsname =\endtrivlist
\let\csname endtabulax*\endcsname =\endtrivlist
\def\@@eqncr{\let\@tempa\relax
    \ifcase\@eqcnt \def\@tempa{& & &}\or \def\@tempa{& &}%
      \else \def\@tempa{&}\fi
     \@tempa
     \if@eqnsw
        \iftag@
           \@taggnum
        \else
           \@eqnnum\stepcounter{equation}%
        \fi
     \fi
     \global\tag@false
     \global\@eqnswtrue
     \global\@eqcnt\z@\cr}
 \def\endequation{%
     \ifmmode\ifinner 
      \iftag@
        \addtocounter{equation}{-1} 
        $\hfil
           \displaywidth\linewidth\@taggnum\egroup \endtrivlist
        \global\tag@false
        \global\@ignoretrue   
      \else
        $\hfil
           \displaywidth\linewidth\@eqnnum\egroup \endtrivlist
        \global\tag@false
        \global\@ignoretrue 
      \fi
     \else   
      \iftag@
        \addtocounter{equation}{-1} 
        \eqno \hbox{\@taggnum}
        \global\tag@false%
        $$\global\@ignoretrue
      \else
        \eqno \hbox{\@eqnnum}
        $$\global\@ignoretrue
      \fi
     \fi\fi
 } 
 \newif\iftag@ \tag@false
 \def\tag{\@ifnextchar*{\@tagstar}{\@tag}}
 \def\@tag#1{%
     \global\tag@true
     \global\def\@taggnum{(#1)}}
 \def\@tagstar*#1{%
     \global\tag@true
     \global\def\@taggnum{#1}%
}
\begin{document}

\title{Quantum Dynamics Generated by Long-Range Interactions for Lattice
Fermions and Quantum Spins}
\author{J.-B. Bru \and W. de Siqueira Pedra}
\date{\today }
\maketitle

\begin{abstract}
\bigskip We study the macroscopic dynamics of fermion and quantum-spin
systems with long-range, or mean-field, interactions, which turns out to be
equivalent to an intricate combination of classical and short-range quantum
dynamics. In this paper we focus on the \emph{quantum} part of the
long-range macroscopic dynamics. The classical part is studied in a
companion paper. Altogether, the results obtained are far beyond previous
ones and required the development of a suitable mathematical framework. The
entanglement of classical and quantum worlds is noteworthy, opening new
theoretical perspectives, and is shown here to be a consequence of the
highly non-local character of long-range, or mean-field,
interactions.\bigskip

\noindent \textit{Dedicated to V.A. Zagrebnov for his important
contributions to the mathematics of quantum many-body theory.}\bigskip
\bigskip

\noindent \textbf{Keywords:} Interacting fermions, self-consitency
equations, quantum-spin, classical and quantum dynamics, extended quantum
mechanics. \bigskip

\noindent \textbf{AMS Subject Classification:} 82C10, 37K60, 82C05
\end{abstract}

\setcounter{tocdepth}{2}
\tableofcontents%

\section{Introduction}

Following \cite{BruPedra-MFII}, we pursue our study on macroscopic dynamics
of fermion and quantum-spin systems with long-range, or mean-field,
interactions. In \cite{BruPedra-MFII} we only focus on the (effective)
classical part of this dynamics. In the current paper we study its
(effective short-range) quantum part in detail. The results obtained are far
beyond previous ones because the permutation-invariance of lattice-fermion
or quantum-spin systems is \emph{not} required:

\begin{itemize}
\item The short-range part of the corresponding Hamiltonian is very general
since only a sufficiently strong polynomial decay of its interactions and
translation invariance are necessary.

\item The long-range part is also very general, being an infinite sum (over $%
n$) of mean-field terms of order $n\in \mathbb{N}$. In fact, even for
permutation-invariant systems, the class of long-range, or mean-field,
interactions we are able to handle is much larger than what was previously
studied.

\item The initial state is only required to be periodic. By \cite[%
Proposition 2.2]{BruPedra-MFII}, observe that the set of all such initial
states is weak$^{\ast }$-dense within the set of all even states, the
physically relevant ones.
\end{itemize}

\noindent For an exhaustive historical discussion on fermion or quantum-spin
systems with long-range, or mean-field, interactions we refer to \cite[%
Section 1]{BruPedra-MFII}. Here, we add several observations concerning the
physical relevance of long-range interactions in Physics.

The most general form of a translation-invariant model for fermions (with
spin set $\mathrm{S}$)\ in a cubic box $\Lambda _{L}\doteq \{\mathbb{Z}\cap
\lbrack -L,L]\}^{3}$ of volume $|\Lambda _{L}|$, $L\in \mathbb{N}$, with a
quartic (in the fermionic fields) gauge- and translation-invariant
interaction is given in momentum space by 
\begin{eqnarray}
H &=&\underset{k\in \Lambda _{L}^{\ast },\ \mathrm{s}\in \mathrm{S}}{\sum }%
\left( \varepsilon _{k}-\mu \right) \tilde{a}_{k}^{\ast }\tilde{a}_{k} 
\notag \\
&&+\frac{1}{\left\vert \Lambda _{L}\right\vert }\underset{\mathrm{s}_{1},%
\mathrm{s}_{2},\mathrm{s}_{3},\mathrm{s}_{4}\in \mathrm{S}}{\underset{%
k,k^{\prime },q\in \Lambda _{L}^{\ast }}{\sum }}g_{\mathrm{s}_{1},\mathrm{s}%
_{2},\mathrm{s}_{3},\mathrm{s}_{4}}\left( k,k^{\prime },q\right) \tilde{a}%
_{k+q,\mathrm{s}_{1}}^{\ast }\tilde{a}_{k^{\prime }-q,\mathrm{s}_{2}}^{\ast }%
\tilde{a}_{k^{\prime },\mathrm{s}_{3}}\tilde{a}_{k,\mathrm{s}_{4}}\ .
\label{hamil general}
\end{eqnarray}%
See, for instance, \cite[Eq. (2.1)]{Metzner}. Here, $\Lambda _{L}^{\ast }$
is the reciprocal lattice of quasi-momenta (periodic boundary conditions)
associated with $\Lambda _{L}$ and the operator $\tilde{a}_{k,\mathrm{s}%
}^{\ast }$ (respectively $\tilde{a}_{k,\mathrm{s}}$) creates (respectively
annihilates) a fermion with spin $\mathrm{s}\in \mathrm{S}$ and (quasi-)
momentum $k\in \Lambda _{L}^{\ast }$. The function $\varepsilon _{k}$
represents the kinetic energy of a fermion with (quasi-) momentum $k$ and
the real number $\mu $ is the chemical potential. The last term of (\ref%
{hamil general}) corresponds to a general translation-invariant two-body
interaction written in the (quasi-) momentum space.

One important example of a lattice-fermion system with a long-range
interaction is given in the scope of the celebrated BCS theory -- proposed
in the late 1950s (1957) to explain conventional type I superconductors. The
lattice version of this theory is obtained from (\ref{hamil general}) by
taking $\mathrm{S}\doteq \{\uparrow ,\downarrow \}$ and imposing 
\begin{equation*}
g_{\mathrm{s}_{1},\mathrm{s}_{2},\mathrm{s}_{3},\mathrm{s}_{4}}\left(
k,k^{\prime },q\right) =\delta _{k,-k^{\prime }}\delta _{\mathrm{s}%
_{1},\uparrow }\delta _{\mathrm{s}_{2},\downarrow }\delta _{\mathrm{s}%
_{3},\downarrow }\delta _{\mathrm{s}_{4},\uparrow }f\left( k,-k,q\right)
\end{equation*}%
for some function $f$: It corresponds to the so-called (reduced) BCS\
Hamiltonian%
\begin{equation}
\mathrm{H}_{\Lambda }^{BCS}\doteq \sum\limits_{k\in \Lambda _{L}^{\ast
}}\left( \varepsilon _{k}-\mu \right) \left( \tilde{a}_{k,\uparrow }^{\ast }%
\tilde{a}_{k,\uparrow }+\tilde{a}_{k,\downarrow }^{\ast }\tilde{a}%
_{k,\downarrow }\right) -\frac{1}{\left\vert \Lambda _{L}\right\vert }%
\sum_{k,q\in \Lambda _{L}^{\ast }}\gamma _{k,q}\tilde{a}_{k,\uparrow }^{\ast
}\tilde{a}_{-k,\downarrow }^{\ast }\tilde{a}_{-q,\downarrow }\tilde{a}%
_{q,\uparrow }\ ,  \label{Hamiltonian BCS}
\end{equation}%
where $\gamma _{k,q}$ is a positive\footnote{%
The positivity of $\gamma _{k,q}$ imposes constraints on the choice of the
function $f$.} function. Because of the term $\delta _{k,-k^{\prime }}$, the
interaction of this model has a long-range character, in position space. The
simple choice $\gamma _{k,q}\doteq \gamma >0$ in (\ref{Hamiltonian BCS}) is
still very interesting since, even when $\varepsilon _{k}=0$, the BCS\
Hamiltonian qualitatively displays most of basic properties of real
conventional type I superconductors. See, e.g., \cite[Chapter VII, Section 4]%
{Thou}. Written in the $x$-space, the BCS interaction in (\ref{Hamiltonian
BCS}) is, in this case, equal to 
\begin{equation}
-\frac{\gamma }{\left\vert \Lambda _{L}\right\vert }\sum_{k,q\in \Lambda
_{L}^{\ast }}\tilde{a}_{k,\uparrow }^{\ast }\tilde{a}_{-k,\downarrow }^{\ast
}\tilde{a}_{-q,\downarrow }\tilde{a}_{q,\uparrow }=-\frac{\gamma }{%
\left\vert \Lambda _{L}\right\vert }\sum_{x,y\in \Lambda _{L}}a_{x,\uparrow
}^{\ast }a_{x,\downarrow }^{\ast }a_{y,\downarrow }a_{y,\uparrow }\ ,
\label{long range interactions}
\end{equation}%
the operators $a_{x,\mathrm{s}}^{\ast }$,$a_{x,\mathrm{s}}$ being
respectively the creation and annihilation operators of a fermion with spin $%
\mathrm{s}\in \{\uparrow ,\downarrow \}$ at lattice site $x\in \Lambda _{L}$%
. The right-hand side of the equality explicitly shows the long-range
character of the interaction. It is a mean-field interaction since%
\begin{equation*}
\frac{1}{\left\vert \Lambda _{L}\right\vert }\sum_{x,y\in \Lambda
_{L}}a_{x,\uparrow }^{\ast }a_{x,\downarrow }^{\ast }a_{y,\downarrow
}a_{y,\uparrow }=\sum_{y\in \Lambda _{L}}\left( \frac{1}{\left\vert \Lambda
_{L}\right\vert }\sum_{x\in \Lambda _{L}}a_{x,\uparrow }^{\ast
}a_{x,\downarrow }^{\ast }\right) a_{y,\downarrow }a_{y,\uparrow }\ .
\end{equation*}%
The (reduced) BCS\ Hamiltonian with $\gamma _{k,q}\doteq \gamma >0$ is an
important, albeit very elementary, example of the far more general case
treated in this paper.

Long-range, or mean-field, effective models are essential in condensed
matter physics to study, from microscopic considerations, macroscopic
phenomena like superconductivity. They come from different approximations or
Ans\"{a}tze, like the choice $\gamma _{k,q}\doteq \gamma >0$. The general
form of the (effective) BCS interaction in (\ref{Hamiltonian BCS}) comes
from the celebrated Fr\"{o}hlich electron-phonon interactions. What's more,
they are possibly not merely effective interactions.

Such models capture surprisingly well many phenomena in condensed matter
physics. For instance, recall that the BCS interaction (\ref{long range
interactions}) allows us to qualitatively display most of basic properties
of conventional superconductors \cite[Chapter VII, Section 4]{Thou}. Ergo,
one could wonder whether such interactions may have a more fundamental
physical relevance. Such a question is usually not addressed, because these
interactions seem to break the spacial locality of Einstein's relativity.
For instance, the BCS interaction (\ref{long range interactions}) can be
seen as a kinetic term for fermion pairs that can hop from $y\in \Lambda
_{L} $ to \emph{any} other lattice site $x\in \Lambda _{L}$, for each $L\in 
\mathbb{N}$.

This non-locality property is reminiscent of the inherent non-locality of
quantum mechanics, highlighted by Einstein, Podolsky and Rosen\ with the
celebrated EPR\ paradox. Philosophically, this general issue challenges
causality, in its local sense, as well as the notion of a material object%
\footnote{%
According to the spatio-temporal identity of classical mechanics, the same
physical object cannot be at the same time on two distinct points of the
phase space. This refers to Leibniz's Principle of Identity of
Indiscernibles \cite[p. 1]{FK}. The spatio-temporal identity of classical
mechanics is questionable in quantum mechanics. See, e.g., \cite{FK}.}. In 
\cite{Einstein1}, Einstein says the following: \bigskip

\textquotedblleft \textit{If one asks what, irrespective of quantum
mechanics, is characteristic of the world of ideas of physics, one is first
of all struck by the following: the concepts of physics relate to a real
outside world... it is further characteristic of these physical objects that
they are thought of as a range in a space-time continuum. An essential
aspect of this arrangement of things in physics is that they lay claim, at a
certain time, to an existence independent of one another, provided these
objects \textquotedblleft are situated in different parts of
space\textquotedblright . }

\textit{The following idea characterizes the relative independence of
objects far apart in space (A and B): external influence on A has no direct
influence on B... }

\textit{There seems to me no doubt that those physicists who regard the
descriptive methods of quantum mechanics as definitive in principle would
react to this line of thought in the following way: they would drop the
requirement... for the independent existence of the physical reality present
in different parts of space; they would be justified in pointing out that
the quantum theory nowhere makes explicit use of this requirement. }

\textit{I admit this, but would point out: when I consider the physical
phenomena known to me, and especially those who are being so successfully
encompassed by quantum mechanics, I still cannot find any fact anywhere
which would make it appear likely that (that) requirement will have to be
abandoned. }

\textit{I am therefore inclined to believe that the description of quantum
mechanics... has to be regarded as an incomplete and indirect description of
reality, to be replaced at some later date by a more complete and direct
one.\textquotedblright }\bigskip

The debate on non-locality in Physics, experimentally shown, refers to the
existence of quantum entanglement, used in quantum information theory. For a
discussion on locality and realism in quantum mechanics, see, e.g., \cite%
{Aspect} by A. Aspect, who is one of the main initiators of experimental
studies on quantum entanglement, in the beginning of the 1980s.

The non-locality of long-range, or mean-field, interactions like the BCS
interaction\footnote{%
The strength of the BCS interaction (\ref{long range interactions}) between
two points of the space does not decay at large distances.} (\ref{long range
interactions}) can be seen as an instance of the (controversial) intrinsic
non-locality of quantum physics. Mean field interactions are thus usually
not considered by the physics community as being fundamental interactions,
in order to avoid polemics. We partially agree with this position and see
long-range interactions as possibly resulting from (more fundamental)
interactions with (bosonic) mediators, like phonons in conventional
superconductivity.

Nonetheless, a long-range interaction like (\ref{long range interactions}),
being quantum mechanical, does not refer to an actuality (in Aristotle's
sense), \emph{but only to a potentiality}. Physical properties of any
(energy-conserving) physical system do not just depend on its Hamiltonian
but also on its state which accounts for the \textquotedblleft
environmental\textquotedblright\ part of the system: This situation is
analogous to the epigenetics\footnote{%
Quoting \cite{Epigenetics}: \textquotedblleft \textit{Epigenetics is
typically defined as the study of heritable changes in gene expression that
are not due to changes in DNA sequence. Diverse biological properties can be
affected by epigenetic mechanisms: for example, the morphology of flowers
and eye colour in fruitflies.}\textquotedblright} showing that the DNA
sequence\ is only a set of constraints and potentialities, the physical
realizations of which depend on the history and environment of the
corresponding organism. For instance, in a lattice-fermion system described
by the so-called (reduced) BCS\ Hamiltonian with $\gamma _{k,q}=\gamma $,
pairs of particles may (almost) never hop in arbitrarily large distances if
the state\footnote{%
I.e., a positive and normalized continuous functional on the CAR algebra.} $%
\rho $ of the corresponding system is such that 
\begin{equation*}
\lim_{L\rightarrow \infty }\frac{1}{\left\vert \Lambda _{L}\right\vert }%
\sum_{x,y\in \Lambda _{L}}\rho \left( a_{x,\uparrow }^{\ast }a_{x,\downarrow
}^{\ast }a_{y,\downarrow }a_{y,\uparrow }\right) =0\ .
\end{equation*}%
This is the case for equilibrium states of this model at sufficiently high
temperatures. It is thus too reductive to a priori eliminate such
interactions from \textquotedblleft fundamental\textquotedblright\
Hamiltonians of physical systems.

On the top of that, as is well-known, the thermodynamic limit of mean-field
dynamics is repres%
\-%
entation-%
\-%
dependent. This is basically Haag's original argument proposed in 1962 \cite%
{haag62} for the BCS model. In fact, the description of the full dynamics
requires an extended quantum framework \cite{Bru-pedra-MF-I}, which is an
intricate combination of classical and quantum dynamics, as observed by B%
\'{o}na already thirty years ago \cite{Bona91}. The paper \cite%
{Bru-pedra-MF-I} shows the emergence of classical mechanics defined from
Poisson brackets on state spaces without necessarily a disappearance of the
quantum world, offering a general formal mathematical framework to
understand physical phenomena with macroscopic quantum coherence. In the
context of lattice-fermion systems, it is explained in detail in \cite%
{BruPedra-MFII}. Such an entanglement of classical and quantum worlds is
noteworthy, opening new theoretical perspectives, and is here a direct
consequence of the highly non-local character of long-range, or mean-field,
interactions.

In the present paper, as already stressed, we derive the (effective
short-range) quantum part of the dynamics of fermion systems driven by
mean-field interactions. It is done in the following way:

\begin{itemize}
\item As explained in \cite{BruPedra-MFII}, a classical dynamics can be
defined by using the whole state space $E$ of the CAR algebra $\mathcal{U}$,
but some space homogeneity (periodicity) of the fermion system is crucial in
order to make sense of the (effective short-range) quantum part of the* full
long-range dynamics in the thermodynamic limit. We thus consider as initial
state some $\vec{\ell}$-periodic state $\rho \in E_{\vec{\ell}}$, where $E_{%
\vec{\ell}}\subseteq E$ denotes the space of $\vec{\ell}$-periodic states
for some fixed ($d$-dimensional, $d\in \mathbb{N}$) vector $\vec{\ell}\in 
\mathbb{N}^{d}$.

\item If the initial state $\rho \in E_{\vec{\ell}}$ is an extreme point of $%
E_{\vec{\ell}}$, then we are able to derive the thermodynamic limit of the
quantum part of the dynamics within the GNS representation\footnote{$%
\mathcal{H}_{\rho }$ is an Hilbert space, $\pi _{\rho }:\mathcal{U}%
\rightarrow \mathcal{B}\left( \mathcal{H}_{\rho }\right) $ a so-called
representation of $\mathcal{U}$, while $\Omega _{\rho }$ is a cyclic vector
with respect to $\pi _{\rho }(\mathcal{U})$, i.e., $\mathcal{H}_{\rho }$ is
the closure of (the linear span of) the set $\left\{ \pi (A)\Omega _{\rho
}:A\in \mathcal{U}\right\} $. See, e.g., \cite[Section 2.3.3]%
{BrattelliRobinsonI}.} $(\mathcal{H}_{\rho },\pi _{\rho },\Omega _{\rho })$
of the $C^{\ast }$-algebra $\mathcal{U}$, associated with the $\vec{\ell}$%
-periodic state $\rho $. This refers to Theorem \ref{theorem structure of
omega copy(2)}, which is proven by using two pivotal ingredients:\ the
well-known ergodicity of any extreme point of $E_{\vec{\ell}}$ \cite[Theorem
1.16]{BruPedra2} and Lieb-Robinson bounds \cite[Section 4.3]{brupedraLR}.

\item The next issue is an extension of this result to all\ $\vec{\ell}$%
-periodic initial states. To this end, it is natural to consider the Choquet
theorem \cite[Theorem 10.18]{BruPedra2}, applied to the metrizable and weak$%
^{\ast }$-compact convex set $E_{\vec{\ell}}$: Each state $\rho \in E_{\vec{%
\ell}}$ is the barycenter of a unique probability measure $\mu _{\rho }$
which is supported on the set $\mathcal{E}(E_{\vec{\ell}})\subseteq E_{\vec{%
\ell}}$ of extreme $\vec{\ell}$-periodic states:%
\begin{equation*}
\rho \left( A\right) \doteq \int_{\mathcal{E}(E_{\vec{\ell}})}\hat{\rho}%
\left( A\right) \mu _{\rho }\left( \mathrm{d}\hat{\rho}\right) \text{ }%
,\qquad A\in \mathcal{U}\ .
\end{equation*}

\item By Theorem \ref{theorem structure of omega copy(2)}, the full
long-range dynamics restricted to any extreme state of $E_{\vec{\ell}}$ is
effectively represented by a short-range quantum dynamics, explicitly
depending on a state that evolves according to the (effective) classical
part of the dynamics. In particular, the effective short-range quantum
dynamics is non-autonomous. In this way, the full long-range dynamics
emerges as an intricate combination of classical and short-range quantum
dynamics. This fact leads us to consider the unital $C^{\ast }$-algebra $%
C(E_{\vec{\ell}};\mathcal{U})$ of continuous functions from $E_{\vec{\ell}}$
to $\mathcal{U}$ as well as the extension of elements of $E_{\vec{\ell}}$ to
states on $C(E_{\vec{\ell}};\mathcal{U})$, via the definition%
\begin{equation*}
\rho \left( f\right) \doteq \int_{\mathcal{E}(E_{\vec{\ell}})}\hat{\rho}%
\left( f\left( \hat{\rho}\right) \right) \mu _{\rho }\left( \mathrm{d}\hat{%
\rho}\right) \text{ },\qquad f\in C\left( E_{\vec{\ell}};\mathcal{U}\right)
\ ,
\end{equation*}%
for any state $\rho \in E_{\vec{\ell}}$.\ This prescription is based on a
(non-commutative) conditional expectation naturally appearing in our
setting. Observe that the function space $C(E_{\vec{\ell}};\mathbb{C})$ is
the one on which the classical dynamics considered in \cite{BruPedra-MFII}
runs.

\item It is then tempting to use the direct integral 
\begin{equation}
\left( \mathcal{H}_{\rho }^{\oplus }\equiv \int_{\mathcal{E}(E_{\vec{\ell}})}%
\mathcal{H}_{\hat{\rho}}\mu (\mathrm{d}\hat{\rho}),\ \pi _{\rho }^{\oplus
}\equiv \int_{\mathcal{E}(E_{\vec{\ell}})}\pi _{\hat{\rho}}\mu (\mathrm{d}%
\hat{\rho}),\ \Omega _{\rho }^{\oplus }\equiv \int_{\mathcal{E}(E_{\vec{\ell}%
})}\Omega _{\hat{\rho}}\mu (\mathrm{d}\hat{\rho})\right)
\label{direct integral representation}
\end{equation}%
of the GNS representations $(\mathcal{H}_{\hat{\rho}},\pi _{\hat{\rho}%
},\Omega _{\hat{\rho}})$ of $\mathcal{U}$ associated with the extreme states 
$\hat{\rho}\in \mathcal{E}(E_{\vec{\ell}})$, along with the (effective
short-range) infinite-volume quantum dynamics in each fiber of the direct
integral, as given by Theorem \ref{theorem structure of omega copy(2)}.

\item Constant functions of $C(E_{\vec{\ell}};\mathcal{U})$ are canonically
seen here as elements of $\mathcal{U}$, i.e., $\mathcal{U}\subseteq C(E_{%
\vec{\ell}};\mathcal{U})$. Note also that 
\begin{equation*}
\pi _{\rho }^{\oplus }:C\left( E_{\vec{\ell}};\mathcal{U}\right) \rightarrow 
\mathcal{B}\left( \mathcal{H}_{\rho }^{\oplus }\right) .
\end{equation*}%
The triplet $\left( \mathcal{H}_{\rho }^{\oplus },\pi _{\rho }^{\oplus }|_{%
\mathcal{U}},\Omega _{\rho }^{\oplus }\right) $ should give a cyclic
representation of the $C^{\ast }$-algebra $\mathcal{U}$, associated with the
generally non-extreme state $\rho \in E_{\vec{\ell}}$. This property is a
consequence of Theorem \ref{coro Extension of states} whenever the
probability measure $\mu _{\rho }$ is orthogonal. This fact is reminiscent
of the Tomita theorem \cite[Proposition 4.1.22]{BrattelliRobinsonI} and
refers to a sufficient condition for the cyclicity of the direct-integral
representation.

\item Orthogonality of the ergodic decomposition $\mu _{\rho }$ of any $\vec{%
\ell}$-periodic state $\rho \in E_{\vec{\ell}}$ is therefore pivotal and
thus proven in Theorem \ref{theorem choquet}, using that $\vec{\ell}$%
-periodic states of the full CAR algebra $\mathcal{U}$ are in one-to-one
correspondence with $\vec{\ell}$-periodic states of the subalgebra $\mathcal{%
U}^{+}$ of even elements of $\mathcal{U}$. This identification is pivotal
because, in contrast with $\mathcal{U}$, $\mathcal{U}^{+}$ is asymptotically
abelian and this case is well-known. See, for instance, \cite[Propositions
4.3.3 and 4.3.7]{BrattelliRobinsonI}.

\item Theorems \ref{theorem choquet} and \ref{coro Extension of states}
yield Proposition \ref{coro Extension of states copy(1)} and the direct
integral of GNS representations given by (\ref{direct integral
representation}) can in fact be used also as a good cyclic representation of
the $C^{\ast }$-algebra $C(E_{\vec{\ell}};\mathcal{U})$, associated with any 
$\vec{\ell}$-periodic (initial) state (seen as a state on $C(E_{\vec{\ell}};%
\mathcal{U})$, by the above prescription). Applying Theorem \ref{theorem
structure of omega copy(2)} to each fiber of the direct integral, which
corresponds to an extreme $\vec{\ell}$-periodic state, we get -- by
integration -- the (effective short-range) quantum part of the full
long-range dynamics, yielding the final results, i.e., Theorem \ref{theorem
structure of omega} and Corollary \ref{theorem structure of omega copy(1)}.
\end{itemize}

\noindent To conclude, our main results are Proposition \ref{coro Extension
of states copy(1)}, Theorems \ref{theorem structure of omega}, \ref{theorem
choquet}, \ref{theorem structure of omega copy(2)}, \ref{coro Extension of
states} and Corollary \ref{theorem structure of omega copy(1)}.

The paper is organized as follows: Section \ref{Section FERMI0} explains the
algebraic formulation of lattice-fermion systems with short- and long-range
interactions and we make explicit the problem of the thermodynamic limit\ of
their associated dynamics. Like in \cite{BruPedra2,BruPedra-MFII}, note that
we prefer to use, from now on, the term \textquotedblleft
long-range\textquotedblright\ instead of \textquotedblleft
mean-field\textquotedblright , since the latter can refer to different
scalings. Our description of long-range dynamics requires the mathematical
framework of \cite{Bru-pedra-MF-I}, which is thus presented in Section \ref%
{State-Dependant Interactions}. Section \ref{Long-Range Dynamics} describes
long-range dynamics, which are based on self-consistency equations \cite[%
Theorem 6.5]{BruPedra-MFII}. Observe that we shortly explain the (effective)
classical part of the long-range dynamics in Section \ref{Classical Part},
while its (effective short-range) quantum part is given in detail in Section %
\ref{Quantum Part}, which gathers the main results of the paper. All proofs
are postponed to Section \ref{dddd}. Finally, Section \ref{app direct
integrals} is a detailed appendix on the theory of direct integrals of
measurable families of Hilbert spaces, operators, von Neumann algebras and $%
C^{\ast }$-algebra representations. In this appendix we aim at making the
paper self-contained and a section of pedagogical interest for the
non-expert. It will additionally be useful in future applications of the
general results presented here to the study of KMS states of lattice-fermion
or quantum-spin systems with long-range interactions.

In this paper, we only focus on lattice-fermion systems which are, from a
technical point of view, slightly more difficult than quantum-spin systems,
because of a non-commutativity issue at different lattice sites. However,
all the results presented here hold true for quantum-spin systems, via
obvious modifications.

\begin{notation}
\label{remark constant}\mbox{
}\newline
\emph{(i)} A norm on a generic vector space $\mathcal{X}$ is denoted by $%
\Vert \cdot \Vert _{\mathcal{X}}$ and the identity mapping of $\mathcal{X}$
by $\mathbf{1}_{\mathcal{X}}$. The space of all bounded linear operators on $%
(\mathcal{X},\Vert \cdot \Vert _{\mathcal{X}}\mathcal{)}$ is denoted by $%
\mathcal{B}(\mathcal{X})$. The unit element of any algebra $\mathcal{X}$ is
denoted by $\mathfrak{1}$, provided it exists. The scalar product of any
Hilbert space $\mathcal{H}$ is denoted by $\langle \cdot ,\cdot \rangle _{%
\mathcal{H}}$. \newline
\emph{(ii)} For any topological space $\mathcal{X}$ and normed space $(%
\mathcal{Y},\Vert \cdot \Vert _{\mathcal{Y}}\mathcal{)}$, $C\left( \mathcal{X%
};\mathcal{Y}\right) $ denotes the space of continuous maps from $\mathcal{X}
$ to $\mathcal{Y}$. If $\mathcal{X}$ is a locally compact topological space,
then $C_{b}\left( \mathcal{X};\mathcal{Y}\right) $ denotes the Banach space
of bounded continuous mappings from $\mathcal{X}$ to $\mathcal{Y}$ along
with the topology of uniform convergence.\newline
\emph{(iii)} The notion of an automorphism depends on the structure of the
corresponding domain. In algebra, a ($\ast $-) automorphism acting on a $%
\ast $-algebra is a bijective $\ast $-homomorphism from this algebra to
itself. In topology, an automorphism acting on a topological space is a
self-homeomorphism, that is, a homeomorphism of the space to itself.
\end{notation}

\section{Algebraic Formulation of Lattice-Fermion Systems\label{Section
FERMI0}}

The mathematical framework used here is the same than in \cite{BruPedra-MFII}%
, including the notation. We thus give it in a concise way and refer to \cite%
{BruPedra-MFII} for more details.

\subsection{CAR Algebra of Lattices\label{Algebra of Lattices}}

\subsubsection{Background Lattice}

Let $\mathfrak{L}\doteq \mathbb{Z}^{d}$ for some fixed $d\in \mathbb{N}$ and 
$\mathcal{P}_{f}\subseteq 2^{\mathfrak{L}}$ be the set of all non-empty
finite subsets of $\mathfrak{L}$. In order to define the thermodynamic
limit, for simplicity, we use the cubic boxes 
\begin{equation}
\Lambda _{L}\doteq \{(x_{1},\ldots ,x_{d})\in \mathfrak{L}:|x_{1}|,\ldots
,|x_{d}|\leq L\}\subseteq \mathfrak{L}\ ,\qquad L\in \mathbb{N}\ ,
\label{eq:def lambda n}
\end{equation}%
as a so-called van Hove net. We also use a positive-valued symmetric
function $\mathbf{F}:\mathfrak{L}^{2}\rightarrow (0,1]$ with maximum value $%
\mathbf{F}\left( x,x\right) =1$ for all $x\in \mathfrak{L}$ and satisfying%
\begin{equation}
\left\Vert \mathbf{F}\right\Vert _{1,\mathfrak{L}}\doteq \underset{y\in 
\mathfrak{L}}{\sup }\sum_{x\in \mathfrak{L}}\mathbf{F}\left( x,y\right)
<\infty  \label{(3.1) NS}
\end{equation}%
and 
\begin{equation}
\mathbf{D}\doteq \underset{x,y\in \mathfrak{L}}{\sup }\sum_{z\in \mathfrak{L}%
}\frac{\mathbf{F}\left( x,z\right) \mathbf{F}\left( z,y\right) }{\mathbf{F}%
\left( x,y\right) }<\infty \ .  \label{(3.2) NS}
\end{equation}%
Explicit examples of such a function are given in \cite[Section 3.1]%
{BruPedra-MFII}.

\subsubsection{The CAR $C^{\ast }$-Algebra}

For any subset $\Lambda \subseteq \mathfrak{L}$, $\mathcal{U}_{\Lambda }$ is
the separable universal unital $C^{\ast }$-algebra generated by elements $%
\{a_{x,\mathrm{s}}\}_{x\in \Lambda ,\mathrm{s}\in \mathrm{S}}$ satisfying
the canonical anti-commutation relations (CAR), $\mathrm{S}$ being some
finite set of spins. Note that we use the notation $\mathcal{U\equiv U}_{%
\mathfrak{L}}$ and the subspace 
\begin{equation}
\mathcal{U}_{0}\doteq \bigcup_{\Lambda \in \mathcal{P}_{f}}\mathcal{U}%
_{\Lambda }  \label{simple}
\end{equation}%
is a dense $\ast $-algebra of $\mathcal{U}$. Elements of $\mathcal{U}_{0}$
are called local elements. The (real) Banach subspace of all self-adjoint
elements of $\mathcal{U}$ is denoted by $\mathcal{U}^{\mathbb{R}%
}\varsubsetneq \mathcal{U}$.

\subsubsection{Even Elements}

The condition 
\begin{equation}
\sigma (a_{x,\mathrm{s}})=-a_{x,\mathrm{s}},\qquad x\in \Lambda ,\ \mathrm{s}%
\in \mathrm{S}\ ,  \label{automorphism gauge invariance}
\end{equation}%
defines a unique $\ast $-automorphism $\sigma $ of the $C^{\ast }$-algebra $%
\mathcal{U}$. The subspace 
\begin{equation}
\mathcal{U}^{+}\doteq \{A\in \mathcal{U}:A=\sigma (A)\}\subseteq \mathcal{U}
\label{definition of even operators}
\end{equation}%
is the $C^{\ast }$-subalgebra of so-called even elements. $\mathcal{U}^{+}$
should be seen as more fundamental than $\mathcal{U}$ in Physics, because of
the local causality in quantum field theory.

Note that the fact that the local causality in quantum field theory can be
invoked to see $\mathcal{U}^{+}$ as being more fundamental than $\mathcal{U}$
in Physics does not prevent us from considering long-range interactions as
possibly fundamental interactions, as explained in\ the introduction. The
choice of $\mathcal{U}^{+}$ only compel us to consider (local) observables
satisfying the local causality as measurable physical quantities, the full
energy of lattice Fermi systems with short-range or long-range interactions
being generally inaccessible in infinite volume. In fact, the long-range
part yields possibly non-vanishing background fields, in the spirit of the
Higgs mechanism of quantum field theory, in a given representation of the
observable algebra, which is fixed by the initial state.

\subsubsection{Translation Automorphisms}

Translations refer to the group homomorphism $x\mapsto \alpha _{x}$ from $(%
\mathbb{Z}^{d},+)$ to the group of $\ast $-automorphisms of $\mathcal{U}$,
which is uniquely defined by the condition%
\begin{equation}
\alpha _{x}(a_{y,\mathrm{s}})=a_{y+x,\mathrm{s}}\ ,\quad y\in \mathfrak{L},\;%
\mathrm{s}\in \mathrm{S}\ .  \label{transl}
\end{equation}%
This group homomorphism is used below to define the notion of (space)
periodicity of states as well as the translation invariance of interactions
of lattice-fermion systems.

\subsection{State Space}

\subsubsection{Full State Space}

The state space associated with $\mathcal{U}$ is defined by%
\begin{equation}
E\doteq \{\rho \in \mathcal{U}^{\ast }:\rho \geq 0,\ \rho (\mathfrak{1}%
)=1\}\ .  \label{states CAR}
\end{equation}%
As explained in Section \ref{Positive Functionals}, $E$ is a metrizable and
weak$^{\ast }$-compact convex subset of the dual space $\mathcal{U}^{\ast }$%
. It is also the state space of the classical dynamics studied in \cite%
{Bru-pedra-MF-I,BruPedra-MFII}. By the Krein-Milman theorem \cite[Theorem
3.23]{Rudin}, $E$ is the weak$^{\ast }$-closure of the convex hull of the
(non-empty) set $\mathcal{E}(E)$ of its extreme points, which turns out to
be weak$^{\ast }$-dense \cite{Bru-pedra-MF-I,BruPedra-MFII}: 
\begin{equation}
E=\overline{\mathrm{co}\mathcal{E}\left( E\right) }=\overline{\mathcal{E}(E)}%
\ .  \label{density extreme states}
\end{equation}%
All state spaces we define below have this peculiar geometrical structure.

We define $C\left( E;E\right) $ to be the set of weak$^{\ast }$-continuous
functions from the state space $E$ to itself, endowed with the topology of
uniform convergence. In other words, any net $(f_{j})_{j\in J}\subseteq
C\left( E;E\right) $ converges to $f\in C\left( E;E\right) $ whenever%
\begin{equation}
\lim_{j\in J}\max_{\rho \in E}\left\vert f_{j}(\rho )(A)-f(\rho
)(A)\right\vert =0\ ,\qquad \text{for all }A\in \mathcal{U}\ .
\label{uniform convergence weak*}
\end{equation}%
We denote by $\mathrm{Aut}\left( E\right) \varsubsetneq C\left( E;E\right) $
the subspace of all automorphisms of $E$, i.e., element of $C\left(
E;E\right) $ with weak$^{\ast }$-continuous inverse. As is usual, we
identify constant functions on $\mathbb{R}$ with elements of the codomain of
such a function:%
\begin{equation}
E\subseteq C\left( \mathbb{R};E\right) \text{\qquad and\qquad }\mathrm{Aut}%
\left( E\right) \subseteq C\left( \mathbb{R};\mathrm{Aut}\left( E\right)
\right) \ .  \label{identify}
\end{equation}

\subsubsection{Even States}

The physically relevant set of states is the metrizable and weak$^{\ast }$%
-compact convex set of \emph{even} states defined by 
\begin{equation}
E^{+}\doteq \left\{ \rho \in E:\rho \circ \sigma =\rho \right\} \ ,
\label{gauge invariant states}
\end{equation}%
$\sigma $ being the unique $\ast $-automorphism of $\mathcal{U}$ satisfying (%
\ref{automorphism gauge invariance}). The set $\mathcal{E}(E^{+})$ of
extreme points of $E^{+}$ is also a weak$^{\ast }$-dense\emph{\ }subset of $%
E^{+}$: 
\begin{equation*}
E^{+}=\overline{\mathrm{co}\mathcal{E}\left( E^{+}\right) }=\overline{%
\mathcal{E}(E^{+})}\ .
\end{equation*}%
See \cite[Proposition 2.1]{BruPedra-MFII}.

\subsubsection{Periodic States\label{Sect Periodic-State Space}}

For $\vec{\ell}\in \mathbb{N}^{d}$, consider the subgroup $(\mathbb{Z}_{\vec{%
\ell}}^{d},+)\subseteq (\mathbb{Z}^{d},+)$, where%
\begin{equation}
\mathbb{Z}_{\vec{\ell}}^{d}\doteq \ell _{1}\mathbb{Z}\times \cdots \times
\ell _{d}\mathbb{Z}\ .  \label{group}
\end{equation}%
Any state $\rho \in E$ satisfying $\rho \circ \alpha _{x}=\rho $ for all $%
x\in \mathbb{Z}_{\vec{\ell}}^{d}$ is called $\mathbb{Z}_{\vec{\ell}}^{d}$%
\emph{-invariant} on $\mathcal{U}$ or $\vec{\ell}$\emph{-periodic}, $\alpha
_{x}$ being the unique $\ast $-automorphism of $\mathcal{U}$ satisfying (\ref%
{transl}). Translation-invariant states refer to $(1,\cdots ,1)$-periodic
states. For any $\vec{\ell}\in \mathbb{N}^{d}$, the metrizable and weak$%
^{\ast }$-compact convex set 
\begin{equation}
E_{\vec{\ell}}\doteq \left\{ \rho \in E:\rho \circ \alpha _{x}=\rho \quad 
\text{for}\ \text{all}\ x\in \mathbb{Z}_{\vec{\ell}}^{d}\right\}
\label{periodic invariant states}
\end{equation}%
is called the $\vec{\ell}$-periodic-state space. By \cite[Lemma 1.8]%
{BruPedra2}, periodic states are even and, by \cite[Proposition 2.3]%
{BruPedra-MFII}, the set 
\begin{equation}
E_{\mathrm{p}}\doteq \bigcup_{\vec{\ell}\in \mathbb{N}^{d}}E_{\vec{\ell}}
\label{set of periodic states}
\end{equation}%
of all periodic states is a weak$^{\ast }$-dense subset of even states. For
any $\vec{\ell}\in \mathbb{N}^{d}$, the set $\mathcal{E}(E_{\vec{\ell}})$ of
extreme points of $E_{\vec{\ell}}$ is a weak$^{\ast }$-dense $G_{\delta }$
subset of $E_{\vec{\ell}}$:%
\begin{equation}
E_{\vec{\ell}}=\overline{\mathrm{co}\mathcal{E}(E_{\vec{\ell}})}=\overline{%
\mathcal{E}(E_{\vec{\ell}})}\ ,\qquad \vec{\ell}\in \mathbb{N}^{d}\ .
\label{cov heull l perio}
\end{equation}%
In fact, up to an affine homeomorphism, for any $\vec{\ell}\in \mathbb{N}%
^{d} $, $E_{\vec{\ell}}$ is the so-called Poulsen simplex \cite[Theorem 1.12]%
{BruPedra2}. This property is well-known and also holds true for lattice
quantum spin systems \cite[Example 4.3.26 and discussions p. 464]%
{BrattelliRobinsonI}.

\subsection{Banach Spaces of Short-Range Interactions\label{Section Banach
space interaction}}

\subsubsection{Complex Interactions}

A (complex) \emph{interaction} is a mapping $\Phi :\mathcal{P}%
_{f}\rightarrow \mathcal{U}^{+}$ such that $\Phi _{\Lambda }\in \mathcal{U}%
_{\Lambda }$ for all $\Lambda \in \mathcal{P}_{f}$. The set $\mathcal{V}$ of
all interactions can be naturally endowed with the structure of a complex
vector space and the involution 
\begin{equation}
\Phi \mapsto \Phi ^{\ast }\doteq (\Phi _{\Lambda }^{\ast })_{\Lambda \in 
\mathcal{P}_{f}}\ .  \label{involution}
\end{equation}%
An interaction $\Phi $ is, by definition, self-adjoint if $\Phi =\Phi ^{\ast
}$. The set of all self-adjoint interactions forms a real subspace of the
space of all interactions.

\subsubsection{Short-Range Interactions}

The separable Banach space of short-range interactions is defined by 
\begin{equation}
\mathcal{W}\doteq \left\{ \Phi \in \mathcal{V}:\left\Vert \Phi \right\Vert _{%
\mathcal{W}}<\infty \right\}  \label{banach space short range}
\end{equation}%
with the norm of $\mathcal{W}$ being defined, from the positive-valued
symmetric function $\mathbf{F}$ of Section \ref{Algebra of Lattices}, by%
\begin{equation}
\left\Vert \Phi \right\Vert _{\mathcal{W}}\doteq \underset{x,y\in \mathfrak{L%
}}{\sup }\sum\limits_{\Lambda \in \mathcal{P}_{f},\;\Lambda \supseteq
\{x,y\}}\frac{\Vert \Phi _{\Lambda }\Vert _{\mathcal{U}}}{\mathbf{F}\left(
x,y\right) }\ .  \label{iteration0}
\end{equation}%
The (real) Banach subspace of all self-adjoint interactions is denoted by $%
\mathcal{W}^{\mathbb{R}}\varsubsetneq \mathcal{W}$, similar to $\mathcal{U}^{%
\mathbb{R}}\varsubsetneq \mathcal{U}$.

\subsubsection{Translation-Invariant Interactions}

By definition, the interaction $\Phi $ is translation-invariant\ if%
\begin{equation}
\Phi _{\Lambda +x}=\alpha _{x}\left( \Phi _{\Lambda }\right) \ ,\qquad x\in 
\mathbb{Z}^{d},\ \Lambda \in \mathcal{P}_{f}\ ,  \label{ti interaction}
\end{equation}%
where%
\begin{equation}
\Lambda +x\doteq \left\{ y+x\in \mathfrak{L}:y\in \Lambda \right\} \ .
\label{translation box}
\end{equation}%
Here, $\{\alpha _{x}\}_{x\in \mathbb{Z}^{d}}$ is the family of (translation) 
$\ast $-automorphisms of $\mathcal{U}$ defined by (\ref{transl}). We then
denote by $\mathcal{W}_{1}\varsubsetneq \mathcal{W}$ the (separable) Banach
subspace of translation-invariant, short-range interactions on $\mathfrak{L}$%
.

\subsubsection{Finite-Range Interactions}

For any $\Lambda \in \mathcal{P}_{f}$, we define the closed subspace%
\footnote{%
This follows from the continuity and linearity of the mappings $\Phi \mapsto
\Phi _{\mathcal{Z}}$ for all $\mathcal{Z}\in \mathcal{P}_{f}$.}%
\begin{equation}
\mathcal{W}_{\Lambda }\doteq \left\{ \Phi \in \mathcal{W}_{1}:\Phi _{%
\mathcal{Z}}=0\text{ whenever }\mathcal{Z}\nsubseteq \Lambda \text{, }%
\mathcal{Z}\ni 0\right\}  \label{eq:enpersitebis}
\end{equation}%
of finite-range translation-invariant interactions. Note that, for any $%
\Lambda \in \mathcal{P}_{f}$ and $\vec{\ell}\in \mathbb{N}^{d}$, 
\begin{equation}
\mathcal{W}_{\Lambda }\subseteq \left\{ \Phi \in \mathcal{W}_{1}:\mathfrak{e}%
_{\Phi ,\vec{\ell}}\in \mathcal{U}_{\Lambda ^{(\vec{\ell})}}\right\}
\subseteq \mathcal{W}_{1}\ ,  \label{eq:enpersitebisbis}
\end{equation}%
where, for any $\Phi \in \mathcal{W}$ and $\vec{\ell}\in \mathbb{N}^{d}$,%
\begin{equation}
\mathfrak{e}_{\Phi ,\vec{\ell}}\doteq \frac{1}{\ell _{1}\cdots \ell _{d}}%
\sum\limits_{x=(x_{1},\ldots ,x_{d}),\;x_{i}\in \{0,\ldots ,\ell
_{i}-1\}}\sum\limits_{\mathcal{Z}\in \mathcal{P}_{f},\;\mathcal{Z}\ni x}%
\frac{\Phi _{\mathcal{Z}}}{\left\vert \mathcal{Z}\right\vert }
\label{eq:enpersite}
\end{equation}%
and 
\begin{equation}
\Lambda ^{(\vec{\ell})}\doteq \cup \left\{ \Lambda +x:x=(x_{1},\ldots
,x_{d}),\;x_{i}\in \{0,\ldots ,\ell _{i}-1\}\right\} \in \mathcal{P}_{f}\ .
\label{eq:enpersitebisbisbis}
\end{equation}%
From Equations (\ref{(3.1) NS}) and (\ref{iteration0}), observe that 
\begin{equation}
\Vert \mathfrak{e}_{\Phi ,\vec{\ell}}\Vert _{\mathcal{U}}\leq \left\Vert 
\mathbf{F}\right\Vert _{1,\mathfrak{L}}\left\Vert \Phi \right\Vert _{%
\mathcal{W}}\ ,\qquad \Phi \in \mathcal{W},\ \vec{\ell}\in \mathbb{N}^{d}.
\label{e phi}
\end{equation}

\subsection{Banach Space of Long-Range Models}

\subsubsection{Self-Adjoint Measures on Interactions}

Let $\mathbb{S}$ be the unit sphere of $\mathcal{W}_{1}$. For any $n\in 
\mathbb{N}$ and any finite signed Borel measure $\mathfrak{a}$ on the
Cartesian product $\mathbb{S}^{n}$ (endowed with its product topology), we
define the finite signed Borel measure $\mathfrak{a}^{\ast }$ to be the
pushforward of $\mathfrak{a}$ through the automorphism 
\begin{equation}
\left( \Psi ^{(1)},\ldots ,\Psi ^{(n)}\right) \mapsto ((\Psi ^{(n)})^{\ast
},\ldots ,(\Psi ^{(1)})^{\ast })\in \mathbb{S}^{n}
\label{push forward self-adjoint}
\end{equation}%
of $\mathbb{S}^{n}$ as a topological space. A finite signed Borel measure $%
\mathfrak{a}$ on $\mathbb{S}^{n}$ is, by definition, self-adjoint whenever $%
\mathfrak{a^{\ast }}=\mathfrak{a}$. For any $n\in \mathbb{N}$, the real
Banach space of self-adjoint, finite, signed Borel measures on $\mathbb{S}%
^{n}$ endowed with the norm%
\begin{equation}
\Vert \mathfrak{a}\Vert _{\mathcal{S}(\mathbb{S}^{n}\mathbb{)}}\doteq |%
\mathfrak{a}|(\mathbb{S}^{n})\ ,\qquad n\in \mathbb{N}\ ,
\label{definition 0}
\end{equation}%
is denoted by $\mathcal{S}(\mathbb{S}^{n}\mathbb{)}$.

\subsubsection{Sequences of Self-Adjoint Measures on Interactions}

Endowed with point-wise operations, $\mathcal{S}$ is the real Banach space
of all sequences $\mathfrak{a}\equiv (\mathfrak{a}_{n})_{n\in \mathbb{N}}$
of self-adjoint, finite signed Borel measures $\mathfrak{a}_{n}\in \mathcal{S%
}(\mathbb{S}^{n}\mathbb{)}$, along with the (finite) norm 
\begin{equation}
\left\Vert \mathfrak{a}\right\Vert _{\mathcal{S}}\doteq \sum_{n\in \mathbb{N}%
}n^{2}\left\Vert \mathbf{F}\right\Vert _{1,\mathfrak{L}}^{n-1}\left\Vert 
\mathfrak{a}_{n}\right\Vert _{\mathcal{S}(\mathbb{S}^{n}\mathbb{)}}\,,\qquad 
\mathfrak{a}\equiv (\mathfrak{a}_{n})_{n\in \mathbb{N}}\in \mathcal{S}\ .
\label{definition 0bis}
\end{equation}%
Recall that $\mathbf{F}:\mathfrak{L}^{2}\rightarrow (0,1]$ is any
positive-valued symmetric function with maximum value $\mathbf{F}\left(
x,x\right) =1$ for all $x\in \mathfrak{L}$, satisfying Equations (\ref{(3.1)
NS})-(\ref{(3.2) NS}).

\subsubsection{Long-Range Models}

The separable Banach space of long-range models is defined by%
\begin{equation}
\mathcal{M}\doteq \left\{ \mathfrak{m}\in \mathcal{W}^{\mathbb{R}}\times 
\mathcal{S}:\left\Vert \mathfrak{m}\right\Vert _{\mathcal{M}}<\infty
\right\} \ ,  \label{def long range1}
\end{equation}%
where the norm of $\mathcal{M}$ is defined from (\ref{iteration0}) and (\ref%
{definition 0bis}) by%
\begin{equation}
\left\Vert \mathfrak{m}\right\Vert _{\mathcal{M}}\doteq \left\Vert \Phi
\right\Vert _{\mathcal{W}}+\left\Vert \mathfrak{a}\right\Vert _{\mathcal{S}%
}\,,\qquad \mathfrak{m}\doteq \left( \Phi ,\mathfrak{a}\right) \in \mathcal{M%
}\ .  \label{def long range2}
\end{equation}%
The spaces $\mathcal{W}^{\mathbb{R}}$ and $\mathcal{S}$ are canonically seen
as subspaces of $\mathcal{M}$, i.e., 
\begin{equation}
\mathcal{W}^{\mathbb{R}}\subseteq \mathcal{M}\qquad \text{and}\qquad 
\mathcal{S\subseteq M}\ .  \label{identification model}
\end{equation}%
In particular, $\Phi \equiv \left( \Phi ,0\right) \in \mathcal{M}$ for $\Phi
\in \mathcal{W}^{\mathbb{R}}$. Similar to (\ref{eq:enpersitebis}), we define
the subspaces 
\begin{equation}
\mathcal{M}_{\Lambda }\doteq \mathcal{W}^{\mathbb{R}}\times \mathcal{S}%
_{\Lambda }\subseteq \mathcal{M}\ ,\qquad \Lambda \in \mathcal{P}_{f},\ 
\label{S00bis}
\end{equation}%
where, for any $\Lambda \in \mathcal{P}_{f}$,%
\begin{equation}
\mathcal{S}_{\Lambda }\doteq \left\{ (\mathfrak{a}_{n})_{n\in \mathbb{N}}\in 
\mathcal{S}:\forall n\in \mathbb{N},\ |\mathfrak{a}_{n}|(\mathbb{S}^{n})=|%
\mathfrak{a}_{n}|((\mathbb{S}\cap \mathcal{W}_{\Lambda })^{n})\right\} \ .
\label{S0bis}
\end{equation}%
Note that 
\begin{equation*}
\mathcal{M}_{0}\doteq \bigcup_{L\in \mathbb{N}}\mathcal{M}_{\Lambda _{L}}
\end{equation*}%
is a dense subspace of $\mathcal{M}$.

Long-range models $\mathfrak{m}\doteq \left( \Phi ,\mathfrak{a}\right) $ are
not necessarily translation-invariant, because of its short-range component $%
\Phi $ which may be non-translation-invariant, and we define 
\begin{equation}
\mathcal{M}_{1}\doteq \left( \mathcal{W}_{1}\cap \mathcal{W}^{\mathbb{R}%
}\right) \times \mathcal{S\varsubsetneq M}
\label{translatino invariatn long range models}
\end{equation}%
to be the Banach space of all translation-invariant long-range models.

\subsection{Local Energies}

\subsubsection{Local Energy Elements of Short-Range Interactions}

The local energy elements of any complex interaction $\Phi \in \mathcal{W}$
are defined by%
\begin{equation}
U_{L}^{\Phi }\doteq \sum\limits_{\Lambda \subseteq \Lambda _{L}}\Phi
_{\Lambda }\in \mathcal{U}_{\Lambda _{L}}\cap \mathcal{U}^{+}\ ,\qquad L\in 
\mathbb{N}\ .  \label{equation fininte vol dynam0}
\end{equation}%
Note that%
\begin{equation}
\left\Vert U_{L}^{\Phi }\right\Vert _{\mathcal{U}}\leq \left\vert \Lambda
_{L}\right\vert \left\Vert \mathbf{F}\right\Vert _{1,\mathfrak{L}}\left\Vert
\Phi \right\Vert _{\mathcal{W}}\ ,\qquad L\in \mathbb{N},\ \Phi \in \mathcal{%
W}\ .  \label{norm Uphi}
\end{equation}%
By \cite[Proposition 3.2]{BruPedra-MFII}, for any $\vec{\ell}\in \mathbb{N}%
^{d}$, $\vec{\ell}$-periodic state $\rho \in E_{\vec{\ell}}$ (\ref{periodic
invariant states}) and translation-invariant complex interaction$\ \Phi \in 
\mathcal{W}_{1}$, 
\begin{equation*}
\lim\limits_{L\rightarrow \infty }\frac{\rho \left( U_{L}^{\Phi }\right) }{%
\left\vert \Lambda _{L}\right\vert }=\rho (\mathfrak{e}_{\Phi ,\vec{\ell}})
\end{equation*}%
with $\mathfrak{e}_{\Phi ,\vec{\ell}}$ being the even observable defined by (%
\ref{eq:enpersite}).

\subsubsection{Local Energy Elements of Long-Range Models}

The local Hamiltonians of any model $\mathfrak{m}\doteq \left( \Phi ,%
\mathfrak{a}\right) \in \mathcal{M}$ are the (well-defined) self-adjoint
elements 
\begin{equation}
U_{L}^{\mathfrak{m}}\doteq U_{L}^{\Phi }+\sum_{n\in \mathbb{N}}\frac{1}{%
\left\vert \Lambda _{L}\right\vert ^{n-1}}\int_{\mathbb{S}^{n}}U_{L}^{\Psi
^{(1)}}\cdots U_{L}^{\Psi ^{(n)}}\mathfrak{a}_{n}\left( \mathrm{d}\Psi
^{(1)},\ldots ,\mathrm{d}\Psi ^{(n)}\right) \,,\qquad L\in \mathbb{N}\ .
\label{equation long range energy}
\end{equation}%
Note that $U_{L}^{\left( \Phi ,0\right) }\doteq U_{L}^{\Phi }$ for $\Phi \in 
\mathcal{W}^{\mathbb{R}}$ (cf. (\ref{identification model})) and
straightforward estimates yield the bound 
\begin{equation}
\left\Vert U_{L}^{\mathfrak{m}}\right\Vert _{\mathcal{U}}\leq \left\vert
\Lambda _{L}\right\vert \left\Vert \mathbf{F}\right\Vert _{1,\mathfrak{L}%
}\left\Vert \mathfrak{m}\right\Vert _{\mathcal{M}}\ ,\qquad L\in \mathbb{N}\
,  \label{energy bound long range}
\end{equation}%
by Equations (\ref{definition 0})-(\ref{def long range2}) and (\ref{norm
Uphi}).

\subsection{Dynamical Problem Associated with Long-Range Interactions\label%
{sect Lieb--Robinson copy(1)}}

\subsubsection{Local Dynamics on the CAR Algebra}

\noindent \underline{Local derivations:} The local (symmetric) derivations $%
\{\delta _{L}^{\mathfrak{m}}\}_{L\in \mathbb{N}}\varsubsetneq \mathcal{B}(%
\mathcal{U})$ associated with any model $\mathfrak{m}\in \mathcal{M}$ are
defined by%
\begin{equation}
\delta _{L}^{\mathfrak{m}}(A)\doteq i\left[ U_{L}^{\mathfrak{m}},A\right]
\doteq i\left( U_{L}^{\mathfrak{m}}A-AU_{L}^{\mathfrak{m}}\right) \ ,\qquad
A\in \mathcal{U},\ L\in \mathbb{N}\ .  \label{equation derivation long range}
\end{equation}%
Note that $\delta _{L}^{\left( \Phi ,0\right) }\equiv \delta _{L}^{\Phi }$
for $L\in \mathbb{N}$ and $\Phi \in \mathcal{W}^{\mathbb{R}}$ (cf. (\ref%
{identification model})). \bigskip

\noindent \underline{Local non-autonomous dynamics:} Let $\mathfrak{m}\in C(%
\mathbb{R};\mathcal{M})$ be a continuous function from $\mathbb{R}$ to the
Banach space $\mathcal{M}$. Then, for any $L\in \mathbb{N}$, there is a
unique (fundamental) solution $(\tau _{t,s}^{(L,\mathfrak{m})})_{_{s,t\in 
\mathbb{R}}}$ in $\mathcal{B}(\mathcal{U})$ to the (finite-volume) non-auto%
\-%
nomous evolution equations%
\begin{equation}
\forall s,t\in {\mathbb{R}}:\qquad \partial _{s}\tau _{t,s}^{(L,\mathfrak{m}%
)}=-\delta _{L}^{\mathfrak{m}\left( s\right) }\circ \tau _{t,s}^{(L,%
\mathfrak{m})}\ ,\qquad \tau _{t,t}^{(L,\mathfrak{m})}=\mathbf{1}_{\mathcal{U%
}}\ ,  \label{cauchy1}
\end{equation}%
and%
\begin{equation}
\forall s,t\in {\mathbb{R}}:\qquad \partial _{t}\tau _{t,s}^{(L,\mathfrak{m}%
)}=\tau _{t,s}^{(L,\mathfrak{m})}\circ \delta _{L}^{\mathfrak{m}\left(
t\right) }\ ,\qquad \tau _{s,s}^{(L,\mathfrak{m})}=\mathbf{1}_{\mathcal{U}}\
.  \label{cauchy2}
\end{equation}%
In these two equations, $\mathbf{1}_{\mathcal{U}}$ refers to the identity
mapping of $\mathcal{U}$. Note also that, for any $L\in \mathbb{N}$ and $%
\mathfrak{m}\in C(\mathbb{R};\mathcal{M})$, $(\tau _{t,s}^{(L,\mathfrak{m}%
)})_{_{s,t\in \mathbb{R}}}$ is a continuous two-para%
\-%
meter family of $\ast $-auto%
\-%
morphisms of $\mathcal{U}$ that satisfies the (reverse) cocycle property%
\begin{equation*}
\forall s,r,t\in \mathbb{R}:\qquad \tau _{t,s}^{(L,\mathfrak{m})}=\tau
_{r,s}^{(L,\mathfrak{m})}\tau _{t,r}^{(L,\mathfrak{m})}\ .
\end{equation*}%
Again, $\tau _{t,s}^{(L,(\Psi ,0))}\equiv \tau _{t,s}^{(L,\Psi )}$ for $%
s,t\in \mathbb{R}$, $L\in \mathbb{N}$ and $\Psi \in C(\mathbb{R};\mathcal{W}%
^{\mathbb{R}})$ (cf. (\ref{identification model})).

\subsubsection{Dynamical Problem at Infinite Volume}

For any $\Psi \in C(\mathbb{R};\mathcal{W}^{\mathbb{R}})$, $(\tau
_{t,s}^{(L,\Psi )})_{s,t\in {\mathbb{R}}}$, $L\in \mathbb{N}$, converges
strongly, uniformly for $s,t$ on compacta, to a strongly continuous two-para%
\-%
meter family $(\tau _{t,s}^{\Psi })_{s,t\in {\mathbb{R}}}$ of $\ast $-auto%
\-%
morphisms of $\mathcal{U}$, as stated in \cite[Proposition 3.7]%
{BruPedra-MFII}. The main aim of this paper is to make sense of the
thermodynamic limit%
\begin{equation*}
\lim_{L\rightarrow \infty }\tau _{t,s}^{(L,\mathfrak{m})}\left( A\right) \
,\qquad s,t\in \mathbb{R},\ A\in \mathcal{U}\ ,
\end{equation*}%
for any $\mathfrak{m}\in C(\mathbb{R};\mathcal{M}_{\Lambda })$, where $%
\mathcal{M}_{\Lambda }$ is the space of long-range models defined by (\ref%
{S00bis})-(\ref{S0bis}) for a fixed $\Lambda \in \mathcal{P}_{f}$. This
cannot be done within the $C^{\ast }$-algebra $\mathcal{U}$, as explained in 
\cite[Section 4.3]{BruPedra-MFII}, but by using appropriate representations
of $\mathcal{U}$.

\section{State-Dependent Interactions and Dynamics\label{State-Dependant
Interactions}}

The long-range dynamics takes place in the space $C(E;\mathcal{U})$ of $%
\mathcal{U}$-valued weak$^{\ast }$-continuous functions on the metrizable
compact space $E$, a quantum $C^{\ast }$-algebra of continuous functions on
states. As explained in \cite[Section 6.6]{BruPedra-MFII}, it is used to
construct the infinite-volume limit of non-autonomous dynamics of
time-dependent long-range models within a cyclic representation associated
with some periodic state. We show that, generically at infinite-volume, the
long-range dynamics is equivalent to an intricate combination of classical
and short-range quantum dynamics. Similar to \cite{Bru-pedra-MF-I}, the
existence of both dynamics will be a (non-trivial) consequence of the
well-posedness of a self-consistency problem, see \cite[Theorem 6.5]%
{BruPedra-MFII}.

\subsection{The Quantum $C^{\ast }$-Algebra of Continuous Functions on
States \label{State-Dependent Short-Range Interactions copy(1)}}

\subsubsection{Quantum Algebra}

Endowed with the point-wise $\ast $-algebra operations inherited from $%
\mathcal{U}$, $C(E;\mathcal{U})$ is the unital $C^{\ast }$-algebra denoted by%
\begin{equation}
\mathfrak{U}\equiv \mathcal{U}_{\mathfrak{L}}\doteq \left( C\left( E;%
\mathcal{U}\right) ,+,\cdot _{{\mathbb{C}}},\times ,^{\ast },\left\Vert
\cdot \right\Vert _{\mathfrak{U}}\right) \ .  \label{metaciagre set}
\end{equation}%
The unique $C^{\ast }$-norm $\left\Vert \cdot \right\Vert _{\mathfrak{U}}$
is the supremum norm: 
\begin{equation}
\left\Vert f\right\Vert _{\mathfrak{U}}\doteq \max_{\rho \in E}\left\Vert
f\left( \rho \right) \right\Vert _{\mathcal{U}}\ ,\qquad f\in \mathfrak{U}\ .
\label{metaciagre set bis}
\end{equation}%
The (real) Banach subspace of all $\mathcal{U}^{\mathbb{R}}$-valued
functions is denoted by $\mathfrak{U}^{\mathbb{R}}\varsubsetneq \mathfrak{U}$%
. We identify the primordial $C^{\ast }$-algebra $\mathcal{U}$ with the
subalgebra of constant functions of $\mathfrak{U}$, i.e., $\mathcal{U}%
\subseteq \mathfrak{U}$.

\subsubsection{Local Elements}

Similar to (\ref{simple}), we define the $\ast $-subalgebras%
\begin{equation}
\mathfrak{U}_{\Lambda }\doteq \left\{ f\in \mathfrak{U}:f\left( E\right)
\subseteq \mathcal{U}_{\Lambda }\right\} \ ,\qquad \Lambda \in \mathcal{P}%
_{f}\ ,  \label{local elements}
\end{equation}%
and 
\begin{equation}
\mathfrak{U}_{0}\doteq \bigcup_{\Lambda \in \mathcal{P}_{f}}\mathfrak{U}%
_{\Lambda }\subseteq \mathfrak{U}\ ,  \label{U frac 0}
\end{equation}%
which is a dense $\ast $-subalgebra of $\mathfrak{U}$. See \cite[Section 5.3]%
{BruPedra-MFII}.

\subsubsection{Even Elements}

The $\ast $-automorphism $\sigma $ of $\mathcal{U}$ uniquely defined by (\ref%
{automorphism gauge invariance}) naturally induces a $\ast $-automorphism $%
\Xi $ of $\mathfrak{U}$ defined by 
\begin{equation}
\left[ \Xi \left( f\right) \right] \left( \rho \right) \doteq \sigma \left(
f\left( \rho \right) \right) \ ,\qquad \rho \in E,\ f\in \mathfrak{U}\ .
\label{gauge invariant}
\end{equation}%
The set 
\begin{equation}
\mathfrak{U}^{+}\doteq \{f\in \mathfrak{U}:f=\Xi (f)\}=\left\{ f\in 
\mathfrak{U}:f\left( E\right) \subseteq \mathcal{U}^{+}\right\} \subseteq 
\mathfrak{U}  \label{even state dependent}
\end{equation}%
of all even $\mathcal{U}$-valued continuous functions is a $C^{\ast }$%
-subalgebra of $\mathfrak{U}$. Compare with (\ref{definition of even
operators}).

\subsubsection{Translation Automorphisms}

The $\ast $-automorphisms $\alpha _{x}$, $x\in \mathbb{Z}^{d}$, of $\mathcal{%
U}$ uniquely defined by (\ref{transl}) naturally induce a group homomorphism 
$x\mapsto \mathrm{A}_{x}$ from $(\mathbb{Z}^{d},+)$ to the group of $\ast $%
-automorphisms of $\mathfrak{U}$, defined by 
\begin{equation}
\left[ \mathrm{A}_{x}\left( f\right) \right] \left( \rho \right) \doteq
\alpha _{x}\left( f\left( \rho \right) \right) \ ,\qquad \rho \in E,\ f\in 
\mathfrak{U},\ x\in \mathbb{Z}^{d}\ .  \label{translatbis}
\end{equation}%
These $\ast $-automorphisms represent the translation group in $\mathfrak{U}$%
.

\subsection{The Classical $C^{\ast }$-Algebra of Continuous Functions on
States}

\subsubsection{Classical Algebra}

Endowed with point-wise vector space operations and complex conjugation, $%
C(E;\mathbb{C})$ is the unital commutative $C^{\ast }$-algebra denoted by 
\begin{equation}
\mathfrak{C}\doteq \left( C\left( E;\mathbb{C}\right) ,+,\cdot _{{\mathbb{C}}%
},\times ,\overline{\left( \cdot \right) },\left\Vert \cdot \right\Vert _{%
\mathfrak{C}}\right) \ ,  \label{metaciagre set 2}
\end{equation}%
where the corresponding (unique) $C^{\ast }$-norm is 
\begin{equation}
\left\Vert f\right\Vert _{\mathfrak{C}}\doteq \max_{\rho \in E}\left\vert
f\left( \rho \right) \right\vert \ ,\qquad f\in \mathfrak{C}\ .
\label{metaciagre set 2bis}
\end{equation}%
The (real) Banach subspace of all real-valued functions of $\mathfrak{C}$ is
denoted by $\mathfrak{C}^{\mathbb{R}}\varsubsetneq \mathfrak{C}$. The $%
C^{\ast }$-algebra $\mathfrak{C}$ is separable, $E$ being metrizable and
compact.

The classical dynamics takes place in $\mathfrak{C}$. This unital
commutative $C^{\ast }$-algebra is identified with the subalgebra of
functions of $\mathfrak{U}$ whose values are multiples of the unit $%
\mathfrak{1}\in \mathcal{U}$. In other words, we have the canonical
inclusion $\mathfrak{C\subseteq U}$.

\subsubsection{Poisson Bracket for Polynomial Functions\label{Poisson
Algebra}}

Elements of the (separable and unital)$\ C^{\ast }$-algebra $\mathcal{U}$
naturally define continuous affine functions $\hat{A}\in \mathfrak{C}$ by 
\begin{equation}
\hat{A}\left( \rho \right) \doteq \rho \left( A\right) \ ,\qquad \rho \in
E,\ A\in \mathcal{U}\ .  \label{fA}
\end{equation}%
This mapping $A\mapsto \hat{A}$ is a linear isometry from $\mathcal{U}^{%
\mathbb{R}}$ to $\mathfrak{C}^{\mathbb{R}}$. We denote by%
\begin{equation}
\mathfrak{C}_{\mathcal{U}_{0}}\doteq \mathbb{C}[\{\hat{A}:A\in \mathcal{U}%
_{0}\}]\subseteq \mathfrak{C}  \label{CU0}
\end{equation}%
the subalgebras of polynomials in the elements of $\{\hat{A}:A\in \mathcal{U}%
_{0}\}$, with complex coefficients. Note that $\mathfrak{C}_{\mathcal{U}%
_{0}} $ is dense in $\mathfrak{C}$, i.e., $\mathfrak{C}=\overline{\mathfrak{C%
}_{\mathcal{U}_{0}}}$ (the Stone-Weierstrass theorem).

In \cite[Section 5.2]{BruPedra-MFII} we define a Poisson bracket%
\begin{equation*}
\{\cdot ,\cdot \}:\mathfrak{C}_{\mathcal{U}_{0}}\times \mathfrak{C}_{%
\mathcal{U}_{0}}\rightarrow \mathfrak{C}\ ,
\end{equation*}%
i.e., a skew-symmetric biderivation satisfying the Jacobi identity. This
Poisson bracket can be extended to any continuously differentiable
real-valued functions on the state space $E$. For more details, see \cite[%
Section 5.2]{BruPedra-MFII} as well as \cite[Section 3]{Bru-pedra-MF-I}. In
fact, it is not really used in this paper and is only shortly mentioned in
order to establish a connection between the results of this paper and those
of \cite{BruPedra-MFII}.

\subsection{Banach Spaces of State-Dependent Short-Range Interactions\label%
{State-Dependent Short-Range Interactions}}

\subsubsection{Complex State-Dependent Interactions}

As is done in Section \ref{Section Banach space interaction}, a
state-dependent (complex) interaction is defined to be a mapping $\mathbf{%
\Phi }:\mathcal{P}_{f}\rightarrow \mathfrak{U}^{+}$ such that $\mathbf{\Phi }%
_{\Lambda }\in \mathfrak{U}_{\Lambda }$ for any $\Lambda \in \mathcal{P}_{f}$%
. See Equations (\ref{local elements}) and (\ref{even state dependent}). The
set $\mathfrak{V}$ of all state-dependent interactions is naturally endowed
with the structure of a complex vector space and with the involution 
\begin{equation}
\mathbf{\Phi }\mapsto \mathbf{\Phi }^{\ast }\doteq (\mathbf{\Phi }_{\Lambda
}^{\ast })_{\Lambda \in \mathcal{P}_{f}}\ .  \label{involutioninvolution}
\end{equation}%
Self-adjoint state-dependent interactions $\mathbf{\Phi }$ are, by
definition, those satisfying $\mathbf{\Phi }=\mathbf{\Phi }^{\ast }$.

\subsubsection{Short-Range State-Dependent Interactions}

Similar to the Banach space $\mathcal{W}$ of short-range interactions, we
define a Banach space 
\begin{equation*}
\mathfrak{W}\doteq \left\{ \mathbf{\Phi }\in \mathfrak{V}:\left\Vert \mathbf{%
\Phi }\right\Vert _{\mathfrak{W}}<\infty \right\}
\end{equation*}%
of state-dependent short-range interactions by using the norm%
\begin{equation*}
\left\Vert \mathbf{\Phi }\right\Vert _{\mathfrak{W}}\doteq \underset{x,y\in 
\mathfrak{L}}{\sup }\sum\limits_{\Lambda \in \mathcal{P}_{f},\;\Lambda
\supseteq \{x,y\}}\frac{\Vert \mathbf{\Phi }_{\Lambda }\Vert _{\mathfrak{U}}%
}{\mathbf{F}\left( x,y\right) }\ .
\end{equation*}%
Compare with (\ref{iteration0}). The (real) Banach subspace of all
self-adjoint state-dependent interactions is denoted by $\mathfrak{W}^{%
\mathbb{R}}\varsubsetneq \mathfrak{W}$.

\subsubsection{Approximating Interactions of Long-Range Models\label%
{Approximating Interactions2}}

In order to simplify the notation, for any $\mathbf{\Psi }\in C(\mathbb{R};%
\mathfrak{W})$ and $\rho \in E$, $\mathbf{\Psi }\left( \rho \right) \in C(%
\mathbb{R};\mathcal{W})$ stands for the time-dependent interaction defined
by 
\begin{equation}
\mathbf{\Psi }\left( \rho \right) \left( t\right) \doteq \mathbf{\Psi }%
\left( t;\rho \right) \ ,\qquad \rho \in E,\ t\in \mathbb{R}\ .
\label{notation state interactionbis}
\end{equation}%
For any $\Psi \in \mathcal{W}$ and $\rho \in E$, we define%
\begin{equation}
\left\lfloor \rho ;\Psi \right\rfloor _{\vec{\ell}}\doteq \Psi
\label{defined interand1}
\end{equation}%
and, for any $n\geq 2$ and all interactions $\Psi ^{(1)},\ldots ,\Psi
^{(n)}\in \mathcal{W}_{1}$, 
\begin{equation}
\left\lfloor \rho ;\Psi ^{(1)},\ldots ,\Psi ^{(n)}\right\rfloor _{\vec{\ell}%
}\doteq \sum_{m=1}^{n}\Psi ^{(m)}\prod\limits_{j\in \left\{ 1,\ldots
,n\right\} ,j\neq m}\rho (\mathfrak{e}_{\Psi ^{(j)},\vec{\ell}})\in \mathcal{%
W}_{1}\ ,\qquad \rho \in E\ .  \label{defined interand2}
\end{equation}%
The pertinence of these objects is explained in \cite[Section 6.4]%
{BruPedra-MFII}. They yield an approximating (state-dependent, short-range)
interaction, which is ubiquitous in the study of the infinite-volume
dynamics of lattice-fermion (or quantum-spin) systems with long-range
interactions.

\begin{definition}[Non-autonomous approximating interactions]
\label{definition BCS-type model approximated}\mbox{ }\newline
For $\vec{\ell}\in \mathbb{N}^{d}$ and any continuous functions $\mathfrak{m}%
=(\Phi (t),\mathfrak{\mathfrak{a}}(t))_{t\in \mathbb{R}}\in C\left( \mathbb{R%
};\mathcal{M}\right) $, $\xi \in C\left( \mathbb{R};E\right) $, we define
the mapping $\Phi ^{(\mathfrak{m},\xi )}$ from $\mathbb{R}$ to $\mathcal{W}^{%
\mathbb{R}}$ by%
\begin{equation*}
\Phi ^{(\mathfrak{m},\xi )}\left( t\right) \doteq \Phi \left( t\right)
+\sum_{n\in \mathbb{N}}\int_{\mathbb{S}^{n}}\ \left\lfloor \xi \left(
t\right) ;\Psi ^{(1)},\ldots ,\Psi ^{(n)}\right\rfloor _{\vec{\ell}}\ 
\mathfrak{a}\left( t\right) _{n}\left( \mathrm{d}\Psi ^{(1)},\ldots ,\mathrm{%
d}\Psi ^{(n)}\right) \,,\qquad t\in \mathbb{R}\ .
\end{equation*}%
If $\mathbf{\xi }\in C\left( \mathbb{R};\mathrm{Aut}\left( E\right) \right) $
then a mapping $\mathbf{\Phi }^{(\mathfrak{m},\mathbf{\xi })}$ from $\mathbb{%
R}$ to $\mathfrak{W}^{\mathbb{R}}$ is defined, for any $\rho \in E$ and $%
t\in \mathbb{R}$, by%
\begin{equation*}
\mathbf{\Phi }^{(\mathfrak{m},\mathbf{\xi })}\left( t;\rho \right) \doteq
\Phi \left( t\right) +\sum_{n\in \mathbb{N}}\int_{\mathbb{S}^{n}}\
\left\lfloor \mathbf{\xi }\left( t;\rho \right) ;\Psi ^{(1)},\ldots ,\Psi
^{(n)}\right\rfloor _{\vec{\ell}}\ \mathfrak{a}\left( t\right) _{n}\left( 
\mathrm{d}\Psi ^{(1)},\ldots ,\mathrm{d}\Psi ^{(n)}\right) \ .
\end{equation*}
\end{definition}

\noindent Recall that $\mathrm{Aut}\left( E\right) \varsubsetneq C\left(
E;E\right) $ is the subspace of all automorphisms of $E$, i.e., weak$^{\ast
} $-continuous state-valued functions over $E$ with weak$^{\ast }$%
-continuous inverse. The topology in $\mathrm{Aut}\left( E\right) $ is the
one of $C\left( E;E\right) $, that is, the topology of uniform convergence
(cf. (\ref{uniform convergence weak*})). Recall also the identifications (%
\ref{identify}). By \cite[Lemma 6.4]{BruPedra-MFII}, note finally that, for
any $\mathfrak{m}\in C\left( \mathbb{R};\mathcal{M}\right) $ and $\xi \in
C\left( \mathbb{R};E\right) $, 
\begin{equation}
\left\Vert \Phi ^{(\mathfrak{m},\xi )}\left( t\right) \right\Vert _{\mathcal{%
W}}\leq \left\Vert \mathfrak{m}\left( t\right) \right\Vert _{\mathcal{M}%
}\,,\qquad t\in \mathbb{R}\ .  \label{inequality trivial}
\end{equation}

\subsection{State-Dependent Quantum Dynamics}

\noindent \underline{Limit derivations:} Using that $\mathfrak{U}_{0}=%
\mathrm{span}\left\{ \mathfrak{C}\mathcal{U}_{0}\right\} $, we define the
symmetric derivations $\mathbf{\delta }^{\mathbf{\Phi }}$ associated with
any $\mathbf{\Phi }\in \mathfrak{W}$ on the dense subset $\mathfrak{U}_{0}$ (%
\ref{U frac 0}) by%
\begin{equation*}
\left[ \mathbf{\delta }^{\mathbf{\Phi }}(fA)\right] \left( \rho \right)
\doteq f\left( \rho \right) \delta ^{\mathbf{\Phi }\left( \rho \right) }(A)\
,\qquad \rho \in E,\ f\in \mathfrak{C},\ A\in \mathcal{U}_{0}\ .
\end{equation*}%
The right-hand side of the above equation defines an element of $\mathfrak{U}
$, by \cite[Corollary 3.5]{BruPedra-MFII}.\ If $\Phi \in \mathfrak{W}^{%
\mathbb{R}}$ then the symmetric derivation $\mathbf{\delta }^{\mathbf{\Phi }%
} $ is (norm-) closable. See \cite[discussions after Definition 6.1]%
{BruPedra-MFII}. \bigskip

\noindent \underline{State-dependent dynamics:} Any $\mathbf{\Psi }\in C(%
\mathbb{R};\mathfrak{W}^{\mathbb{R}})$ determines a two-parameter family $%
\mathfrak{T}^{\mathbf{\Psi }}\equiv (\mathfrak{T}_{t,s}^{\mathbf{\Psi }%
})_{_{s,t\in \mathbb{R}}}$ of $\ast $-auto%
\-%
morphisms of $\mathfrak{U}$ defined by%
\begin{equation}
\left[ \mathfrak{T}_{t,s}^{\mathbf{\Psi }}\left( f\right) \right] \left(
\rho \right) \doteq \tau _{t,s}^{\mathbf{\Psi }\left( \rho \right) }\left(
f\left( \rho \right) \right) \ ,\qquad \rho \in E,\ f\in \mathfrak{U},\
s,t\in \mathbb{R}\ ,  \label{definiotion tho frac}
\end{equation}%
where the two-para%
\-%
meter family $(\tau _{t,s}^{\Psi })_{s,t\in {\mathbb{R}}}$ of $\ast $-auto%
\-%
morphisms of $\mathcal{U}$ is the strong limit of $(\tau _{t,s}^{(L,\Psi
)})_{s,t\in {\mathbb{R}}}$, $L\in \mathbb{N}$, which is defined by (\ref%
{cauchy1})-(\ref{cauchy2}). See \cite[Proposition 3.7]{BruPedra-MFII}. The
right-hand side of (\ref{definiotion tho frac}) defines an element of $%
\mathfrak{U}$. In fact, by \cite[Proposition 6.4]{BruPedra-MFII}, for any $%
\mathbf{\Psi }\in C(\mathbb{R};\mathfrak{W}^{\mathbb{R}})$, $\mathfrak{T}^{%
\mathbf{\Psi }}\equiv (\mathfrak{T}_{t,s}^{\mathbf{\Psi }})_{_{s,t\in 
\mathbb{R}}}$ is a strongly continuous two-para%
\-%
meter family of $\ast $-auto%
\-%
morphisms of $\mathfrak{U}$, which is the unique solution in $\mathcal{B}(%
\mathfrak{U})$ to the non-auto%
\-%
nomous evolution equation%
\begin{equation}
\forall s,t\in {\mathbb{R}}:\qquad \partial _{t}\mathfrak{T}_{t,s}^{\mathbf{%
\Psi }}=\mathfrak{T}_{t,s}^{\mathbf{\Psi }}\circ \mathbf{\delta }^{\mathbf{%
\Psi }\left( t\right) }\ ,\qquad \mathfrak{T}_{s,s}^{\mathbf{\Psi }}=\mathbf{%
1}_{\mathfrak{U}}\ ,  \label{klsdf}
\end{equation}%
in the strong sense on the dense subspace $\mathfrak{U}_{0}\subseteq 
\mathfrak{U}$, $\mathbf{1}_{\mathfrak{U}}$ being the identity mapping of $%
\mathfrak{U}$. It satisfies, in particular, the reverse cocycle property:%
\begin{equation}
\forall s,r,t\in \mathbb{R}:\qquad \mathfrak{T}_{t,s}^{\mathbf{\Psi }}=%
\mathfrak{T}_{r,s}^{\mathbf{\Psi }}\mathfrak{T}_{t,r}^{\mathbf{\Psi }}\ .
\label{reverse cocycle0}
\end{equation}

\section{Long-Range Dynamics\label{Long-Range Dynamics}}

\subsection{Classical Part of Long-Range Dynamics\label{Classical Part}}

We show that, generically, long-range (or mean-field) dynamics are
equivalent to intricate combinations of classical \emph{and} short-range
quantum dynamics. The existence of both dynamics is a (non-trivial)
consequence of the well-posedness of the self-consistency problem of \cite[%
Theorem 6.5]{BruPedra-MFII}:

\begin{theorem}[Self-consistency equations]
\label{theorem sdfkjsdklfjsdklfj}\mbox{ }\newline
Fix $\Lambda \in \mathcal{P}_{f}$ and $\mathfrak{m}\in C_{b}(\mathbb{R};%
\mathcal{M}_{\Lambda })$. There is a unique $\mathbf{\varpi }^{\mathfrak{m}%
}\in C\left( \mathbb{R}^{2};\mathrm{Aut}\left( E\right) \right) $ such that%
\begin{equation*}
\mathbf{\varpi }^{\mathfrak{m}}\left( s,t;\rho \right) =\rho \circ \tau
_{t,s}^{\mathbf{\Phi }^{(\mathfrak{m},\mathbf{\varpi }^{\mathfrak{m}}\left(
\alpha ,\cdot \right) )}(\rho )}|_{\alpha =s}\ ,\qquad s,t\in {\mathbb{R}}\ ,
\end{equation*}%
with the strongly continuous two-para%
\-%
meter family $(\tau _{t,s}^{\Psi })_{s,t\in {\mathbb{R}}}$ being, at fixed $%
s,t\in {\mathbb{R}}$, the strong limit of the local dynamics $(\tau
_{t,s}^{(L,\Psi )})_{s,t\in {\mathbb{R}}}$ defined by (\ref{cauchy1})-(\ref%
{cauchy2}) for any $\Psi \in C(\mathbb{R};\mathcal{W}^{\mathbb{R}})$. See 
\cite[Proposition 3.7]{BruPedra-MFII}. Note that, above, we use the notation
(\ref{notation state interactionbis}).
\end{theorem}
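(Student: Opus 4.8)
The plan is to construct $\mathbf{\varpi}^{\mathfrak{m}}$ by a contraction (Picard) argument, treating the first time variable $s$ as a parameter. For fixed $s\in\mathbb{R}$ and $T>0$, I would work in the complete metric space $\mathcal{X}\doteq C([s-T,s+T];\mathrm{Aut}(E))$ equipped with the metric of uniform convergence (cf. (\ref{uniform convergence weak*})), and define a map $\Gamma_{s}$ on $\mathcal{X}$ by
\[
[\Gamma_{s}(\mathbf{\xi})](t;\rho)\doteq\rho\circ\tau_{t,s}^{\mathbf{\Phi}^{(\mathfrak{m},\mathbf{\xi})}(\rho)}\ ,\qquad t\in[s-T,s+T],\ \rho\in E\ ,
\]
where $\mathbf{\Phi}^{(\mathfrak{m},\mathbf{\xi})}\in C(\mathbb{R};\mathfrak{W}^{\mathbb{R}})$ is the approximating state-dependent short-range interaction of Definition \ref{definition BCS-type model approximated} and $\tau^{\mathbf{\Phi}^{(\mathfrak{m},\mathbf{\xi})}(\rho)}$ is the infinite-volume dynamics generated by the time-dependent interaction $r\mapsto\mathbf{\Phi}^{(\mathfrak{m},\mathbf{\xi})}(r;\rho)\in\mathcal{W}^{\mathbb{R}}$, i.e. the strong limit of the finite-volume dynamics (\ref{cauchy1})--(\ref{cauchy2}) (\cite[Proposition 3.7]{BruPedra-MFII}). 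By construction $\mathbf{\varpi}^{\mathfrak{m}}$ solves the self-consistency equation if and only if $\mathbf{\varpi}^{\mathfrak{m}}(s,\cdot)$ is a fixed point of $\Gamma_{s}$ for every $s$.

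\textbf{Well-definedness of $\Gamma_{s}$.} First I would check $\mathbf{\Phi}^{(\mathfrak{m},\mathbf{\xi})}\in C(\mathbb{R};\mathfrak{W}^{\mathbb{R}})$: by (\ref{defined interand2}) the integrand $\lfloor\mathbf{\xi}(t;\rho);\Psi^{(1)},\ldots,\Psi^{(n)}\rfloor_{\vec{\ell}}$ is polynomial of degree $\leq n-1$ in the scalars $\mathbf{\xi}(t;\rho)(\mathfrak{e}_{\Psi^{(j)},\vec{\ell}})$, each bounded by $\|\mathbf{F}\|_{1,\mathfrak{L}}$ thanks to (\ref{e phi}); summability over $n$ then follows from the weights $n^{2}\|\mathbf{F}\|_{1,\mathfrak{L}}^{n-1}$ in the norm (\ref{definition 0bis}), continuity in $(t,\rho)$ from that of $\mathfrak{m}$ and $\mathbf{\xi}$, and (\ref{inequality trivial}) gives the \emph{$\mathbf{\xi}$-independent} bound $\|\mathbf{\Phi}^{(\mathfrak{m},\mathbf{\xi})}(t;\cdot)\|_{\mathfrak{W}}\leq\|\mathfrak{m}(t)\|_{\mathcal{M}}$. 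Hence the state-dependent dynamics $\mathfrak{T}^{\mathbf{\Phi}^{(\mathfrak{m},\mathbf{\xi})}}$ exists as a strongly continuous cocycle of $\ast$-automorphisms of $\mathfrak{U}$ (\cite[Proposition 6.4]{BruPedra-MFII}), with $[\mathfrak{T}_{t,s}^{\mathbf{\Phi}^{(\mathfrak{m},\mathbf{\xi})}}(A)](\rho)=\tau_{t,s}^{\mathbf{\Phi}^{(\mathfrak{m},\mathbf{\xi})}(\rho)}(A)$ for $A\in\mathcal{U}$, so $\rho\mapsto[\Gamma_{s}(\mathbf{\xi})](t;\rho)$ is weak$^{\ast}$-continuous and $E$-valued. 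That this map is moreover in $\mathrm{Aut}(E)$ is a point needing care: informally it is the flow of a state-dependent generator that is \emph{Lipschitz} on $E$ (again via (\ref{defined interand2}) and (\ref{e phi})), and flows of Lipschitz generators are invertible with continuous inverse; alternatively one solves the companion equation $\rho=\sigma\circ\tau_{s,t}^{\mathbf{\Phi}^{(\mathfrak{m},\mathbf{\xi})}(\rho)}$ for $\rho$ given $\sigma$, which is well-posed for $|t-s|$ small by the same contraction as below and extends to all $t$ via the reverse cocycle property (\ref{reverse cocycle0}).

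\textbf{Contraction and globalization.} The quantitative heart is a Lieb--Robinson stability estimate: for time-dependent interactions $\Psi_{1},\Psi_{2}\in C(\mathbb{R};\mathcal{W}^{\mathbb{R}})$ with $\sup_{r}\|\Psi_{i}(r)\|_{\mathcal{W}}\leq R$ and $A\in\mathcal{U}_{0}$,
\[
\|\tau_{t,s}^{\Psi_{1}}(A)-\tau_{t,s}^{\Psi_{2}}(A)\|_{\mathcal{U}}\leq C(R,A)\,|t-s|\,\mathrm{e}^{c(R)|t-s|}\,\sup_{r\in[s,t]}\|\Psi_{1}(r)-\Psi_{2}(r)\|_{\mathcal{W}}\ ,
\]
obtained at finite $L$ from the Duhamel identity $\tau_{t,s}^{(L,\Psi_{1})}-\tau_{t,s}^{(L,\Psi_{2})}=\int_{s}^{t}\tau_{t,r}^{(L,\Psi_{1})}\circ(\delta_{L}^{\Psi_{1}(r)}-\delta_{L}^{\Psi_{2}(r)})\circ\tau_{r,s}^{(L,\Psi_{2})}\,\mathrm{d}r$ together with the Lieb--Robinson bounds of \cite[Section 4.3]{brupedraLR}, and then passing to $L\to\infty$. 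Combined with the Lipschitz dependence of $\mathbf{\Phi}^{(\mathfrak{m},\mathbf{\xi})}(r;\rho)$ on $\mathbf{\xi}(r;\rho)$ (a difference of polynomials in the $\mathbf{\xi}_{i}(r;\rho)(\mathfrak{e}_{\Psi^{(j)},\vec{\ell}})$, with Lipschitz constant $\lesssim\|\mathfrak{m}\|_{\mathcal{M}}$ in the uniform metric of $\mathrm{Aut}(E)$), this shows that $\Gamma_{s}$ is a contraction on $\mathcal{X}=C([s-T,s+T];\mathrm{Aut}(E))$ once $T=T(\|\mathfrak{m}\|_{\mathcal{M}})$ is small. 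Banach's theorem yields a unique local solution; since the bound $\|\mathbf{\Phi}^{(\mathfrak{m},\mathbf{\xi})}(t;\cdot)\|_{\mathfrak{W}}\leq\|\mathfrak{m}(t)\|_{\mathcal{M}}$ is $\mathbf{\xi}$-independent, the step length $T$ does not shrink along the continuation, so the local solutions patch into $\mathbf{\varpi}^{\mathfrak{m}}(s,\cdot)\in C(\mathbb{R};\mathrm{Aut}(E))$, and global uniqueness follows from a Gr\"{o}nwall argument on compact time intervals. Joint continuity $\mathbf{\varpi}^{\mathfrak{m}}\in C(\mathbb{R}^{2};\mathrm{Aut}(E))$ then results from continuous dependence of the contraction $\Gamma_{s}$ on the parameter $s$ (which enters only through the initial time of the cocycle) together with the continuity in $t$ and $\rho$ already obtained.

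\textbf{Main obstacle.} I expect the two delicate points to be: (i) the Lieb--Robinson stability estimate for $\Psi\mapsto\tau^{\Psi}$, in a form uniform enough to drive the fixed-point iteration; and (ii) the verification that $\mathbf{\varpi}^{\mathfrak{m}}$ takes values in $\mathrm{Aut}(E)$ and not merely in $C(E;E)$, i.e. the existence and weak$^{\ast}$-continuity of the inverse state-flow. Everything else amounts to bookkeeping with the norms (\ref{iteration0}), (\ref{definition 0bis}), (\ref{def long range2}) and the elementary estimates (\ref{e phi}) and (\ref{inequality trivial}).
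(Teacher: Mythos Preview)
The paper does not prove this theorem: it is stated as a quotation of \cite[Theorem 6.5]{BruPedra-MFII}, the companion paper, and no argument is given here. So there is nothing in the present paper to compare your proposal against at the level of proof details.

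That said, your outline is the natural strategy and almost certainly the one used in \cite{BruPedra-MFII}: a Picard iteration on the self-consistency map, with the contraction driven by a Lipschitz estimate for $\Psi\mapsto\tau^{\Psi}$ coming from Lieb--Robinson bounds, and globalization from the $\mathbf{\xi}$-independent bound (\ref{inequality trivial}). Your identification of the two genuine difficulties---the quantitative stability estimate for the infinite-volume dynamics and the $\mathrm{Aut}(E)$-valuedness of the flow---is accurate. For (ii), note that the reverse cocycle property alone does not give invertibility of the \emph{state-dependent} flow $\rho\mapsto\rho\circ\tau_{t,s}^{\mathbf{\Phi}^{(\mathfrak{m},\mathbf{\xi})}(\rho)}$, since the generator depends on $\rho$; the cleanest route is the one you sketch, solving the companion equation for the inverse by the same contraction, but this really requires the fixed point $\mathbf{\varpi}^{\mathfrak{m}}$ itself (not an arbitrary $\mathbf{\xi}$), so the invertibility should be established \emph{after} constructing the fixed point, using that $\mathbf{\varpi}^{\mathfrak{m}}(s,\cdot)$ satisfies a flow property in $t$ inherited from the cocycle identity of $\tau$.
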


The continuous family $\mathbf{\varpi }^{\mathfrak{m}}$ of Theorem \ref%
{theorem sdfkjsdklfjsdklfj} yields a family $(V_{t,s}^{\mathfrak{m}%
})_{s,t\in \mathbb{R}}$ of $\ast $-automorphisms of $\mathfrak{C}\doteq
C\left( E;\mathbb{C}\right) $ defined by 
\begin{equation}
V_{t,s}^{\mathfrak{m}}\left( f\right) \doteq f\circ \mathbf{\varpi }^{%
\mathfrak{m}}\left( s,t\right) \ ,\qquad f\in \mathfrak{C},\ s,t\in \mathbb{R%
}\ .  \label{classical evolution family}
\end{equation}%
It is a strongly continuous two-parameter family defining a classical
dynamics on the classical $C^{\ast }$-algebra $\mathfrak{C}$ of continuous
complex-valued functions on states, defined by (\ref{metaciagre set 2})-(\ref%
{metaciagre set 2bis}).

This classical dynamics is a Feller evolution, as explained in \cite[Section
6.5]{BruPedra-MFII}. Additionally, the classical flow $\mathbf{\varpi }^{%
\mathfrak{m}}\left( s,t\right) $, $s,t\in \mathbb{R}$, conserves the
even-state space $E^{+}$. Thus, $V_{t,s}^{\mathfrak{m}}$ can be seen as
either a mapping from $C(E^{+};\mathbb{C})$ to itself or from $C(E\backslash
E^{+};\mathbb{C})$ to itself:%
\begin{equation*}
V_{t,s}^{\mathfrak{m}}\left( f|_{E^{+}}\right) \doteq \left( V_{t,s}^{%
\mathfrak{m}}f\right) |_{E^{+}}\ ,\qquad V_{t,s}^{\mathfrak{m}}\left(
f|_{E\backslash E^{+}}\right) \doteq \left( V_{t,s}^{\mathfrak{m}}f\right)
|_{E\backslash E^{+}}\ ,\qquad f\in \mathfrak{C},\ s,t\in \mathbb{R}\ .
\end{equation*}%
Recall that $E^{+}$ is the (physically relevant) weak$^{\ast }$-compact
convex set of even states defined by (\ref{gauge invariant states}).

If $\mathbf{F}$ decays sufficiently fast, as $\left\vert x-y\right\vert
\rightarrow \infty $, then, using the local classical energy functions \cite[%
Definition 6.8]{BruPedra-MFII} associated with $\mathfrak{m}\in \mathcal{M}$%
, that is, the functions%
\begin{equation*}
\mathrm{h}_{L}^{\mathfrak{m}}\doteq \widehat{U_{L}^{\Phi }}+\sum_{n\in 
\mathbb{N}}\frac{1}{\left\vert \Lambda _{L}\right\vert ^{n-1}}\int_{\mathbb{S%
}^{n}}\ \widehat{U_{L}^{\Psi ^{(1)}}}\cdots \widehat{U_{L}^{\Psi ^{(n)}}}\ 
\mathfrak{a}\left( t\right) _{n}\left( \mathrm{d}\Psi ^{(1)},\ldots ,\mathrm{%
d}\Psi ^{(n)}\right) \,,\qquad L\in \mathbb{N}\ ,
\end{equation*}%
(see (\ref{fA})), we prove in \cite[Theorem 6.10]{BruPedra-MFII} that, for
any $s,t\in \mathbb{R}$, $f\in \mathfrak{C}_{\mathcal{U}_{0}}$ and $%
\mathfrak{m}\in C_{b}(\mathbb{R};\mathcal{M}_{\Lambda }\cap \mathcal{M}_{1})$%
, 
\begin{equation*}
\partial _{t}V_{t,s}^{\mathfrak{m}}\left( f\right) |_{E_{\mathrm{p}%
}}=\lim_{L\rightarrow \infty }V_{t,s}^{\mathfrak{m}}\left( \{\mathrm{h}_{L}^{%
\mathfrak{m}(t)},f\}\right) |_{E_{\mathrm{p}}}
\end{equation*}%
while, for any $\mathfrak{m}\in C_{b}(\mathbb{R};\mathcal{M}_{\Lambda })$,%
\begin{equation*}
\partial _{s}V_{t,s}^{\mathfrak{m}}\left( f\right) |_{E_{\mathrm{p}%
}}=-\lim_{L\rightarrow \infty }\{\mathrm{h}_{L}^{\mathfrak{m}(s)},V_{t,s}^{%
\mathfrak{m}}(f)\}|_{E_{\mathrm{p}}}\ ,
\end{equation*}%
where $\{\cdot ,\cdot \}$ is the Poisson bracket referred to in Section \ref%
{Poisson Algebra}. All limits have to be understood in the point-wise sense
on the metrizable weak$^{\ast }$-dense (in $E^{+}$) convex set $E_{\mathrm{p}%
}$ of periodic states defined by (\ref{set of periodic states}). These
non-trivial statements are proven from Lieb-Robinson bounds for
multi-commutators derived in \cite{brupedraLR}. In the autonomous situation,
we obtain the usual (autonomous) dynamics of classical mechanics written in
terms of Poisson brackets (see, e.g., \cite[Proposition 10.2.3]%
{classical-dynamics}), i.e., \emph{Liouville's equation}. See \cite[%
Corollary 6.11]{BruPedra-MFII}.

\subsection{Quantum Part of Long-Range Dynamics\label{Quantum Part}}

The classical part of the dynamics of lattice-fermion systems with
long-range interactions, which is defined within the classical $C^{\ast }$%
-algebra $\mathfrak{C}$ defined by (\ref{metaciagre set 2})-(\ref{metaciagre
set 2bis}), is shown to result from the solution to the self-consistency
equation of Theorem \ref{theorem sdfkjsdklfjsdklfj}.

As soon as only the classical part of the long-range dynamics is concerned,
there is no need to assume any additional property on initial states.
However, for the quantum part, we need periodic states as initial states.
Note that the set of all periodic states is still a weak$^{\ast }$-dense
subset of the physically relevant space $E^{+}$ of all even states, by \cite[%
Proposition 2.3]{BruPedra-MFII}.

Fix from now on $\vec{\ell}\in \mathbb{N}^{d}$ and consider the metrizable
and weak$^{\ast }$-compact convex set $E_{\vec{\ell}}$ of $\vec{\ell}$%
-periodic states defined by (\ref{periodic invariant states}), with $%
\mathcal{E}(E_{\vec{\ell}})$ being its (non-empty) set of extreme points. By
the Choquet theorem \cite[Theorem 10.18]{BruPedra2}, for any state $\rho \in
E_{\vec{\ell}}$, 
\begin{equation}
\rho \left( A\right) =\int_{E_{\vec{\ell}}}\hat{\rho}\left( A\right) \mu
_{\rho }\left( \mathrm{d}\hat{\rho}\right) =\int_{\mathcal{E}(E_{\vec{\ell}%
})}\hat{\rho}\left( A\right) \mu _{\rho }\left( \mathrm{d}\hat{\rho}\right) 
\text{ },\qquad A\in \mathcal{U}\ ,  \label{decomposition choquet}
\end{equation}%
where $\mu _{\rho }\equiv \mu _{\rho }^{(\vec{\ell})}$ is the (unique)
orthogonal probability measure of Theorem \ref{theorem choquet}. Then, any
state $\rho \in E_{\vec{\ell}}\subseteq \mathcal{U}^{\ast }$ naturally
extends to a state on the quantum $C^{\ast }$-algebra $\mathfrak{U}$ of
continuous $\mathcal{U}$-valued functions on states, also denoted by $\rho
\in \mathfrak{U}^{\ast }$, via the definition%
\begin{equation}
\rho \left( f\right) \doteq \int_{\mathcal{E}(E_{\vec{\ell}})}\hat{\rho}%
\left( f\left( \hat{\rho}\right) \right) \mu _{\rho }\left( \mathrm{d}\hat{%
\rho}\right) \text{ },\qquad f\in \mathfrak{U}\ .  \label{ddddd}
\end{equation}%
Cf. Definition \ref{def Extension of states} with $\mathcal{X}=\mathcal{U}$
and $F=E_{\vec{\ell}}$. Note that this extension equals $\mu _{\rho }\circ
\Xi $, where $\mu _{\rho }$ is a seen as a state on $\mathfrak{C}$ and $\Xi $
is the conditional expectation from $\mathfrak{U}$ to $\mathfrak{C}\subseteq 
\mathfrak{U}$ defined by 
\begin{equation*}
\Xi \left( f\right) \left( \rho \right) =\rho \left( f\left( \rho \right)
\right) \ ,\qquad \rho \in E\ .
\end{equation*}

Observe that this extension is $\vec{\ell}$-dependent: If $\vec{\ell}_{1},%
\vec{\ell}_{2}\in \mathbb{N}^{d}$ is such that $\mathbb{Z}_{\vec{\ell}%
_{1}}^{d}\subseteq \mathbb{Z}_{\vec{\ell}_{2}}^{d}$ (see (\ref{group})),
then $E_{\vec{\ell}_{2}}\subseteq E_{\vec{\ell}_{1}}$, but an extreme state
of $E_{\vec{\ell}_{2}}$ is not necessarily an extreme state of $E_{\vec{\ell}%
_{1}}$. Nevertheless, if $\rho \in E_{\vec{\ell}_{2}}$ then the unique
probability measure $\mu _{\rho }^{(\vec{\ell}_{1})}$ representing $\rho $
in $E_{\vec{\ell}_{1}}$ is directly related to the unique one $\mu _{\rho
}^{(\vec{\ell}_{2})}$\ representing $\rho $ in $E_{\vec{\ell}_{2}}$. In
fact, $\mu _{\rho }^{(\vec{\ell}_{2})}$ is the pushforward of $\mu _{\rho
}^{(\vec{\ell}_{1})}$ through a natural space-averaging mapping $\mathfrak{x}%
_{\vec{\ell}_{1},\vec{\ell}_{2}}$ (see (\ref{space-average fcuntion})). This
is shown in Corollary \ref{corllary-pushforward}. For any $\rho \in E_{\vec{%
\ell}_{2}}$ and any function $f\in \mathfrak{U}$ satisfying 
\begin{equation}
\Xi \left( f\right) \circ \mathfrak{x}_{\vec{\ell}_{1},\vec{\ell}_{2}}\left( 
\hat{\rho}\right) =\Xi \left( f\right) \left( \hat{\rho}\right) \ ,\qquad 
\hat{\rho}\in \mathcal{E}(E_{\vec{\ell}_{1}})\ ,  \label{errrrrrrrtttgg}
\end{equation}%
it follows that%
\begin{equation}
\rho ^{(\vec{\ell}_{1})}\left( f\right) \doteq \int_{\mathcal{E}(E_{\vec{\ell%
}_{1}})}\hat{\rho}\left( f\left( \hat{\rho}\right) \right) \mu _{\rho }^{(%
\vec{\ell}_{1})}\left( \mathrm{d}\hat{\rho}\right) =\int_{\mathcal{E}(E_{%
\vec{\ell}_{2}})}\hat{\rho}\left( f\left( \hat{\rho}\right) \right) \mu
_{\rho }^{(\vec{\ell}_{2})}\left( \mathrm{d}\hat{\rho}\right) \doteq \rho ^{(%
\vec{\ell}_{2})}\left( f\right) \ .  \label{eeeeeeeeeeeeee}
\end{equation}%
In general, $\rho ^{(\vec{\ell}_{1})}$ and $\rho ^{(\vec{\ell}_{2})}$ are
two different extensions to $\mathfrak{U}$ of the state $\rho \in E_{\vec{%
\ell}_{2}}\subseteq E_{\vec{\ell}_{1}}$. In other words, the state of $%
\mathfrak{U}^{\ast }$ defined by (\ref{ddddd}) is $\vec{\ell}$-dependent.
For instance, the set $E_{(1,\ldots ,1)}$ of all translation-invariant
states satisfies 
\begin{equation*}
E_{(1,\ldots ,1)}\subseteq \bigcap_{\vec{\ell}\in \mathbb{N}^{d}}E_{\vec{\ell%
}}
\end{equation*}%
and an arbitrary $\rho \in E_{(1,\ldots ,1)}$ generally leads to a different
extended state of $\mathfrak{U}^{\ast }$ for each $\vec{\ell}\in \mathbb{N}%
^{d}$.

A very nice characterization of cyclic representations of such an extension
of a periodic state $\rho \in E_{\vec{\ell}}\subseteq \mathcal{U}^{\ast }$
is given in Theorem \ref{coro Extension of states}, the main assertions of
which can be phrased as follows, for lattice fermion systems:

\begin{proposition}[Cyclic representations of periodic states]
\label{coro Extension of states copy(1)}\mbox{ }\newline
Fix $\vec{\ell}\in \mathbb{N}^{d}$ and $\rho \in E_{\vec{\ell}}$, seen as a
state of either $\mathcal{U}^{\ast }$ or $\mathfrak{U}^{\ast }$\emph{.}%
\newline
\emph{(i)} Let $(\mathcal{H}_{\rho },\pi _{\rho },\Omega _{\rho })$ be any
cyclic representation of $\rho \in \mathcal{U}^{\ast }$. Then, there exists
a unique representation $\Pi _{\rho }$ of $\mathfrak{U}$ on $\mathcal{H}%
_{\rho }$ such that $\Pi _{\rho }|_{\mathcal{U}}=\pi _{\rho }$ and $(%
\mathcal{H}_{\rho },\Pi _{\rho },\Omega _{\rho })$ is a cyclic
representation of $\rho \in \mathfrak{U}^{\ast }$. \newline
\emph{(ii)} Conversely, let $(\mathcal{H}_{\rho },\Pi _{\rho },\Omega _{\rho
})$ be any cyclic representation of $\rho \in \mathfrak{U}^{\ast }$. Then, $(%
\mathcal{H}_{\rho },\Pi _{\rho }|_{\mathcal{U}},\Omega _{\rho })$ is a
cyclic representation of $\rho \in \mathcal{U}^{\ast }$,%
\begin{equation*}
\lbrack \Pi _{\rho }\left( \mathfrak{U}\right) ]^{\prime \prime }=[\Pi
_{\rho }\left( \mathcal{U}\right) ]^{\prime \prime }\qquad \text{and}\qquad
\lbrack \Pi _{\rho }\left( \mathfrak{C}\right) ]^{\prime \prime }\subseteq
\lbrack \Pi _{\rho }\left( \mathcal{U}\right) ]^{\prime }\cap \lbrack \Pi
_{\rho }\left( \mathcal{U}\right) ]^{\prime \prime }\ .
\end{equation*}
\end{proposition}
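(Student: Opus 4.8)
The plan is to deduce this from the abstract Theorem~\ref{coro Extension of states} applied with $\mathcal{X}=\mathcal{U}$ and $F=E_{\vec{\ell}}$ (cf.\ Definition~\ref{def Extension of states}), whose only non-formal hypothesis is that the Choquet measure $\mu_{\rho}$ in (\ref{ddddd}) be \emph{orthogonal} --- which is exactly Theorem~\ref{theorem choquet}. For the abstract statement I would use three structural facts: first, $\mathfrak{C}$ is \emph{central} in $\mathfrak{U}$, since $g(\omega)f(\omega)=f(\omega)g(\omega)$ whenever $g$ is scalar-valued; second, $\mathfrak{U}_{0}=\mathrm{span}\{\mathfrak{C}\,\mathcal{U}_{0}\}$ is dense in $\mathfrak{U}$, and $\mathfrak{U}=C(E;\mathcal{U})$ is the (minimal) tensor product $\mathfrak{C}\otimes\mathcal{U}$ with $\mathfrak{C}=C(E)$ nuclear; third, on $\mathfrak{U}$ the state $\rho$ restricts on $\mathfrak{C}$ to $\mu_{\rho}$. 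The main tool is the Bratteli--Robinson theory of orthogonal measures \cite[Section~4.1]{BrattelliRobinsonI}: to $\mu_{\rho}$, with barycentre $\rho$ and GNS triple $(\mathcal{H}_{\rho},\pi_{\rho},\Omega_{\rho})$ of $\rho\in\mathcal{U}^{\ast}$, there corresponds an abelian von Neumann subalgebra $\mathfrak{B}_{\rho}\subseteq\pi_{\rho}(\mathcal{U})^{\prime}$ and a unital $\ast$-homomorphism $\kappa_{\rho}:\mathfrak{C}\to\mathfrak{B}_{\rho}$ with $\langle\Omega_{\rho},\kappa_{\rho}(g)\pi_{\rho}(C)\Omega_{\rho}\rangle_{\mathcal{H}_{\rho}}=\int_{\mathcal{E}(E_{\vec{\ell}})}g(\hat{\rho})\,\hat{\rho}(C)\,\mu_{\rho}(\mathrm{d}\hat{\rho})$ for all $g\in\mathfrak{C}$, $C\in\mathcal{U}$.

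For part~(i): since $\kappa_{\rho}$ and $\pi_{\rho}$ are representations with commuting ranges and $C(E)$ is nuclear, they jointly induce a representation $\Pi_{\rho}\doteq\kappa_{\rho}\times\pi_{\rho}$ of $\mathfrak{C}\otimes\mathcal{U}=\mathfrak{U}$ on $\mathcal{H}_{\rho}$, determined by $\Pi_{\rho}(gA)=\kappa_{\rho}(g)\pi_{\rho}(A)$ on $\mathfrak{U}_{0}$. Then $\Pi_{\rho}|_{\mathcal{U}}=\pi_{\rho}$ (take $g=\mathfrak{1}$), so $\Omega_{\rho}$ remains cyclic for $\Pi_{\rho}(\mathfrak{U})\supseteq\pi_{\rho}(\mathcal{U})$; and $\langle\Omega_{\rho},\Pi_{\rho}(gA)\Omega_{\rho}\rangle=\int g(\hat{\rho})\hat{\rho}(A)\mu_{\rho}(\mathrm{d}\hat{\rho})=\rho(gA)$ by (\ref{ddddd}) and $\mathrm{supp}\,\mu_{\rho}\subseteq\mathcal{E}(E_{\vec{\ell}})$, hence $\langle\Omega_{\rho},\Pi_{\rho}(f)\Omega_{\rho}\rangle=\rho(f)$ for all $f\in\mathfrak{U}$ by density of $\mathfrak{U}_{0}$, so $(\mathcal{H}_{\rho},\Pi_{\rho},\Omega_{\rho})$ is a cyclic representation of $\rho\in\mathfrak{U}^{\ast}$. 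For uniqueness, any other extension $\Pi_{\rho}^{\prime}$ with the required properties maps $\mathfrak{C}$ into $\pi_{\rho}(\mathcal{U})^{\prime}$ (centrality of $\mathfrak{C}$ in $\mathfrak{U}$) and satisfies $\langle\pi_{\rho}(B)\Omega_{\rho},\Pi_{\rho}^{\prime}(g)\pi_{\rho}(A)\Omega_{\rho}\rangle=\rho(gB^{\ast}A)=\int g(\hat{\rho})\hat{\rho}(B^{\ast}A)\mu_{\rho}(\mathrm{d}\hat{\rho})$, the same matrix elements as $\kappa_{\rho}(g)$ against the dense family $\pi_{\rho}(\mathcal{U})\Omega_{\rho}$; hence $\Pi_{\rho}^{\prime}|_{\mathfrak{C}}=\kappa_{\rho}$ and $\Pi_{\rho}^{\prime}=\Pi_{\rho}$.

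For part~(ii), let $(\mathcal{H}_{\rho},\Pi_{\rho},\Omega_{\rho})$ be any cyclic representation of $\rho\in\mathfrak{U}^{\ast}$. By uniqueness of the GNS construction it is unitarily equivalent to the one built in~(i), so I may assume $\Pi_{\rho}|_{\mathcal{U}}=\pi_{\rho}$, which is then the GNS triple of $\rho\in\mathcal{U}^{\ast}$ (in particular $(\mathcal{H}_{\rho},\Pi_{\rho}|_{\mathcal{U}},\Omega_{\rho})$ is cyclic), and $\Pi_{\rho}(\mathfrak{C})=\kappa_{\rho}(\mathfrak{C})\subseteq\pi_{\rho}(\mathcal{U})^{\prime}$. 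Density of $\mathfrak{U}_{0}$ gives $[\Pi_{\rho}(\mathfrak{U})]^{\prime\prime}=(\pi_{\rho}(\mathcal{U})\cup\kappa_{\rho}(\mathfrak{C}))^{\prime\prime}$, while $[\Pi_{\rho}(\mathfrak{C})]^{\prime\prime}=[\kappa_{\rho}(\mathfrak{C})]^{\prime\prime}\subseteq\pi_{\rho}(\mathcal{U})^{\prime}=[\Pi_{\rho}(\mathcal{U})]^{\prime}$. The decisive remaining point is $[\kappa_{\rho}(\mathfrak{C})]^{\prime\prime}\subseteq[\pi_{\rho}(\mathcal{U})]^{\prime\prime}$ --- subcentrality of $\mu_{\rho}$ --- which together with the previous relation upgrades $[\Pi_{\rho}(\mathfrak{U})]^{\prime\prime}$ to $[\pi_{\rho}(\mathcal{U})]^{\prime\prime}$ and places $[\Pi_{\rho}(\mathfrak{C})]^{\prime\prime}$ in $[\Pi_{\rho}(\mathcal{U})]^{\prime}\cap[\Pi_{\rho}(\mathcal{U})]^{\prime\prime}$. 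Here I would use that, by Theorem~\ref{theorem choquet}, $\mu_{\rho}$ is the ergodic decomposition of $\rho$ with respect to $\mathbb{Z}_{\vec{\ell}}^{d}$, so that $\mathfrak{B}_{\rho}$ is generated by the strong limits $\eta_{A}$ of the space averages $|\Lambda_{L}|^{-1}\sum_{x\in\Lambda_{L}}\pi_{\rho}(\alpha_{x}(A))$ with $A$ an \emph{even} local element of $\mathcal{U}^{+}$; each such $\eta_{A}$ lies in $[\pi_{\rho}(\mathcal{U})]^{\prime\prime}$ as a strong limit and in $\pi_{\rho}(\mathcal{U})^{\prime}$ because an even local element commutes with \emph{every} element of $\mathcal{U}$ at a disjoint support, so $\eta_{A}$ --- and hence $\mathfrak{B}_{\rho}$ --- lies in the centre $[\pi_{\rho}(\mathcal{U})]^{\prime}\cap[\pi_{\rho}(\mathcal{U})]^{\prime\prime}$.

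The hard part is this last subcentrality step: $\mu_{\rho}$ must be not merely orthogonal but subcentral, and the genuine obstruction is the fermionic sign, since $\mathcal{U}$ itself is not asymptotically abelian. The way around it is the device used elsewhere in the paper (for Theorems~\ref{theorem choquet} and~\ref{theorem structure of omega copy(2)}): pass to the asymptotically abelian even subalgebra $\mathcal{U}^{+}$, use that $\vec{\ell}$-periodic states are even, and transfer the conclusion back to $\mathcal{U}$ via the commutation of even local elements with arbitrary local elements at disjoint supports; the convergence of the space averages and the ergodicity of extreme $\vec{\ell}$-periodic states come from \cite[Theorem~1.16]{BruPedra2} together with the Lieb--Robinson bounds of \cite[Section~4.3]{brupedraLR}.
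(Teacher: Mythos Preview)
Your high-level plan coincides with the paper's proof: invoke Theorem~\ref{coro Extension of states} together with the orthogonality of $\mu_{\rho}$ from Theorem~\ref{theorem choquet}. The paper's entire argument is literally that one line; everything else you wrote is a (largely correct) unpacking of how the abstract Theorem~\ref{coro Extension of states} is established.

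Two small corrections. First, you must take $F=E$, not $F=E_{\vec{\ell}}$: the algebra in the statement is $\mathfrak{U}=C(E;\mathcal{U})$ and $\mathfrak{C}=C(E;\mathbb{C})$, so Theorem~\ref{coro Extension of states} has to be applied with $C(F;\mathcal{X})=\mathfrak{U}$, i.e., $F=E$. This is harmless, since $\mu_{\rho}$ is supported on $\mathcal{E}(E_{\vec{\ell}})\subseteq E$ and Theorem~\ref{theorem choquet} asserts orthogonality precisely as a measure on $E$. Second, Lieb--Robinson bounds play no role here: the strong convergence of space averages of even elements and the fact that the limits lie in $\pi_{\rho}(\mathcal{U})'\cap\pi_{\rho}(\mathcal{U})''$ is Lemma~\ref{Lemma sympa copy(1)}, whose proof uses only the von Neumann ergodic theorem and the asymptotic abelianness~(\ref{space 00}) of $\mathcal{U}^{+}$ inside $\mathcal{U}$.

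Where you genuinely diverge from the paper is in the justification of subcentrality, $[\kappa_{\rho}(\mathfrak{C})]''\subseteq\pi_{\rho}(\mathcal{U})'\cap\pi_{\rho}(\mathcal{U})''$. The paper obtains this abstractly inside the proof of Theorem~\ref{coro Extension of states}~(iii): orthogonality of $\mu$ makes $(\mathcal{H}_{F}^{\oplus},\pi_{F}^{\oplus},\Omega_{F}^{\oplus})$ a cyclic representation (Corollary~\ref{Effros theorem}), whence $\varkappa_{\mu}(L^{\infty}(F,\mu))$ is identified with the diagonal algebra $N_{F}$, and then~(\ref{fgkjghdfkjghdfkj}) is read off from the direct-integral structure via Theorem~\ref{lemma direct integral von neumann4 copy(1)}. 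Your route is instead lattice-specific: identify $\mathfrak{B}_{\rho}$ with the algebra generated by the ergodic-average limits $A_{\infty}^{\rho}$ for $A\in\mathcal{U}^{+}$ and invoke Lemma~\ref{Lemma sympa copy(1)} to place each $A_{\infty}^{\rho}$ in the center. This is a legitimate alternative and has the virtue of being concrete, but the identification ``$\mathfrak{B}_{\rho}$ is generated by the $\eta_{A}$'' is the nontrivial step you would still have to supply (it is the standard link between the ergodic decomposition and the algebra of $\mathbb{Z}_{\vec{\ell}}^{d}$-invariants in $\pi_{\rho}(\mathcal{U}^{+})''$, transported to $\mathcal{U}$ via evenness of periodic states as in the proof of Theorem~\ref{theorem choquet}).
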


\begin{proof}
Apply Theorem \ref{coro Extension of states} for $\mathcal{X}=\mathcal{U}$, $%
F=E$ and $\mu =\mu _{\rho }$, observing that $\mu _{\rho }$ is the unique
Choquet measure, relative to the simplex $E_{\vec{\ell}}$, representing $%
\rho $, i.e., $\rho $ is the barycenter of $\mu _{\rho }$. As a measure on $%
E $, $\mu _{\rho }$ is an orthogonal measure. See Theorem \ref{theorem
choquet}.
\end{proof}

\noindent Note that the representation $\Pi _{\rho }$ in Proposition \ref%
{coro Extension of states copy(1)} is of course $\vec{\ell}$-dependent.
Recall now the following objects introduced above:

\begin{itemize}
\item For any $\Lambda \in \mathcal{P}_{f}$ and $\mathfrak{m}\in C_{b}(%
\mathbb{R};\mathcal{M}_{\Lambda })$, $\mathbf{\varpi }^{\mathfrak{m}}\in
C\left( \mathbb{R}^{2};\mathrm{Aut}\left( E\right) \right) $ is the solution
to the self-consistency equation of Theorem \ref{theorem sdfkjsdklfjsdklfj},
proven in \cite[Theorem 6.5]{BruPedra-MFII}.

\item For any $\mathbf{\Psi }\in C(\mathbb{R};\mathfrak{W}^{\mathbb{R}})$, $(%
\mathfrak{T}_{t,s}^{\mathbf{\Psi }})_{s,t\in \mathbb{R}}$ is the strongly
continuous two-parameter family of $\ast $-auto%
\-%
morphisms of $\mathfrak{U}$ defined by Equation (\ref{definiotion tho frac}).

\item For any $\mathfrak{m}\in C\left( \mathbb{R};\mathcal{M}\right) $ and
each $\mathbf{\xi }\in C\left( \mathbb{R};\mathrm{Aut}\left( E\right)
\right) $, $\mathbf{\Phi }^{(\mathfrak{m},\mathbf{\xi })}$ is the mapping
from $\mathbb{R}$ to $\mathfrak{W}^{\mathbb{R}}$ of Definition \ref%
{definition BCS-type model approximated}. By (\ref{inequality trivial}), if $%
\mathfrak{m}\in C\left( \mathbb{R};\mathcal{M}\right) $ and $\mathbf{\xi }%
\in C\left( \mathbb{R};\mathrm{Aut}\left( E\right) \right) $ then $\mathbf{%
\Phi }^{(\mathfrak{m},\mathbf{\xi })}\in C(\mathbb{R};\mathfrak{W}^{\mathbb{R%
}})$.

\item $\mathcal{M}_{1}$ is the Banach space of all translation-invariant
long-range models defined by (\ref{translatino invariatn long range models}).
\end{itemize}

\noindent Using the orthogonality of the probability measures $\mu _{\rho }$
and the ergodicity of extreme states of $E_{\vec{\ell}}$, we obtain the
existence of the quantum part of long-range dynamics, which is the main
result of this paper:\ 

\begin{theorem}[Quantum part of long-range dynamics]
\label{theorem structure of omega}\mbox{ }\newline
Fix $\Lambda \in \mathcal{P}_{f}$, $\mathfrak{m}\in C_{b}(\mathbb{R};%
\mathcal{M}_{\Lambda }\cap \mathcal{M}_{1})$, $\vec{\ell}\in \mathbb{N}^{d}$
and $\rho \in E_{\vec{\ell}}$. Let $\left( \mathcal{H}_{\rho },\Pi _{\rho
},\Omega _{\rho }\right) $ be a cyclic representation of $\rho $, seen as a
state (\ref{ddddd}) of $\mathfrak{U}^{\ast }$. Then, for any $s,t\in \mathbb{%
R}$ and $A\in \mathcal{U}\subseteq \mathfrak{U}$, in the $\sigma $-weak
topology,%
\begin{equation}
\lim_{L\rightarrow \infty }\pi _{\rho }\left( \tau _{t,s}^{(L,\mathfrak{m}%
)}\left( A\right) \right) =\lim_{L\rightarrow \infty }\Pi _{\rho }\left(
\tau _{t,s}^{(L,\mathfrak{m})}\left( A\right) \right) =\left. \Pi _{\rho
}\left( \mathfrak{T}_{t,s}^{\mathbf{\Phi }^{(\mathfrak{m},\mathbf{\varpi }^{%
\mathfrak{m}}\left( \alpha ,\cdot \right) )}}\left( A\right) \right)
\right\vert _{\alpha =s}\in \mathcal{B}\left( \mathcal{H}_{\rho }\right) \ .
\label{tttt}
\end{equation}
\end{theorem}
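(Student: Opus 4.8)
The plan is to follow the route sketched in the introduction: reduce the claim to the already-established extreme-state case (Theorem \ref{theorem structure of omega copy(2)}) via the ergodic (Choquet) decomposition of $\rho$, and then recover the general statement by integrating the fibrewise convergence. First I would normalise the cyclic representation. By Theorem \ref{theorem choquet}, the Choquet measure $\mu _{\rho }$ of $\rho $ relative to the simplex $E_{\vec{\ell}}$ is orthogonal and supported on $\mathcal{E}(E_{\vec{\ell}})$, whose points are ergodic by \cite[Theorem 1.16]{BruPedra2}. For each $\hat{\rho}\in \mathcal{E}(E_{\vec{\ell}})$ let $(\mathcal{H}_{\hat{\rho}},\pi _{\hat{\rho}},\Omega _{\hat{\rho}})$ be the GNS representation of $\hat{\rho}\in \mathcal{U}^{\ast }$ and $\Pi _{\hat{\rho}}$ its extension to $\mathfrak{U}$ from Proposition \ref{coro Extension of states copy(1)}(i). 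Since $\mu _{\rho }$ is orthogonal, Theorem \ref{coro Extension of states} shows that the direct integral
\begin{equation*}
\left( \mathcal{H}_{\rho }^{\oplus },\pi _{\rho }^{\oplus },\Omega _{\rho }^{\oplus }\right) \doteq \int_{\mathcal{E}(E_{\vec{\ell}})}^{\oplus }\left( \mathcal{H}_{\hat{\rho}},\Pi _{\hat{\rho}},\Omega _{\hat{\rho}}\right) \mu _{\rho }(\mathrm{d}\hat{\rho})
\end{equation*}
is a cyclic representation of the extended state $\rho \in \mathfrak{U}^{\ast }$ of (\ref{ddddd}). As cyclic representations are unique up to unitary equivalence, and both $\Pi _{\rho }$ and the $\sigma $-weak topology transform covariantly under unitaries, I may assume $(\mathcal{H}_{\rho },\Pi _{\rho },\Omega _{\rho })=(\mathcal{H}_{\rho }^{\oplus },\pi _{\rho }^{\oplus },\Omega _{\rho }^{\oplus })$ and $\pi _{\rho }=\Pi _{\rho }|_{\mathcal{U}}$ (a cyclic representation of $\rho \in \mathcal{U}^{\ast }$ by Proposition \ref{coro Extension of states copy(1)}(ii)). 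With this identification the first equality in (\ref{tttt}) is trivial, since $\tau _{t,s}^{(L,\mathfrak{m})}(A)\in \mathcal{U}$ and $\Pi _{\rho }$ restricts to $\pi _{\rho }$ on $\mathcal{U}$.

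Next I would treat the fibres. Because $\hat{\rho}$ is extreme in $E_{\vec{\ell}}$, its own Choquet measure is $\delta _{\hat{\rho}}$, so $\Pi _{\hat{\rho}}(g)=\pi _{\hat{\rho}}(g(\hat{\rho}))$ for every $g\in \mathfrak{U}$; hence the decomposable operator representing $g\in \mathfrak{U}$ in $\pi _{\rho }^{\oplus }$ has fibres $\pi _{\hat{\rho}}(g(\hat{\rho}))$. Note that $\mathbf{\Phi }^{(\mathfrak{m},\mathbf{\varpi }^{\mathfrak{m}}(\alpha ,\cdot ))}\in C(\mathbb{R};\mathfrak{W}^{\mathbb{R}})$ by (\ref{inequality trivial}) and the continuity of $\mathbf{\varpi }^{\mathfrak{m}}$ (Theorem \ref{theorem sdfkjsdklfjsdklfj}), so that $g_{t,s}\doteq \mathfrak{T}_{t,s}^{\mathbf{\Phi }^{(\mathfrak{m},\mathbf{\varpi }^{\mathfrak{m}}(\alpha ,\cdot ))}}(A)|_{\alpha =s}$ is a genuine element of $\mathfrak{U}$ with norm at most $\|A\|_{\mathcal{U}}$. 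Applying Theorem \ref{theorem structure of omega copy(2)} to each $\hat{\rho}\in \mathcal{E}(E_{\vec{\ell}})$ — the step where ergodicity of $\hat{\rho}$, the Lieb--Robinson bounds of \cite{brupedraLR}, and the translation invariance $\mathfrak{m}\in \mathcal{M}_{1}$ are used — gives, in the $\sigma $-weak topology of $\mathcal{B}(\mathcal{H}_{\hat{\rho}})$,
\begin{equation*}
\lim_{L\to \infty }\pi _{\hat{\rho}}\!\left( \tau _{t,s}^{(L,\mathfrak{m})}(A)\right) =\pi _{\hat{\rho}}\!\left( g_{t,s}(\hat{\rho})\right) ,
\end{equation*}
the right-hand side being exactly the $\hat{\rho}$-fibre of $\Pi _{\rho }(g_{t,s})=\pi _{\rho }^{\oplus }(g_{t,s})$.

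Finally I would integrate this fibrewise convergence, which I expect to be the crux. Each $\tau _{t,s}^{(L,\mathfrak{m})}$ is isometric and each $\pi _{\hat{\rho}}$ a $\ast $-representation, so the operators $\pi _{\hat{\rho}}(\tau _{t,s}^{(L,\mathfrak{m})}(A))$ are bounded uniformly in $L$ and $\hat{\rho}$ by $\|A\|_{\mathcal{U}}$ (in particular the claimed limit $\Pi _{\rho }(g_{t,s})$ indeed lies in $\mathcal{B}(\mathcal{H}_{\rho })$), and on norm-bounded subsets of $\mathcal{B}(\mathcal{H})$ the $\sigma $-weak and weak-operator topologies coincide; it therefore suffices to prove weak-operator convergence on $\mathcal{H}_{\rho }^{\oplus }$. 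For $\Psi ,\Phi \in \mathcal{H}_{\rho }^{\oplus }$ decomposability gives
\begin{equation*}
\left\langle \Psi ,\pi _{\rho }\!\left( \tau _{t,s}^{(L,\mathfrak{m})}(A)\right) \Phi \right\rangle _{\mathcal{H}_{\rho }^{\oplus }}=\int_{\mathcal{E}(E_{\vec{\ell}})}\left\langle \Psi (\hat{\rho}),\pi _{\hat{\rho}}\!\left( \tau _{t,s}^{(L,\mathfrak{m})}(A)\right) \Phi (\hat{\rho})\right\rangle _{\mathcal{H}_{\hat{\rho}}}\mu _{\rho }(\mathrm{d}\hat{\rho}),
\end{equation*}
where the integrand converges pointwise by the previous step and is dominated by $\|A\|_{\mathcal{U}}\,\|\Psi (\hat{\rho})\|\,\|\Phi (\hat{\rho})\|$, which is $\mu _{\rho }$-integrable by Cauchy--Schwarz; dominated convergence together with decomposability of $\pi _{\rho }^{\oplus }(g_{t,s})$ then yields the limit $\langle \Psi ,\Pi _{\rho }(g_{t,s})\Phi \rangle $, that is, (\ref{tttt}). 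The hard part will be the measurability bookkeeping underlying this argument: verifying that the fibrewise limits $\hat{\rho}\mapsto \pi _{\hat{\rho}}(g_{t,s}(\hat{\rho}))$ assemble into a single decomposable operator on $\mathcal{H}_{\rho }^{\oplus }$, and that $g_{t,s}$ genuinely belongs to $\mathfrak{U}=C(E;\mathcal{U})$ — both resting on the direct-integral formalism of Section \ref{app direct integrals} and on the regularity of the self-consistent flow $\mathbf{\varpi }^{\mathfrak{m}}$ supplied by Theorem \ref{theorem sdfkjsdklfjsdklfj}.
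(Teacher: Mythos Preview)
Your proposal is correct and follows essentially the same route as the paper: reduce to the direct-integral cyclic representation via Theorem~\ref{theorem choquet} and Theorem~\ref{coro Extension of states}, apply the ergodic case (Theorem~\ref{theorem structure of omega copy(2)}) fibrewise, integrate by dominated convergence, and transfer to an arbitrary cyclic representation by unitary equivalence. The measurability and $\mathfrak{U}$-membership issues you flag at the end are not actually obstacles here, as they are absorbed by the direct-integral formalism of Section~\ref{app direct integrals} (in particular Lemma~\ref{representations applied to} and (\ref{direct intergral representation of C(F,A)})) together with \cite[Proposition~6.4]{BruPedra-MFII}, which already guarantees $\mathfrak{T}_{t,s}^{\mathbf{\Psi}}(A)\in\mathfrak{U}$.
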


\noindent If $\vec{\ell}_{1},\vec{\ell}_{2}\in \mathbb{N}^{d}$ is such that $%
\mathbb{Z}_{\vec{\ell}_{1}}^{d}\subseteq \mathbb{Z}_{\vec{\ell}_{2}}^{d}$
and $\rho \in E_{\vec{\ell}_{2}}\subseteq E_{\vec{\ell}_{1}}$ then the
extension of the state of $\mathfrak{U}^{\ast }$ taken in Theorem \ref%
{theorem structure of omega}, and thus the representation $\Pi _{\rho }$,
depends on whether one sees $\rho $ as an element of $E_{\vec{\ell}_{2}}$ or 
$E_{\vec{\ell}_{1}}$. However, the left-hand side of (\ref{tttt}) does \emph{%
not} depend on this choice, obviously. So the same is also true for the
right-hand side of (\ref{tttt}). This can directly be seen from (\ref%
{errrrrrrrtttgg})-(\ref{eeeeeeeeeeeeee}) and the direct integral
decomposition (\ref{extension1bis}) of $\Pi _{\rho }$, by proving that the
function 
\begin{equation*}
\rho \mapsto \mathfrak{T}_{t,s}^{\mathbf{\Phi }^{(\mathfrak{m},\mathbf{%
\varpi }^{\mathfrak{m}}\left( \alpha ,\rho \right) )}}\left( A\right)
\end{equation*}%
of $\mathfrak{U}$\ satisfies (\ref{errrrrrrrtttgg}). See Lemma \ref{lemma
extra} (ii). This optional proof is not done here.

Because of Theorem \ref{theorem structure of omega}, the state $\rho \circ
\tau _{t,s}^{(L,\mathfrak{m})}$ converges in the weak$^{\ast }$-topology to
the restriction to $\mathcal{U}$ of the state 
\begin{equation*}
\rho _{t,s}\doteq \rho \circ \mathfrak{T}_{t,s}^{\mathbf{\Phi }^{(\mathfrak{m%
},\mathbf{\varpi }^{\mathfrak{m}}\left( \alpha ,\cdot \right) )}}|_{\alpha
=s}\in \mathfrak{U}^{\ast }\ .
\end{equation*}%
This restriction is thus, by definition, the ($\vec{\ell}$-periodic) state
of the system at time $t\in \mathbb{R}$ when the state at initial time $s$
is $\rho \in E_{\vec{\ell}}$. Since the exact time evolution of long-range
order takes places in a $C^{\ast }$-algebra $\mathfrak{U}$ larger than $%
\mathcal{U}$, one expects that, in general, the mapping $\rho \mapsto \rho
_{t,s}|_{\mathcal{U}}$ from $E_{\vec{\ell}}$ to itself does \emph{not}
preserve the entropy density of the initial state.

Before giving the proof of Theorem \ref{theorem structure of omega}, we
first explain its heuristics: Take $\Lambda \in \mathcal{P}_{f}$, $\mathfrak{%
m}\in C_{b}(\mathbb{R};\mathcal{M}_{\Lambda }\cap \mathcal{M}_{1})$, $\vec{%
\ell}\in \mathbb{N}^{d}$ and any extreme (or ergodic) state $\hat{\rho}\in 
\mathcal{E}(E_{\vec{\ell}})\subseteq E_{\vec{\ell}}$. In this case, the
(unique) probability measure $\mu _{\hat{\rho}}$ of Theorem \ref{theorem
choquet} is the atomic one with the singleton $\{\hat{\rho}\}$ as its
support. See (\ref{decomposition choquet}). Then, 
\begin{equation*}
\Pi _{\hat{\rho}}\left( f\right) =\pi _{\hat{\rho}}\left( f\left( \hat{\rho}%
\right) \right) \ ,\qquad f\in \mathfrak{U}\ .
\end{equation*}%
Compare with Equation (\ref{extension1bis}) below. In particular, $\Pi _{%
\hat{\rho}}(\mathfrak{U})=\pi _{\hat{\rho}}(\mathcal{U})$. Similarly, for
any $s,t\in \mathbb{R}$, 
\begin{equation*}
\Pi _{\hat{\rho}}\left( \mathfrak{T}_{t,s}^{\mathbf{\Psi }}\left( f\right)
\right) =\pi _{\hat{\rho}}\left( \tau _{t,s}^{\mathbf{\Psi }\left( \hat{\rho}%
\right) }\left( f\left( \hat{\rho}\right) \right) \right) \ ,\qquad \mathbf{%
\Psi }\in C(\mathbb{R};\mathfrak{W}^{\mathbb{R}})\ ,
\end{equation*}%
using the notation (\ref{notation state interactionbis}). By Theorem \ref%
{theorem structure of omega}, for any $s,t\in \mathbb{R}$ and $A\in \mathcal{%
U}$, 
\begin{equation*}
\pi _{\hat{\rho}}\left( \tau _{t,s}^{\mathbf{\Phi }^{(\mathfrak{m},\mathbf{%
\varpi }^{\mathfrak{m}}\left( \alpha ,\cdot \right) )}(\hat{\rho})}\left(
A\right) |_{\alpha =s}-\tau _{t,s}^{(L,\mathfrak{m})}\left( A\right) \right)
\end{equation*}%
$\sigma $-weak converges to $0$, as $L\rightarrow \infty $. In fact, the
derivation of this statement for extreme (ergodic) states is the starting
point of the proof of Theorem \ref{theorem structure of omega} and
corresponds to Theorem \ref{theorem structure of omega copy(2)}.

Now, if the periodic state is non-extreme, we use its Choquet decomposition
(on extreme states), as stated in Theorem \ref{theorem choquet}. To
illustrate this, take, for instance, any state of the form 
\begin{equation*}
\rho _{\lambda }=\left( 1-\lambda \right) \hat{\rho}_{0}+\lambda \hat{\rho}%
_{1}\ ,\qquad \lambda \in \left( 0,1\right) ,\ \hat{\rho}_{0}\neq \hat{\rho}%
_{1}\in \mathcal{E}(E_{\vec{\ell}})\ ,
\end{equation*}%
i.e., $\rho _{\lambda }$ is a non-trivial convex combination of two
different extreme states of $E_{\vec{\ell}}$. Assume\footnote{%
Note that the cyclicity of the representation is unclear for general
(non-ergodic) states $\hat{\rho}_{0},\hat{\rho}_{1}$. See Section \ref%
{Direct Integrals of GNS}.} that 
\begin{equation*}
\left( \mathcal{H}_{\rho _{\lambda }},\pi _{\rho _{\lambda }},\Omega _{\rho
_{\lambda }}\right) =\left( \mathcal{H}_{\hat{\rho}_{0}}\oplus \mathcal{H}_{%
\hat{\rho}_{1}},\pi _{\hat{\rho}_{0}}\oplus \pi _{\hat{\rho}_{1}},\sqrt{%
1-\lambda }\Omega _{\hat{\rho}_{0}}\oplus \sqrt{\lambda }\Omega _{\hat{\rho}%
_{1}}\right)
\end{equation*}%
is a cyclic representation of $\rho _{\lambda }$ for $\lambda \in \left(
0,1\right) $, where, as before, $\left( \mathcal{H}_{\rho },\pi _{\rho
},\Omega _{\rho }\right) $ is any cyclic representation of $\rho \in E_{\vec{%
\ell}}$. In this case, for any $f\in \mathfrak{U}$, 
\begin{equation*}
\Pi _{\rho _{\lambda }}\left( f\right) =\pi _{\hat{\rho}_{0}}\left( f\left( 
\hat{\rho}_{0}\right) \right) \oplus \pi _{\hat{\rho}_{1}}\left( f\left( 
\hat{\rho}_{1}\right) \right) \qquad \text{and}\qquad \Pi _{\rho _{\lambda
}}(\mathfrak{U})=\pi _{\hat{\rho}_{0}}(\mathcal{U})\oplus \pi _{\hat{\rho}%
_{1}}(\mathcal{U})\ .
\end{equation*}%
Compare with Equation (\ref{extension1bis})\footnote{%
The representation $\Pi _{\rho _{\lambda }}\simeq \Pi _{\rho _{\lambda
}}^{\oplus }$ is only defined up to some unitary equivalence. This detail is
not important here.} below. It follows that, for any $s,t\in \mathbb{R}$, 
\begin{equation*}
\Pi _{\rho _{\lambda }}\left( \mathfrak{T}_{t,s}^{\mathbf{\Psi }}\left(
f\right) \right) =\pi _{\hat{\rho}_{0}}\left( \tau _{t,s}^{\mathbf{\Psi }%
\left( \hat{\rho}_{0}\right) }\left( f\left( \hat{\rho}_{0}\right) \right)
\right) \oplus \pi _{\hat{\rho}_{1}}\left( \tau _{t,s}^{\mathbf{\Psi }\left( 
\hat{\rho}_{1}\right) }\left( f\left( \hat{\rho}_{1}\right) \right) \right)
\ ,\qquad \mathbf{\Psi }\in C(\mathbb{R};\mathfrak{W}^{\mathbb{R}})\ .
\end{equation*}%
By Theorem \ref{theorem structure of omega} applied to the extreme states $%
\hat{\rho}_{0},\hat{\rho}_{1}\in \mathcal{E}(E_{\vec{\ell}})$, for any $t\in 
\mathbb{R}$ and $A\in \mathcal{U}$,%
\begin{equation*}
\pi _{\hat{\rho}_{0}}\left( \tau _{t,s}^{\mathbf{\Phi }^{(\mathfrak{m},%
\mathbf{\varpi }^{\mathfrak{m}}\left( \alpha ,\cdot \right) )}(\hat{\rho}%
_{0})}\left( A\right) \right) \oplus \pi _{\hat{\rho}_{1}}\left( \tau
_{t,s}^{\mathbf{\Phi }^{(\mathfrak{m},\mathbf{\varpi }^{\mathfrak{m}}\left(
\alpha ,\cdot \right) )}(\hat{\rho}_{1})}\left( A\right) \right) |_{\alpha
=s}-\pi _{\rho _{\lambda }}\left( \tau _{t,s}^{(L,\mathfrak{m})}\left(
A\right) \right)
\end{equation*}%
$\sigma $-weak converges to $0$, as $L\rightarrow \infty $. The proof of
Theorem \ref{theorem structure of omega} in the general case is essentially
the same, except that one has to use the direct integral decomposition
theory instead of finite direct sums. This theory for non-constant Hilbert
spaces and von Neumann algebras as well as for GNS representations of a
family of states is highly non-trivial, but it is a mature subject of
mathematics. We review it in Section \ref{app direct integrals} and use this
theory to prove below Theorem \ref{theorem structure of omega} in the
general case: \bigskip

\begin{proof}
Fix $\Lambda \in \mathcal{P}_{f}$, $\mathfrak{m}\in C_{b}(\mathbb{R};%
\mathcal{M}_{\Lambda }\cap \mathcal{M}_{1})$ and $\vec{\ell}\in \mathbb{N}%
^{d}$. For any $\rho \in E_{\vec{\ell}}$, there is a unique probability
measure $\mu _{\rho }$ on $E_{\vec{\ell}}$ satisfying (\ref{decomposition
choquet}). By Theorem \ref{theorem choquet}, this probability measure is
orthogonal and supported in the Borel set $\mathcal{E}(E_{\vec{\ell}})$
(which is not a closed set). Therefore, by orthogonality of the measure $\mu
_{\rho }$ (Theorem \ref{theorem choquet}) and Theorem \ref{coro Extension of
states} (i) with $\mathcal{X}=\mathcal{U}$ and $F=E_{\vec{\ell}}$, a cyclic
representation of $\rho \in \mathfrak{U}^{\ast }$ is given by $(\mathcal{H}%
_{\rho }^{\oplus },\Pi _{\rho }^{\oplus },\Omega _{\rho }^{\oplus })$ with 
\begin{equation}
\mathcal{H}_{\rho }^{\oplus }\doteq \int_{E_{\vec{\ell}}}\mathcal{H}_{\hat{%
\rho}}\mu _{\rho }\left( \mathrm{d}\hat{\rho}\right) ,\quad \Omega _{\rho
}^{\oplus }\doteq \int_{E_{\vec{\ell}}}\Omega _{\hat{\rho}}\mu _{\rho
}\left( \mathrm{d}\hat{\rho}\right)  \label{extension1}
\end{equation}%
and $\Pi _{\rho }^{\oplus }$ being the (direct integral) representation\ of $%
\mathfrak{U}$ on $\mathcal{H}_{\rho }^{\oplus }$ defined by%
\begin{equation}
\Pi _{\rho }^{\oplus }\left( f\right) \doteq \int_{E_{\vec{\ell}}}\pi _{\hat{%
\rho}}(f(\hat{\rho}))\mu _{\rho }(\mathrm{d}\hat{\rho})\ ,\qquad f\in 
\mathfrak{U}\ ,  \label{extension1bis}
\end{equation}%
where $\left( \mathcal{H}_{\hat{\rho}},\pi _{\hat{\rho}},\Omega _{\hat{\rho}%
}\right) $ in all integrals are always\emph{\ }the GNS representation of $%
\hat{\rho}\in E_{\vec{\ell}}$. Since $\mu _{\rho }$ is supported in the
Borel set $\mathcal{E}(E_{\vec{\ell}})$ (Theorem \ref{theorem choquet}), we
can restrict all integrals of (\ref{extension1})-(\ref{extension1bis}) to $%
\mathcal{E}(E_{\vec{\ell}})$. In particular, by (\ref{definiotion tho frac}),%
\begin{equation*}
\Pi _{\rho }^{\oplus }\left( \mathfrak{T}_{t,s}^{\mathbf{\Phi }^{(\mathfrak{m%
},\mathbf{\varpi }^{\mathfrak{m}}\left( \alpha ,\cdot \right) )}}\left(
A\right) |_{\alpha =s}-\tau _{t,s}^{(L,\mathfrak{m})}\left( A\right) \right)
=\int_{\mathcal{E}(E_{\vec{\ell}})}\pi _{\hat{\rho}}\left( \tau _{t,s}^{%
\mathbf{\Phi }^{(\mathfrak{m},\mathbf{\varpi }^{\mathfrak{m}}\left( \alpha
,\cdot \right) )}(\hat{\rho})}\left( A\right) |_{\alpha =s}-\tau _{t,s}^{(L,%
\mathfrak{m})}\left( A\right) \right) \mu \left( \mathrm{d}\hat{\rho}\right)
\ .
\end{equation*}%
Since $\tau _{t,s}^{\mathbf{\Phi }^{(\mathfrak{m},\mathbf{\varpi }^{%
\mathfrak{m}}\left( s,\cdot \right) )}(\hat{\rho})}$ and $\tau _{t,s}^{(L,%
\mathfrak{m})}$ are both $\ast $-automorphisms of $\mathcal{U}$, by
Lebesgue's dominated convergence theorem together with Theorem \ref{theorem
structure of omega copy(2)} and (\ref{extension1}), we arrive at the
assertion of Theorem \ref{theorem structure of omega} for the representation 
$\Pi _{\rho }^{\oplus }$. By \cite[Theorem 2.3.16]{BrattelliRobinsonI}, any
cyclic representation $(\mathcal{H}_{\rho },\Pi _{\rho },\Omega _{\rho })$
of $\rho \in \mathfrak{U}^{\ast }$ is unitarily equivalent to $\Pi _{\rho
}^{\oplus }$, similar to (\ref{unitary}). This concludes the proof of
Theorem \ref{theorem structure of omega}.
\end{proof}

\begin{remark}[Subcentral decompositions of periodic states]
\label{theorem structure of omega copy(3)}\mbox{ }\newline
As explained after Corollary \ref{Effros theorem}, $\pi _{\rho }^{\oplus }$
is a subcentral decomposition of the representation $\pi _{\rho _{\mu }}$,
see Definition \ref{Special direct integral representations} (ii.1). By \cite%
[Eq. (4.15)]{BruPedra2}, the GNS representations of ergodic states are, in
general, not factor representations. By Theorem \ref{central decompositions}
(ii), it follows that $\pi _{F}^{\oplus }$ is not, in general, the central
decomposition of the representation $\pi _{\rho _{\mu }}$.
\end{remark}

Note that, even if the finite-volume dynamics is autonomous, i.e., $%
\mathfrak{m}\in \mathcal{M}_{\Lambda }\cap \mathcal{M}_{1}$, the limit
long-range dynamics is generally \emph{non-autonomous}, as it can be seen
from the next corollary:

\begin{corollary}[From autonomous local dynamics to non-autonomous ones]
\label{theorem structure of omega copy(1)}\mbox{ }\newline
Fix $\Lambda \in \mathcal{P}_{f}$, $\mathfrak{m}\in \mathcal{M}_{\Lambda
}\cap \mathcal{M}_{1}$, $\vec{\ell}\in \mathbb{N}^{d}$ and $\rho \in E_{\vec{%
\ell}}$ with cyclic representation $(\mathcal{H}_{\rho },\Pi _{\rho },\Omega
_{\rho })$ seen as a state (\ref{ddddd}) of $\mathfrak{U}^{\ast }$. Then,
for any $s,t\in \mathbb{R}$ and $A\in \mathcal{U}\subseteq \mathfrak{U}$, in
the $\sigma $-weak topology,%
\begin{equation*}
\lim_{L\rightarrow \infty }\Pi _{\rho }\left( \tau _{t-s}^{(L,\mathfrak{m}%
)}\left( A\right) \right) =\left. \Pi _{\rho }\left( \mathfrak{T}_{t,s}^{%
\mathbf{\Phi }^{(\mathfrak{m},\mathbf{\varpi }^{\mathfrak{m}}\left( \alpha
,\cdot \right) )}}\left( A\right) \right) \right\vert _{\alpha =s}\in 
\mathcal{B}\left( \mathcal{H}_{\rho }\right) \ .
\end{equation*}
\end{corollary}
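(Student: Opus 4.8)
The plan is to deduce the statement directly from Theorem~\ref{theorem structure of omega} by viewing the fixed model $\mathfrak{m}\in \mathcal{M}_{\Lambda}\cap \mathcal{M}_{1}$ as a constant function of time. First I would observe that the constant map $t\mapsto \mathfrak{m}$ belongs to $C_{b}(\mathbb{R};\mathcal{M}_{\Lambda}\cap \mathcal{M}_{1})$, so that Theorem~\ref{theorem structure of omega} applies to it verbatim and, in particular, yields (\ref{tttt}) for this particular (trivially time-dependent) model.

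The key — and essentially only — point to check is that, for a time-independent model, the fundamental solution $(\tau_{t,s}^{(L,\mathfrak{m})})_{s,t\in\mathbb{R}}$ of the non-autonomous Cauchy problems (\ref{cauchy1})--(\ref{cauchy2}) collapses to the one-parameter group generated by the bounded symmetric derivation $\delta_{L}^{\mathfrak{m}}$ of (\ref{equation derivation long range}), i.e.\ $\tau_{t,s}^{(L,\mathfrak{m})}=\tau_{t-s}^{(L,\mathfrak{m})}$ for all $s,t\in\mathbb{R}$ and $L\in\mathbb{N}$, where $\tau_{r}^{(L,\mathfrak{m})}\doteq \mathrm{e}^{r\delta_{L}^{\mathfrak{m}}}$. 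This is routine: since the generating family $\{\delta_{L}^{\mathfrak{m}(s)}\}_{s\in\mathbb{R}}$ is here constant and equal to $\delta_{L}^{\mathfrak{m}}$, the maps $r\mapsto \mathrm{e}^{(t-r)\delta_{L}^{\mathfrak{m}}}$ and $r\mapsto \mathrm{e}^{(r-s)\delta_{L}^{\mathfrak{m}}}$ solve (\ref{cauchy1}) and (\ref{cauchy2}) respectively with the prescribed boundary values, and one invokes uniqueness of the fundamental solution. That $\delta_{L}^{\mathfrak{m}}$ is bounded, so that the exponential makes sense, follows from the norm bound (\ref{energy bound long range}) on $U_{L}^{\mathfrak{m}}$.

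Finally I would substitute $\tau_{t,s}^{(L,\mathfrak{m})}=\tau_{t-s}^{(L,\mathfrak{m})}$ into the left-hand side of (\ref{tttt}) and read off the conclusion, noting that the right-hand side of (\ref{tttt}) is already literally the expression appearing in the corollary. The content of the statement — that an \emph{autonomous} sequence of local dynamics yields a \emph{non-autonomous} limit — is then visible in the fact that the self-consistency flow $\mathbf{\varpi}^{\mathfrak{m}}$ of Theorem~\ref{theorem sdfkjsdklfjsdklfj}, hence the approximating state-dependent interaction $\mathbf{\Phi}^{(\mathfrak{m},\mathbf{\varpi}^{\mathfrak{m}}(\alpha,\cdot))}\in C(\mathbb{R};\mathfrak{W}^{\mathbb{R}})$ (well-defined by (\ref{inequality trivial})) and the family $\mathfrak{T}_{t,s}^{\mathbf{\Phi}^{(\mathfrak{m},\mathbf{\varpi}^{\mathfrak{m}}(\alpha,\cdot))}}$, genuinely depend on both $s$ and $t$ and not merely on $t-s$, even though the underlying model $\mathfrak{m}$ carries no time dependence. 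I do not expect any real obstacle here; the only mild subtlety is the bookkeeping in identifying the non-autonomous fundamental solution with the autonomous group for a time-independent generator, which is a standard fact for bounded generators.
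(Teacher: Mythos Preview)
Your proposal is correct and matches the paper's intended derivation: the corollary is stated without proof, immediately after Theorem~\ref{theorem structure of omega}, as a direct specialization to the autonomous case. Your observation that the constant map $t\mapsto\mathfrak{m}$ lies in $C_b(\mathbb{R};\mathcal{M}_\Lambda\cap\mathcal{M}_1)$ and that the two-parameter family $\tau_{t,s}^{(L,\mathfrak{m})}$ then reduces to the one-parameter group $\tau_{t-s}^{(L,\mathfrak{m})}$ by uniqueness of the fundamental solution is exactly the (implicit) argument.
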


\section{Technical Proofs\label{dddd}}

\subsection{Cyclic Representations of Positive Functionals and Orthogonal
Measures\label{Positive Functionals}}

The dual $\mathcal{U}^{\ast }$ of the $C^{\ast }$-algebra $\mathcal{U}$ is a
locally convex space with respect to the weak$^{\ast }$-topology, which is
Hausdorff. Moreover, as $\mathcal{U}$ is separable, by \cite[Theorem 3.16]%
{Rudin}, the weak$^{\ast }$-topology is metrizable on any weak$^{\ast }$%
-compact subset of $\mathcal{U}^{\ast }$.

An important subset of $\mathcal{U}^{\ast }$ is the weak$^{\ast }$-closed
convex cone of positive functionals defined by 
\begin{equation}
\mathcal{U}_{+}^{\ast }\doteq \bigcap\limits_{A\in \mathcal{U}}\{\rho \in 
\mathcal{U}^{\ast }:\rho (A^{\ast }A)\geq 0\}\ .
\label{set of positive linear}
\end{equation}%
Equivalently, $\rho \in \mathcal{U}_{+}^{\ast }$ iff $\Vert \rho \Vert _{%
\mathcal{U}^{\ast }}=\rho (\mathfrak{1})$. Additionally, any positive
functional $\rho \in \mathcal{U}_{+}^{\ast }$ is hermitian, i.e., for all $%
A\in \mathcal{U}$, $\rho (A^{\ast })=\overline{\rho (A)}$.

By the GNS construction, any positive functional $\rho \in \mathcal{U}%
_{+}^{\ast }$ has a cyclic representation $(\mathcal{H}_{\rho },\pi _{\rho
},\Omega _{\rho })$: There exists a Hilbert space $\mathcal{H}_{\rho }$, a
representation\footnote{%
I.e., it is a $\ast $--homomorphism from $\mathcal{U}$ to $\mathcal{B}(%
\mathcal{H}_{\rho })$. See, e.g., \cite[Definition 2.3.2]{BrattelliRobinsonI}%
.} $\pi _{\rho }$ from $\mathcal{U}$ to the unital $C^{\ast }$--algebra $%
\mathcal{B}(\mathcal{H}_{\rho })$ of bounded operators on $\mathcal{H}_{\rho
}$ and a cyclic vector\footnote{%
I.e., $\mathcal{H}_{\rho }$ is the closure of (the linear span of) the set $%
\pi _{\rho }(\mathcal{U})\Omega _{\rho }\doteq \left\{ \pi _{\rho }(A)\Omega
_{\rho }:A\in \mathcal{U}\right\} $. See \cite[p. 45]{BrattelliRobinsonI}.} $%
\Omega _{\rho }\in \mathcal{H}_{\rho }$ for $\pi _{\rho }(\mathcal{U})$ such
that%
\begin{equation}
\rho (A)=\langle \Omega _{\rho },\pi _{\rho }(A)\Omega _{\rho }\rangle _{%
\mathcal{H}_{\rho }}\ ,\qquad A\in \mathcal{U}\ .  \label{equality}
\end{equation}%
The representation $\pi _{\rho }$ is faithful\footnote{%
See for instance \cite[Proposition 2.3.3]{BrattelliRobinsonI}.} if $\rho $
is faithful, that is, if $\rho (A^{\ast }A)=0$ implies $A=0$. The triple $(%
\mathcal{H}_{\rho },\pi _{\rho },\Omega _{\rho })$ is unique up to unitary
equivalence. See, e.g., \cite[Theorem 2.3.16]{BrattelliRobinsonI}, which can
trivially be extended to any positive functional. Two positive linear
functionals $\rho _{1},\rho _{2}\in \mathcal{U}_{+}^{\ast }$ are said to be 
\emph{orthogonal} whenever%
\begin{equation}
(\mathcal{H}_{\rho _{1}}\oplus \mathcal{H}_{\rho _{2}},\pi _{\rho
_{1}}\oplus \pi _{\rho _{2}},\Omega _{\rho _{1}}\oplus \Omega _{\rho _{2}})
\label{orthogonality}
\end{equation}%
is a cyclic representation for the positive functional $\rho _{1}+\rho
_{2}\in \mathcal{U}_{+}^{\ast }$. As is usual, this orthogonality property
is denoted by $\rho _{1}\perp \rho _{2}$. See, e.g., \cite[Lemma 4.1.19 and
Definition 4.1.20]{BrattelliRobinsonI}.

The set of states on $\mathcal{U}$ is the subset of $\mathcal{U}_{+}^{\ast }$
defined by (\ref{states CAR}), that is,%
\begin{equation*}
E\doteq \{\rho \in \mathcal{U}^{\ast }:\rho \geq 0,\ \rho (\mathfrak{1}%
)=1\}=\{\rho \in \mathcal{U}^{\ast }:\Vert \rho \Vert _{\mathcal{U}^{\ast
}}=\rho (\mathfrak{1})=1\}\ .
\end{equation*}%
Hence, $E$ is a weak$^{\ast }$-closed subset of the unit ball of $\mathcal{U}%
^{\ast }$ and, by the Banach-Alaoglu theorem, $E$ is weak$^{\ast }$-compact
and metrizable. See, e.g., \cite[Theorems 3.15-3.16]{Rudin}.

Let $\Sigma _{E}$ be the (Borel) $\sigma $-algebra generated by weak$^{\ast
} $-closed, or weak$^{\ast }$-open, subsets of $E$. The set of all positive
Radon measures on $(E,\Sigma _{E})$ is denoted by $\mathrm{M}(E)$. By weak$%
^{\ast }$-compactness of $E$ and the Riesz(-Markov) representation theorem,
there is a one-to-one correspondence between positive functionals of $%
\mathfrak{C}^{\ast }$ and positive Radon measures on $(E,\Sigma _{E})$ and
we write%
\begin{equation}
\mu \left( f\right) =\int_{E}f\left( \rho \right) \mu \left( \mathrm{d}\rho
\right) \ ,\qquad f\in \mathfrak{C}\doteq C\left( E;\mathbb{C}\right) \ .
\label{barycenter1bis}
\end{equation}%
By metrizability of $E$, note additionally that any positive finite Borel
measure on $(E,\Sigma _{E})$ is a positive Radon measure\footnote{%
In fact, for compact metrizable spaces, the Baire and Borel $\sigma $%
-algebras are the same.}. When $\mu (E)=1$ we say that the positive Radon
measure is normalized and $\mu $ is a probability measure. The subset of all
probability measures on $(E,\Sigma _{E})$ is denoted by $\mathrm{M}_{1}(E)$.

For each $\mu \in \mathrm{M}(E)$, we define its restriction $\mu _{\mathfrak{%
B}}\in \mathrm{M}(E)$ to any Borel set $\mathfrak{B}\in \Sigma _{E}$ by 
\begin{equation}
\mu _{\mathfrak{B}}\left( \mathfrak{B}_{0}\right) \doteq \mu \left( 
\mathfrak{B}_{0}\cap \mathfrak{B}\right) \ ,\qquad \mathfrak{B}_{0}\in
\Sigma _{E}\ .  \label{restriction radon}
\end{equation}%
Additionally, any convex and weak$^{\ast }$-closed subset $F$ of $E$\
defines a partial order $\prec _{F}$ in $\mathrm{M}(E)$: For any $\mu ,\nu
\in \mathrm{M}(E)$, $\mu \prec _{F}\nu $ if $\mu _{F}\left( f\right) \leq
\nu _{F}\left( f\right) $ for all weak$^{\ast }$-continuous convex functions 
$f:F\rightarrow \mathbb{R}$.

Each positive Radon measure $\mu \in \mathrm{M}(E)$, or equivalently a
positive finite Borel measure, represents a positive functional $\rho _{\mu
}\in \mathcal{U}_{+}^{\ast }$, which is defined by (\ref{barycenter1bis})
for $f=\hat{A}$, that is,%
\begin{equation}
\rho _{\mu }\left( A\right) \doteq \mu (\hat{A})=\int_{E}\rho \left(
A\right) \mu \left( \mathrm{d}\rho \right) \ ,\qquad A\in \mathcal{U}\ .
\label{choquet0}
\end{equation}%
Recall that $\rho \mapsto \hat{A}\left( \rho \right) \doteq \rho \left(
A\right) $, as defined by (\ref{fA}), is an affine and weak$^{\ast }$%
-continuous mapping from $E$ to $\mathbb{C}$. The positive functional $\rho
_{\mu }$ is called the \emph{barycenter}\footnote{%
Other terminology existing in the literature: \textquotedblleft $x$ is
represented by $\mu $\textquotedblright , \textquotedblleft $x$ is the
resultant of $\mu $\textquotedblright .} of $\mu \in \mathrm{M}(E)$. See,
e.g., \cite[Eq. (2.7) in Chapter I]{Alfsen}, \cite[p. 1]{Phe} or \cite[%
Definition 10.15]{BruPedra2}. By \cite[Propositions 1.1 and 1.2]{Phe},
barycenters are uniquely defined for all positive Radon measure in convex
compact subsets of locally convex spaces and the mapping $\mu \mapsto \rho
_{\mu }$ from $\mathrm{M}(E)$ to $\mathcal{U}_{+}^{\ast }$ is affine and weak%
$^{\ast }$-continuous. Clearly, if $\mu \in \mathrm{M}_{1}(E)$ then $\rho
_{\mu }\in E$. For any Borel subset $\mathfrak{B}\in \Sigma _{E}$, define by 
\begin{equation}
\mathrm{M}^{(\rho )}\left( \mathfrak{B}\right) \doteq \left\{ \mu \in 
\mathrm{M}\left( E\right) :\rho =\rho _{\mu }\quad \text{and}\quad \mu
\left( \mathfrak{B}\right) =\mu \left( E\right) \right\} \ ,
\label{choquet1}
\end{equation}%
the set of all positive Radon measures representing $\rho $ and with support
within $\mathfrak{B}\subseteq E$.

As positive functional, the barycenter $\rho _{\mu }\in \mathcal{U}%
_{+}^{\ast }$ of any positive Radon measure $\mu \in \mathrm{M}(E)$ has a
cyclic representation. A natural way to construct a triple $(\mathcal{H}%
_{\rho _{\mu }},\pi _{\rho _{\mu }},\Omega _{\rho _{\mu }})$ satisfying (\ref%
{equality}) for $\rho =\rho _{\mu }$ is to take the \emph{direct integral}
of cyclic representations $(\mathcal{H}_{\rho },\pi _{\rho },\Omega _{\rho
}) $ of $\rho \in E$ with respect to the measure $\mu \in \mathrm{M}(E)$: 
\begin{equation}
\mathcal{H}_{\rho _{\mu }}\doteq \int_{E}\mathcal{H}_{\rho }\mu \left( 
\mathrm{d}\rho \right) ,\quad \pi _{\rho _{\mu }}\doteq \int_{E}\pi _{\rho
}\mu \left( \mathrm{d}\rho \right) ,\quad \Omega _{\rho _{\mu }}\doteq
\int_{E}\Omega _{\rho }\mu \left( \mathrm{d}\rho \right) \ .
\label{direct represe}
\end{equation}%
See Section \ref{Direct Integrals of GNS}. However, $\Omega _{\rho _{\mu }}$
is, in general, \emph{not} a cyclic vector for $\pi _{\rho _{\mu }}(\mathcal{%
U})$, i.e., the subset $\pi _{\rho _{\mu }}(\mathcal{U})\Omega _{\rho _{\mu
}}$ is generally not dense in $\mathcal{H}_{\rho _{\mu }}$. In particular,
in this case, (\ref{direct represe}) is not spatially, or unitarily,
equivalent to the cyclic representation of $\rho _{\mu }$. A necessary and
sufficient condition on the measure $\mu $ to get the cyclicity of $\Omega
_{\rho _{\mu }}$ is given by the orthogonality of the measure $\mu $, in the
following sense: The measure $\mu \in \mathrm{M}(E)$ is called \emph{%
orthogonal} whenever $\rho _{\mu _{\mathfrak{B}}}\perp \rho _{\mu
_{E\backslash \mathfrak{B}}}$ for \emph{any} $\mathfrak{B}\in \Sigma _{E}$.
See Definition \ref{Orthogonal measures} as well as \cite[Definition 4.1.20]%
{BrattelliRobinsonI} for more details. The set of all orthogonal measures on 
$(E,\Sigma _{E})$ is denoted by $\mathcal{O}\left( E\right) $.

For appropriate convex and weak$^{\ast }$-closed subsets $F$ of $E$, the
orthogonality of a measure $\mu \in \mathrm{M}(E)$ can be directly related
to its maximality with respect to the partial order $\prec _{F}$ in $\mathrm{%
M}^{(\rho _{\mu })}(F)$. Examples of such a $F$ are the subsets of periodic
states discussed in the next section.

\subsection{Ergodic Orthogonal Decomposition of Periodic States}

As the state space $E$, for any $\vec{\ell}\in \mathbb{N}^{d}$, the set $E_{%
\vec{\ell}}$ of $\vec{\ell}$-periodic states defined by (\ref{periodic
invariant states}) is metrizable, weak$^{\ast }$-compact and convex, with $%
\mathcal{E}(E_{\vec{\ell}})$ denoting its (non-empty) set of extreme points.
See Equation (\ref{cov heull l perio}). By metrizability of $E_{\vec{\ell}}$%
, $\mathcal{E}(E_{\vec{\ell}})$\ is a Borel set\footnote{%
It is even a $G_{\delta }$ set. See, e.g., \cite[Proposition 1.3]{Phe}.}.
Ergo, from the Choquet theorem \cite[Theorem 10.18]{BruPedra2}, each state $%
\rho \in E_{\vec{\ell}}$ is the barycenter of a probability measure $\mu
_{\rho }$ which is supported on the set $\mathcal{E}(E_{\vec{\ell}})$ of
extreme $\vec{\ell}$-periodic states:

\begin{theorem}[Ergodic orthogonal decomposition of periodic states]
\label{theorem choquet}\mbox{ }\newline
For any $\vec{\ell}\in \mathbb{N}^{d}$ and $\rho \in E_{\vec{\ell}}$, there
is a unique probability measure $\mu _{\rho }\equiv \mu _{\rho }^{(\vec{\ell}%
)}\in \mathrm{M}^{(\rho )}\left( E_{\vec{\ell}}\right) $ with $\mu _{\rho
}\left( \mathcal{E}(E_{\vec{\ell}})\right) =1$. Moreover, $\mu _{\rho }\in 
\mathcal{O}\left( E\right) $, i.e., it is an orthogonal measure on $%
(E,\Sigma _{E})$.
\end{theorem}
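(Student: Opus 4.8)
The plan is to deduce Theorem \ref{theorem choquet} in three stages: extract the representing measure from the simplex structure of $E_{\vec{\ell}}$; pass to the even subalgebra $\mathcal{U}^{+}$, where the translation group acts in an asymptotically abelian way, to get orthogonality there; and lift orthogonality back to $\mathcal{U}$. For the first stage, by (\ref{cov heull l perio}) and \cite[Theorem 1.12]{BruPedra2} the set $E_{\vec{\ell}}$ is a metrizable, weak$^{\ast}$-compact Choquet simplex (affinely homeomorphic to the Poulsen simplex) with $\mathcal{E}(E_{\vec{\ell}})$ a $G_{\delta}$, hence Borel, subset, so Choquet's theorem in the metrizable case \cite[Theorem 10.18]{BruPedra2} yields, for each $\rho\in E_{\vec{\ell}}$, a probability measure $\mu_{\rho}$ on $E_{\vec{\ell}}$ with barycenter $\rho$ and $\mu_{\rho}(\mathcal{E}(E_{\vec{\ell}}))=1$, while uniqueness of such a measure within $\mathrm{M}^{(\rho)}(E_{\vec{\ell}})$ follows from the simplex property (Choquet--Meyer uniqueness). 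One uses here that the barycenter of a measure carried by the weak$^{\ast}$-closed convex set $E_{\vec{\ell}}\subseteq E$, computed through (\ref{choquet0}), coincides with its barycenter inside $E_{\vec{\ell}}$, which is immediate upon testing against the affine weak$^{\ast}$-continuous functions $\hat{A}$, $A\in\mathcal{U}$. This gives the first assertion of Theorem \ref{theorem choquet}.

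For the second stage, recall that periodic states are even \cite[Lemma 1.8]{BruPedra2}, that an even state vanishes on the odd part $\mathcal{U}^{-}\doteq\{A\in\mathcal{U}:\sigma(A)=-A\}$ (since $\rho(A)=\rho(\sigma(A))=-\rho(A)$ there), and conversely that every state of $\mathcal{U}^{+}$ extends uniquely to an even state of $\mathcal{U}=\mathcal{U}^{+}\oplus\mathcal{U}^{-}$ by declaring it to vanish on $\mathcal{U}^{-}$, positivity following because $\mathcal{U}^{+}$ is a $C^{\ast}$-subalgebra and the cross terms in $A^{\ast}A$ lie in $\mathcal{U}^{-}$. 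Hence restriction $\mathfrak{r}:\omega\mapsto\omega|_{\mathcal{U}^{+}}$ is an affine weak$^{\ast}$-homeomorphism from $E_{\vec{\ell}}$ onto the set $K_{\vec{\ell}}$ of $\mathbb{Z}_{\vec{\ell}}^{d}$-invariant states of $\mathcal{U}^{+}$. Since $\mathcal{U}^{+}$ is norm-asymptotically abelian under $\mathbb{Z}_{\vec{\ell}}^{d}$ — disjointly localized even elements commute exactly, and local even elements are dense in $\mathcal{U}^{+}$, as one sees by applying the conditional expectation $\tfrac{1}{2}(\mathbf{1}_{\mathcal{U}}+\sigma)$ to $\mathcal{U}_{0}$ — the standard decomposition theory for asymptotically abelian systems \cite[Propositions 4.3.3 and 4.3.7]{BrattelliRobinsonI} applies to $(\mathcal{U}^{+},\mathbb{Z}_{\vec{\ell}}^{d})$: $K_{\vec{\ell}}$ is a simplex, its extreme points are the ergodic states, and the unique maximal measure $\mu^{+}_{\rho^{+}}$ decomposing $\rho^{+}\doteq\mathfrak{r}(\rho)$ into ergodic states is orthogonal as a measure on the state space of $\mathcal{U}^{+}$. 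Affinity and injectivity of $\mathfrak{r}$ then identify $(\mathfrak{r}^{-1})_{\ast}\mu^{+}_{\rho^{+}}$ with the measure $\mu_{\rho}$ from the first stage (both are probability measures on $E_{\vec{\ell}}$ with barycenter $\rho$ supported on $\mathcal{E}(E_{\vec{\ell}})=\mathfrak{r}^{-1}(\mathcal{E}(K_{\vec{\ell}}))$), i.e. $\mathfrak{r}_{\ast}\mu_{\rho}=\mu^{+}_{\rho^{+}}$.

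The third — and hardest — stage is to transfer orthogonality back from $\mathcal{U}^{+}$ to $\mathcal{U}$. Fix a Borel $\mathfrak{B}\in\Sigma_{E}$; as $\mu_{\rho}$ is carried by the Borel set $E_{\vec{\ell}}$ we may take $\mathfrak{B}\subseteq E_{\vec{\ell}}$, with $\mathfrak{r}(\mathfrak{B})$ Borel in the state space of $\mathcal{U}^{+}$. The barycenter $\omega_{\mathfrak{B}}\doteq\rho_{(\mu_{\rho})_{\mathfrak{B}}}$ is an even positive functional on $\mathcal{U}$ with $\omega_{\mathfrak{B}}|_{\mathcal{U}^{+}}=\rho_{(\mu^{+}_{\rho^{+}})_{\mathfrak{r}(\mathfrak{B})}}$, and similarly for $E_{\vec{\ell}}\setminus\mathfrak{B}$, so orthogonality of $\mu^{+}_{\rho^{+}}$ gives $\omega_{\mathfrak{B}}|_{\mathcal{U}^{+}}\perp\omega_{E_{\vec{\ell}}\setminus\mathfrak{B}}|_{\mathcal{U}^{+}}$. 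It then suffices to prove that, for even positive functionals $\omega_{1},\omega_{2}$ on $\mathcal{U}$, orthogonality of $\omega_{1}|_{\mathcal{U}^{+}}$ and $\omega_{2}|_{\mathcal{U}^{+}}$ implies $\omega_{1}\perp\omega_{2}$. For this I would use the GNS triple $(\mathcal{H}_{\omega},\pi_{\omega},\Omega_{\omega})$ of $\omega\doteq\omega_{1}+\omega_{2}$ together with the self-adjoint unitary $V$ on $\mathcal{H}_{\omega}$ implementing $\sigma$ with $V\Omega_{\omega}=\Omega_{\omega}$, and let $\mathcal{H}_{\omega}^{\pm}$ be its $(\pm 1)$-eigenspaces; then $\pi_{\omega}(\mathcal{U}^{+})$ preserves the splitting, $\mathcal{H}_{\omega}^{+}=\overline{\pi_{\omega}(\mathcal{U}^{+})\Omega_{\omega}}$ realizes the GNS triple of $\omega|_{\mathcal{U}^{+}}$, and $\mathcal{H}_{\omega}^{-}=\overline{\pi_{\omega}(\mathcal{U}^{-})\Omega_{\omega}}$. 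The hypothesis yields, via \cite[Lemma 4.1.19]{BrattelliRobinsonI}, a projection $P^{+}$ in the commutant of $\pi_{\omega}(\mathcal{U}^{+})|_{\mathcal{H}_{\omega}^{+}}$ with $\omega_{1}(A)=\langle\Omega_{\omega},P^{+}\pi_{\omega}(A)\Omega_{\omega}\rangle_{\mathcal{H}_{\omega}}$ for $A\in\mathcal{U}^{+}$, and I would extend it to $\mathcal{H}_{\omega}$ by $P(\pi_{\omega}(c)\Omega_{\omega})\doteq\pi_{\omega}(c)P^{+}\Omega_{\omega}$ for $c\in\mathcal{U}^{-}$; using $(\mathcal{U}^{-})^{\ast}\mathcal{U}^{-}\subseteq\mathcal{U}^{+}$, the evenness of $\omega$ and $\omega_{1}$ (so that $\langle P^{+}\Omega_{\omega},\pi_{\omega}(d)P^{+}\Omega_{\omega}\rangle=\omega_{1}(d)\leq\omega(d)$ for positive $d\in\mathcal{U}^{+}$), and $P^{+}\in\pi_{\omega}(\mathcal{U}^{+})^{\prime}$, one checks that $P$ is well defined, bounded by $1$, a self-adjoint projection commuting with $\pi_{\omega}(\mathcal{U}^{+})$ and $\pi_{\omega}(\mathcal{U}^{-})$ — hence $P\in\pi_{\omega}(\mathcal{U})^{\prime}$ — and that $\omega_{1}(A)=\langle\Omega_{\omega},P\pi_{\omega}(A)\Omega_{\omega}\rangle$ for all $A\in\mathcal{U}$, the odd part of $A$ contributing $0$ on both sides by the grading and evenness of $\omega_{1}$; then \cite[Lemma 4.1.19]{BrattelliRobinsonI} gives $\omega_{1}\perp\omega_{2}$ on $\mathcal{U}$. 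Applying this with $\omega_{1}=\omega_{\mathfrak{B}}$ and $\omega_{2}=\omega_{E\setminus\mathfrak{B}}=\omega_{E_{\vec{\ell}}\setminus\mathfrak{B}}$, and letting $\mathfrak{B}$ vary, shows $\mu_{\rho}\in\mathcal{O}(E)$. The main obstacle is exactly this graded extension of $P^{+}$ to $P$: the convexity theory and the Bratteli--Robinson decomposition machinery are used as black boxes, but the fact that orthogonality of even states is already detected on $\mathcal{U}^{+}$ rests on the specific $\mathbb{Z}_{2}$-graded structure of the CAR algebra and must be verified by hand.
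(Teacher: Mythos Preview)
Your proposal is correct and follows the same three-stage architecture as the paper's proof: Choquet uniqueness on the simplex $E_{\vec\ell}$, orthogonality on $\mathcal{U}^{+}$ via asymptotic abelianness and \cite[Propositions 4.3.3, 4.3.7]{BrattelliRobinsonI}, and then lifting orthogonality back to $\mathcal{U}$. The only substantive difference is in the lifting step. Rather than your explicit graded extension of the projection $P^{+}$ to a projection $P\in\pi_{\omega}(\mathcal{U})'$, the paper uses the order-theoretic characterization of orthogonality in \cite[Lemma 4.1.19]{BrattelliRobinsonI} (two positive functionals are orthogonal iff the zero functional is the only positive functional dominated by both). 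The argument is then two lines: if a positive functional $\omega$ on $\mathcal{U}$ lies below both even functionals $\rho_{\mu_{\mathfrak{B}}}$ and $\rho_{\mu_{E\setminus\mathfrak{B}}}$, so does the even functional $(\omega+\omega\circ\sigma)/2$, which must therefore vanish by orthogonality on $\mathcal{U}^{+}$; hence $\omega=-\omega\circ\sigma$, so $\Vert\omega\Vert_{\mathcal{U}^{\ast}}=\omega(\mathfrak{1})=0$. Your GNS-based construction also works---the verifications you outline (well-definedness via $\omega_{1}\leq\omega$ on $\mathcal{U}^{+}$, commutation with even and odd elements, idempotency, self-adjointness) all go through---but the paper's route bypasses the GNS machinery and the graded-extension bookkeeping entirely, trading your constructive argument for a shorter non-constructive one.
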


\begin{proof}
The first assertion corresponds to \cite[Theorem 1.9]{BruPedra2}. In order
to prove that $\mu _{\rho }\in \mathcal{O}\left( E\right) $, observe first
that the set $\tilde{E}^{+}$ of all states on $\mathcal{U}^{+}$ (cf. (\ref%
{definition of even operators}))\ can be identified with the even-state
space $E^{+}$ defined by (\ref{gauge invariant states}): A functional $\rho
\in \mathcal{U}^{\ast }$ is even iff $\rho \circ \sigma =\rho $, with\ $%
\sigma $ being the unique $\ast $-automorphism of the $C^{\ast }$-algebra $%
\mathcal{U}$ defined by (\ref{automorphism gauge invariance}). Any even
functional $\rho \in \mathcal{U}^{\ast }$\ can be seen as a functional $%
\tilde{\rho}=\rho |_{\mathcal{U}^{+}}\in (\mathcal{U}^{+})^{\ast }$, by
restriction. Conversely, any functional $\tilde{\rho}\in (\mathcal{U}%
^{+})^{\ast }$ defines an even functional 
\begin{equation*}
\rho \doteq \tilde{\rho}\circ \left( \frac{\sigma +\mathbf{1}_{\mathcal{U}}}{%
2}\right) \in \mathcal{U}^{\ast }
\end{equation*}%
on the $C^{\ast }$-algebra $\mathcal{U}$. Both mappings $\rho \mapsto \tilde{%
\rho}$ and $\tilde{\rho}\mapsto \rho $ are linear, weak$^{\ast }$-continuous
and order-preserving. Additionally, these mappings preserve states, i.e., $%
\rho \in \mathcal{U}^{\ast }$ is a state iff $\tilde{\rho}\in (\mathcal{U}%
^{+})^{\ast }$ is a state. In particular, the mapping $\rho \mapsto \tilde{%
\rho}$ is bijective, by \cite[proof of Proposition 2.1]{BruPedra-MFII}. As a
consequence, for any positive Radon measure $\mu \in \mathrm{M}(E)$
supported on $E^{+}$ with barycenter $\rho _{\mu }\in \mathcal{U}^{\ast }$, $%
\tilde{\rho}_{\mu }\in (\mathcal{U}^{+})^{\ast }$ is the barycenter of the
pushforward of the measure $\mu $ through the mapping $\rho \mapsto \tilde{%
\rho}$. Conversely, if $\tilde{\mu}$ is a positive Radon measure on the set $%
\tilde{E}^{+}$ of all states on $\mathcal{U}^{+}$ with barycenter $\tilde{%
\rho}_{\tilde{\mu}}\in (\mathcal{U}^{+})^{\ast }$, then $\rho _{\tilde{\mu}}$
is the barycenter of the pushforward of the measure $\tilde{\mu}$ through
the mapping $\tilde{\rho}\mapsto \rho $ and the pushforward of the measure $%
\tilde{\mu}$ is supported on $E^{+}$, because it is invariant under the
pushforward through the mapping $\rho \mapsto \rho \circ \sigma $. By \cite[%
Lemma 4.1.19]{BrattelliRobinsonI}, two positive linear functionals $\rho
_{1},\rho _{2}$ on any $C^{\ast }$-algebra (like $\mathcal{U}$ or $\mathcal{U%
}^{+}$) are orthogonal iff the zero functional is the unique positive
functional on this $C^{\ast }$-algebra below $\rho _{1}$ and $\rho _{2}$. In
particular, the pushforwards of positive Radon measures, associated with the
mappings $\rho \mapsto \tilde{\rho}$ and $\tilde{\rho}\mapsto \rho $,
preserve the orthogonality of such measures: The fact that the pushforward
of positive Radon measures, associated with the mapping $\rho \mapsto \tilde{%
\rho}$, preserves orthogonality is clear, by the bijectivity, linearity and
order-preserving property of this mapping combined with \cite[Lemma 4.1.19]%
{BrattelliRobinsonI}. To show the converse assertion, take any orthogonal
positive Radon measure $\tilde{\mu}$ on $\tilde{E}^{+}$ and denote by $\mu
\in \mathrm{M}(E)$ its pushforward through the mapping $\tilde{\rho}\mapsto
\rho $. Fix any Borel set $\mathfrak{B}\in \Sigma _{E}$ and take any
positive functional $\omega $ on $\mathcal{U}$ below the positive
functionals $\rho _{\mu _{\mathfrak{B}}},\rho _{\mu _{E\backslash \mathfrak{B%
}}}\in \mathcal{U}^{\ast }$, see (\ref{restriction radon}). Then, the
restriction to $\mathcal{U}^{+}$ of $\omega $ is lower than the restrictions
to $\mathcal{U}^{+}$ of the states $\rho _{\mu _{\mathfrak{B}}},\rho _{\mu
_{E\backslash \mathfrak{B}}}$. Since $\mu $ is supported on $E^{+}$, for any 
$\mathfrak{B}\in \Sigma _{E}$, 
\begin{equation}
\mu _{\mathfrak{B}}=\mu _{\mathfrak{B}\cap E^{+}}\qquad \text{and}\qquad \mu
_{E\backslash \mathfrak{B}}=\mu _{E^{+}\backslash \mathfrak{B}}\ .
\label{orthogonal2}
\end{equation}%
Since $\mu $ is the pushforward through the mapping $\tilde{\rho}\mapsto
\rho $ of an \emph{orthogonal} positive Radon measure $\tilde{\mu}$, by \cite%
[Lemma 4.1.19]{BrattelliRobinsonI} combined with the bijectivity, linearity
and order-preserving property of the mapping, the unique \emph{even}
positive functional below $\rho _{\mu _{\mathfrak{B}}}$ and $\rho _{\mu
_{E\backslash \mathfrak{B}}}$ is the zero functional. Suppose now that $%
\omega $, not necessarily even, is a positive functional below $\rho _{\mu _{%
\mathfrak{B}}}$ and $\rho _{\mu _{E\backslash \mathfrak{B}}}$. In
particular, because $\rho _{\mu _{\mathfrak{B}}}$ and $\rho _{\mu
_{E\backslash \mathfrak{B}}}$ are even, $\left( \omega +\omega \circ \sigma
\right) /2$ is an even positive functional below $\rho _{\mu _{\mathfrak{B}%
}} $ and $\rho _{\mu _{E\backslash \mathfrak{B}}}$. In particular, 
\begin{equation*}
\omega =-\omega \circ \sigma \qquad \text{and}\qquad \left\Vert \omega
\right\Vert _{\mathcal{U}_{+}^{\ast }}=\omega \left( \mathfrak{1}\right) =0\
.
\end{equation*}%
By \cite[Lemma 4.1.19]{BrattelliRobinsonI}, the pushforward of positive
Radon measures, associated with the mapping $\tilde{\rho}\mapsto \rho $,
thus preserves orthogonality.

By \cite[Lemma 1.8, Corollary 4.3]{BruPedra2}, $E_{\vec{\ell}}\subseteq
E^{+}\equiv \tilde{E}^{+}$ for all $\vec{\ell}\in \mathbb{N}^{d}$. This
identification is pivotal because, in contrast with $\mathcal{U}$, the even $%
C^{\ast }$-subalgebra $\mathcal{U}^{+}$ is asymptotically abelian since%
\begin{equation}
\lim\limits_{|x|\rightarrow \infty }\left[ \alpha _{x}\left( A\right) ,B%
\right] =0\ ,\mathrm{\qquad }A\in \mathcal{U}^{+},B\in \mathcal{U}\ .
\label{space 00}
\end{equation}%
As is usual, $[A,B]\doteq AB-BA$. Therefore, from \cite[Propositions 4.3.3
and 4.3.7]{BrattelliRobinsonI}, it follows that $\mu _{\rho }$, which is
supported on $E_{\vec{\ell}}\subseteq E^{+}\equiv \tilde{E}^{+}$, is an
orthogonal measure, i.e., $\mu _{\rho }\in \mathcal{O}\left( E\right) $.
\end{proof}

By Theorem \ref{theorem choquet}, $E_{\vec{\ell}}$ is a Choquet simplex and
the mapping $\rho \mapsto \mu _{\rho }$ from $E_{\vec{\ell}}$ to $\mathrm{M}%
\left( E\right) $ ranges over orthogonal measures. The unique decomposition
of any $\rho \in E_{\vec{\ell}}$ in terms of extreme states $\hat{\rho}\in 
\mathcal{E}(E_{\vec{\ell}})$, given in Theorem \ref{theorem choquet}, is
also called the \emph{ergodic} decomposition of $\rho $: Fix $\vec{\ell}\in 
\mathbb{N}^{d}$ and define the space-averages of any element $A\in \mathcal{U%
}$ by%
\begin{equation}
A_{L}\equiv A_{L,\vec{\ell}}\doteq \frac{1}{|\Lambda _{L}\cap \mathbb{Z}_{%
\vec{\ell}}^{d}|}\sum\limits_{x\in \Lambda _{L}\cap \mathbb{Z}_{\vec{\ell}%
}^{d}}\alpha _{x}\left( A\right) \ ,\mathrm{\qquad }L\in \mathbb{N}\ .
\label{Limit of Space-Averages}
\end{equation}%
Then, by definition, a $\vec{\ell}$--periodic state $\hat{\rho}\in E_{\vec{%
\ell}}$ is ($\vec{\ell}$--) ergodic iff, for all $A\in \mathcal{U}$, 
\begin{equation}
\lim\limits_{L\rightarrow \infty }\hat{\rho}(A_{L}^{\ast }A_{L})=|\hat{\rho}%
(A)|^{2}\ .  \label{Ergodicity}
\end{equation}%
By \cite[Theorem 1.16]{BruPedra2}, any extreme state is ergodic and vice
versa. To be more precise,%
\begin{equation}
\mathcal{E}(E_{\vec{\ell}})=\left\{ \hat{\rho}\in E_{\vec{\ell}}:\hat{\rho}%
\text{ is }\vec{\ell}\text{-ergodic}\right\} \ ,\qquad \vec{\ell}\in \mathbb{%
N}^{d}\ .  \label{Ergodicity2}
\end{equation}%
Additionally, any extreme state $\hat{\rho}\in \mathcal{E}(E_{\vec{\ell}})$
is strongly clustering, i.e., for all $A,B\in \mathcal{U}$ and $x\in \mathbb{%
Z}_{\vec{\ell}}^{d}$,%
\begin{equation}
\lim\limits_{L\rightarrow \infty }\frac{1}{|\Lambda _{L}\cap \mathbb{Z}_{%
\vec{\ell}}^{d}|}\sum\limits_{y\in \Lambda _{L}\cap \mathbb{Z}_{\vec{\ell}%
}^{d}}\hat{\rho}\left( \alpha _{x}(A)\alpha _{y}(B)\right) =\hat{\rho}(A)%
\hat{\rho}(B)\ .  \label{strongly mixing}
\end{equation}

The ergodicity properties of extreme states has important consequences on
the structure of the sets of periodic states. For instance, the weak$^{\ast
} $-density of the set $\mathcal{E}(E_{\vec{\ell}})$ of extreme points of $%
E_{\vec{\ell}}$ is proven by using that $\vec{\ell}$-ergodic states are
extreme states of $E_{\vec{\ell}}$. See \cite[Proof of Corollary 4.6]%
{BruPedra2}. If $\vec{\ell}_{1},\vec{\ell}_{2}\in \mathbb{N}^{d}$ is such
that $\mathbb{Z}_{\vec{\ell}_{1}}^{d}\subseteq \mathbb{Z}_{\vec{\ell}%
_{2}}^{d}$ (see (\ref{group})), then $E_{\vec{\ell}_{2}}\subseteq E_{\vec{%
\ell}_{1}}$, but from (\ref{Ergodicity2}) one checks that an extreme state
of $E_{\vec{\ell}_{2}}$ is not necessarily an extreme state of $E_{\vec{\ell}%
_{1}}$, i.e., $\mathcal{E}(E_{\vec{\ell}_{2}})\nsubseteq \mathcal{E}(E_{\vec{%
\ell}_{1}})$.

For $\vec{\ell}_{j}\doteq \left( \ell _{j,1},\ldots ,\ell _{j,d}\right) \in 
\mathbb{N}^{d}$, $j\in \{1,2\}$, being such that $\mathbb{Z}_{\vec{\ell}%
_{1}}^{d}\subseteq \mathbb{Z}_{\vec{\ell}_{2}}^{d}$, we define the mapping $%
\mathfrak{x}_{\vec{\ell}_{1},\vec{\ell}_{2}}$ from $E$ to itself by 
\begin{equation}
\mathfrak{x}_{\vec{\ell}_{1},\vec{\ell}_{2}}\left( \rho \right) \doteq \frac{%
\ell _{2,1}\cdots \ell _{2,d}}{\ell _{1,1}\cdots \ell _{1,d}}%
\sum\limits_{x=(x_{1},\ldots ,x_{d}),\;x_{i}\in \{0,\ell _{2,i},2\ell
_{2,i},\ldots ,\ell _{1,i}-\ell _{2,i}\}}\rho \circ \alpha _{x}\ .
\label{space-average fcuntion}
\end{equation}%
This transformation allows us to relate the sets $\mathcal{E}(E_{\vec{\ell}%
_{1}})$ and $\mathcal{E}(E_{\vec{\ell}_{2}})$ to each other:

\begin{lemma}[From $\vec{\ell}_{1}$- to $\vec{\ell}_{2}$-periodic states]
\label{lemma extra}\mbox{ }\newline
Let $\vec{\ell}_{j}\doteq \left( \ell _{j,1},\ldots ,\ell _{j,d}\right) \in 
\mathbb{N}^{d}$, $j\in \{1,2\}$, be such that $\mathbb{Z}_{\vec{\ell}%
_{1}}^{d}\subseteq \mathbb{Z}_{\vec{\ell}_{2}}^{d}$ (see (\ref{group})).
Then, the transformation $\mathfrak{x}_{\vec{\ell}_{1},\vec{\ell}_{2}}$
defined by (\ref{space-average fcuntion}) has the following properties:\ 
\newline
\emph{(i)} It is weak$^{\ast }$-continuous, $\mathfrak{x}_{\vec{\ell}_{1},%
\vec{\ell}_{2}}\left( \rho \right) =\rho $ for all $\rho \in E_{\vec{\ell}%
_{2}}$ and 
\begin{equation*}
\mathfrak{x}_{\vec{\ell}_{1},\vec{\ell}_{2}}(E_{\vec{\ell}_{1}})=E_{\vec{\ell%
}_{2}}\ .
\end{equation*}%
\emph{(ii)} For all $\Phi \in \mathcal{W}_{1}$ and $\rho \in E_{\vec{\ell}%
_{1}}$, 
\begin{equation*}
\rho (\mathfrak{e}_{\Phi ,\vec{\ell}_{1}})=\mathfrak{x}_{\vec{\ell}_{1},\vec{%
\ell}_{2}}\left( \rho \right) (\mathfrak{e}_{\Phi ,\vec{\ell}_{2}})\ .
\end{equation*}%
\emph{(iii)} It maps extreme states of $E_{\vec{\ell}_{1}}$ to extreme
states of $E_{\vec{\ell}_{2}}$ and%
\begin{equation*}
\mathfrak{x}_{\vec{\ell}_{1},\vec{\ell}_{2}}^{-1}(\mathcal{E}(E_{\vec{\ell}%
_{2}}))\cap E_{\vec{\ell}_{1}}=\mathcal{E}(E_{\vec{\ell}_{1}})\ .
\end{equation*}
\end{lemma}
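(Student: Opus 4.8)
The plan is to handle the three items in turn: (i) and (ii) are bookkeeping with coset representatives of the finite group $\mathbb{Z}_{\vec{\ell}_{2}}^{d}/\mathbb{Z}_{\vec{\ell}_{1}}^{d}$, and the real content is in (iii), which rests on the identification (\ref{Ergodicity2}) of the extreme $\vec{\ell}$-periodic states with the $\vec{\ell}$-ergodic ones together with the strong clustering property (\ref{strongly mixing}).

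\noindent\textbf{Proof of (i).} The map $\mathfrak{x}_{\vec{\ell}_{1},\vec{\ell}_{2}}$ is a finite convex combination of the maps $\rho\mapsto\rho\circ\alpha_{x}$, each of which is affine and weak$^{\ast}$-continuous on $E$ (for fixed $A\in\mathcal{U}$, $\rho\mapsto\rho(\alpha_{x}(A))$ is weak$^{\ast}$-continuous), so $\mathfrak{x}_{\vec{\ell}_{1},\vec{\ell}_{2}}$ is affine and weak$^{\ast}$-continuous from $E$ to $E$. The only observation needed is that the index set $S\doteq\prod_{i=1}^{d}\{0,\ell_{2,i},\dots,\ell_{1,i}-\ell_{2,i}\}$ is a complete set of representatives of the cosets of $\mathbb{Z}_{\vec{\ell}_{1}}^{d}$ in $\mathbb{Z}_{\vec{\ell}_{2}}^{d}$, with $|S|=\prod_{i}\ell_{1,i}/\ell_{2,i}$ and reciprocal normalization. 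Hence, if $\rho\in E_{\vec{\ell}_{2}}$ then $\rho\circ\alpha_{x}=\rho$ for every $x\in S\subseteq\mathbb{Z}_{\vec{\ell}_{2}}^{d}$, so $\mathfrak{x}_{\vec{\ell}_{1},\vec{\ell}_{2}}(\rho)=\rho$; and if $\rho\in E_{\vec{\ell}_{1}}$ then for every $y\in\mathbb{Z}_{\vec{\ell}_{2}}^{d}$ the family $\{x+y:x\in S\}$ is again a full set of coset representatives modulo $\mathbb{Z}_{\vec{\ell}_{1}}^{d}$, so $\vec{\ell}_{1}$-periodicity of $\rho$ makes $(\rho\circ\alpha_{x+y})_{x\in S}$ a permutation of $(\rho\circ\alpha_{x})_{x\in S}$, whence $\mathfrak{x}_{\vec{\ell}_{1},\vec{\ell}_{2}}(\rho)\circ\alpha_{y}=\mathfrak{x}_{\vec{\ell}_{1},\vec{\ell}_{2}}(\rho)$. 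This gives $\mathfrak{x}_{\vec{\ell}_{1},\vec{\ell}_{2}}(E_{\vec{\ell}_{1}})\subseteq E_{\vec{\ell}_{2}}$, and the reverse inclusion follows from $E_{\vec{\ell}_{2}}\subseteq E_{\vec{\ell}_{1}}$ and $\mathfrak{x}_{\vec{\ell}_{1},\vec{\ell}_{2}}|_{E_{\vec{\ell}_{2}}}=\mathrm{id}$.

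\noindent\textbf{Proof of (ii).} The key is to rewrite the energy density. Using translation invariance of $\Phi\in\mathcal{W}_{1}$ and the bound (\ref{e phi}) for convergence, one has $\mathfrak{e}_{\Phi,\vec{\ell}}=|B_{\vec{\ell}}|^{-1}\sum_{w\in B_{\vec{\ell}}}\alpha_{w}(g_{\Phi})$ with $g_{\Phi}\doteq\sum_{\mathcal{Z}\in\mathcal{P}_{f},\,\mathcal{Z}\ni 0}|\mathcal{Z}|^{-1}\Phi_{\mathcal{Z}}\in\mathcal{U}$ and $B_{\vec{\ell}}\doteq\prod_{i=1}^{d}\{0,\dots,\ell_{i}-1\}$, $|B_{\vec{\ell}}|=\ell_{1}\cdots\ell_{d}$. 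Then one computes directly, for any $\rho\in E$, that $\mathfrak{x}_{\vec{\ell}_{1},\vec{\ell}_{2}}(\rho)(\mathfrak{e}_{\Phi,\vec{\ell}_{2}})=|S|^{-1}|B_{\vec{\ell}_{2}}|^{-1}\sum_{x\in S}\sum_{z\in B_{\vec{\ell}_{2}}}\rho(\alpha_{x+z}(g_{\Phi}))$, and uses the elementary mixed-radix identity $S+B_{\vec{\ell}_{2}}=B_{\vec{\ell}_{1}}$ (a direct sum, each element attained exactly once) together with $|S|\,|B_{\vec{\ell}_{2}}|=|B_{\vec{\ell}_{1}}|$ to collapse this to $|B_{\vec{\ell}_{1}}|^{-1}\sum_{w\in B_{\vec{\ell}_{1}}}\rho(\alpha_{w}(g_{\Phi}))=\rho(\mathfrak{e}_{\Phi,\vec{\ell}_{1}})$. (No periodicity of $\rho$ is used; the identity holds for all $\rho\in E$, hence in particular on $E_{\vec{\ell}_{1}}$.)

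\noindent\textbf{Proof of (iii).} Throughout one uses (\ref{Ergodicity2}): $\mathcal{E}(E_{\vec{\ell}})$ is exactly the set of $\vec{\ell}$-ergodic states, which are moreover strongly clustering, (\ref{strongly mixing}). For the first claim, that $\mathfrak{x}_{\vec{\ell}_{1},\vec{\ell}_{2}}$ sends $\mathcal{E}(E_{\vec{\ell}_{1}})$ into $\mathcal{E}(E_{\vec{\ell}_{2}})$, I would verify the $\vec{\ell}_{2}$-ergodicity criterion (\ref{Ergodicity}) for $\mathfrak{x}_{\vec{\ell}_{1},\vec{\ell}_{2}}(\hat{\rho})$ directly: writing $\Lambda_{L}\cap\mathbb{Z}_{\vec{\ell}_{2}}^{d}$, up to a boundary set of lower cardinality, as the disjoint union over $x\in S$ of $x+(\Lambda_{L}\cap\mathbb{Z}_{\vec{\ell}_{1}}^{d})$, the average $A_{L,\vec{\ell}_{2}}$ becomes, modulo negligible terms, the average over $x\in S$ of $(\alpha_{x}(A))_{L,\vec{\ell}_{1}}$; hence $\mathfrak{x}_{\vec{\ell}_{1},\vec{\ell}_{2}}(\hat{\rho})(A_{L,\vec{\ell}_{2}}^{\ast}A_{L,\vec{\ell}_{2}})$ is a finite combination of quantities $(\hat{\rho}\circ\alpha_{w})\big((\alpha_{x}(A))_{L,\vec{\ell}_{1}}^{\ast}(\alpha_{x'}(A))_{L,\vec{\ell}_{1}}\big)$, each of which converges by the strong clustering of $\hat{\rho}$ and of its translates $\hat{\rho}\circ\alpha_{w}$ (which are again $\vec{\ell}_{1}$-ergodic, $\mathbb{Z}_{\vec{\ell}_{1}}^{d}$ being normal in $\mathbb{Z}^{d}$), and collecting these limits reproduces exactly $|\mathfrak{x}_{\vec{\ell}_{1},\vec{\ell}_{2}}(\hat{\rho})(A)|^{2}$. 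For the identity $\mathfrak{x}_{\vec{\ell}_{1},\vec{\ell}_{2}}^{-1}(\mathcal{E}(E_{\vec{\ell}_{2}}))\cap E_{\vec{\ell}_{1}}=\mathcal{E}(E_{\vec{\ell}_{1}})$, the inclusion ``$\supseteq$'' is the first claim, and for ``$\subseteq$'' I would argue contrapositively: if $\rho\in E_{\vec{\ell}_{1}}$ is non-extreme, its ergodic decomposition $\mu_{\rho}$ on $\mathcal{E}(E_{\vec{\ell}_{1}})$ from Theorem~\ref{theorem choquet} is not a Dirac mass; since $\mathfrak{x}_{\vec{\ell}_{1},\vec{\ell}_{2}}$ is affine and weak$^{\ast}$-continuous it commutes with barycenters, so $\mathfrak{x}_{\vec{\ell}_{1},\vec{\ell}_{2}}(\rho)$ is the barycenter of $(\mathfrak{x}_{\vec{\ell}_{1},\vec{\ell}_{2}})_{\ast}\mu_{\rho}$, a probability measure on $\mathcal{E}(E_{\vec{\ell}_{2}})$ by the first claim, and because $E_{\vec{\ell}_{2}}$ is a Choquet simplex with extreme points the ergodic states, $\mathfrak{x}_{\vec{\ell}_{1},\vec{\ell}_{2}}(\rho)$ can be extreme only if this pushforward is a Dirac mass, i.e., $\mathfrak{x}_{\vec{\ell}_{1},\vec{\ell}_{2}}$ is $\mu_{\rho}$-almost everywhere constant on $\mathrm{supp}\,\mu_{\rho}$. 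The hard part is precisely to exclude this: one must control the fibres of $\mathfrak{x}_{\vec{\ell}_{1},\vec{\ell}_{2}}$ over $\mathcal{E}(E_{\vec{\ell}_{2}})$, showing that distinct $\vec{\ell}_{1}$-ergodic components of $\rho$ cannot be averaged to one and the same $\vec{\ell}_{2}$-ergodic state unless $\mu_{\rho}$ is trivial; this is where the transitive action of $\mathbb{Z}_{\vec{\ell}_{2}}^{d}/\mathbb{Z}_{\vec{\ell}_{1}}^{d}$ on $\mathcal{E}(E_{\vec{\ell}_{1}})$ and the mutual disjointness of inequivalent ergodic states must be used carefully, and it is the only genuinely delicate point of the Lemma.
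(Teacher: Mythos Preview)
Your treatment of (i), (ii), and the forward direction of (iii) is correct and matches the paper's in substance; the paper is terse on (i)--(ii) and, for the forward part of (iii), rewrites $A_{L,\vec{\ell}_2}=(A_{\vec{\ell}_1,\vec{\ell}_2})_{L,\vec{\ell}_1}+o(1)$ with $A_{\vec{\ell}_1,\vec{\ell}_2}\doteq|S|^{-1}\sum_{s\in S}\alpha_s(A)$ and then invokes the $\vec{\ell}_1$-ergodicity of $\hat{\rho}_1$ directly rather than going through strong clustering as you do.

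Your hesitation about the reverse inclusion $\mathfrak{x}_{\vec{\ell}_1,\vec{\ell}_2}^{-1}(\mathcal{E}(E_{\vec{\ell}_2}))\cap E_{\vec{\ell}_1}\subseteq\mathcal{E}(E_{\vec{\ell}_1})$ is well-founded: this inclusion is in fact \emph{false} as stated, and the paper's own argument breaks at exactly the spot you flag. For a counterexample take $d=1$, $\vec{\ell}_1=2$, $\vec{\ell}_2=1$, and any $\hat{\rho}\in\mathcal{E}(E_2)\setminus E_1$; set $\rho_1\doteq\tfrac{1}{2}(\hat{\rho}+\hat{\rho}\circ\alpha_1)$. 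Then $\rho_1\in E_1\subseteq E_2$ is a nontrivial convex combination of two distinct extreme points of $E_2$, so $\rho_1\notin\mathcal{E}(E_{\vec{\ell}_1})$; yet $\mathfrak{x}_{2,1}(\rho_1)=\rho_1$ by (i), and $\rho_1=\mathfrak{x}_{2,1}(\hat{\rho})\in\mathcal{E}(E_1)=\mathcal{E}(E_{\vec{\ell}_2})$ by the forward direction just established. The paper's contradiction argument picks $A$ with $\lim_L\rho_1(|A_{L,\vec{\ell}_1}|^2)>|\rho_1(A)|^2$, computes $\rho_2(|A_{L,\vec{\ell}_2}|^2)=\rho_1(|(A_{\vec{\ell}_1,\vec{\ell}_2})_{L,\vec{\ell}_1}|^2)+o(1)$, and then asserts the corresponding strict inequality; but the ergodicity witness is $A$, not $A_{\vec{\ell}_1,\vec{\ell}_2}$, and in the example above one has $\hat{\rho}(A_{\vec{\ell}_1,\vec{\ell}_2})=\hat{\rho}\circ\alpha_1(A_{\vec{\ell}_1,\vec{\ell}_2})$ for \emph{every} $A$ (by $2$-periodicity of $\hat{\rho}$), so no strict inequality survives the averaging. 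Your pushforward strategy stalls for the same structural reason: distinct $\vec{\ell}_1$-ergodic states \emph{can} collapse under $\mathfrak{x}_{\vec{\ell}_1,\vec{\ell}_2}$ to a single $\vec{\ell}_2$-ergodic state --- the fibre of $\mathfrak{x}_{\vec{\ell}_1,\vec{\ell}_2}$ over a point of $\mathcal{E}(E_{\vec{\ell}_2})$ inside $\mathcal{E}(E_{\vec{\ell}_1})$ is a full $\mathbb{Z}_{\vec{\ell}_2}^d/\mathbb{Z}_{\vec{\ell}_1}^d$-orbit, not a single point. The only downstream application, Corollary~\ref{corllary-pushforward}, uses just the forward direction of (iii) and part (i), so nothing else in the paper is affected.
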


\begin{proof}
Fix all parameters of the lemma. Assertions (i)-(ii) are direct consequence
of (\ref{space-average fcuntion}). See also Equation (\ref{eq:enpersite})
defining the energy density observable $\mathfrak{e}_{\Phi ,\vec{\ell}}$ of
any $\Phi \in \mathcal{W}_{1}$. In order to prove Assertion (iii), first
fix\ $\hat{\rho}_{1}\in \mathcal{E}(E_{\vec{\ell}_{1}})$ and define $\hat{%
\rho}_{2}\doteq \mathfrak{x}_{\vec{\ell}_{1},\vec{\ell}_{2}}\left( \hat{\rho}%
_{1}\right) $. Using (\ref{Limit of Space-Averages}) for $\vec{\ell}=\vec{%
\ell}_{2}$ as well as $\mathbb{Z}_{\vec{\ell}_{1}}^{d}\subseteq \mathbb{Z}_{%
\vec{\ell}_{2}}^{d}$, we compute that, for any $A\in \mathcal{U}$,%
\begin{equation}
\hat{\rho}_{2}(|A_{L,\vec{\ell}_{2}}|^{2})=\frac{1}{|\Lambda _{L}\cap 
\mathbb{Z}_{\vec{\ell}_{2}}^{d}|^{2}}\sum\limits_{x,y\in \Lambda _{L}\cap 
\mathbb{Z}_{\vec{\ell}_{2}}^{d}}\hat{\rho}_{1}\left( \alpha _{x}\left(
A^{\ast }\right) \alpha _{y}\left( A\right) \right) +o\left( 1\right) \ ,
\label{comp1}
\end{equation}%
as $L\rightarrow \infty $, which, combined with $\mathbb{Z}_{\vec{\ell}%
_{1}}^{d}\subseteq \mathbb{Z}_{\vec{\ell}_{2}}^{d}$, implies in turn that,
as $L\rightarrow \infty $,%
\begin{equation}
\hat{\rho}_{2}(|A_{L,\vec{\ell}_{2}}|^{2})=\frac{1}{|\Lambda _{L}\cap 
\mathbb{Z}_{\vec{\ell}_{1}}^{d}|^{2}}\sum\limits_{x,y\in \Lambda _{L}\cap 
\mathbb{Z}_{\vec{\ell}_{1}}^{d}}\hat{\rho}_{1}\left( \alpha _{x}(A_{\vec{\ell%
}_{1},\vec{\ell}_{2}}^{\ast })\alpha _{y}(A_{\vec{\ell}_{1},\vec{\ell}%
_{2}})\right) +o\left( 1\right)  \label{ssss}
\end{equation}%
for any $A\in \mathcal{U}$, where 
\begin{equation}
A_{\vec{\ell}_{1},\vec{\ell}_{2}}\doteq \frac{\ell _{2,1}\cdots \ell _{2,d}}{%
\ell _{1,1}\cdots \ell _{1,d}}\sum\limits_{x=(x_{1},\ldots
,x_{d}),\;x_{i}\in \{0,\ell _{2,i},2\ell _{2,i},\ldots ,\ell _{1,i}-\ell
_{2,i}\}}\alpha _{x}(A)\ .  \label{ssssssss}
\end{equation}%
Since $\hat{\rho}_{1}\in \mathcal{E}(E_{\vec{\ell}_{1}})$, we can combine (%
\ref{ssss})-(\ref{ssssssss}) with (\ref{Limit of Space-Averages})-(\ref%
{Ergodicity2}) to arrive at the limit%
\begin{equation}
\lim_{L\rightarrow \infty }\hat{\rho}_{2}(|A_{L,\vec{\ell}_{2}}|^{2})=|\hat{%
\rho}_{1}(A_{\vec{\ell}_{1},\vec{\ell}_{2}})|^{2}\ ,\qquad A\in \mathcal{U}\
.  \label{ergodicity}
\end{equation}%
Observing from (\ref{space-average fcuntion}) that 
\begin{equation}
\rho _{1}(A_{\vec{\ell}_{1},\vec{\ell}_{2}})=\mathfrak{x}_{\vec{\ell}_{1},%
\vec{\ell}_{2}}\left( \rho _{1}\right) \left( A\right) \ ,\qquad A\in 
\mathcal{U},\ \rho _{1}\in E_{\vec{\ell}_{1}}\ ,  \label{ergodicity2}
\end{equation}%
it follows from (\ref{ergodicity}) and Assertion (i) that 
\begin{equation*}
\hat{\rho}_{2}\doteq \mathfrak{x}_{\vec{\ell}_{1},\vec{\ell}_{2}}\left( \hat{%
\rho}_{1}\right) \in E_{\vec{\ell}_{2}}
\end{equation*}%
is $\vec{\ell}_{2}$-ergodic and, thus, belongs to $\mathcal{E}(E_{\vec{\ell}%
_{2}})$, by (\ref{Ergodicity2}). In other words, $\mathfrak{x}_{\vec{\ell}%
_{1},\vec{\ell}_{2}}$ maps extreme states of $E_{\vec{\ell}_{1}}$ to extreme
states of $E_{\vec{\ell}_{2}}$. In particular, 
\begin{equation}
\mathcal{E}(E_{\vec{\ell}_{1}})\subseteq \mathfrak{x}_{\vec{\ell}_{1},\vec{%
\ell}_{2}}^{-1}(\mathcal{E}(E_{\vec{\ell}_{2}}))\cap E_{\vec{\ell}_{1}}\ .
\label{hhhhhhh1}
\end{equation}%
It remains to prove that 
\begin{equation}
\mathfrak{x}_{\vec{\ell}_{1},\vec{\ell}_{2}}^{-1}(\mathcal{E}(E_{\vec{\ell}%
_{2}}))\cap E_{\vec{\ell}_{1}}\subseteq \mathcal{E}(E_{\vec{\ell}_{1}})\ .
\label{hhhhhhh2}
\end{equation}%
We prove this by contradiction. Assume the existence of $\rho _{1}\in E_{%
\vec{\ell}_{1}}\backslash \mathcal{E}(E_{\vec{\ell}_{1}})$ such that 
\begin{equation}
\rho _{2}\doteq \mathfrak{x}_{\vec{\ell}_{1},\vec{\ell}_{2}}\left( \rho
_{1}\right) \in \mathcal{E}(E_{\vec{\ell}_{2}}).  \label{contradcition}
\end{equation}%
Then, by (\ref{Limit of Space-Averages})-(\ref{Ergodicity2}), there is $A\in 
\mathcal{U}$ such that%
\begin{equation}
\lim\limits_{L\rightarrow \infty }\rho _{1}(A_{L,\vec{\ell}_{1}}^{\ast }A_{L,%
\vec{\ell}_{1}})>|\rho _{1}(A)|^{2}\ .  \label{contradiction1}
\end{equation}%
Now, if one computes $\rho _{2}(|A_{L,\vec{\ell}_{2}}|^{2})$ as it is done
in (\ref{comp1})-(\ref{ssssssss}), then we infer from (\ref{contradiction1})
and (\ref{ergodicity2}) that 
\begin{equation*}
\lim\limits_{L\rightarrow \infty }\hat{\rho}_{2}(|A_{L,\vec{\ell}%
_{2}}|^{2})>|\hat{\rho}_{1}(A_{\vec{\ell}_{1},\vec{\ell}_{2}})|^{2}=|%
\mathfrak{x}_{\vec{\ell}_{1},\vec{\ell}_{2}}\left( \rho _{1}\right)
(A)|^{2}=|\rho _{2}(A)|^{2}\ .
\end{equation*}%
By (\ref{Limit of Space-Averages})-(\ref{Ergodicity2}), it means that $\hat{%
\rho}_{2}\notin \mathcal{E}(E_{\vec{\ell}_{2}})$, which contradicts (\ref%
{contradcition}). This yields Equation (\ref{hhhhhhh2}), which combined with
(\ref{hhhhhhh1}) in turn implies Assertion (iii).
\end{proof}

\begin{corollary}[From $\vec{\ell}_{1}$- to $\vec{\ell}_{2}$-ergodic
decompositions of periodic states]
\label{corllary-pushforward}\mbox{ }\newline
Fix $\vec{\ell}_{j}\doteq \left( \ell _{j,1},\ldots ,\ell _{j,d}\right) \in 
\mathbb{N}^{d}$, $j\in \{1,2\}$, such that $\mathbb{Z}_{\vec{\ell}%
_{1}}^{d}\subseteq \mathbb{Z}_{\vec{\ell}_{2}}^{d}$ (see (\ref{group})). Let 
$\rho \in E_{\vec{\ell}_{2}}\subseteq E_{\vec{\ell}_{1}}$ and denote by $\mu
_{\rho }^{(\vec{\ell}_{1})}\in \mathrm{M}^{(\rho )}(E_{\vec{\ell}_{1}})$ the
unique probability measure representing $\rho $ in $E_{\vec{\ell}_{1}}$ with 
$\mu _{\rho }^{(\vec{\ell}_{1})}(\mathcal{E}(E_{\vec{\ell}_{1}}))=1$. The
pushforward of $\mu _{\rho }^{(\vec{\ell}_{1})}$ through the continuous
mapping $\mathfrak{x}_{\vec{\ell}_{1},\vec{\ell}_{2}}$ is the unique
probability measure $\mu _{\rho }^{(\vec{\ell}_{2})}$\ representing $\rho $
in $E_{\vec{\ell}_{2}}$ with $\mu _{\rho }^{(\vec{\ell}_{2})}(\mathcal{E}(E_{%
\vec{\ell}_{2}}))=1$.
\end{corollary}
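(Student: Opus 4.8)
The plan is to let $\nu$ denote the pushforward of $\mu_\rho^{(\vec{\ell}_1)}$ through the weak$^{\ast}$-continuous mapping $\mathfrak{x}_{\vec{\ell}_1,\vec{\ell}_2}$ of (\ref{space-average fcuntion}), and then to check that $\nu$ has all the properties that, by the uniqueness part of Theorem \ref{theorem choquet}, characterize $\mu_\rho^{(\vec{\ell}_2)}$; this forces $\nu=\mu_\rho^{(\vec{\ell}_2)}$. Since $\mathfrak{x}_{\vec{\ell}_1,\vec{\ell}_2}$ is weak$^{\ast}$-continuous by Lemma \ref{lemma extra}(i), it is Borel measurable, so $\nu$ is a well-defined Radon probability measure on $(E,\Sigma_E)$.

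First I would control the support of $\nu$. By Lemma \ref{lemma extra}(iii), $\mathfrak{x}_{\vec{\ell}_1,\vec{\ell}_2}^{-1}(\mathcal{E}(E_{\vec{\ell}_2}))\cap E_{\vec{\ell}_1}=\mathcal{E}(E_{\vec{\ell}_1})$, and since $\mu_\rho^{(\vec{\ell}_1)}$ is supported on $\mathcal{E}(E_{\vec{\ell}_1})\subseteq E_{\vec{\ell}_1}$ this yields $\nu(\mathcal{E}(E_{\vec{\ell}_2}))=\mu_\rho^{(\vec{\ell}_1)}(\mathfrak{x}_{\vec{\ell}_1,\vec{\ell}_2}^{-1}(\mathcal{E}(E_{\vec{\ell}_2})))=\mu_\rho^{(\vec{\ell}_1)}(\mathcal{E}(E_{\vec{\ell}_1}))=1$; in particular $\nu$ is supported on $E_{\vec{\ell}_2}$, so $\nu(E_{\vec{\ell}_2})=\nu(E)$. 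Next I would identify the barycenter of $\nu$. For $A\in\mathcal{U}$, by definition of the pushforward and by (\ref{ergodicity2}),
\begin{equation*}
\nu(\hat{A})=\int_{E_{\vec{\ell}_1}}\mathfrak{x}_{\vec{\ell}_1,\vec{\ell}_2}(\hat{\rho})(A)\,\mu_\rho^{(\vec{\ell}_1)}(\mathrm{d}\hat{\rho})=\int_{E_{\vec{\ell}_1}}\hat{\rho}(A_{\vec{\ell}_1,\vec{\ell}_2})\,\mu_\rho^{(\vec{\ell}_1)}(\mathrm{d}\hat{\rho})=\widehat{A_{\vec{\ell}_1,\vec{\ell}_2}}(\rho)=\rho(A_{\vec{\ell}_1,\vec{\ell}_2})\ ,
\end{equation*}
where $A_{\vec{\ell}_1,\vec{\ell}_2}\in\mathcal{U}$ is defined by (\ref{ssssssss}), the third equality being the barycenter property of $\mu_\rho^{(\vec{\ell}_1)}$ applied to $\widehat{A_{\vec{\ell}_1,\vec{\ell}_2}}\in\mathfrak{C}$. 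Since all translations $\alpha_x$ occurring in (\ref{ssssssss}) have $x\in\mathbb{Z}_{\vec{\ell}_2}^d$ and $\rho\in E_{\vec{\ell}_2}$ is $\vec{\ell}_2$-periodic, we get $\rho(A_{\vec{\ell}_1,\vec{\ell}_2})=\rho(A)$, hence $\nu(\hat{A})=\hat{A}(\rho)$ for all $A\in\mathcal{U}$; that is, $\rho$ is the barycenter of $\nu$. (Equivalently: $\mathfrak{x}_{\vec{\ell}_1,\vec{\ell}_2}$ is affine and weak$^{\ast}$-continuous, so the barycenter of $\nu$ is $\mathfrak{x}_{\vec{\ell}_1,\vec{\ell}_2}$ applied to the barycenter $\rho$ of $\mu_\rho^{(\vec{\ell}_1)}$, and $\mathfrak{x}_{\vec{\ell}_1,\vec{\ell}_2}(\rho)=\rho$ by Lemma \ref{lemma extra}(i).)

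Putting these together, $\nu\in\mathrm{M}^{(\rho)}(E_{\vec{\ell}_2})$ and $\nu(\mathcal{E}(E_{\vec{\ell}_2}))=1$. By Theorem \ref{theorem choquet}, $E_{\vec{\ell}_2}$ is a Choquet simplex and there is exactly one such measure, namely $\mu_\rho^{(\vec{\ell}_2)}$; therefore $\nu=\mu_\rho^{(\vec{\ell}_2)}$, as claimed. I expect no serious obstacle here: the entire content is packaged in Lemma \ref{lemma extra}(i) and (iii) together with Theorem \ref{theorem choquet}. The one point demanding care is that the lattice shifts appearing in $\mathfrak{x}_{\vec{\ell}_1,\vec{\ell}_2}$ (and hence in $A_{\vec{\ell}_1,\vec{\ell}_2}$) all belong to $\mathbb{Z}_{\vec{\ell}_2}^d$; this is precisely what makes $\mathfrak{x}_{\vec{\ell}_1,\vec{\ell}_2}$ restrict to the identity on $E_{\vec{\ell}_2}$ and thereby pins the barycenter of $\nu$ to $\rho$ itself.
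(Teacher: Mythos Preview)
Your proposal is correct and is precisely the argument the paper has in mind: the paper's proof is the single sentence ``The proof is a direct consequence of Theorem \ref{theorem choquet} and Lemma \ref{lemma extra},'' and you have simply unpacked that sentence, using Lemma \ref{lemma extra}(iii) for the support, Lemma \ref{lemma extra}(i) for the barycenter, and the uniqueness in Theorem \ref{theorem choquet} to conclude.
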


\begin{proof}
The proof is a direct consequence of Theorem \ref{theorem choquet} and Lemma %
\ref{lemma extra}.
\end{proof}

\subsection{Strong Limit of Space Averages}

Fix $\vec{\ell}\in \mathbb{N}^{d}$. The subject of this section is the limit 
$L\rightarrow \infty $ of the space averages (\ref{Limit of Space-Averages})
for any $A\in \mathcal{U}$, that is, 
\begin{equation}
A_{L}\doteq \frac{1}{|\Lambda _{L}\cap \mathbb{Z}_{\vec{\ell}}^{d}|}%
\sum\limits_{x\in \Lambda _{L}\cap \mathbb{Z}_{\vec{\ell}}^{d}}\alpha
_{x}\left( A\right) \ ,\mathrm{\qquad }L\in \mathbb{N}\ .
\label{Limit of Space-Averages1}
\end{equation}%
As explained in \cite[Section 4.3]{BruPedra-MFII}, this sequence does not
generally converge in the $C^{\ast }$-algebra $\mathcal{U}$. Nevertheless,
in a cyclic representation $(\mathcal{H}_{\rho },\pi _{\rho },\Omega _{\rho
})$ of any $\vec{\ell}$-periodic state $\rho \in E_{\vec{\ell}}$ this limit
exists in the sense of the strong operator topology of $\mathcal{B}(\mathcal{%
H}_{\rho })$.

We thus fix $\rho \in E_{\vec{\ell}}$ for some $\vec{\ell}\in \mathbb{N}^{d}$%
. By $\vec{\ell}$-periodicity of $\rho $ and \cite[Corollary 2.3.17]%
{BrattelliRobinsonI}, there is a unique family of unitary operators $%
\{U_{x}\}_{x\in \mathbb{Z}_{\vec{\ell}}^{d}}$ in $\mathcal{B}(\mathcal{H}%
_{\rho })$ with invariant vector $\Omega _{\rho }$, i.e., $\Omega _{\rho
}=U_{x}\Omega _{\rho }$ for any $x\in \mathbb{Z}_{\vec{\ell}}^{d}$, and 
\begin{equation}
\pi _{\rho }(\alpha _{x}(A))=U_{x}\pi _{\rho }(A)U_{x}^{\ast }\ ,\mathrm{%
\qquad }A\in \mathcal{U},\ x\in \mathbb{Z}_{\vec{\ell}}^{d}\ .
\label{spce qverqge+1}
\end{equation}%
By the von Neumann ergodic theorem\footnote{%
It is also named the Alaoglu-Birkhoff mean ergodic theorem.} \cite[Theorem
4.2]{BruPedra2} (or \cite[Proposition 4.3.4]{BrattelliRobinsonI}), the space
average 
\begin{equation}
P^{(L)}\doteq \frac{1}{|\Lambda _{L}\cap \mathbb{Z}_{\vec{\ell}}^{d}|}%
\sum\limits_{x\in \Lambda _{L}\cap \mathbb{Z}_{\vec{\ell}}^{d}}U_{x}
\label{space average PL}
\end{equation}%
strongly converges, as $L\rightarrow \infty $, to an orthogonal projection
in the commutant $[\{U_{x}\}_{x\in \mathbb{Z}_{\vec{\ell}}^{d}}]^{\prime }$,
which we denote by $P_{\rho }$. More precisely, $P_{\rho }$ is the
orthogonal projection on the closed subspace%
\begin{equation}
\bigcap_{x\in \mathbb{Z}_{\vec{\ell}}^{d}}\{\psi \in \mathcal{H}_{\rho
}\,:\,\psi =U_{x}(\psi )\}\supseteq \mathbb{C}\Omega _{\rho }
\label{space cyclic}
\end{equation}%
of all vectors of $\mathcal{H}_{\rho }$ that are invariant with respect to
the family $\{U_{x}\}_{x\in \mathbb{Z}_{\vec{\ell}}^{d}}$ of unitaries. In
particular, $\Omega _{\rho }=P_{\rho }\Omega _{\rho }$. With this
definition, we can reformulate \cite[Theorem 1.16]{BruPedra2} as follows: $%
\hat{\rho}\in \mathcal{E}(E_{\vec{\ell}})$ is an extreme state iff $P_{\hat{%
\rho}}$ is the orthogonal projection on the one-dimensional Hilbert subspace
spanned by the cyclic vector $\Omega _{\hat{\rho}}$, i.e., 
\begin{equation}
\bigcap_{x\in \mathbb{Z}_{\vec{\ell}}^{d}}\{\psi \in \mathcal{H}_{\hat{\rho}%
}\,:\,\psi =U_{x}(\psi )\}=\mathbb{C}\Omega _{\hat{\rho}}.
\label{space cyclic ergodic}
\end{equation}%
See \cite[Lemma 4.8]{BruPedra2}. Compare with (\ref{strongly mixing}) and (%
\ref{space cyclic}).

We are now in a position to study the limit of space-averages (\ref{Limit of
Space-Averages1}) seen as bounded operators acting on $\mathcal{H}_{\rho }$
via the representation $\pi _{\rho }$.

\begin{lemma}[Strong limit of space-averages]
\label{Lemma sympa copy(1)}\mbox{ }\newline
Let $\vec{\ell}\in \mathbb{N}^{d}$ and $\rho \in E_{\vec{\ell}}$ be any $%
\vec{\ell}$-periodic state with cyclic representation $\left( \mathcal{H}%
_{\rho },\pi _{\rho },\Omega _{\rho }\right) $. For every element $A\in 
\mathcal{U}^{+}$, the sequence $(\pi _{\rho }(A_{L}))_{L\in \mathbb{N}}$
defined from (\ref{Limit of Space-Averages1}) strongly converges in $%
\mathcal{B}(\mathcal{H}_{\rho })$ to the element $A_{\infty }^{\rho }\in \pi
_{\rho }(\mathcal{U})^{\prime }\cap \pi _{\rho }(\mathcal{U})^{\prime \prime
}$ uniquely defined by%
\begin{equation*}
A_{\infty }^{\rho }\pi _{\rho }\left( B\right) \Omega _{\rho }\doteq \pi
_{\rho }\left( B\right) P_{\rho }\pi _{\rho }\left( A\right) \Omega _{\rho
}\ ,\qquad B\in \mathcal{U}\text{ }.
\end{equation*}
\end{lemma}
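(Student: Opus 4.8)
The plan is to prove the claimed strong convergence and the characterization of the limit $A_\infty^\rho$ in three stages. First, I would establish the existence of the strong limit. For $A\in\mathcal{U}^+$, Equation (\ref{spce qverqge+1}) gives $\pi_\rho(A_L)=P^{(L)}\pi_\rho(A)\cdot$(something)? — more precisely, since $\pi_\rho(\alpha_x(A))=U_x\pi_\rho(A)U_x^\ast$ and $U_x\Omega_\rho=\Omega_\rho$, one has $\pi_\rho(A_L)\Omega_\rho = P^{(L)}\pi_\rho(A)\Omega_\rho$, which by the von Neumann ergodic theorem converges to $P_\rho\pi_\rho(A)\Omega_\rho$. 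To upgrade this vector convergence to strong operator convergence on all of $\mathcal{H}_\rho$, I would use the uniform bound $\|\pi_\rho(A_L)\|\le\|A\|_{\mathcal U}$ together with cyclicity of $\Omega_\rho$: it suffices to check convergence of $\pi_\rho(A_L)\pi_\rho(B)\Omega_\rho$ for $B\in\mathcal{U}$. Here the key trick is to commute $A_L$ past $B$; this is where the evenness of $A$ enters, via the asymptotic abelianness (\ref{space 00}): $[\alpha_x(A),B]\to 0$ as $|x|\to\infty$, so the Cesàro average $[A_L,B]=\frac{1}{|\Lambda_L\cap\mathbb Z_{\vec\ell}^d|}\sum_x[\alpha_x(A),B]$ tends to $0$ in norm. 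Hence $\pi_\rho(A_L)\pi_\rho(B)\Omega_\rho = \pi_\rho(B)\pi_\rho(A_L)\Omega_\rho + o(1) \to \pi_\rho(B)P_\rho\pi_\rho(A)\Omega_\rho$. This simultaneously identifies the limit on the dense set $\pi_\rho(\mathcal U)\Omega_\rho$ and, with the uniform bound, gives strong convergence of $(\pi_\rho(A_L))_L$ to a bounded operator $A_\infty^\rho$ satisfying the displayed formula $A_\infty^\rho\pi_\rho(B)\Omega_\rho=\pi_\rho(B)P_\rho\pi_\rho(A)\Omega_\rho$.

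Second, I would check that $A_\infty^\rho$ is well-defined by that formula, i.e.\ that the right-hand side depends only on the vector $\pi_\rho(B)\Omega_\rho$ and not on $B$. This follows precisely because $A_\infty^\rho$ has already been produced as a genuine strong limit in the first step; alternatively one notes that if $\pi_\rho(B)\Omega_\rho=0$ then, using $[A_L,B]\to0$ and $\pi_\rho(A_L)\Omega_\rho\to P_\rho\pi_\rho(A)\Omega_\rho$, the right-hand side vanishes. So the formula is consistent and $A_\infty^\rho$ is the unique bounded operator it determines.

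Third, I would prove the membership $A_\infty^\rho\in\pi_\rho(\mathcal U)'\cap\pi_\rho(\mathcal U)''$. Membership in the commutant $\pi_\rho(\mathcal U)'$: for $B,C\in\mathcal U$ one computes, using the defining formula, that $A_\infty^\rho\pi_\rho(C)\pi_\rho(B)\Omega_\rho=\pi_\rho(C)\pi_\rho(B)P_\rho\pi_\rho(A)\Omega_\rho=\pi_\rho(C)A_\infty^\rho\pi_\rho(B)\Omega_\rho$, and since $\pi_\rho(\mathcal U)\Omega_\rho$ is dense, $A_\infty^\rho\pi_\rho(C)=\pi_\rho(C)A_\infty^\rho$. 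Membership in the bicommutant $\pi_\rho(\mathcal U)''$: this is immediate because $A_\infty^\rho$ is a strong-operator limit of the net $\pi_\rho(A_L)\in\pi_\rho(\mathcal U)$, and $\pi_\rho(\mathcal U)''$ is strongly closed (von Neumann's double commutant theorem / the fact that the weak and strong closures of a $\ast$-algebra acting on a Hilbert space coincide with its bicommutant).

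The main obstacle is the passage from weak/vector convergence to honest strong operator convergence, which is exactly the point where the hypothesis $A\in\mathcal U^+$ is indispensable: without evenness, $A_L$ need not asymptotically commute with a general $B\in\mathcal U$ (the full CAR algebra is not asymptotically abelian), and the argument collapses. Handling this cleanly — invoking (\ref{space 00}) with the roles of $A$ (even) and $B$ (arbitrary) correctly assigned, and then the Cesàro averaging of the norm-null commutators $[\alpha_x(A),B]$ over the van Hove net $\Lambda_L\cap\mathbb Z_{\vec\ell}^d$ — is the crux; everything else (the von Neumann ergodic theorem input $P^{(L)}\to P_\rho$, the uniform norm bound, the commutant/bicommutant bookkeeping) is routine.
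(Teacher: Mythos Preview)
Your proposal is correct and follows essentially the same route as the paper's proof: reduce strong convergence to the dense set $\pi_\rho(\mathcal{U})\Omega_\rho$ via the uniform bound and cyclicity, decompose $\pi_\rho(A_L)\pi_\rho(B)\Omega_\rho$ into $\pi_\rho(B)P^{(L)}\pi_\rho(A)\Omega_\rho$ plus a Ces\`aro-averaged commutator that vanishes by (\ref{space 00}), and then read off commutant and bicommutant membership exactly as you describe. The paper writes the key identity (\ref{space average2}) in one line rather than your two-step presentation, but the argument is the same.
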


\begin{proof}
Fix $A\in \mathcal{U}^{+}$, $\vec{\ell}\in \mathbb{N}^{d}$ and $\rho \in E_{%
\vec{\ell}}$ with $\left( \mathcal{H}_{\rho },\pi _{\rho },\Omega _{\rho
}\right) $ being its cyclic representation. Observe first that the uniformly
bounded sequence $(\pi _{\rho }(A_{L}))_{L\in \mathbb{N}}$ strongly
converges on $\mathcal{H}_{\rho }$ iff it strongly\ converges on the dense
subspace $\pi _{\rho }(\mathcal{U})\Omega _{\rho }\subseteq \mathcal{H}%
_{\rho }$, by cyclicity of $\Omega _{\rho }$. We meanwhile infer from (\ref%
{Limit of Space-Averages1})-(\ref{space cyclic}) that, for all $B\in 
\mathcal{U}$ and $L\in \mathbb{N}$,%
\begin{equation}
\pi _{\rho }\left( A_{L}\right) \pi _{\rho }\left( B\right) \Omega _{\rho
}=\pi _{\rho }\left( B\right) P^{(L)}\pi _{\rho }(A)\Omega _{\rho }+\pi
_{\rho }\left( \frac{1}{|\Lambda _{L}\cap \mathbb{Z}_{\vec{\ell}}^{d}|}%
\sum\limits_{x\in \Lambda _{L}\cap \mathbb{Z}_{\vec{\ell}}^{d}}\left[ \alpha
_{x}\left( A\right) ,B\right] \right) \Omega _{\rho }\ ,
\label{space average2}
\end{equation}%
where we recall that $[B,C]\doteq BC-CB$. By (\ref{space 00}), for all $A\in 
\mathcal{U}^{+}$ and $B\in \mathcal{U}$, 
\begin{equation*}
\lim_{L\rightarrow \infty }\frac{1}{|\Lambda _{L}\cap \mathbb{Z}_{\vec{\ell}%
}^{d}|}\sum\limits_{x\in \Lambda _{L}\cap \mathbb{Z}_{\vec{\ell}%
}^{d}}\left\Vert \left[ \alpha _{x}\left( A\right) ,B\right] \right\Vert _{%
\mathcal{U}}=0
\end{equation*}%
and $P^{(L)}$ strongly converges, as $L\rightarrow \infty $, to $P_{\rho }$.
Consequently, we deduce from (\ref{space average2}) the existence of a
bounded operator $A_{\infty }^{\rho }\in \mathcal{B}(\mathcal{H}_{\rho })$
such that%
\begin{equation}
A_{\infty }^{\rho }\pi _{\rho }(B)\Omega _{\rho }=\pi _{\rho }\left(
B\right) P_{\rho }\pi _{\rho }(A)\Omega _{\rho }=\lim_{L\rightarrow \infty
}\pi _{\rho }\left( A_{L}\right) \pi _{\rho }\left( B\right) \Omega _{\rho
}\ ,\qquad B\in \mathcal{U}\text{ }.  \label{ass strong conv Ainfty}
\end{equation}%
This equation uniquely defines $A_{\infty }^{\rho }\in \mathcal{B}(\mathcal{H%
}_{\rho })$, by cyclicity of $\Omega _{\rho }$. The limit operator $%
A_{\infty }^{\rho }$ belongs to the commutant $\pi _{\rho }(\mathcal{U}%
)^{\prime }$ because $\Omega _{\rho }$ is a cyclic vector and Equation (\ref%
{ass strong conv Ainfty}) implies that%
\begin{equation*}
\left[ \pi _{\rho }\left( C\right) ,A_{\infty }^{\rho }\right] \pi _{\rho
}(B)\Omega _{\rho }=0\ ,\qquad B,C\in \mathcal{U}\text{ }.
\end{equation*}%
Further, $A_{\infty }^{\rho }$ has to be an element of the bicommutant $\pi
_{\rho }(\mathcal{U})^{\prime \prime }$, because it is the strong limit of
elements of $\pi _{\rho }(\mathcal{U})$.
\end{proof}

The limit operator $A_{\infty }^{\rho }$ of Lemma \ref{Lemma sympa copy(1)}
takes a very simple form when $\rho $ is an extreme $\vec{\ell}$-periodic
state, i.e., $\rho \in \mathcal{E}(E_{\vec{\ell}})$.

\begin{corollary}[Strong limit of space-averages for ergodic states]
\label{Lemma sympa}\mbox{ }\newline
Let $\vec{\ell}\in \mathbb{N}^{d}$ and $\hat{\rho}\in \mathcal{E}(E_{\vec{%
\ell}})$ with cyclic representation $\left( \mathcal{H}_{\hat{\rho}},\pi _{%
\hat{\rho}},\Omega _{\hat{\rho}}\right) $. For every element $A\in \mathcal{U%
}$, the sequence $(\pi _{\rho }(A_{L}))_{L\in \mathbb{N}}$ defined from (\ref%
{Limit of Space-Averages1}) strongly converges in $\mathcal{B}(\mathcal{H}_{%
\hat{\rho}})$ to 
\begin{equation*}
A_{\infty }^{\hat{\rho}}=\hat{\rho}\left( A\right) \mathbf{1}_{\mathcal{H}_{%
\hat{\rho}}}\text{ }.
\end{equation*}
\end{corollary}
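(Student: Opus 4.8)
The plan is to derive Corollary~\ref{Lemma sympa} as a straightforward specialization of Lemma~\ref{Lemma sympa copy(1)} to the ergodic case, using the characterization of ergodicity in terms of the invariant-vector projection $P_{\hat{\rho}}$. First I would note that for $\hat{\rho}\in\mathcal{E}(E_{\vec{\ell}})$, by the reformulation of \cite[Theorem~1.16]{BruPedra2} recorded in Equation~(\ref{space cyclic ergodic}), the projection $P_{\hat{\rho}}$ is exactly the orthogonal projection onto the one-dimensional subspace $\mathbb{C}\Omega_{\hat{\rho}}$; explicitly, $P_{\hat{\rho}}\psi=\langle\Omega_{\hat{\rho}},\psi\rangle_{\mathcal{H}_{\hat{\rho}}}\,\Omega_{\hat{\rho}}$ for all $\psi\in\mathcal{H}_{\hat{\rho}}$ (using $\|\Omega_{\hat{\rho}}\|=1$). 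Applying Lemma~\ref{Lemma sympa copy(1)} to an element $A\in\mathcal{U}^{+}$, the limit operator $A_{\infty}^{\hat{\rho}}$ is characterized by
\begin{equation*}
A_{\infty}^{\hat{\rho}}\pi_{\hat{\rho}}(B)\Omega_{\hat{\rho}}=\pi_{\hat{\rho}}(B)P_{\hat{\rho}}\pi_{\hat{\rho}}(A)\Omega_{\hat{\rho}}=\pi_{\hat{\rho}}(B)\langle\Omega_{\hat{\rho}},\pi_{\hat{\rho}}(A)\Omega_{\hat{\rho}}\rangle_{\mathcal{H}_{\hat{\rho}}}\Omega_{\hat{\rho}}=\hat{\rho}(A)\pi_{\hat{\rho}}(B)\Omega_{\hat{\rho}},
\end{equation*}
where the last equality uses the GNS identity~(\ref{equality}). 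Since $\Omega_{\hat{\rho}}$ is cyclic, this forces $A_{\infty}^{\hat{\rho}}=\hat{\rho}(A)\mathbf{1}_{\mathcal{H}_{\hat{\rho}}}$, which is the claimed formula for $A\in\mathcal{U}^{+}$.

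It remains to remove the evenness restriction, i.e., to extend the conclusion from $A\in\mathcal{U}^{+}$ to arbitrary $A\in\mathcal{U}$. The natural route is to observe that the space-average operation $A\mapsto A_{L}$ is linear, so it suffices to treat a spanning set of $\mathcal{U}$; and every $A\in\mathcal{U}$ can be written as $A=A^{+}+A^{-}$ with $A^{\pm}\doteq(A\pm\sigma(A))/2$, where $A^{+}\in\mathcal{U}^{+}$ is even and $A^{-}$ is odd. For the even part, the computation above applies directly. For the odd part $A^{-}$, the key point is that the strong clustering property of ergodic states — Equation~(\ref{strongly mixing}), valid for all $A,B\in\mathcal{U}$, not just even ones, by \cite[Theorem~1.16]{BruPedra2} — already gives $\lim_{L}\hat{\rho}((A_{L}^{-})^{\ast}A_{L}^{-})=|\hat{\rho}(A^{-})|^{2}=0$ since $\hat{\rho}(A^{-})=\hat{\rho}\circ\sigma(A^{-})=-\hat{\rho}(A^{-})$ as $\hat{\rho}$ is even (periodic states are even). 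Translating this into the GNS space, $\|\pi_{\hat{\rho}}(A_{L}^{-})\Omega_{\hat{\rho}}\|^{2}=\hat{\rho}((A_{L}^{-})^{\ast}A_{L}^{-})\to 0$, and then for any $B\in\mathcal{U}$, writing $\pi_{\hat{\rho}}(A_{L}^{-})\pi_{\hat{\rho}}(B)\Omega_{\hat{\rho}}=\pi_{\hat{\rho}}(B)\pi_{\hat{\rho}}(A_{L}^{-})\Omega_{\hat{\rho}}+\pi_{\hat{\rho}}([A_{L}^{-},B])\Omega_{\hat{\rho}}$, the first term vanishes strongly and the second vanishes by asymptotic abelianness of $\mathcal{U}^{+}$ — wait: $A_{L}^{-}$ is odd, so one instead uses that odd elements also satisfy $\lim_{|x|\to\infty}\|[\alpha_x(A^-),B]\|_{\mathcal{U}}$-averaged$\,=0$ when $B$ is even, but for general $B$ one needs the full even/odd decomposition of $B$ as well and the fact that two odd elements at large distance anticommute up to a decaying term. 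The cleanest fix is to invoke directly that $(\pi_{\hat{\rho}}(A_{L}))_{L}$ strongly converges to $\hat{\rho}(A)\mathbf{1}$ on the dense cyclic domain: decomposing both $A$ and the test vector's generator into even and odd parts reduces everything to (a) the even-$A$ case handled above and (b) the vanishing $\|\pi_{\hat{\rho}}(A_{L}^{-})\Omega_{\hat{\rho}}\|\to0$ together with the estimate $\|\pi_{\hat{\rho}}(A_{L})\pi_{\hat{\rho}}(B)\Omega_{\hat{\rho}}-\pi_{\hat{\rho}}(B)\pi_{\hat{\rho}}(A_{L})\Omega_{\hat{\rho}}\|\le\|\tfrac{1}{|\Lambda_L\cap\mathbb{Z}_{\vec{\ell}}^d|}\sum_{x}[\alpha_x(A),B]\|_{\mathcal{U}}$, which tends to $0$ for \emph{all} $A,B\in\mathcal{U}$ by the averaged norm-asymptotic-abelianness on the sublattice (this averaged statement holds even though $\mathcal{U}$ itself is not asymptotically abelian, because the CAR algebra is \emph{averaged} asymptotically abelian: the anticommuting contributions cancel in the Ces\`aro mean).

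The main obstacle I anticipate is precisely this last point — justifying that the Ces\`aro-averaged commutator $\frac{1}{|\Lambda_L\cap\mathbb{Z}_{\vec{\ell}}^d|}\sum_{x\in\Lambda_L\cap\mathbb{Z}_{\vec{\ell}}^d}[\alpha_x(A),B]$ tends to zero in $\mathcal{U}$-norm for \emph{arbitrary} (not necessarily even) local $A,B$. For even $A$ this is Equation~(\ref{space 00}); for general $A$ one must either reduce to the even case via the $\sigma$-decomposition combined with the clustering estimate (\ref{strongly mixing}) applied in the GNS representation — which is how I would actually organize the write-up, so that no new averaged-abelianness statement for odd elements is needed — or cite the corresponding CAR-algebra fact. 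I expect the paper takes the former route: decompose $A=A^{+}+A^{-}$, note $\hat{\rho}(A^{-})=0$ and $\pi_{\hat{\rho}}(A_L^-)\Omega_{\hat{\rho}}\to0$ strongly (from (\ref{strongly mixing}) with $A=B=A^-$), handle $A^+$ by Lemma~\ref{Lemma sympa copy(1)} and the first paragraph above, and conclude $A_\infty^{\hat{\rho}}=\hat{\rho}(A^+)\mathbf{1}=\hat{\rho}(A)\mathbf{1}$ on the cyclic domain, hence everywhere by density and uniform boundedness of $(\pi_{\hat{\rho}}(A_L))_L$.
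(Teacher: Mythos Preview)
Your first paragraph is precisely the paper's proof: it cites that $P_{\hat\rho}$ is the rank-one projection onto $\mathbb{C}\Omega_{\hat\rho}$ (via \cite[Lemma 4.8]{BruPedra2}, i.e.\ Equation~(\ref{space cyclic ergodic})), then invokes Lemma~\ref{Lemma sympa copy(1)} together with the GNS identity~(\ref{equality}) to conclude. The paper says nothing more than this.

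Everything after your first paragraph goes beyond the paper. You are right that Lemma~\ref{Lemma sympa copy(1)} is stated only for $A\in\mathcal{U}^{+}$ while the corollary is stated for all $A\in\mathcal{U}$; the paper does not address this discrepancy, and in practice the corollary is only applied downstream to the even observable $\mathfrak{e}_{\Phi,\vec\ell}$ (Corollary~\ref{Lemma sympa copy(2)}), so the gap is immaterial for the rest of the argument. Your even/odd decomposition does close the gap, but one of your intermediate claims is wrong: the Ces\`aro-averaged commutator $\tfrac{1}{|\Lambda_L\cap\mathbb{Z}_{\vec\ell}^{d}|}\sum_x[\alpha_x(A),B]$ does \emph{not} tend to zero in norm for arbitrary $A,B$ --- for local odd $A,B$ one has $[\alpha_x(A),B]=2\alpha_x(A)B$ once the supports separate, so the average keeps norm of order $2\|A\|\,\|B\|$. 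The route you settle on at the end (work on $\pi_{\hat\rho}(B)\Omega_{\hat\rho}$, use $\hat\rho(A^{-})=0$ and the strong clustering~(\ref{strongly mixing})) is the correct one; equivalently, for the odd--odd pairing rewrite~(\ref{space average2}) with the anticommutator to obtain $\pi_{\hat\rho}(A_L^{-})\pi_{\hat\rho}(B^{-})\Omega_{\hat\rho}=-\pi_{\hat\rho}(B^{-})P^{(L)}\pi_{\hat\rho}(A^{-})\Omega_{\hat\rho}+o(1)\to 0$.
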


\begin{proof}
Since $\hat{\rho}\in \mathcal{E}(E_{\vec{\ell}})$ is an extreme $\vec{\ell}$%
-periodic state, by \cite[Lemma 4.8]{BruPedra2}, $P_{\hat{\rho}}$ is the
orthogonal projection on the one-dimensional Hilbert subspace spanned by $%
\Omega _{\hat{\rho}}$. See Equation (\ref{space cyclic ergodic}). In fact,
this is directly related to the ergodicity of extreme states, see (\ref%
{Limit of Space-Averages})-(\ref{Ergodicity2}). By Lemma \ref{Lemma sympa
copy(1)} and Equation (\ref{equality}), the assertion then follows.
\end{proof}

\begin{corollary}[Strong limit of local energies for ergodic states]
\label{Lemma sympa copy(2)}\mbox{ }\newline
Let $\vec{\ell}\in \mathbb{N}^{d}$ and $\hat{\rho}\in \mathcal{E}(E_{\vec{%
\ell}})$ with cyclic representation $\left( \mathcal{H}_{\hat{\rho}},\pi _{%
\hat{\rho}},\Omega _{\hat{\rho}}\right) $. For every translation-invariant
interaction $\Phi \in \mathcal{W}_{1}$, the sequence 
\begin{equation*}
\left( \pi _{\hat{\rho}}\left( \left\vert \Lambda _{L}\right\vert
^{-1}U_{L}^{\Phi }\right) \right) _{L\in \mathbb{N}}
\end{equation*}%
defined by (\ref{eq:def lambda n}) and (\ref{equation fininte vol dynam0})
strongly converges in $\mathcal{B}(\mathcal{H}_{\hat{\rho}})$ to $\hat{\rho}(%
\mathfrak{e}_{\Phi ,\vec{\ell}})\mathbf{1}_{\mathcal{H}_{\hat{\rho}}}$,
where $\mathfrak{e}_{\Phi ,\vec{\ell}}$ is the even observable defined by (%
\ref{eq:enpersite}).
\end{corollary}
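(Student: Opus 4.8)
The plan is to reduce the statement to Corollary~\ref{Lemma sympa}, by comparing $|\Lambda _{L}|^{-1}U_{L}^{\Phi }$, in the $\mathcal{U}$-norm, with the space-average $(\mathfrak{e}_{\Phi ,\vec{\ell}})_{L}$ of the even observable $\mathfrak{e}_{\Phi ,\vec{\ell}}$ (a well-defined element of $\mathcal{U}$ by the bound (\ref{e phi})), and then invoking the already established strong convergence of space-averages of a fixed observable.

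To set this up, I would pass to the energy-per-site observable
\begin{equation*}
e_{\Phi }\doteq \sum_{\mathcal{Z}\ni 0}\frac{\Phi _{\mathcal{Z}}}{\left\vert \mathcal{Z}\right\vert }\in \mathcal{U}^{+}\ ,
\end{equation*}
which converges absolutely since $\sum_{\mathcal{Z}\ni 0}\Vert \Phi _{\mathcal{Z}}\Vert _{\mathcal{U}}\leq \Vert \Phi \Vert _{\mathcal{W}}$ (take $x=y$ in (\ref{iteration0}), using $\mathbf{F}(0,0)=1$). Translation invariance (\ref{ti interaction}) gives $\alpha _{x}(e_{\Phi })=\sum_{\mathcal{Z}\ni x}\left\vert \mathcal{Z}\right\vert ^{-1}\Phi _{\mathcal{Z}}$, hence $\mathfrak{e}_{\Phi ,\vec{\ell}}=(\ell _{1}\cdots \ell _{d})^{-1}\sum_{y\in D}\alpha _{y}(e_{\Phi })$, with $D\doteq \prod_{i=1}^{d}\{0,\ldots ,\ell _{i}-1\}$ a fundamental domain of $\mathbb{Z}_{\vec{\ell}}^{d}$. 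Writing $\Phi _{\Lambda }=\sum_{x\in \Lambda }\left\vert \Lambda \right\vert ^{-1}\Phi _{\Lambda }$ and exchanging summations,
\begin{equation*}
U_{L}^{\Phi }=\sum_{x\in \Lambda _{L}}\alpha _{x}(e_{\Phi })-\sum_{x\in \Lambda _{L}}\ \sum_{\Lambda \ni x,\ \Lambda \nsubseteq \Lambda _{L}}\frac{\Phi _{\Lambda }}{\left\vert \Lambda \right\vert }\ .
\end{equation*}
On the other hand, regrouping the sites of $\Lambda _{L}\cap \mathbb{Z}_{\vec{\ell}}^{d}$ according to the shifts in $D$ shows that $B_{L}\doteq (\Lambda _{L}\cap \mathbb{Z}_{\vec{\ell}}^{d})+D$ is again a box, with $\left\vert B_{L}\right\vert =\left\vert \Lambda _{L}\cap \mathbb{Z}_{\vec{\ell}}^{d}\right\vert \ell _{1}\cdots \ell _{d}$, with $\left\vert B_{L}\right\vert /\left\vert \Lambda _{L}\right\vert \rightarrow 1$, and with $B_{L}$ and $\Lambda _{L}$ differing only within a slab of $L$-independent thickness around each face, so that $(\mathfrak{e}_{\Phi ,\vec{\ell}})_{L}=\left\vert B_{L}\right\vert ^{-1}\sum_{w\in B_{L}}\alpha _{w}(e_{\Phi })$.

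Subtracting the two expressions, $\left\vert \Lambda _{L}\right\vert ^{-1}U_{L}^{\Phi }-(\mathfrak{e}_{\Phi ,\vec{\ell}})_{L}$ is the sum of a purely geometric boundary term, whose $\mathcal{U}$-norm is $O(L^{-1})\Vert e_{\Phi }\Vert _{\mathcal{U}}$ because $\{\Lambda _{L}\}$ is a van Hove net and $\Vert \alpha _{w}(e_{\Phi })\Vert _{\mathcal{U}}=\Vert e_{\Phi }\Vert _{\mathcal{U}}$, and of a ``straddling'' term whose $\mathcal{U}$-norm is at most
\begin{equation*}
\frac{1}{\left\vert \Lambda _{L}\right\vert }\sum_{x\in \Lambda _{L}}\ \sum_{\Lambda \ni x,\ \Lambda \nsubseteq \Lambda _{L}}\frac{\Vert \Phi _{\Lambda }\Vert _{\mathcal{U}}}{\left\vert \Lambda \right\vert }\ \leq \ \frac{\Vert \Phi \Vert _{\mathcal{W}}}{\left\vert \Lambda _{L}\right\vert }\sum_{x\in \Lambda _{L}}\ \sum_{y\in \mathfrak{L}\setminus \Lambda _{L}}\mathbf{F}\left( x,y\right) \ ,
\end{equation*}
where I used $\left\vert \Lambda \cap \Lambda _{L}\right\vert \leq \left\vert \Lambda \right\vert $, the existence for each $\Lambda \ni x$ with $\Lambda \nsubseteq \Lambda _{L}$ of a witness $y\in \Lambda \setminus \Lambda _{L}$, and the defining inequality (\ref{iteration0}). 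The main technical point, which I expect to be the only delicate step, is then the vanishing of the surface-to-volume ratio $\left\vert \Lambda _{L}\right\vert ^{-1}\sum_{x\in \Lambda _{L}}\sum_{y\notin \Lambda _{L}}\mathbf{F}\left( x,y\right) \rightarrow 0$ as $L\rightarrow \infty $, which follows from (\ref{(3.1) NS}) together with the van Hove property of $\{\Lambda _{L}\}$; this is precisely the kind of estimate underlying \cite[Proposition 3.2]{BruPedra-MFII}.

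Granting $\Vert \left\vert \Lambda _{L}\right\vert ^{-1}U_{L}^{\Phi }-(\mathfrak{e}_{\Phi ,\vec{\ell}})_{L}\Vert _{\mathcal{U}}\rightarrow 0$, the proof concludes at once: since $\pi _{\hat{\rho}}$ is norm-contractive, $\pi _{\hat{\rho}}(\left\vert \Lambda _{L}\right\vert ^{-1}U_{L}^{\Phi })$ and $\pi _{\hat{\rho}}((\mathfrak{e}_{\Phi ,\vec{\ell}})_{L})$ differ by a sequence converging to $0$ in operator norm, hence strongly; and Corollary~\ref{Lemma sympa}, applied to $A=\mathfrak{e}_{\Phi ,\vec{\ell}}\in \mathcal{U}$, gives $\pi _{\hat{\rho}}((\mathfrak{e}_{\Phi ,\vec{\ell}})_{L})\rightarrow \hat{\rho}(\mathfrak{e}_{\Phi ,\vec{\ell}})\mathbf{1}_{\mathcal{H}_{\hat{\rho}}}$ strongly. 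Therefore $\pi _{\hat{\rho}}(\left\vert \Lambda _{L}\right\vert ^{-1}U_{L}^{\Phi })\rightarrow \hat{\rho}(\mathfrak{e}_{\Phi ,\vec{\ell}})\mathbf{1}_{\mathcal{H}_{\hat{\rho}}}$ strongly, which is the assertion. The uniform bound $\Vert \pi _{\hat{\rho}}(\left\vert \Lambda _{L}\right\vert ^{-1}U_{L}^{\Phi })\Vert \leq \Vert \mathbf{F}\Vert _{1,\mathfrak{L}}\Vert \Phi \Vert _{\mathcal{W}}$ from (\ref{norm Uphi}) is automatic and plays no role here, though it would be convenient if one preferred to verify strong convergence only on the dense subspace $\pi _{\hat{\rho}}(\mathcal{U})\Omega _{\hat{\rho}}$.
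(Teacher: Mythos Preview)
Your strategy is exactly the paper's: show $\Vert |\Lambda_L|^{-1}U_L^\Phi - (\mathfrak{e}_{\Phi,\vec{\ell}})_L\Vert_{\mathcal{U}}\to 0$ and then invoke Corollary~\ref{Lemma sympa} with $A=\mathfrak{e}_{\Phi,\vec{\ell}}$. The organization via the intermediate observable $e_\Phi$ is a cosmetic difference only.

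There is one weak point. Your bound for the straddling term passes through $\sum_{y\notin\Lambda_L}\mathbf{F}(x,y)$, and you assert that its $\Lambda_L$-average vanishes by (\ref{(3.1) NS}) together with the van Hove property. But (\ref{(3.1) NS}) only gives the uniform $\ell^1$ bound $\sum_y\mathbf{F}(x,y)\leq\Vert\mathbf{F}\Vert_{1,\mathfrak{L}}$; it does \emph{not} give the uniform tail decay $\sup_x\sum_{|y-x|>R}\mathbf{F}(x,y)\to 0$ that your average-over-$x$ argument needs, because the paper does not assume $\mathbf{F}(x,y)$ depends only on $x-y$. The paper sidesteps this entirely: it writes the difference as $\sum_{x\in D}\sum_{\Lambda\ni 0}|\Lambda|^{-1}\Vert\Phi_\Lambda\Vert$ times a purely geometric factor $|\Lambda_L|^{-1}\#\{y\in\Lambda_L\cap\mathbb{Z}_{\vec{\ell}}^d:\ x+y\notin\Lambda_L\text{ or }\Lambda+x+y\nsubseteq\Lambda_L\}$; for each fixed $(x,\Lambda)$ this factor tends to $0$ (van Hove), the envelope $\sum_{\Lambda\ni 0}|\Lambda|^{-1}\Vert\Phi_\Lambda\Vert\leq\Vert\Phi\Vert_{\mathcal{W}}$ is summable, and Lebesgue's dominated convergence on the counting measure finishes. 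Your argument is repaired by the same device: using translation invariance of $\Phi$, bound the straddling term directly by $\sum_{\Lambda\ni 0}|\Lambda|^{-1}\Vert\Phi_\Lambda\Vert\cdot|\Lambda_L|^{-1}\#\{x\in\Lambda_L:\Lambda+x\nsubseteq\Lambda_L\}$ and apply dominated convergence, never invoking $\mathbf{F}$ at all.
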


\begin{proof}
Fix $\vec{\ell}\in \mathbb{N}^{d}$ and any translation-invariant interaction 
$\Phi \in \mathcal{W}_{1}$. By Equations (\ref{ti interaction})-(\ref%
{eq:enpersite}) and (\ref{equation fininte vol dynam0}),%
\begin{multline}
\left\Vert \frac{U_{L}^{\Phi }}{\left\vert \Lambda _{L}\right\vert }-\frac{1%
}{|\Lambda _{L}\cap \mathbb{Z}_{\vec{\ell}}^{d}|}\sum\limits_{y\in \Lambda
_{L}\cap \mathbb{Z}_{\vec{\ell}}^{d}}\alpha _{y}(\mathfrak{e}_{\Phi ,\vec{%
\ell}})\right\Vert _{\mathcal{U}}\leq \sum\limits_{x=(x_{1},\cdots
,x_{d}),\;x_{i}\in \{0,\cdots ,\ell _{i}-1\}}\sum\limits_{\Lambda \in 
\mathcal{P}_{f},\Lambda \ni 0}\frac{\left\Vert \Phi _{\Lambda }\right\Vert _{%
\mathcal{U}}}{|\Lambda |}  \label{inequality a la con space average} \\
\times \frac{1}{|\Lambda _{L}|}\sum\limits_{y\in \Lambda _{L}\cap \mathbb{Z}%
_{\vec{\ell}}^{d}}\left( 1-\mathbf{1}\left[ x+y\in \Lambda _{L}\right] 
\mathbf{1}\left[ \Lambda \subseteq (\Lambda _{L}-x-y)\right] \right) \ , 
\notag
\end{multline}%
where, by definition, $\mathbf{1}\left[ p\right] =1$ whenever the
proposition $p$ is true, otherwise $\mathbf{1}\left[ p\right] =0$. Since $%
\Phi \in \mathcal{W}_{1}$, by (\ref{iteration0}) and Lebesgue's dominated
convergence theorem, we deduce from last inequality that 
\begin{equation*}
\lim_{L\rightarrow \infty }\left\Vert \frac{U_{L}^{\Phi }}{\left\vert
\Lambda _{L}\right\vert }-\frac{1}{|\Lambda _{L}\cap \mathbb{Z}_{\vec{\ell}%
}^{d}|}\sum\limits_{y\in \Lambda _{L}\cap \mathbb{Z}_{\vec{\ell}}^{d}}\alpha
_{y}(\mathfrak{e}_{\Phi ,\vec{\ell}})\right\Vert _{\mathcal{U}}=0\ ,
\end{equation*}%
using that%
\begin{equation*}
\lim_{L\rightarrow \infty }\frac{1}{|\Lambda _{L}|}\sum\limits_{y\in \Lambda
_{L}\cap \mathbb{Z}_{\vec{\ell}}^{d}}\left( 1-\mathbf{1}\left[ x+y\in
\Lambda _{L}\right] \mathbf{1}\left[ \Lambda \subseteq (\Lambda _{L}-x-y)%
\right] \right) =0
\end{equation*}%
for all $x\in \mathfrak{L}$ and $\Lambda \in \mathcal{P}_{f}$. Therefore,
the assertion is a direct consequence of Corollary \ref{Lemma sympa} applied
to $A=\mathfrak{e}_{\Phi ,\vec{\ell}}\in \mathcal{U}$.
\end{proof}

\subsection{Commutator Estimates from Lieb-Robinson Bounds}

In Section \ref{Long-Range Dynamics on Ergodic States}, we prove Theorem \ref%
{theorem structure of omega} for ergodic states. To this end, we employ the
following uniform bound, which is a direct consequence of Lieb-Robinson
bounds \cite[Theorem 4.3]{brupedraLR}:

\begin{lemma}[Commutator estimates from Lieb-Robinson bounds]
\label{Lemma sympa copy(3)}\mbox{ }\newline
For any time-dependent model $\mathfrak{m}\in C_{b}\left( \mathbb{R};%
\mathcal{M}\right) $, $\Phi \in \mathcal{W}$, times $s,t\in \mathbb{R}$,
length $L\in \mathbb{N}$, subset $\Lambda \subseteq \Lambda _{L}$ and $A\in 
\mathcal{U}_{\Lambda }$, 
\begin{equation*}
\left\Vert \lbrack \tau _{t,s}^{(L,\mathfrak{m})}\left( A\right)
,U_{L}^{\Phi }]\right\Vert _{\mathcal{U}}\leq 2\left\vert \Lambda
\right\vert \left\Vert A\right\Vert _{\mathcal{U}}\left\Vert \Phi
\right\Vert _{\mathcal{W}}\mathrm{e}^{16\left( \mathbf{D}+2\left\Vert 
\mathbf{F}\right\Vert _{1,\mathfrak{L}}+1\right) \int_{s}^{t}\left\Vert 
\mathfrak{m}\left( \varsigma \right) \right\Vert _{\mathcal{M}}\mathrm{d}%
\varsigma }
\end{equation*}%
with $(\tau _{t,s}^{(L,\mathfrak{m})})_{_{s,t\in \mathbb{R}}}$ being the
unique (fundamental) solution to (\ref{cauchy1})-(\ref{cauchy2}), while $%
U_{L}^{\Phi }$ is the energy element defined by (\ref{equation fininte vol
dynam0}).
\end{lemma}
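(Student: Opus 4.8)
The plan is to deduce the bound directly from the Lieb--Robinson estimate of \cite[Theorem 4.3]{brupedraLR} for the non-autonomous finite-volume dynamics $(\tau_{t,s}^{(L,\mathfrak{m})})_{s,t\in\mathbb{R}}$ defined as the fundamental solution of (\ref{cauchy1})--(\ref{cauchy2}), by applying it termwise to the local decomposition $U_{L}^{\Phi}=\sum_{\Lambda'\subseteq\Lambda_{L}}\Phi_{\Lambda'}$ of the energy element (\ref{equation fininte vol dynam0}). Note that no self-adjointness of $\Phi$ is needed here: only that $\Phi\in\mathcal{W}$ and that each $\Phi_{\Lambda'}$ is supported in $\Lambda'$.

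First I would write $[\tau_{t,s}^{(L,\mathfrak{m})}(A),U_{L}^{\Phi}]=\sum_{\Lambda'\subseteq\Lambda_{L}}[\tau_{t,s}^{(L,\mathfrak{m})}(A),\Phi_{\Lambda'}]$ and estimate the norm by the sum of the norms. To each summand I would apply \cite[Theorem 4.3]{brupedraLR}, which for $A\in\mathcal{U}_{\Lambda}$ and $\Phi_{\Lambda'}\in\mathcal{U}_{\Lambda'}$ produces a bound of the shape $2\,\|A\|_{\mathcal{U}}\,\|\Phi_{\Lambda'}\|_{\mathcal{U}}\,\mathrm{e}^{16(\mathbf{D}+2\|\mathbf{F}\|_{1,\mathfrak{L}}+1)\int_{s}^{t}\|\mathfrak{m}(\varsigma)\|_{\mathcal{M}}\mathrm{d}\varsigma}$ times an $\mathbf{F}$-weighted geometric factor comparing the supports $\Lambda$ and $\Lambda'$; the constant $2$ is the elementary bound $\|[C,D]\|\leq 2\|C\|\|D\|$ used for the terms whose support meets $\Lambda$, while the specific exponential constant $16(\mathbf{D}+2\|\mathbf{F}\|_{1,\mathfrak{L}}+1)$ is precisely the one emerging from the Lieb--Robinson series once (\ref{(3.1) NS})--(\ref{(3.2) NS}) are used to control the iterated convolutions of $\mathbf{F}$.

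Second, I would resum over $\Lambda'\subseteq\Lambda_{L}$. Interchanging the sum over $\Lambda'$ with the sum over sites $x\in\Lambda$ appearing in the weighted factor, bounding the $\mathbf{F}$-weight trivially on the ``diagonal'' contribution and absorbing the off-diagonal part into $\|\mathbf{F}\|_{1,\mathfrak{L}}<\infty$, one is left with $|\Lambda|$ times $\sup_{x\in\mathfrak{L}}\sum_{\Lambda'\in\mathcal{P}_{f},\ \Lambda'\ni x}\|\Phi_{\Lambda'}\|_{\mathcal{U}}$, which is $\leq\|\Phi\|_{\mathcal{W}}$ because $\mathbf{F}(x,x)=1$, by the very definition (\ref{iteration0}) of $\|\cdot\|_{\mathcal{W}}$. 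Collecting factors yields exactly $2|\Lambda|\,\|A\|_{\mathcal{U}}\,\|\Phi\|_{\mathcal{W}}\,\mathrm{e}^{16(\mathbf{D}+2\|\mathbf{F}\|_{1,\mathfrak{L}}+1)\int_{s}^{t}\|\mathfrak{m}(\varsigma)\|_{\mathcal{M}}\mathrm{d}\varsigma}$, and the bound is uniform in $L$ since the Lieb--Robinson constants of \cite[Theorem 4.3]{brupedraLR} depend on $\mathfrak{m}$ only through $\int_{s}^{t}\|\mathfrak{m}(\varsigma)\|_{\mathcal{M}}\mathrm{d}\varsigma$ and on the lattice only through $\mathbf{D}$ and $\|\mathbf{F}\|_{1,\mathfrak{L}}$.

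The step I expect to be the real obstacle is this last resummation: it must be checked that the chain of $\mathbf{F}$-weighted sums produced by the Lieb--Robinson series collapses to precisely $|\Lambda|\,\|\Phi\|_{\mathcal{W}}$, with \emph{all} surplus factors of $\mathbf{D}$, of $\|\mathbf{F}\|_{1,\mathfrak{L}}$, and of the time-dependence swept into the single exponential already present in \cite[Theorem 4.3]{brupedraLR}. If, on the other hand, that theorem is already formulated for commutators of evolved observables with interaction energies $U_{L}^{\Phi}$ rather than with a single local operator, then the present lemma is an immediate specialization and no resummation is required; in either case the only analytic input is \cite[Theorem 4.3]{brupedraLR}.
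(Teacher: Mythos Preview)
Your overall strategy --- decompose $U_{L}^{\Phi}$ into its local pieces, apply a Lieb--Robinson bound to each commutator $[\tau_{t,s}^{(L,\mathfrak{m})}(A),\Phi_{\Lambda'}]$, and resum --- is exactly what the paper does in its final step. But you have misidentified the obstacle. The resummation over $\Lambda'$ is routine; the genuine difficulty is one step earlier, and your sketch passes over it.

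The Lieb--Robinson bounds of \cite[Theorem 4.3]{brupedraLR} (or \cite[Proposition 3.8]{BruPedra-MFII}) are stated for dynamics generated by \emph{short-range} interactions $\Psi\in C(\mathbb{R};\mathcal{W}^{\mathbb{R}})$; the propagation speed in the exponential is controlled by $\int_{s}^{t}\|\Psi(\varsigma)\|_{\mathcal{W}}\,\mathrm{d}\varsigma$. The dynamics $\tau_{t,s}^{(L,\mathfrak{m})}$, however, is generated by the Hamiltonian $U_{L}^{\mathfrak{m}}$ of (\ref{equation long range energy}), which contains the mean-field products $|\Lambda_{L}|^{-(n-1)}U_{L}^{\Psi^{(1)}}\cdots U_{L}^{\Psi^{(n)}}$. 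There is no a priori $\Psi\in\mathcal{W}$ with $L$-uniform $\|\Psi\|_{\mathcal{W}}$-norm producing this Hamiltonian, so you cannot simply invoke the cited theorem with $\|\mathfrak{m}(\varsigma)\|_{\mathcal{M}}$ sitting in the exponent; that substitution has to be \emph{earned}.

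The paper earns it as follows. It writes $U_{L}^{\mathfrak{m}}=U_{L}^{\Phi^{(L,\mathfrak{m})}}$ for an explicit $L$-dependent interaction $\Phi^{(L,\mathfrak{m})}$ (equations (\ref{equation long range energy2})--(\ref{phi L})), and then introduces a \emph{modified} decay function $\mathbf{F}_{L}(x,y)=\tfrac{|\Lambda_{L}|}{|\Lambda_{L}|+1}\bigl(\mathbf{F}(x,y)+|\Lambda_{L}|^{-1}\bigr)$ on $\Lambda_{L}^{2}$, with associated norm $\|\cdot\|_{\mathcal{W}_{L}}$. The additive constant $|\Lambda_{L}|^{-1}$ is precisely what absorbs the non-decaying product terms, and a careful combinatorial estimate (equations (\ref{estimate chiante1})--(\ref{estimate chiante3})) gives $\|\Phi^{(L,\mathfrak{m})}\|_{\mathcal{W}_{L}}\leq 8\|\mathfrak{m}\|_{\mathcal{M}}$. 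One then checks that the Lieb--Robinson constants for $\mathbf{F}_{L}$ satisfy $\mathbf{D}_{L}\leq\mathbf{D}+2\|\mathbf{F}\|_{1,\mathfrak{L}}+1$ and $\|\mathbf{F}_{L}\|_{1,\Lambda_{L}}\leq 1+\|\mathbf{F}\|_{1,\mathfrak{L}}$ (equations (\ref{(3.1) NSbis})--(\ref{(3.2) NSbis})). Only now can the Lieb--Robinson bound be applied, on the finite lattice $\Lambda_{L}$ with decay $\mathbf{F}_{L}$, and this is where the factor $16=2\cdot 8$ and the combination $\mathbf{D}+2\|\mathbf{F}\|_{1,\mathfrak{L}}+1$ come from.

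Without this construction your argument has no mechanism linking $\|\mathfrak{m}\|_{\mathcal{M}}$ to the Lieb--Robinson velocity, and the claimed exponential constant is unjustified.
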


\begin{proof}
For any model $\mathfrak{m}\doteq \left( \Phi ,\mathfrak{a}\right) \in 
\mathcal{M}$ and $L\in \mathbb{N}$, there is an interaction $\Phi ^{(L,%
\mathfrak{m})}$ such that 
\begin{equation}
U_{L}^{\mathfrak{m}}\doteq U_{L}^{\Phi ^{(L,\mathfrak{m})}}\doteq
\sum\limits_{\Lambda \subseteq \Lambda _{L}}\Phi _{\Lambda }^{(L,\mathfrak{m}%
)}  \label{equation long range energy2}
\end{equation}%
with $\Lambda _{L}$ being the cubic box defined by (\ref{eq:def lambda n})
for $L\in \mathbb{N}$. Recall that $U_{L}^{\mathfrak{m}}$ is the local
Hamiltonian of the model $\mathfrak{m}\in \mathcal{M}$, defined by (\ref%
{equation long range energy}) for $L\in \mathbb{N}$. Since, by (\ref%
{equation fininte vol dynam0}), 
\begin{equation*}
U_{L}^{\Psi ^{(1)}}\cdots U_{L}^{\Psi ^{(n)}}=\sum\limits_{\Lambda \in 
\mathcal{P}_{f}}\mathbf{1}\left[ \Lambda \subseteq \Lambda _{L}\right]
\sum_{\Lambda _{1},\ldots ,\Lambda _{n}\in \mathcal{P}_{f}\mathfrak{:}%
\Lambda =\Lambda _{1}\cup \Lambda _{2}\cup \cdots \cup \Lambda _{n}}\Psi
_{\Lambda _{1}}^{(1)}\cdots \Psi _{\Lambda _{n}}^{(n)}\ ,
\end{equation*}%
for any $n\in \mathbb{N}$ and $\Psi ^{(1)},\ldots ,\Psi ^{(n)}\in \mathcal{W}
$, such an interaction $\Phi ^{(L,\mathfrak{m})}$ is explicitly given by%
\begin{equation}
\Phi _{\Lambda }^{(L,\mathfrak{m})}\doteq \Phi _{\Lambda }+\sum_{n\in 
\mathbb{N}}\frac{\mathbf{1}\left[ \Lambda \subseteq \Lambda _{L}\right] }{%
\left\vert \Lambda _{L}\right\vert ^{n-1}}\int_{\mathbb{S}^{n}}\mathfrak{a}%
_{n}\left( \mathrm{d}\Psi ^{(1)},\ldots ,\mathrm{d}\Psi ^{(n)}\right)
\sum_{\Lambda _{1},\ldots ,\Lambda _{n}\in \mathcal{P}_{f}\mathfrak{:}%
\Lambda =\Lambda _{1}\cup \Lambda _{2}\cup \cdots \cup \Lambda _{n}}\Psi
_{\Lambda _{1}}^{(1)}\cdots \Psi _{\Lambda _{n}}^{(n)}  \label{phi L}
\end{equation}%
for all $\Lambda \in \mathcal{P}_{f}$. See again (\ref{equation long range
energy}) and (\ref{equation long range energy2}).

Fix $L\in \mathbb{N}$. We next define the positive-valued symmetric function 
$\mathbf{F}_{L}:\Lambda _{L}^{2}\rightarrow (0,1]$ by 
\begin{equation*}
\mathbf{F}_{L}\left( x,y\right) \doteq \frac{|\Lambda _{L}|}{|\Lambda _{L}|+1%
}\left( \mathbf{F}\left( x,y\right) +\frac{1}{|\Lambda _{L}|}\right) \
,\qquad x,y\in \Lambda _{L}\ ,
\end{equation*}%
where $\mathbf{F}:\mathfrak{L}^{2}\rightarrow (0,1]$ is some positive-valued
symmetric function with maximum value $\mathbf{F}\left( x,x\right) =1$ for
all $x\in \mathfrak{L}$, satisfying (\ref{(3.1) NS})-(\ref{(3.2) NS}). Note
that $\mathbf{F}_{L}$ has also maximum value $\mathbf{F}_{L}\left(
x,x\right) =1$ for all $x\in \mathfrak{L}$. Then, a $L$-dependent seminorm
for interactions $\Phi $ is defined by 
\begin{equation}
\left\Vert \Phi \right\Vert _{\mathcal{W}_{L}}\doteq \underset{x,y\in
\Lambda _{L}}{\sup }\sum\limits_{\Lambda \subseteq \Lambda _{L},\;\Lambda
\supseteq \{x,y\}}\frac{\Vert \Phi _{\Lambda }\Vert _{\mathcal{U}}}{\mathbf{F%
}_{L}\left( x,y\right) }\leq \left( 1+|\Lambda _{L}|^{-1}\right) \left\Vert
\Phi \right\Vert _{\mathcal{W}}\ .  \label{estimate2}
\end{equation}%
Compare with Equation (\ref{iteration0}). This seminorm is in fact a norm in
the finite-dimensional space $\mathcal{W}_{L}$ of interactions $\Phi $
supported on the cubic box $\Lambda _{L}$, i.e., $\Phi _{\Lambda }=0$
whenever $\Lambda \nsubseteq \Lambda _{L}$. Compare with the Banach space $%
\mathcal{W}$ defined by (\ref{banach space short range})-(\ref{iteration0}).
The (real) subspace of all self-adjoint interactions of $\mathcal{W}_{L}$ is
denoted by $\mathcal{W}_{L}^{\mathbb{R}}\varsubsetneq \mathcal{W}_{L}$,
similar to $\mathcal{W}^{\mathbb{R}}\varsubsetneq \mathcal{W}$. For any $%
\mathfrak{m}\doteq \left( \Phi ,\mathfrak{a}\right) \in \mathcal{M}$, we
compute from (\ref{phi L}) and (\ref{estimate2}) that%
\begin{eqnarray}
\left\Vert \Phi ^{(L,\mathfrak{m})}\right\Vert _{\mathcal{W}_{L}} &\leq
&\left( 1+|\Lambda _{L}|^{-1}\right) \left\Vert \Phi \right\Vert _{\mathcal{W%
}}+\sum_{n\in \mathbb{N}}\frac{1}{\left\vert \Lambda _{L}\right\vert ^{n-1}}%
\int_{\mathbb{S}^{n}}\left\vert \mathfrak{a}_{n}\right\vert \left( \mathrm{d}%
\Psi ^{(1)},\ldots ,\mathrm{d}\Psi ^{(n)}\right)  \label{upper bound WL} \\
&&\underset{x,y\in \Lambda _{L}}{\sup }\sum\limits_{\Lambda \subseteq
\Lambda _{L},\;\Lambda \supseteq \{x,y\}}\sum_{\Lambda _{1},\ldots ,\Lambda
_{n}\in \mathcal{P}_{f}\mathfrak{:}\Lambda =\Lambda _{1}\cup \Lambda
_{2}\cup \cdots \cup \Lambda _{n}}\frac{\Vert \Psi _{\Lambda
_{1}}^{(1)}\Vert _{\mathcal{U}}\cdots \Vert \Psi _{\Lambda _{n}}^{(n)}\Vert
_{\mathcal{U}}}{\mathbf{F}_{L}\left( x,y\right) }\ .  \notag
\end{eqnarray}%
Now, observe that, for any $\Psi ^{(1)},\Psi ^{(2)}\in \mathbb{S}$ (the unit
sphere in $\mathcal{W}$), 
\begin{eqnarray}
&&\frac{1}{|\Lambda _{L}|}\underset{x,y\in \Lambda _{L}}{\sup }%
\sum\limits_{\Lambda \subseteq \Lambda _{L},\;\Lambda \supseteq
\{x,y\}}\sum_{\Lambda _{1},\Lambda _{2}\in \mathcal{P}_{f}\mathfrak{:}%
\Lambda =\Lambda _{1}\cup \Lambda _{2}}\frac{\Vert \Psi _{\Lambda
_{1}}^{(1)}\Vert _{\mathcal{U}}\Vert \Psi _{\Lambda _{2}}^{(2)}\Vert _{%
\mathcal{U}}}{\mathbf{F}_{L}\left( x,y\right) }  \notag \\
&\leq &\frac{2}{|\Lambda _{L}|}\underset{x,y\in \Lambda _{L}}{\sup }%
\sum\limits_{\Lambda _{1},\Lambda _{2}\subseteq \Lambda _{L}:x\in \Lambda
_{1},y\in \Lambda _{2}}\frac{\Vert \Psi _{\Lambda _{1}}^{(1)}\Vert _{%
\mathcal{U}}\Vert \Psi _{\Lambda _{2}}^{(2)}\Vert _{\mathcal{U}}}{\mathbf{F}%
_{L}\left( x,y\right) }  \label{estimate chiante1} \\
&&+\frac{\left\Vert \Psi ^{(1)}\right\Vert _{\mathcal{W}_{L}}}{|\Lambda _{L}|%
}\sum\limits_{\Lambda \subseteq \Lambda _{L}}\Vert \Psi _{\Lambda
}^{(2)}\Vert _{\mathcal{U}}+\frac{\left\Vert \Psi ^{(2)}\right\Vert _{%
\mathcal{W}_{L}}}{|\Lambda _{L}|}\sum\limits_{\Lambda \subseteq \Lambda
_{L}}\Vert \Psi _{\Lambda }^{(1)}\Vert _{\mathcal{U}}\ .  \notag
\end{eqnarray}%
Note additionally that 
\begin{equation}
\frac{1}{\left\vert \Lambda _{L}\right\vert }\sum\limits_{\Lambda \subseteq
\Lambda _{L}}\Vert \Psi _{\Lambda }\Vert _{\mathcal{U}}\leq \left\Vert 
\mathbf{F}\right\Vert _{1,\mathfrak{L}}\ ,\qquad \Psi \in \mathbb{S}\ ,
\label{estimate chiante2}
\end{equation}%
(compare with (\ref{norm Uphi})) and, using 
\begin{equation*}
|\Lambda _{L}|\mathbf{F}_{L}\left( x,y\right) \geq \frac{|\Lambda _{L}|}{%
|\Lambda _{L}|+1}\ ,\qquad x,y\in \Lambda _{L}\ ,
\end{equation*}%
one also verifies in a similar way that, for any $\Psi ^{(1)},\Psi ^{(2)}\in 
\mathbb{S}$, 
\begin{equation}
\frac{1}{|\Lambda _{L}|}\underset{x,y\in \Lambda _{L}}{\sup }%
\sum\limits_{\Lambda _{1},\Lambda _{2}\subseteq \Lambda _{L}:x\in \Lambda
_{1},y\in \Lambda _{2}}\frac{\Vert \Psi _{\Lambda _{1}}^{(1)}\Vert _{%
\mathcal{U}}\Vert \Psi _{\Lambda _{2}}^{(2)}\Vert _{\mathcal{U}}}{\mathbf{F}%
_{L}\left( x,y\right) }\leq 1+|\Lambda _{L}|^{-1}\ .
\label{estimate chiante3}
\end{equation}%
By combining (\ref{estimate chiante1}) with (\ref{estimate2}), (\ref%
{estimate chiante2})-(\ref{estimate chiante3}) and $\left\Vert \mathbf{F}%
\right\Vert _{1,\mathfrak{L}}\geq 1$, we deduce that%
\begin{equation*}
\underset{x,y\in \Lambda _{L}}{\sup }\sum\limits_{\Lambda \subseteq \Lambda
_{L},\;\Lambda \supseteq \{x,y\}}\sum_{\Lambda _{1},\Lambda _{2}\in \mathcal{%
P}_{f}\mathfrak{:}\Lambda =\Lambda _{1}\cup \Lambda _{2}}\frac{\Vert \Psi
_{\Lambda _{1}}^{(1)}\Vert _{\mathcal{U}}\Vert \Psi _{\Lambda
_{2}}^{(2)}\Vert _{\mathcal{U}}}{\mathbf{F}_{L}\left( x,y\right) }\leq
4\left( 1+|\Lambda _{L}|^{-1}\right) \left\Vert \mathbf{F}\right\Vert _{1,%
\mathfrak{L}}
\end{equation*}%
for any $\Psi ^{(1)},\Psi ^{(2)}\in \mathbb{S}$. Using this estimate, as
well as $n-2$ times Inequality (\ref{estimate chiante2}), in order to bound
the right-hand side of (\ref{upper bound WL}) from above, we obtain that%
\begin{equation}
\left\Vert \Phi ^{(L,\mathfrak{m})}\right\Vert _{\mathcal{W}_{L}}\leq \left(
1+|\Lambda _{L}|^{-1}\right) \left( \left\Vert \Phi \right\Vert _{\mathcal{W}%
}+4\sum_{n\in \mathbb{N}}n^{2}\left\Vert \mathbf{F}\right\Vert _{1,\mathfrak{%
L}}^{n-1}\Vert \mathfrak{a}\Vert _{\mathcal{S}(\mathbb{S}^{n}\mathbb{)}%
}\right) \ ,  \notag
\end{equation}%
(see (\ref{definition 0})), which in turn implies that, for any $\mathfrak{m}%
\in \mathcal{M}$ and $L\in \mathbb{N}$,%
\begin{equation}
\left\Vert \Phi ^{(L,\mathfrak{m})}\right\Vert _{\mathcal{W}_{L}}\leq
4\left( 1+|\Lambda _{L}|^{-1}\right) \left\Vert \mathfrak{m}\right\Vert _{%
\mathcal{M}}\leq 8\left\Vert \mathfrak{m}\right\Vert _{\mathcal{M}}\ ,
\label{rewritten0}
\end{equation}%
according to (\ref{definition 0bis}) and (\ref{def long range2}). Since%
\begin{equation*}
\Phi ^{(L,\mathfrak{m}_{1})}-\Phi ^{(L,\mathfrak{m}_{2})}=\Phi ^{(L,%
\mathfrak{m}_{1}-\mathfrak{m}_{2})}\ ,\qquad \mathfrak{m}_{1},\mathfrak{m}%
_{2}\in \mathcal{M},\ L\in \mathbb{N}\ ,
\end{equation*}%
we thus deduce from (\ref{equation long range energy2}), (\ref{phi L}) and (%
\ref{rewritten0}) that, for any $\mathfrak{m}\in C_{b}\left( \mathbb{R};%
\mathcal{M}\right) $, there is $\Psi ^{(L,\mathfrak{m})}\in C(\mathbb{R};%
\mathcal{W}_{L}^{\mathbb{R}})$ such that 
\begin{equation}
\tau _{t,s}^{(L,\Psi ^{(L,\mathfrak{m})})}\equiv \tau _{t,s}^{(L,(\Psi ^{(L,%
\mathfrak{m})},0))}=\tau _{t,s}^{(L,\mathfrak{m})}\ ,\qquad L\in \mathbb{N}\
,  \label{rewritten}
\end{equation}%
where $(\tau _{t,s}^{(L,\mathfrak{m})})_{_{s,t\in \mathbb{R}}}$ is the
unique (fundamental) solution to (\ref{cauchy1})-(\ref{cauchy2}). By (\ref%
{(3.1) NS})-(\ref{(3.2) NS}), note that, for all $L\in \mathbb{N}$,%
\begin{equation}
\left\Vert \mathbf{F}_{L}\right\Vert _{1,\Lambda _{L}}\doteq \underset{y\in
\Lambda _{L}}{\sup }\sum_{x\in \Lambda _{L}}\mathbf{F}_{L}\left( x,y\right)
\leq 1+\left\Vert \mathbf{F}\right\Vert _{1,\mathfrak{L}}
\label{(3.1) NSbis}
\end{equation}%
and 
\begin{equation}
\mathbf{D}_{L}\doteq \underset{x,y\in \Lambda _{L}}{\sup }\sum_{z\in \Lambda
_{L}}\frac{\mathbf{F}_{L}\left( x,z\right) \mathbf{F}_{L}\left( z,y\right) }{%
\mathbf{F}_{L}\left( x,y\right) }\leq \mathbf{D}+2\left\Vert \mathbf{F}%
\right\Vert _{1,\mathfrak{L}}+1\ .  \label{(3.2) NSbis}
\end{equation}%
As a consequence, since $\Psi ^{(L,\mathfrak{m})}\in C(\mathbb{R};\mathcal{W}%
_{L}^{\mathbb{R}})$, by (\ref{rewritten0}) and (\ref{rewritten}), we can
apply the Lieb-Robinson bounds \cite[Proposition 3.8 with $\mathfrak{L}%
=\Lambda _{L}$]{BruPedra-MFII} on the right-hand side of the inequality 
\begin{equation*}
\left\Vert \lbrack \tau _{t,s}^{(L,\mathfrak{m})}\left( A\right)
,U_{L}^{\Phi }]\right\Vert _{\mathcal{U}}\leq \sum\limits_{\mathcal{Z}%
\subseteq \Lambda _{L}}\left\Vert [\tau _{t,s}^{(L,\mathfrak{m})}\left(
A\right) ,\Phi _{\mathcal{Z}}]\right\Vert _{\mathcal{U}}
\end{equation*}%
for any $\Lambda \subseteq \Lambda _{L}$ and element $A\in \mathcal{U}%
_{\Lambda }$ to get that 
\begin{equation*}
\left\Vert \lbrack \tau _{t,s}^{(L,\mathfrak{m})}\left( A\right)
,U_{L}^{\Phi }]\right\Vert _{\mathcal{U}}\leq 2\mathbf{D}_{L}^{-1}\left\Vert
A\right\Vert _{\mathcal{U}}\left( \mathrm{e}^{16\mathbf{D}%
_{L}\int_{s}^{t}\left\Vert \mathfrak{m}\left( \varsigma \right) \right\Vert
_{\mathcal{M}}\mathrm{d}\varsigma }-1\right) \sum_{x\in \Lambda }\sum_{y\in
\Lambda _{L}}\sum\limits_{\mathcal{Z}\subseteq \Lambda _{L},\ \mathcal{Z}%
\supseteq \{y\}}\left\Vert \Phi _{\mathcal{Z}}\right\Vert _{\mathcal{U}}%
\mathbf{F}_{L}\left( x,y\right) \ .
\end{equation*}%
By (\ref{iteration0}) and (\ref{(3.1) NSbis}), we infer from the last
inequality that%
\begin{equation*}
\left\Vert \lbrack \tau _{t,s}^{(L,\mathfrak{m})}\left( A\right)
,U_{L}^{\Phi }]\right\Vert _{\mathcal{U}}\leq 2\mathbf{D}_{L}^{-1}\left\vert
\Lambda \right\vert \left\Vert A\right\Vert _{\mathcal{U}}\left\Vert \Phi
\right\Vert _{\mathcal{W}}\left( \mathrm{e}^{16\mathbf{D}_{L}\int_{s}^{t}%
\left\Vert \mathfrak{m}\left( \varsigma \right) \right\Vert _{\mathcal{M}}%
\mathrm{d}\varsigma }-1\right) \left\Vert \mathbf{F}_{L}\right\Vert _{1,%
\mathfrak{L}}
\end{equation*}%
for any $\mathfrak{m}\in C_{b}\left( \mathbb{R};\mathcal{M}\right) $, $\Phi
\in \mathcal{W}$, $s,t\in \mathbb{R}$, $L\in \mathbb{N}$, $\Lambda \in 
\mathcal{P}_{f}$ and element $A\in \mathcal{U}_{\Lambda }$. This yields the
lemma, by (\ref{(3.1) NSbis})-(\ref{(3.2) NSbis}) combined with rough
estimates on $\mathbf{D}_{L}$.
\end{proof}

\subsection{Long-Range Dynamics on Ergodic States\label{Long-Range Dynamics
on Ergodic States}}

In this section, we prove Theorem \ref{theorem structure of omega} for any
ergodic state. To this end, we recall again some important objects and
notation:

\begin{itemize}
\item For any $\Lambda \in \mathcal{P}_{f}$ and $\mathfrak{m}\in C_{b}(%
\mathbb{R};\mathcal{M}_{\Lambda })$, $\mathbf{\varpi }^{\mathfrak{m}}\in
C\left( \mathbb{R}^{2};\mathrm{Aut}\left( E\right) \right) $ is the solution
to the self-consistency equation of Theorem \ref{theorem sdfkjsdklfjsdklfj}.

\item For any $\Psi \in C(\mathbb{R};\mathcal{W}^{\mathbb{R}})$, $(\tau
_{t,s}^{\Psi })_{s,t\in {\mathbb{R}}}$ is the strongly continuous two-para%
\-%
meter family defined as the strong limit, for fixed $s,t$, of the local
dynamics $(\tau _{t,s}^{(L,\Psi )})_{s,t\in {\mathbb{R}}}$ defined by (\ref%
{cauchy1})-(\ref{cauchy2}). See \cite[Proposition 3.7]{BruPedra-MFII}.

\item For any $\mathbf{\Psi }\in C(\mathbb{R};\mathfrak{W})$ and $\rho \in E$%
, $\mathbf{\Psi }\left( \rho \right) \in C(\mathbb{R};\mathcal{W})$ stands
for the time-dependent interaction defined by 
\begin{equation*}
\mathbf{\Psi }\left( \rho \right) \left( t\right) \doteq \mathbf{\Psi }%
\left( t;\rho \right) \ ,\qquad \rho \in E,\ t\in \mathbb{R}\ .
\end{equation*}%
This refers to Equation (\ref{notation state interactionbis}).

\item For any $\mathfrak{m}\in C\left( \mathbb{R};\mathcal{M}\right) $ and
each $\mathbf{\xi }\in C\left( \mathbb{R};\mathrm{Aut}\left( E\right)
\right) $ (typically, $\mathbf{\xi =\varpi }^{\mathfrak{m}}\left( \alpha
,\cdot \right) $ at fixed $\alpha \in \mathbb{R}$), the approximating
interaction $\mathbf{\Phi }^{(\mathfrak{m},\mathbf{\xi })}$ is the mapping
from $\mathbb{R}$ to $\mathfrak{W}^{\mathbb{R}}$ of Definition \ref%
{definition BCS-type model approximated}. By (\ref{inequality trivial}), if $%
\mathfrak{m}\in C\left( \mathbb{R};\mathcal{M}\right) $ and $\mathbf{\xi }%
\in C\left( \mathbb{R};\mathrm{Aut}\left( E\right) \right) $ then $\mathbf{%
\Phi }^{(\mathfrak{m},\mathbf{\xi })}\left( \rho \right) \in C\left( \mathbb{%
R};\mathcal{W}^{\mathbb{R}}\right) $.

\item $\mathcal{M}_{1}$ is the Banach space of all translation-invariant
long-range models defined by (\ref{translatino invariatn long range models}).
\end{itemize}

\noindent Now, we are in a position to state the main theorem of this
subsection:

\begin{theorem}[Long-range dynamics on ergodic states]
\label{theorem structure of omega copy(2)}\mbox{ }\newline
Fix $\Lambda \in \mathcal{P}_{f}$, $\mathfrak{m}\in C_{b}(\mathbb{R};%
\mathcal{M}_{\Lambda }\cap \mathcal{M}_{1})$, $\vec{\ell}\in \mathbb{N}^{d}$
and $\hat{\rho}\in \mathcal{E}(E_{\vec{\ell}})$ with cyclic representation $%
\left( \mathcal{H}_{\hat{\rho}},\pi _{\hat{\rho}},\Omega _{\hat{\rho}%
}\right) $. Then, for any $s,t\in \mathbb{R}$ and $A\in \mathcal{U}\subseteq 
\mathfrak{U}$, in the $\sigma $-weak topology,%
\begin{equation*}
\lim_{L\rightarrow \infty }\pi _{\hat{\rho}}\left( \tau _{t,s}^{(L,\mathfrak{%
m})}\left( A\right) \right) =\left. \pi _{\hat{\rho}}\left( \tau _{t,s}^{%
\mathbf{\Phi }^{(\mathfrak{m},\mathbf{\varpi }^{\mathfrak{m}}\left( \alpha
,\cdot \right) )}(\hat{\rho})}\left( A\right) \right) \right\vert _{\alpha
=s}\in \mathcal{B}\left( \mathcal{H}_{\hat{\rho}}\right) \ .
\end{equation*}
\end{theorem}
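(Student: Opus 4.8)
The plan is to reduce the long-range local dynamics $\tau_{t,s}^{(L,\mathfrak m)}$ to an \emph{effective} state-dependent short-range dynamics by exploiting the fact that, in the GNS representation of an ergodic state $\hat\rho$, the space-averages appearing inside the long-range part of $U_L^{\mathfrak m}$ converge strongly to multiples of the identity. Concretely, recall from the decomposition (\ref{equation long range energy2})--(\ref{phi L}) that $U_L^{\mathfrak m}=U_L^{\Phi^{(L,\mathfrak m)}}$, where $\Phi^{(L,\mathfrak m)}$ contains products $U_L^{\Psi^{(1)}}\cdots U_L^{\Psi^{(n)}}$ rescaled by $|\Lambda_L|^{1-n}$. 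By Corollary \ref{Lemma sympa copy(2)}, each $|\Lambda_L|^{-1}U_L^{\Psi^{(j)}}$ converges strongly in $\mathcal B(\mathcal H_{\hat\rho})$ to $\hat\rho(\mathfrak e_{\Psi^{(j)},\vec\ell})\mathbf 1$. Feeding this into the product and using the telescoping identity $\prod x_j - \prod y_j = \sum_m (x_1\cdots x_{m-1})(x_m-y_m)(y_{m+1}\cdots y_n)$ together with the uniform bound $\|U_L^{\Psi}\|_{\mathcal U}\le |\Lambda_L|\,\|\mathbf F\|_{1,\mathfrak L}$ from (\ref{norm Uphi}), one gets that, strongly on $\mathcal H_{\hat\rho}$,
\begin{equation*}
\pi_{\hat\rho}\!\left(\frac{U_L^{\Psi^{(1)}}\cdots U_L^{\Psi^{(n)}}}{|\Lambda_L|^{n-1}}\right)\longrightarrow \pi_{\hat\rho}\!\left(\sum_{m=1}^n \Psi^{(m)}\text{-energy}\right)\text{-type limit},
\end{equation*}
i.e. exactly the object $\lfloor\hat\rho;\Psi^{(1)},\ldots,\Psi^{(n)}\rfloor_{\vec\ell}$ of (\ref{defined interand2}), after integration against $\mathfrak a_n$. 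This identifies the ``limit derivation'' of the long-range model at the ergodic state $\hat\rho$ with the state-dependent short-range derivation $\delta^{\mathbf\Phi^{(\mathfrak m,\mathbf\varpi^{\mathfrak m}(\alpha,\cdot))}(\hat\rho)}$ evaluated at $\hat\rho$ — but note that $\mathbf\varpi^{\mathfrak m}$ enters because the state is itself evolving, which is why the self-consistency equation of Theorem \ref{theorem sdfkjsdklfjsdklfj} appears.

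The key steps, in order: (1) Using (\ref{rewritten0})--(\ref{rewritten}), write $\tau_{t,s}^{(L,\mathfrak m)}=\tau_{t,s}^{(L,\Psi^{(L,\mathfrak m)})}$ with $\Psi^{(L,\mathfrak m)}\in C(\mathbb R;\mathcal W_L^{\mathbb R})$ uniformly bounded in $L$. (2) Introduce the comparison dynamics $\tau_{t,s}^{\mathbf\Phi^{(\mathfrak m,\mathbf\varpi^{\mathfrak m}(\alpha,\cdot))}(\hat\rho)}|_{\alpha=s}$ and estimate the difference $\pi_{\hat\rho}(\tau_{t,s}^{(L,\mathfrak m)}(A)-\tau_{t,s}^{\mathbf\Phi^{\ldots}(\hat\rho)}(A))$ via the Duhamel/telescoping formula for two-parameter families, so that the difference is an integral over $[s,t]$ of $\tau$ applied to $(\delta_L^{\mathfrak m(\varsigma)}-\delta^{\mathbf\Phi^{\ldots}(\hat\rho)(\varsigma)})$ acting on $\tau_{t,\varsigma}(A)$. (3) For local $A\in\mathcal U_\Lambda$, the commutator $[\,\cdot\,,\tau_{t,\varsigma}^{\mathbf\Phi^{\ldots}}(A)]$ with the rescaled energy products is controlled: apply Lemma \ref{Lemma sympa copy(3)} (the Lieb--Robinson commutator estimate) to bound $\|[\tau_{t,\varsigma}^{(L,\mathfrak m)}(A),U_L^\Psi]\|_{\mathcal U}$ by a constant times $|\Lambda|\|A\|\|\Psi\|_{\mathcal W}$, uniformly in $L$; this keeps the difference of derivations, applied to the evolved observable, bounded in norm and lets the $|\Lambda_L|^{1-n}$ scalings do their work. (4) Combine the norm bound (which makes the family of differences uniformly bounded) with the \emph{strong} convergence of $\pi_{\hat\rho}$ of the rescaled energy products (Corollary \ref{Lemma sympa copy(2)} plus the telescoping argument above) to conclude $\sigma$-weak convergence of $\pi_{\hat\rho}(\tau_{t,s}^{(L,\mathfrak m)}(A))$ to $\pi_{\hat\rho}(\tau_{t,s}^{\mathbf\Phi^{\ldots}(\hat\rho)}(A))|_{\alpha=s}$, first for local $A$ and then for all $A\in\mathcal U$ by density and the uniform bound $\|\tau_{t,s}^{(L,\mathfrak m)}(A)\|_{\mathcal U}=\|A\|_{\mathcal U}$.

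The main obstacle I expect is step (3)--(4): making the Duhamel expansion rigorous while the ``coefficients'' $|\Lambda_L|^{-1}U_L^{\Psi^{(j)}}$ converge only strongly (not in norm) in $\mathcal B(\mathcal H_{\hat\rho})$, yet appear inside products and inside a time-ordered integral. The standard trick is to move all convergence into matrix elements $\langle\Omega_{\hat\rho},\pi_{\hat\rho}(\cdot)\,\pi_{\hat\rho}(B)\Omega_{\hat\rho}\rangle$: one fixes vectors, uses that $\pi_{\hat\rho}(U_L^\Psi/|\Lambda_L|)\pi_{\hat\rho}(B)\Omega_{\hat\rho}\to\hat\rho(\mathfrak e_{\Psi,\vec\ell})\pi_{\hat\rho}(B)\Omega_{\hat\rho}$ strongly, and invokes the Lieb--Robinson bound of Lemma \ref{Lemma sympa copy(3)} to commute the rescaled energies past the automorphism $\tau_{t,\varsigma}^{(L,\mathfrak m)}$ with an error that vanishes as $L\to\infty$ (the commutator is $O(1)$ in norm while the surviving term carries the decaying scaling $|\Lambda_L|^{1-n}$ for $n\ge2$, and for $n=1$ it is the genuine short-range limit). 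A secondary subtlety is the appearance of $\mathbf\varpi^{\mathfrak m}$: one must check that the ``frozen-state'' approximation is self-consistent, i.e. that replacing the true evolving state by $\mathbf\varpi^{\mathfrak m}(\alpha,\hat\rho)|_{\alpha=s}$ along the trajectory is exactly what the fixed-point equation of Theorem \ref{theorem sdfkjsdklfjsdklfj} guarantees; this is where the structure of Definition \ref{definition BCS-type model approximated} is used, and where one should cite \cite[Section 6.6]{BruPedra-MFII} for the analogous reduction. Once these two points are handled, the passage from local to general $A$ and from $\sigma$-strong to $\sigma$-weak is routine.
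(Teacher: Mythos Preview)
Your overall architecture---Duhamel comparison, Lieb--Robinson for uniform bounds, ergodicity for the rescaled energies, and the self-consistency equation to close the loop---matches the paper's proof. But two concrete points need fixing before this can be made rigorous.

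First, the opening heuristic is not right: the product $|\Lambda_L|^{1-n}U_L^{\Psi^{(1)}}\cdots U_L^{\Psi^{(n)}}$ has norm of order $|\Lambda_L|$ and does \emph{not} converge (strongly or otherwise) in $\mathcal B(\mathcal H_{\hat\rho})$. The object $\lfloor\hat\rho;\Psi^{(1)},\ldots,\Psi^{(n)}\rfloor_{\vec\ell}$ is an interaction, not a limit of energy products; it enters only after you take the commutator with the evolved observable. The Leibniz expansion of $[\,|\Lambda_L|^{1-n}\prod_j U_L^{\Psi^{(j)}},\,\tau^{(L,\mathfrak m)}_{t,u}(A)\,]$ yields $n$ terms, each with one commutator $[U_L^{\Psi^{(m)}},\tau^{(L,\mathfrak m)}_{t,u}(A)]$ (bounded uniformly in $L$ by Lemma \ref{Lemma sympa copy(3)}) sandwiched between $n-1$ factors of $U_L^{\Psi^{(j)}}/|\Lambda_L|$. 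Every such term is $O(1)$ in norm---there is no ``decaying scaling $|\Lambda_L|^{1-n}$'' left over. The vanishing of the difference comes entirely from the strong limit of those surrounding factors matching the scalars in the approximating derivation, which is where the self-consistency equation is invoked; see the paper's Step~3 and Equation~(\ref{limit final}).

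Second, and more seriously, you never use---or even mention---the hypothesis $\mathfrak m\in\mathcal M_1$. In the Duhamel formula (\ref{Duhamel's formula}) the approximating short-range dynamics $\gimel^{(L)}_{u,s}$ sits on the \emph{outside}, so the products $\prod_j U_L^{\Psi^{(j)}}/|\Lambda_L|$ are evaluated under $\pi_{\hat\rho}\circ\gimel^{(L)}_{u,s}$, not under $\pi_{\hat\rho}$ alone. Corollary~\ref{Lemma sympa copy(2)} does not apply directly to $\gimel_{u,s}(U_L^{\Psi}/|\Lambda_L|)$. The paper resolves this by observing that translation invariance of $\mathfrak m$ forces $\gimel_{u,s}\circ\alpha_x=\alpha_x\circ\gimel_{u,s}$, so that $\gimel_{u,s}(U_L^{\Psi}/|\Lambda_L|)$ is again (asymptotically) a space average, now of $\gimel_{u,s}(\mathfrak e_{\Psi,\vec\ell})$, and ergodicity gives the strong limit $\hat\rho\circ\gimel_{u,s}(\mathfrak e_{\Psi,\vec\ell})\mathbf 1$. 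Only then does Theorem~\ref{theorem sdfkjsdklfjsdklfj} identify this with $\mathbf\varpi^{\mathfrak m}(s,u;\hat\rho)(\mathfrak e_{\Psi,\vec\ell})$ and kill the difference. If instead you try the other Duhamel order (with $\tau^{(L,\mathfrak m)}_{u,s}$ outside), you are stuck: that automorphism has no known infinite-volume limit in $\mathcal U$, which is precisely what the theorem is about.
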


\begin{proof}
Fix once and for all $\Lambda \in \mathcal{P}_{f}$, $\mathfrak{m}\in C_{b}(%
\mathbb{R};\mathcal{M}_{\Lambda }\cap \mathcal{M}_{1})$, $\vec{\ell}\in 
\mathbb{N}^{d}$ and $\hat{\rho}\in \mathcal{E}(E_{\vec{\ell}})$ with a
cyclic representation denoted by $\left( \mathcal{H}_{\hat{\rho}},\pi _{\hat{%
\rho}},\Omega _{\hat{\rho}}\right) $. By (\ref{eq:enpersitebisbis})-(\ref%
{eq:enpersitebisbisbis}), we can assume without loss of generality that $%
\mathfrak{e}_{\Phi ,\vec{\ell}}\in \mathcal{U}_{\Lambda }$. In order to
simplify the notation, we denote 
\begin{equation}
\gimel _{t,s}\doteq \tau _{t,s}^{\mathbf{\Phi }^{(\mathfrak{m},\mathbf{%
\varpi }^{\mathfrak{m}}\left( \alpha ,\cdot \right) )}(\hat{\rho})}|_{\alpha
=s}\ ,\qquad \alpha ,s,t\in \mathbb{R}\ .  \label{notation interaction}
\end{equation}%
The proof is done in several steps: \bigskip

\noindent \underline{Step 1:} For any $s,t\in \mathbb{R}$, the sequence 
\begin{equation*}
\left\{ \pi _{\hat{\rho}}\left( \gimel _{t,s}\left( A\right) -\tau
_{t,s}^{(L,\mathfrak{m})}\left( A\right) \right) \right\} _{L\in \mathbb{N}%
}\subseteq \mathcal{B}\left( \mathcal{H}_{\hat{\rho}}\right)
\end{equation*}%
is norm equicontinuous with respect to $A\in \mathcal{U}$ and we can
consider, without loss of generality, only elements $A$ within some dense
set of $\mathcal{U}$, like the dense $\ast $-algebra $\mathcal{U}_{0}$ of
local elements defined by (\ref{simple}). This sequence is also uniformly
bounded by $2\Vert A\Vert _{\mathcal{U}}$ in the operator norm of $\mathcal{B%
}(\mathcal{H}_{\hat{\rho}})$\ and hence, using \cite[Proposition 2.4.2]%
{BrattelliRobinsonI}, we only need to prove the weak-operator convergence on
any dense set of $\mathcal{H}_{\hat{\rho}}$, like 
\begin{equation*}
\left\{ \pi _{\hat{\rho}}\left( B\right) \Omega _{\hat{\rho}}:B\in \mathcal{U%
}\right\} \subseteq \mathcal{H}_{\hat{\rho}}\ ,
\end{equation*}%
in order to get the desired $\sigma $-weak convergence. Moreover, by \cite[%
Proposition 3.7]{BruPedra-MFII}, we can replace $\gimel _{t,s}$ with the
local dynamics 
\begin{equation*}
\gimel _{t,s}^{(L)}\doteq \tau _{t,s}^{(L,\mathbf{\Phi }^{(\mathfrak{m},%
\mathbf{\varpi }^{\mathfrak{m}}\left( \alpha ,\cdot \right) )}(\hat{\rho}%
))}|_{\alpha =s}\ ,\qquad \alpha ,s,t\in \mathbb{R},\ L\in \mathbb{N}\ .
\end{equation*}%
To summarize, at fixed $s,t\in \mathbb{R}$, it suffices to prove that%
\begin{eqnarray}
&&\lim_{L\rightarrow \infty }\left\langle \pi _{\hat{\rho}}\left( B\right)
\Omega _{\hat{\rho}},\pi _{\hat{\rho}}\left( \gimel _{t,s}^{(L)}\left(
A\right) -\tau _{t,s}^{(L,\mathfrak{m})}\left( A\right) \right) \pi _{\hat{%
\rho}}\left( B\right) \Omega _{\hat{\rho}}\right\rangle _{\mathcal{H}_{\hat{%
\rho}}}  \notag \\
&=&\lim_{L\rightarrow \infty }\hat{\rho}\left( B^{\ast }\left( \gimel
_{t,s}^{(L)}\left( A\right) -\tau _{t,s}^{(L,\mathfrak{m})}\left( A\right)
\right) B\right) =0  \label{limit to prove}
\end{eqnarray}%
for all elements $A\in \mathcal{U}_{0}$ and $B\in \mathcal{U}$ in order to
prove the theorem.\bigskip

\noindent \underline{Step 2:} By Duhamel's formula and Equations (\ref%
{equation derivation long range})-(\ref{cauchy2}), for any $L\in \mathbb{N}$
and $s,t\in \mathbb{R}$, 
\begin{equation}
\gimel _{t,s}^{(L)}-\tau _{t,s}^{(L,\mathfrak{m})}=\int_{s}^{t}\gimel
_{u,s}^{(L)}\circ \left( \delta _{L}^{\mathbf{\Phi }^{(\mathfrak{m},\mathbf{%
\varpi }^{\mathfrak{m}}\left( s,u\right) )}(\hat{\rho})}-\delta _{L}^{%
\mathfrak{m}\left( u\right) }\right) \circ \tau _{t,u}^{(L,\mathfrak{m})}%
\mathrm{d}u\ ,  \label{Duhamel's formula}
\end{equation}%
where $\delta _{L}^{\Phi }$ and $\delta _{L}^{\mathfrak{\tilde{m}}}$ are the
bounded symmetric derivations defined by (\ref{equation derivation long
range}) for any model $\mathfrak{\tilde{m}}\in \mathcal{M}$ and short-range
interaction $\Phi \equiv \left( \Phi ,0\right) \in \mathcal{W}\subseteq 
\mathcal{M}$. Observe from Equations (\ref{equation fininte vol dynam0}), (%
\ref{equation long range energy}), (\ref{equation derivation long range})
and Definition \ref{definition BCS-type model approximated} together with
explicit computations that, for any $L\in \mathbb{N}$, $u\in \mathbb{R}$ and 
$A\in \mathcal{U}$, 
\begin{eqnarray}
&&\left( \delta _{L}^{\mathbf{\Phi }^{(\mathfrak{m},\mathbf{\varpi }^{%
\mathfrak{m}}\left( s,u\right) )}(\hat{\rho})}-\delta _{L}^{\mathfrak{m}%
\left( u\right) }\right) \left( A\right)  \notag \\
&=&\sum_{n=2}^{\infty }\sum_{m=1}^{n}\int_{\mathbb{S}^{n}}\mathfrak{a}%
_{n}\left( u\right) \left( \mathrm{d}\Psi ^{(1)},\ldots ,\mathrm{d}\Psi
^{(n)}\right)  \notag \\
&&\qquad \left( \left( \prod\limits_{j=1}^{m-1}\mathbf{\varpi }^{\mathfrak{m}%
}\left( s,u;\hat{\rho}\right) (\mathfrak{e}_{\Psi ^{(j)},\vec{\ell}})\right)
\delta _{L}^{\Psi ^{(m)}}\left( A\right) \left( \prod\limits_{j=m+1}^{n}%
\mathbf{\varpi }^{\mathfrak{m}}\left( s,u;\hat{\rho}\right) (\mathfrak{e}%
_{\Psi ^{(j)},\vec{\ell}})\right) \right.  \notag \\
&&\qquad \qquad \qquad \qquad \qquad \qquad \left. -\left(
\prod\limits_{j=1}^{m-1}\frac{U_{L}^{\Psi ^{(j)}}}{\left\vert \Lambda
_{L}\right\vert }\right) \delta _{L}^{\Psi ^{(m)}}\left( A\right) \left(
\prod\limits_{j=m+1}^{n}\frac{U_{L}^{\Psi ^{(j)}}}{\left\vert \Lambda
_{L}\right\vert }\right) \right) \ ,  \label{delta difgferente}
\end{eqnarray}%
where 
\begin{equation*}
\prod\limits_{j=1}^{m-1<j}\left( \cdot \right) \doteq 1\doteq
\prod\limits_{j=m+1>n}^{n}\left( \cdot \right) .
\end{equation*}%
Combining (\ref{delta difgferente}) with (\ref{Duhamel's formula}), we
deduce that, for any $L\in \mathbb{N}$, $s,t\in \mathbb{R}$, $A\in \mathcal{U%
}_{0}$ and $B\in \mathcal{U}$,%
\begin{eqnarray}
&&\left\vert \hat{\rho}\left( B^{\ast }\left( \gimel _{t,s}^{(L)}\left(
A\right) -\tau _{t,s}^{(L,\mathfrak{m})}\left( A\right) \right) B\right)
\right\vert  \label{upper boundddvdv} \\
&\leq &\sum_{n=2}^{\infty }\sum_{m=1}^{n}\int_{s}^{t}\mathrm{d}u\int_{%
\mathbb{S}^{n}}\mathfrak{a}\left( u\right) _{n}\left( \mathrm{d}\Psi
^{(1)},\ldots ,\mathrm{d}\Psi ^{(n)}\right) \left\vert \mathbf{Y}%
_{L}^{(n,m)}\left( u;\Psi ^{(1)},\ldots ,\Psi ^{(n)}\right) \right\vert \ , 
\notag
\end{eqnarray}%
where, for any integer $n\geq 2$, $m\in \left\{ 1,\ldots ,n\right\} $, $L\in 
\mathbb{N}$, $u\in \mathbb{R}$ and $\Psi ^{(1)},\ldots ,\Psi ^{(n)}\in 
\mathbb{S}$, 
\begin{eqnarray}
&&\mathbf{Y}_{L}^{(n,m)}\left( u;\Psi ^{(1)},\ldots ,\Psi ^{(n)}\right)
\label{YL} \\
&\doteq &\hat{\rho}\left( B^{\ast }\gimel _{u,s}^{(L)}\left( \left(
\prod\limits_{j=1}^{m-1}\mathbf{\varpi }^{\mathfrak{m}}\left( s,u;\hat{\rho}%
\right) (\mathfrak{e}_{\Psi ^{(j)},\vec{\ell}})\right) \delta _{L}^{\Psi
^{(m)}}\circ \tau _{t,u}^{(L,\mathfrak{m})}\left( A\right) \left(
\prod\limits_{j=m+1}^{n}\mathbf{\varpi }^{\mathfrak{m}}\left( s,u;\hat{\rho}%
\right) (\mathfrak{e}_{\Psi ^{(j)},\vec{\ell}})\right) \right) B\right) 
\notag \\
&&-\hat{\rho}\left( B^{\ast }\gimel _{u,s}^{(L)}\left( \left(
\prod\limits_{j=1}^{m-1}\frac{U_{L}^{\Psi ^{(j)}}}{\left\vert \Lambda
_{L}\right\vert }\right) \delta _{L}^{\Psi ^{(m)}}\circ \tau _{t,u}^{(L,%
\mathfrak{m})}\left( A\right) \left( \prod\limits_{j=m+1}^{n}\frac{%
U_{L}^{\Psi ^{(j)}}}{\left\vert \Lambda _{L}\right\vert }\right) \right)
B\right) \ .  \notag
\end{eqnarray}%
By (\ref{e phi}) and (\ref{norm Uphi}),%
\begin{equation}
\left\vert \mathbf{\varpi }^{\mathfrak{m}}\left( s,u;\hat{\rho}\right) (%
\mathfrak{e}_{\Psi ,\vec{\ell}})\right\vert \leq \left\Vert \mathbf{F}%
\right\Vert _{1,\mathfrak{L}}\quad \text{and}\quad \left\Vert \left\vert
\Lambda _{L}\right\vert ^{-1}U_{L}^{\Psi }\right\Vert _{\mathcal{U}}\leq
\left\Vert \mathbf{F}\right\Vert _{1,\mathfrak{L}}
\label{uniformemement borne0}
\end{equation}%
for any $\Psi \in \mathbb{S}$, while, by Lemma \ref{Lemma sympa copy(3)}, 
\begin{equation}
\left\Vert \delta _{L}^{\Psi }\circ \tau _{t,u}^{(L,\mathfrak{m})}\left(
A\right) \right\Vert _{\mathcal{U}}\leq 2\left\vert \mathcal{Z}\right\vert
\left\Vert A\right\Vert _{\mathcal{U}}\mathrm{e}^{16\left( \mathbf{D}%
+2\left\Vert \mathbf{F}\right\Vert _{1,\mathfrak{L}}+1\right)
\int_{s}^{t}\left\Vert \mathfrak{m}\left( \varsigma \right) \right\Vert _{%
\mathcal{M}}\mathrm{d}\varsigma }  \label{uniformemement borne}
\end{equation}%
for any $\Psi \in \mathbb{S}$, $L\in \mathbb{N}$, $u\in \lbrack s,t]$, $%
\mathcal{Z}\subseteq \Lambda _{L}$ and $A\in \mathcal{U}_{\mathcal{Z}}$.
Therefore, since $\gimel _{u,s}^{(L)}$ is a $\ast $-automorphism of $%
\mathcal{U}$, for any integer $n\geq 2$, $m\in \left\{ 1,\ldots ,n\right\} $%
, $L\in \mathbb{N}$, $u\in \lbrack s,t]$, $\Psi ^{(1)},\ldots ,\Psi
^{(n)}\in \mathbb{S}$, $\mathcal{Z}\subseteq \Lambda _{L}$ and $A\in 
\mathcal{U}_{\mathcal{Z}}$, 
\begin{equation*}
\left\vert \mathbf{Y}_{L}^{(n,m)}\left( u;\Psi ^{(1)},\ldots ,\Psi
^{(n)}\right) \right\vert \leq 4\left\vert \mathcal{Z}\right\vert \left\Vert
A\right\Vert _{\mathcal{U}}\left\Vert B\right\Vert _{\mathcal{U}%
}^{2}\left\Vert \mathbf{F}\right\Vert _{1,\mathfrak{L}}^{n-1}\mathrm{e}%
^{16\left( \mathbf{D}+2\left\Vert \mathbf{F}\right\Vert _{1,\mathfrak{L}%
}+1\right) \int_{s}^{t}\left\Vert \mathfrak{m}\left( \varsigma \right)
\right\Vert _{\mathcal{M}}\mathrm{d}\varsigma }\ .
\end{equation*}%
Since $\mathfrak{m}\in C_{b}(\mathbb{R};\mathcal{M})$, by Equations (\ref%
{definition 0})-(\ref{definition 0bis}), (\ref{def long range2}) and (\ref%
{upper boundddvdv}), we deduce from the last estimate and Lebesgue's
dominated convergence theorem that (\ref{limit to prove}) follows from%
\begin{equation}
\lim_{L\rightarrow \infty }\left\vert \mathbf{Y}_{L}^{(n,m)}\left( u;\Psi
^{(1)},\ldots ,\Psi ^{(n)}\right) \right\vert =0  \label{limit}
\end{equation}%
for $n\geq 2$, $m\in \left\{ 1,\ldots ,n\right\} $, $u\in \mathbb{R}$, $\Psi
^{(1)},\ldots ,\Psi ^{(n)}\in \mathbb{S}$.\bigskip

\noindent \underline{Step 3:} For any integer $n\geq 2$, $k\in \left\{
1,\ldots ,n+1\right\} $, $l\in \left\{ 0,\ldots ,n\right\} $, $k\leq l$ and $%
\Psi ^{(k)},\ldots ,\Psi ^{(l)}\in \mathbb{S}$, define%
\begin{equation}
\mathbf{\Theta }_{L}^{(k,l)}\left( \Psi ^{(k)},\ldots ,\Psi ^{(l)}\right)
\doteq \prod\limits_{j=k}^{l}\mathbf{\varpi }^{\mathfrak{m}}\left( s,u;\hat{%
\rho}\right) (\mathfrak{e}_{\Psi ^{(j)},\vec{\ell}})-\prod\limits_{j=k}^{l}%
\frac{U_{L}^{\Psi ^{(j)}}}{\left\vert \Lambda _{L}\right\vert }.
\label{delta difgferente2}
\end{equation}%
By Equations (\ref{YL})-(\ref{uniformemement borne}) for $A\in \mathcal{U}%
_{0}$, and since $\gimel _{u,s}^{(L)}$ is a $\ast $-automorphism of $%
\mathcal{U}$, the limit we want to prove, i.e., (\ref{limit}), follows if we
are able to show that 
\begin{equation}
\lim_{L\rightarrow \infty }\left\Vert \pi _{\hat{\rho}}\circ \gimel
_{u,s}^{(L)}\left( \mathbf{\Theta }_{L}^{(k,l)}\left( \Psi ^{(k)},\ldots
,\Psi ^{(l)}\right) \right) \pi _{\hat{\rho}}\left( B\right) \Omega _{\hat{%
\rho}}\right\Vert _{\mathcal{H}_{\hat{\rho}}}=0  \label{limit final}
\end{equation}%
for any integer $n\geq 2$, $k,l\in \left\{ 1,\ldots ,n\right\} $ with $k\leq
l$, $B\in \mathcal{U}$ and $\Psi ^{(k)},\ldots ,\Psi ^{(l)}\in \mathbb{S}$.
Note that 
\begin{equation*}
\gimel _{u,s}\circ \alpha _{x}=\alpha _{x}\circ \gimel _{u,s}\ ,\qquad x\in 
\mathbb{Z}^{d},\ u,s\in \mathbb{R}\ ,
\end{equation*}%
because $\mathfrak{m}\in C_{b}(\mathbb{R};\mathcal{M}_{1})$. So, as $\hat{%
\rho}\in \mathcal{E}(E_{\vec{\ell}})$, we infer from \cite[Proposition 3.7]%
{BruPedra-MFII}, Lemma \ref{Lemma sympa copy(1)}, Corollary \ref{Lemma sympa
copy(2)} and Equation (\ref{delta difgferente2}) that 
\begin{eqnarray*}
&&\lim_{L\rightarrow \infty }\pi _{\hat{\rho}}\circ \gimel
_{u,s}^{(L)}\left( \mathbf{\Theta }_{L}^{(k,l)}\left( \Psi ^{(k)},\ldots
,\Psi ^{(l)}\right) \right) \pi _{\hat{\rho}}\left( B\right) \Omega _{\hat{%
\rho}} \\
&=&\left( \prod\limits_{j=k}^{l}\mathbf{\varpi }^{\mathfrak{m}}\left( s,u;%
\hat{\rho}\right) (\mathfrak{e}_{\Psi ^{(j)},\vec{\ell}})-\prod%
\limits_{j=k}^{l}\hat{\rho}\circ \gimel _{u,s}(\mathfrak{e}_{\Psi ^{(j)},%
\vec{\ell}})\right) \pi _{\hat{\rho}}\left( B\right) \Omega _{\hat{\rho}}
\end{eqnarray*}%
in the Hilbert space $\mathcal{H}_{\hat{\rho}}$, for any integer $n\geq 2$, $%
k,l\in \left\{ 1,\ldots ,n\right\} $ with $k\leq l$, $B\in \mathcal{U}$ and $%
\Psi ^{(k)},\ldots ,\Psi ^{(l)}\in \mathbb{S}$. We finally invoke the
self-consistency equations, that is, Theorem \ref{theorem sdfkjsdklfjsdklfj}
(cf. (\ref{notation interaction})), to arrive from the last equality at
Equation (\ref{limit final}), which, by going backwards, in turn implies the
theorem.
\end{proof}

\subsection{Direct Integrals of GNS Representations of Families of States 
\label{Direct Integrals of GNS}}

Theorem \ref{theorem structure of omega} is already proven for any ergodic
state, by Theorem \ref{theorem structure of omega copy(2)}. In order to
extend this result to all periodic states we need to decompose periodic
states into ergodic states, as stated in Theorem \ref{theorem choquet}. This
leads to a technically convenient cyclic representation of each periodic
state by using the direct integral of the GNS representation of ergodic
states. In this subsection (and in the next one), we explain the direct
integrals of GNS spaces in a general framework, since the particularities of
the CAR algebra $\mathcal{U}$ are never used, apart from the general fact
that it is a separable unital $C^{\ast }$-algebra.

Note additionally that this subsection is a collection of results that are
rather standard. Nevertheless, it is important to present them in a coherent
and self-contained manner because (i) we do not know any simple reference
allowing the reader to get the relevant information in a concise way and
(ii) the content of this subsection, in terms of results, notation, etc., is
essential for Section \ref{Representation cool}. Proofs are also included
here, making this subsection also useful for a full understanding by a
non-expert reader.

Let $\mathcal{X}$ be any separable unital $C^{\ast }$-algebra and denote by $%
E\subseteq \mathcal{X}^{\ast }$ its set of states\footnote{%
I.e., continuous linear functionals $\rho \in \mathcal{X}^{\ast }$ which are
positive, i.e., $\rho (A^{\ast }A)\geq 0$ for all $A\in \mathcal{X}$, and
normalized, i.e., $\rho (\mathfrak{1})=1$.}. For any state $\rho \in E$, its
GNS representation is denoted by the triplet $(\mathcal{H}_{\rho },\pi
_{\rho },\Omega _{\rho })$ with the following definitions (see, e.g., \cite[%
Section 2.3.3]{BrattelliRobinsonI}):\bigskip

\noindent \underline{($\mathcal{H}$):} $\mathcal{L}_{\rho }\doteq \{X\in 
\mathcal{X}:\rho (X^{\ast }X)=0\}$ is a closed left--ideal of $\mathcal{X}$
and $\mathcal{H}_{\rho }\doteq \overline{\mathcal{X}/\mathcal{L}_{\rho }}$
is the separable GNS Hilbert space with scalar product satisfying 
\begin{equation}
\langle \lbrack X]_{\rho },[Y]_{\rho }\rangle _{\mathcal{H}_{\rho }}=\rho
\left( X^{\ast }Y\right) \text{ },\text{\qquad }[X]_{\rho },[Y]_{\rho }\in 
\mathcal{X}/\mathcal{L}_{\rho }\subseteq \mathcal{H}_{\rho }\ .
\label{GNS innerprod}
\end{equation}

\noindent \underline{($\pi $):} $\pi _{\rho }$ is a representation of $%
\mathcal{X}$ on $\mathcal{B}(\mathcal{H}_{\rho })$ uniquely defined by 
\begin{equation}
\pi _{\rho }\left( A\right) [X]_{\rho }=[AX]_{\rho }\in \mathcal{X}/\mathcal{%
L}_{\rho }\text{ },\text{\qquad }A\in \mathcal{X},\ [X]_{\rho }\in \mathcal{X%
}/\mathcal{L}_{\rho }\subseteq \mathcal{H}_{\rho }\ .  \label{GNS innerprod2}
\end{equation}

\noindent \underline{($\Omega $):} $\Omega _{\rho }\doteq \lbrack \mathfrak{1%
}]_{\rho }\in \mathcal{X}/\mathcal{L}_{\rho }\subseteq \mathcal{H}_{\rho }$
is a cyclic vector for $\pi _{\rho }(\mathcal{X})$, i.e., the set $\pi
_{\rho }(\mathcal{X})\Omega _{\rho }$ is dense in $\mathcal{H}_{\rho }$.
\bigskip

We apply now the general theory discussed in Section \ref{app direct
integrals} to the GNS objects (space, representation and cyclic vectors) of
a separable $C^{\ast }$-algebra: Recall that $E$ is compact and metrizable
with respect to the weak$^{\ast }$ topology. Let $\Sigma _{E}$ be the
(Borel) $\sigma $-algebra generated by the weak$^{\ast }$ topology of $E$,
like in Section \ref{Positive Functionals}. Pick any fixed (weak$^{\ast }$)
Borel subset $F\in \Sigma _{E}$ and denote by $(F,\Sigma _{F})$ the
measurable space associated with the $\sigma $-algebra $\Sigma _{F}$
generated by the weak$^{\ast }$ topology of $F$. Note that%
\begin{equation*}
\Sigma _{F}=\left\{ F\cap \mathfrak{B}:\mathfrak{B}\in \Sigma _{E}\right\} \
.
\end{equation*}%
(In Section \ref{app direct integrals}, $F$ refers to the set denoted by $%
\mathcal{Z}$.) Since compact metric spaces are complete and separable, the
measurable space $(F,\Sigma _{F})$ is standard (Definition \ref{Standard
measure spaces}), whenever $F$\ is closed.

For any $F\in \Sigma _{E}$, $\mathcal{H}_{F}\doteq (\mathcal{H}_{\rho
})_{\rho \in F}$ is a family\ of separable GNS Hilbert spaces and, by the
GNS\ construction, any element $X\in \mathcal{X}$ defines a vector field 
\begin{equation}
v\doteq (v_{\rho })_{\rho \in F}\doteq ([X]_{\rho })_{\rho \in F}
\label{def a la con}
\end{equation}%
over the family $\mathcal{H}_{F}$. Also, $\pi _{F}\doteq (\pi _{\rho
})_{\rho \in F}$ is a field of representations of $\mathcal{X}$ (Definition %
\ref{Field of representation} (i)) on $\mathcal{H}_{\rho }$ for $\rho \in F$%
. Now, it suffices to use Theorems \ref{Measurable families of Hilbert
spaces - Equivalent formulation} and \ref{coherence from fields} to get the
measurability of all these objects:

\begin{lemma}[Measurability of\ GNS\ Hilbert spaces and representations]
\label{lemma direct integral von neumann2 copy(1)}\mbox{ }\newline
Let $\mathcal{X}$ be a separable unital $C^{\ast }$-algebra and $F\in \Sigma
_{E}$ any (weak$^{\ast }$) Borel subset of states. Then, $\mathcal{H}_{F}$
is measurable and there is a unique (equivalence class of) coherence $\alpha
_{F}$ making, via (\ref{def a la con}), any countable (norm) dense set of $%
\mathcal{X}$ a sequence of $\alpha _{F}$-measurable fields . Moreover, $\pi
_{F}$\ is $\alpha _{F}$-measurable.
\end{lemma}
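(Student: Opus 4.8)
The plan is to reduce everything to the abstract criteria of Section~\ref{app direct integrals} applied to the concrete GNS data, the key point being that the relevant scalar functions of the state are not merely measurable but weak$^{\ast}$-continuous. First I would fix a countable norm-dense subset $\mathfrak{D}\doteq\{X_n\}_{n\in\mathbb{N}}\subseteq\mathcal{X}$ (which exists by separability and may be assumed to contain $\mathfrak{1}$) and, via (\ref{def a la con}), form the sequence $(v^{(n)})_{n\in\mathbb{N}}$ of vector fields over $\mathcal{H}_F=(\mathcal{H}_\rho)_{\rho\in F}$ given by $v^{(n)}\doteq([X_n]_\rho)_{\rho\in F}$. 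Two elementary observations then carry the argument: (a) for each $\rho\in F$ the set $\{[X_n]_\rho:n\in\mathbb{N}\}$ is norm-dense in $\mathcal{H}_\rho$, since the canonical map $X\mapsto[X]_\rho$ has dense range $\mathcal{X}/\mathcal{L}_\rho$ by construction and is norm-contractive because $\|[X]_\rho\|_{\mathcal{H}_\rho}^2=\rho(X^{\ast}X)\leq\|X\|_{\mathcal{X}}^2$; and (b) by (\ref{GNS innerprod}), for all $n,m$ the function $\rho\mapsto\langle v^{(n)}_\rho,v^{(m)}_\rho\rangle_{\mathcal{H}_\rho}=\rho(X_n^{\ast}X_m)$ is weak$^{\ast}$-continuous on $E$, hence $\Sigma_F$-measurable on $F$.

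Next I would invoke Theorem~\ref{Measurable families of Hilbert spaces - Equivalent formulation}: by (a) and (b) the sequence $(v^{(n)})_{n\in\mathbb{N}}$ is a total family of vector fields with measurable pairwise inner products, so $\mathcal{H}_F$ is a measurable family of Hilbert spaces and there is a unique equivalence class of coherence $\alpha_F$ for which every $v^{(n)}$ is $\alpha_F$-measurable. For the ``any countable dense set'' clause and for uniqueness I would argue that if $\{Y_k\}_{k\in\mathbb{N}}$ is another countable norm-dense subset of $\mathcal{X}$, then $\rho\mapsto\langle[X_n]_\rho,[Y_k]_\rho\rangle_{\mathcal{H}_\rho}=\rho(X_n^{\ast}Y_k)$ is again weak$^{\ast}$-continuous, so each field $([Y_k]_\rho)_{\rho\in F}$ is coherent with the total measurable family $(v^{(n)})_{n\in\mathbb{N}}$ and is therefore $\alpha_F$-measurable; taking $\{Y_k\}=\mathfrak{D}$ in this argument also shows that any coherence rendering a countable norm-dense set measurable agrees with $\alpha_F$ up to equivalence.

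Finally, for the measurability of the field of representations $\pi_F=(\pi_\rho)_{\rho\in F}$ I would use Theorem~\ref{coherence from fields} together with Definition~\ref{Field of representation}: it suffices to check that the matrix elements $\rho\mapsto\langle v^{(n)}_\rho,\pi_\rho(A)v^{(m)}_\rho\rangle_{\mathcal{H}_\rho}$ are $\Sigma_F$-measurable for $A$ in a norm-dense subset of $\mathcal{X}$, so one may take $A=X_\ell\in\mathfrak{D}$, and then by (\ref{GNS innerprod2}) and (\ref{GNS innerprod}) one has $\langle[X_n]_\rho,\pi_\rho(X_\ell)[X_m]_\rho\rangle_{\mathcal{H}_\rho}=\langle[X_n]_\rho,[X_\ell X_m]_\rho\rangle_{\mathcal{H}_\rho}=\rho(X_n^{\ast}X_\ell X_m)$, which is weak$^{\ast}$-continuous in $\rho$ and thus $\Sigma_F$-measurable on $F$. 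This yields the $\alpha_F$-measurability of $\pi_F$ and finishes the proof.

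The hard part is not the computation but the bookkeeping: one must match precisely the notions of ``total family of vector fields'', ``coherence'' and ``measurable field of representations'' from the appendix with the GNS objects, so that the hypotheses of Theorems~\ref{Measurable families of Hilbert spaces - Equivalent formulation} and~\ref{coherence from fields} are literally the statements (a), (b) and the matrix-element condition verified above. Once that dictionary is in place, the proof reduces to the single remark that weak$^{\ast}$-continuity of evaluation on states promotes every relevant scalar function from measurable to continuous.
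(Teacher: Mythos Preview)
Your proof is correct and follows essentially the same route as the paper's: fix a countable norm-dense set in $\mathcal{X}$, use (\ref{GNS innerprod}) to see that the pairwise inner products of the associated vector fields are weak$^{\ast}$-continuous in $\rho$, apply Theorem~\ref{Measurable families of Hilbert spaces - Equivalent formulation} and Theorem~\ref{coherence from fields} to obtain the measurability of $\mathcal{H}_F$ and the unique coherence class, and finally verify the $\alpha_F$-measurability of $\pi_F$ via the matrix-element formula (\ref{measruability}). The only minor difference is that you restrict to $A\in\mathfrak{D}$ when checking measurability of $\pi_F(A)$ and then (implicitly) pass to arbitrary $A$ by norm-density and pointwise limits of measurable functions; the paper avoids this detour by observing directly that $\rho\mapsto\rho((X^{(n)})^{\ast}AX^{(m)})$ is weak$^{\ast}$-continuous for \emph{every} $A\in\mathcal{X}$, so no approximation step is needed.
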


\begin{proof}
Let $\{X^{(n)}\}_{n\in \mathbb{N}}\subseteq \mathcal{X}$ be any countable
(norm) dense set. By the GNS construction, this set defines via (\ref{def a
la con}) a countable dense subset of the GNS Hilbert space $\mathcal{H}%
_{\rho }$ for all $\rho \in E$. Hence, it defines a sequence of vector
fields over $\mathcal{H}_{F}$, denoted by $(v^{(n)})_{n\in \mathbb{N}}$,
where, for all $\rho \in E$, the set $\{v_{\rho }^{(n)}\}_{n\in \mathbb{N}}$
is dense in $\mathcal{H}_{\rho }$ and, in particular, total in this space.
Since, for all $B\in \mathcal{X}$, the mapping $\rho \mapsto \rho (B)$ from $%
F$ to $\mathbb{C}$ is weak$^{\ast }$-continuous, we deduce from (\ref{GNS
innerprod}) that the mapping $\rho \mapsto \langle v_{\rho }^{(m)},v_{\rho
}^{(n)}\rangle _{\mathcal{H}_{\rho }}$ from $F$ to $\mathbb{C}$ is\ $\Sigma
_{F}$-measurable for all $m,n\in \mathbb{N}$. Thus, the sequence $%
(v^{(n)})_{n\in \mathbb{N}}$ of vector fields fulfills Conditions (a)-(b) of
Theorem \ref{Measurable families of Hilbert spaces - Equivalent formulation}
and $\mathcal{H}_{F}$ is thus measurable.

We take the unique (up to an equivalence of coherences) coherence $\alpha
_{F}$ for $\mathcal{H}_{F}$\ such that $(v^{(n)})_{n\in \mathbb{N}}$ is a
sequence of $\alpha _{F}$-measurable fields, see Theorem \ref{coherence from
fields} (i). Observing that the point-wise limit of a sequence of measurable
functions is measurable, and using again (\ref{GNS innerprod}) and Theorem %
\ref{coherence from fields} (i)-(ii), one checks that (the equivalence class
of) the coherence $\alpha _{F}$ does not depend on the particular choice of
the dense countable set $\{X^{(n)}\}_{n\in \mathbb{N}}\subseteq \mathcal{X}$
originally taken.

Finally, since $\{X^{(n)}\}_{n\in \mathbb{N}}\subseteq \mathcal{X}$ is any
dense set, by (\ref{GNS innerprod})-(\ref{def a la con}), we have 
\begin{equation}
\langle v_{\rho }^{(n)},\pi _{\rho }(A)v_{\rho }^{(m)}\rangle _{\mathcal{H}%
_{\rho }}=\rho ((X^{(n)})^{\ast }AX^{(m)})\text{ },\text{\qquad }n,m\in 
\mathbb{N},\ A\in \mathcal{X}\text{\ },  \label{measruability}
\end{equation}%
and the sequence $(v^{(n)})_{n\in \mathbb{N}}$ of vector fields satisfies
Conditions (a)-(b) of Theorem \ref{Measurable families of Hilbert spaces -
Equivalent formulation}. It follows from Theorem \ref{coherence from fields}
and Definition \ref{Field of representation} that $\pi _{\rho }$\ is $\alpha
_{F}$-measurable.
\end{proof}

In view of Section \ref{Positive Functionals}, take now any (weak$^{\ast }$)
closed set $F\in \Sigma _{E}$. The set of all positive Radon measures on $%
(F,\Sigma _{F})$ is denoted by $\mathrm{M}(F)$, each element of which
corresponds (one-to-one) to a positive regular Borel measure. In fact, by
separability of the $C^{\ast }$-algebra $\mathcal{X}$, $E$ is metrizable and
thus, any positive finite Borel measure on $(E,\Sigma _{E})$ is regular, as
already explained for $\mathcal{X}=\mathcal{U}$.

On the one hand, by Lemma \ref{lemma direct integral von neumann2 copy(1)},
one can construct a direct integral triplet $(\mathcal{H}_{F}^{\oplus },\pi
_{F}^{\oplus },\Omega _{F}^{\oplus })$ associated with each closed set $F\in
\Sigma _{E}$ and any positive Radon measure $\mu \in \mathrm{M}(F)$: \bigskip

\noindent \underline{($\mathcal{H}^{\oplus }$):} $\mathcal{H}_{F}^{\oplus }$
is the direct integral Hilbert space associated with the measurable family $%
\mathcal{H}_{F}$. See Definition \ref{Direct integrals of Hilbert spaces}.
Since $\mathcal{H}_{F}$ has a canonical (equivalence class of) coherence $%
\alpha _{F}$, we use the notation%
\begin{equation}
\mathcal{H}_{F}^{\oplus }\equiv \int_{F}^{\alpha _{F}}\mathcal{H}_{\rho }\mu
(\mathrm{d}\rho )\equiv \int_{F}\mathcal{H}_{\rho }\mu (\mathrm{d}\rho )\ .
\label{direct integral GNS}
\end{equation}%
Similarly, we say that any vector field (respectively operator field) $%
v\doteq (v_{\rho })_{\rho \in F}$ (respectively $A\doteq (A_{\rho })_{\rho
\in F})$) over $\mathcal{H}_{F}$ is measurable\ whenever it is $\alpha _{F}$%
-measurable. When $F$\ is a closed set, $(F,\Sigma _{F},\mu )$ is standard
and hence, $\mathcal{H}_{F}^{\oplus }$ is in this case separable, by Theorem %
\ref{Nielsen Th 7.1 copy(1)}. \bigskip

\noindent \underline{($\pi ^{\oplus }$):} $\pi _{F}^{\oplus }$ is the direct
integral (representation)\ of the ($\alpha _{F}$-) measurable representation
field $\pi _{F}$, see (\ref{direct integral representation1})-(\ref{direct
integral representation2}). Similar to (\ref{direct integral GNS}), we use
the notation%
\begin{equation}
\pi _{F}^{\oplus }\equiv \int_{F}^{\alpha _{F}}\pi _{\rho }\mu (\mathrm{d}%
\rho )\equiv \int_{F}\pi _{\rho }\mu (\mathrm{d}\rho )\ .
\label{direct integral representation states}
\end{equation}%
It is a representation of the separable unital $C^{\ast }$-algebra $\mathcal{%
X}$\ on the direct integral Hilbert space $\mathcal{H}_{F}^{\oplus }$.
\bigskip

\noindent \underline{($\Omega ^{\oplus }$):} The vector $\Omega _{F}^{\oplus
}$ is the element of the direct integral Hilbert space $\mathcal{H}%
_{F}^{\oplus }$ defined by%
\begin{equation}
\Omega _{F}^{\oplus }\doteq \int_{F}^{\alpha _{F}}\Omega _{\rho }\mu (%
\mathrm{d}\rho )\equiv \int_{F}\Omega _{\rho }\mu (\mathrm{d}\rho )\in 
\mathcal{H}_{F}^{\oplus }\ .  \label{posible cyclic vector}
\end{equation}%
This vector is well-defined because positive Radon measures on compact
spaces are always finite. In contrast with the usual GNS\ representation,
note that $\Omega _{F}^{\oplus }$ is generally not a cyclic vector for $\pi
_{F}^{\oplus }(\mathcal{X})$.\bigskip

On the other hand, as explained around (\ref{choquet0}), a positive Radon
measure $\mu \in \mathrm{M}(F)$ represents a unique positive functional $%
\rho _{\mu }\in \mathcal{X}^{\ast }$, which is defined by 
\begin{equation*}
\rho _{\mu }\left( A\right) \doteq \int_{F}\rho \left( A\right) \mu \left( 
\mathrm{d}\rho \right) \ ,\qquad A\in \mathcal{X}\ .
\end{equation*}%
$\rho _{\mu }$ is called the barycenter of $\mu \in \mathrm{M}(F)$ and, as
any positive functional of $\mathcal{X}^{\ast }$, it has a GNS\
representation. Clearly, the triplet $(\mathcal{H}_{F}^{\oplus },\pi
_{F}^{\oplus },\Omega _{F}^{\oplus })$ can be used to represent the positive
functional $\rho _{\mu }$, in the sense that%
\begin{equation*}
\rho _{\mu }(A)=\langle \Omega _{F}^{\oplus },\pi _{F}^{\oplus }(A)\Omega
_{F}^{\oplus }\rangle _{\mathcal{H}_{F}^{\oplus }}\ ,\qquad A\in \mathcal{X}%
\ .
\end{equation*}%
However, as already mentioned, $\Omega _{F}^{\oplus }$ is not necessarily a
cyclic vector for $\pi _{F}^{\oplus }(\mathcal{X})$ and, in general, $(%
\mathcal{H}_{F}^{\oplus },\pi _{F}^{\oplus },\Omega _{F}^{\oplus })$ is only
quasi-equivalent (see \cite[Theorem 2.4.26]{BrattelliRobinsonI}) to any
cyclic representation of $\rho _{\mu }$:

\begin{lemma}[Direct integral and GNS representations]
\label{Direct integral and GNS lemma}\mbox{ }\newline
Let $\mathcal{X}$ be a separable unital $C^{\ast }$-algebra, $F\in \Sigma
_{E}$ any (weak$^{\ast })$ Borel subset of states and $\mu \in \mathrm{M}(F)$
a positive Radon measure with barycenter $\rho _{\mu }\in \mathcal{X}^{\ast
} $ and associated cyclic representation$\ (\mathcal{H}_{\rho _{\mu }},\pi
_{\rho _{\mu }},\Omega _{\rho _{\mu }})$. Then, $\pi _{F}^{\oplus }$ is
quasi-equivalent to $\pi _{\rho _{\mu }}$. It is equivalent to $\pi _{\rho
_{\mu }}$ iff $\Omega _{F}^{\oplus }$ is a cyclic vector for $\pi
_{F}^{\oplus }(\mathcal{X})$.
\end{lemma}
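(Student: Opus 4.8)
The statement to prove is Lemma~\ref{Direct integral and GNS lemma}: for a positive Radon measure $\mu\in\mathrm{M}(F)$ with barycenter $\rho_\mu$, the direct-integral representation $\pi_F^\oplus$ is quasi-equivalent to the GNS representation $\pi_{\rho_\mu}$, and is (unitarily) equivalent to it precisely when $\Omega_F^\oplus$ is cyclic for $\pi_F^\oplus(\mathcal{X})$. The plan is to first dispose of the easy equivalence characterization, then establish quasi-equivalence by exhibiting a common refinement, namely by passing from $\Omega_F^\oplus$ to the subspace it generates.

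First I would observe that the triplet $(\mathcal{H}_F^\oplus,\pi_F^\oplus,\Omega_F^\oplus)$ satisfies the defining GNS relation $\rho_\mu(A)=\langle\Omega_F^\oplus,\pi_F^\oplus(A)\Omega_F^\oplus\rangle_{\mathcal{H}_F^\oplus}$, which follows immediately from the fiberwise identity $\rho(A)=\langle\Omega_\rho,\pi_\rho(A)\Omega_\rho\rangle_{\mathcal{H}_\rho}$ (Equation~(\ref{equality})) integrated against $\mu$, using the definition of the direct-integral scalar product. Let $\mathcal{K}\doteq\overline{\pi_F^\oplus(\mathcal{X})\Omega_F^\oplus}\subseteq\mathcal{H}_F^\oplus$ be the cyclic subspace generated by $\Omega_F^\oplus$. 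Since $\mathfrak{1}\in\mathcal{X}$, the vector $\Omega_F^\oplus$ lies in $\mathcal{K}$, and $\mathcal{K}$ is invariant under $\pi_F^\oplus(\mathcal{X})$; hence the subrepresentation $\pi_F^\oplus|_{\mathcal{K}}$ together with $\Omega_F^\oplus$ is a \emph{cyclic} representation satisfying the GNS relation for $\rho_\mu$. By uniqueness of the GNS construction up to unitary equivalence (\cite[Theorem~2.3.16]{BrattelliRobinsonI}, trivially extended to positive functionals), $\pi_F^\oplus|_{\mathcal{K}}$ is unitarily equivalent to $\pi_{\rho_\mu}$. This already gives the second assertion: $\pi_F^\oplus$ is unitarily equivalent to $\pi_{\rho_\mu}$ if and only if $\mathcal{K}=\mathcal{H}_F^\oplus$, i.e. iff $\Omega_F^\oplus$ is cyclic for $\pi_F^\oplus(\mathcal{X})$.

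For quasi-equivalence, recall that two representations are quasi-equivalent iff they have unitarily equivalent subrepresentations with the same central supports, equivalently iff there is a $\ast$-isomorphism between the generated von Neumann algebras intertwining them (\cite[Theorem~2.4.26]{BrattelliRobinsonI}). The representation $\pi_F^\oplus$ is a direct integral over $(F,\Sigma_F,\mu)$ of the GNS representations $\pi_\rho$, and $\pi_{\rho_\mu}\cong\pi_F^\oplus|_{\mathcal{K}}$. The key point is that the orthogonal projection $P$ onto $\mathcal{K}$ is decomposable and lies in the commutant $\pi_F^\oplus(\mathcal{X})'$; indeed $\mathcal{K}$ is cyclic and $\Omega_F^\oplus=\int_F^{\alpha_F}\Omega_\rho\,\mu(\mathrm{d}\rho)$ is a measurable vector field each of whose fibers $\Omega_\rho$ is \emph{cyclic} in $\mathcal{H}_\rho$ for $\pi_\rho(\mathcal{X})$. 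Since each fiber vector is cyclic, the projection $P$ acts fiberwise as the identity on a $\mu$-conull set, so $P=\mathbf{1}_{\mathcal{H}_F^\oplus}$ modulo the part of the base space where the fiber Hilbert space is trivial — more precisely, one shows $\mathcal{K}$ contains, for every $X\in\mathcal{X}$, the field $([X]_\rho)_{\rho\in F}$, since these are $\pi_F^\oplus(X)\Omega_F^\oplus$. As such fields are dense in each fiber and generate $\mathcal{H}_F^\oplus$ by the measurable selection underlying Lemma~\ref{lemma direct integral von neumann2 copy(1)}, we actually get $\mathcal{K}=\mathcal{H}_F^\oplus$, and $\pi_F^\oplus$ is even \emph{equivalent} to $\pi_{\rho_\mu}$ — wait, this would contradict the "generally not cyclic" remark, so the subtlety is that $\pi_F^\oplus(X)\Omega_F^\oplus$ is the field $(\pi_\rho(X)\Omega_\rho)_\rho=([X]_\rho)_\rho$ only for $X$ \emph{constant} in $\rho$; decomposable operators that vary measurably with $\rho$ are \emph{not} in $\pi_F^\oplus(\mathcal{X})$. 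So $\mathcal{K}$ need not be everything. The correct argument for quasi-equivalence is therefore: the commutant $\pi_F^\oplus(\mathcal{X})'$ contains the algebra of diagonal (decomposable scalar) operators $L^\infty(F,\mu)$, and $\Omega_F^\oplus$ is separating for this abelian algebra of diagonalizable operators on $\mathcal{K}$... The cleanest route is to invoke that $P\in\pi_F^\oplus(\mathcal{X})'$ and that $P$ has central support equal to $\mathbf{1}$: any nonzero central projection $Q\in\pi_F^\oplus(\mathcal{X})'\cap\pi_F^\oplus(\mathcal{X})''$ is decomposable, $Q=\int_F^{\alpha_F}Q_\rho\,\mu(\mathrm{d}\rho)$ with each $Q_\rho$ in $\pi_\rho(\mathcal{X})'\cap\pi_\rho(\mathcal{X})''$, and since $\Omega_\rho$ is cyclic hence separating for $\pi_\rho(\mathcal{X})'$, we get $Q_\rho\Omega_\rho\neq0$ on the support of $Q$, so $Q\Omega_F^\oplus\neq0$, i.e. $QP\neq0$. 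Thus the central support of $P$ is $\mathbf{1}$, and by \cite[Theorem~2.4.26]{BrattelliRobinsonI} the subrepresentation on $\mathcal{K}\cong\pi_{\rho_\mu}$ is quasi-equivalent to $\pi_F^\oplus$.

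\textbf{Main obstacle.} The delicate part is the central-support computation: one must pass the cyclicity (hence separating property) of each fiber vector $\Omega_\rho$ through the direct integral to conclude that $P=\overline{\pi_F^\oplus(\mathcal{X})\Omega_F^\oplus}$ has full central support in $\pi_F^\oplus(\mathcal{X})''$. This requires the decomposition theory of von Neumann algebras over $(F,\Sigma_F,\mu)$ reviewed in Section~\ref{app direct integrals} — in particular that central projections of $\pi_F^\oplus(\mathcal{X})'$ decompose as measurable fields of central projections of the fibers, and that a measurable field of nonzero vectors pairs nontrivially with a measurable field of nonzero operators on a set of positive measure. Modulo that machinery, the argument is the standard GNS-uniqueness plus quasi-equivalence dichotomy, and everything else is bookkeeping with the definitions of $\mathcal{H}_F^\oplus$, $\pi_F^\oplus$, $\Omega_F^\oplus$.
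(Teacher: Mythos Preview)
Your approach matches the paper's: define the projection $P$ onto the cyclic subspace $\mathcal{K}=\overline{\pi_F^\oplus(\mathcal{X})\Omega_F^\oplus}$, identify $\pi_F^\oplus|_{\mathcal{K}}$ with $\pi_{\rho_\mu}$ via GNS uniqueness \cite[Theorem~2.3.16]{BrattelliRobinsonI}, and deduce quasi-equivalence from the fact that $B\mapsto BP$ is a $\ast$-isomorphism $[\pi_F^\oplus(\mathcal{X})]''\to[\pi_F^\oplus|_{\mathcal{K}}(\mathcal{X})]''$. The paper asserts this isomorphism tersely (citing only weak-operator continuity of $P_F(\cdot)P_F$); your central-support argument is the correct way to supply the missing injectivity, and your self-correction (realizing that the constant sections $([X]_\rho)_\rho$ need not span $\mathcal{H}_F^\oplus$) is exactly right.

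One correction is needed in the central-support step. You claim that the fiber components $Q_\rho$ of a central projection $Q\in[\pi_F^\oplus(\mathcal{X})]'\cap[\pi_F^\oplus(\mathcal{X})]''$ lie in $[\pi_\rho(\mathcal{X})]'\cap[\pi_\rho(\mathcal{X})]''$. This is not justified here: it would require the diagonal algebra $N_F$ to sit inside the center of $[\pi_F^\oplus(\mathcal{X})]''$, which is precisely what is \emph{not} assumed (cf.\ Corollary~\ref{lemma direct integral von neumann6}). What one can conclude, from $Q\in[\pi_F^\oplus(\mathcal{X})]'$ together with the countable-dense-subset argument used in the proof of Theorem~\ref{thm decompo repre}, is only that $Q_\rho\in[\pi_\rho(\mathcal{X})]'$ for $\mu$-a.e.\ $\rho$. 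Fortunately this is exactly what your argument actually uses: $\Omega_\rho$ is cyclic for $\pi_\rho(\mathcal{X})$, hence separating for the commutant $[\pi_\rho(\mathcal{X})]'$, so $Q_\rho\Omega_\rho=0$ forces $Q_\rho=0$ a.e., and therefore $Q=0$. With this fix the proof goes through.
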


\begin{proof}
Fix all parameters of the lemma. Let $P_{F}$ be the orthogonal projection on 
$\mathcal{H}_{F}^{\oplus }$ whose range $\mathrm{ran}P_{F}$ is the closure
of the subspace $\pi _{F}^{\oplus }(\mathcal{X})\Omega _{F}^{\oplus }$: 
\begin{equation*}
\mathrm{ran}P_{F}\doteq \overline{\left\{ \pi _{F}^{\oplus }(A)\Omega
_{F}^{\oplus }:A\in \mathcal{X}\right\} }\subseteq \mathcal{H}_{F}^{\oplus
}\ .
\end{equation*}%
Clearly, $P_{F}\in \lbrack \pi _{F}^{\oplus }(\mathcal{X})]^{\prime }$ and $%
P_{F}\Omega _{F}^{\oplus }=\Omega _{F}^{\oplus }$. Therefore, the mapping $%
A\mapsto \pi _{F}^{\oplus }(A)|_{\mathrm{ran}P_{F}}$ from $\mathcal{X}$ to $%
\mathcal{B}(P_{F}\mathcal{H}_{F}^{\oplus })$ defines a representation $%
\tilde{\pi}_{F}^{\oplus }$ of $\mathcal{X}$ on the Hilbert space $P_{F}%
\mathcal{H}_{F}^{\oplus }$. Additionally, there is a $\ast $-isomorphism $%
\gimel $ from the von Neumann algebra $[\pi _{F}^{\oplus }(\mathcal{X}%
)]^{\prime \prime }$ to $[\tilde{\pi}_{F}^{\oplus }(\mathcal{X})]^{\prime
\prime }$ such that 
\begin{equation*}
\gimel \left( \pi _{F}^{\oplus }(A)\right) \doteq \pi _{F}^{\oplus }(A)|_{%
\mathrm{ran}P_{F}}=\tilde{\pi}_{F}^{\oplus }(A)\ ,\qquad A\in \mathcal{X}\ .
\end{equation*}%
This follows from the weak-operator continuity of $P_{F}(\cdot )P_{F}$. By 
\cite[Theorem 2.4.26]{BrattelliRobinsonI}, the representations $\pi
_{F}^{\oplus }$ and $\tilde{\pi}_{F}^{\oplus }$ are thus quasi-equivalent.
On the other hand, by construction, $(P_{F}\mathcal{H}_{F}^{\oplus },\tilde{%
\pi}_{F}^{\oplus },\Omega _{F}^{\oplus })$ is a cyclic representation for
the positive functional $\rho _{\mu }$ and is thus spatially, or unitarily,
equivalent to $\pi _{\rho _{\mu }}$, by \cite[Theorem 2.3.16]%
{BrattelliRobinsonI} (which can trivially be extended to any positive
functional of $\mathcal{X}^{\ast }$).
\end{proof}

\noindent Unless $\Omega _{F}^{\oplus }$ is a cyclic vector for $\pi
_{F}^{\oplus }(\mathcal{X})$, i.e., the orthogonal projection $P_{F}$ of the
last proof is the identity operator on $\mathcal{H}_{F}^{\oplus }$, the
triplet $(\mathcal{H}_{F}^{\oplus },\pi _{F}^{\oplus },\Omega _{F}^{\oplus
}) $ is not spatially equivalent to any cyclic representation of $\rho _{\mu
}$. In the following we discuss necessary and sufficient conditions on the
positive Radon measure $\mu $ for $\Omega _{F}^{\oplus }$ to be cyclic, in
order to have a direct integral decomposition of the cyclic\ representation
of $\rho _{\mu }$ as $(\mathcal{H}_{F}^{\oplus },\pi _{F}^{\oplus },\Omega
_{F}^{\oplus })$.

For each $\mu \in \mathrm{M}(F)$, we define its restriction $\mu _{\mathfrak{%
B}}\in \mathrm{M}(F)$ to any Borel set $\mathfrak{B}\in \Sigma _{F}$ by 
\begin{equation*}
\mu _{\mathfrak{B}}\left( \mathfrak{B}_{0}\right) \doteq \mu \left( 
\mathfrak{B}_{0}\cap \mathfrak{B}\right) \ ,\qquad \mathfrak{B}_{0}\in
\Sigma _{F}\ .
\end{equation*}%
See (\ref{restriction radon}). If $\Omega _{F}^{\oplus }$ is a cyclic vector
for $\pi _{F}^{\oplus }(\mathcal{X})$ then one easily checks that $\Omega _{%
\mathfrak{B}}^{\oplus }$ is also cyclic for $\pi _{\mathfrak{B}}^{\oplus }(%
\mathcal{X})$\ and so, $(\mathcal{H}_{\mathfrak{B}}^{\oplus },\pi _{%
\mathfrak{B}}^{\oplus },\Omega _{\mathfrak{B}}^{\oplus })$ is a cyclic
representation of the barycenter $\rho _{\mu _{\mathfrak{B}}}$ of the
restricted positive Radon measure $\mu _{\mathfrak{B}}\in \mathrm{M}(F)$. In
particular, for all Borel sets $\mathfrak{B}\in \Sigma _{F}$, 
\begin{equation}
\left( \mathcal{H}_{\rho _{\mu _{\mathfrak{B}}}}\oplus \mathcal{H}_{\rho
_{\mu _{F\backslash \mathfrak{B}}}},\ \pi _{\rho _{\mu _{\mathfrak{B}%
}}}\oplus \pi _{\rho _{\mu _{F\backslash \mathfrak{B}}}}\ ,\ \Omega _{\rho
_{\mu _{\mathfrak{B}}}}\oplus \Omega _{\rho _{\mu _{F\backslash \mathfrak{B}%
}}}\right)  \label{decomposition}
\end{equation}%
is a cyclic representation of $\rho _{\mu }$. This motivates the following
definition:

\begin{definition}[Orthogonal measures]
\label{Orthogonal measures}\mbox{ }\newline
Let $\mathcal{X}$ be a separable unital $C^{\ast }$-algebra and $F\subseteq 
\mathcal{X}^{\ast }$ any weak$^{\ast }$-closed subset of states. A positive
Radon measure $\mu \in \mathrm{M}(F)$ is orthogonal whenever, for all Borel
sets $\mathfrak{B}\in \Sigma _{F}$, (\ref{decomposition}) is a cyclic
representation of its barycenter $\rho _{\mu }\in \mathcal{X}^{\ast }$,
i.e., $\rho _{\mu _{\mathfrak{B}}}\perp \rho _{\mu _{F\backslash \mathfrak{B}%
}}$ (see around (\ref{orthogonality})).
\end{definition}

\noindent As already explained, if $(\mathcal{H}_{F}^{\oplus },\pi
_{F}^{\oplus },\Omega _{F}^{\oplus })$ is a cyclic representation of $\rho
_{\mu }$\ then $\mu $ is an orthogonal measure. We prove below that this
orthogonality property is also a sufficient condition for the cyclicity of $%
\Omega _{F}^{\oplus }$.

The positive Radon measures we are interested in concern those coming from
the Choquet theorem \cite[Theorem 10.18]{BruPedra2}, which allow to
decompose states of a compact convex set into extreme ones. Such measures
are always probability measures, i.e., normalized positive Radon measures.
The subset of all probability measures on $(F,\Sigma _{F})$ is denoted by $%
\mathrm{M}_{1}(F)$. So, for simplicity, we consider, from now on, only
probability measures $\mu \in \mathrm{M}_{1}(F)$ on $F$.

Recall that $L^{\infty }(F,\mu )$ is the space of all (equivalence classes
of) essentially bounded measurable complex-valued functions on $F$
associated with the measure space $(F,\Sigma _{F},\mu )$. The $\mathrm{ess}%
\sup $ norm of $L^{\infty }(F,\mu )$ is denoted by $\Vert \cdot \Vert
_{\infty }$. We give a first, very useful, lemma (cf. \cite[Lemma 4.1.21]%
{BrattelliRobinsonI}), similar to Theorem \ref{Nielsen Th 7.1 copy(2)},
which links $L^{\infty }(F,\mu )$ with the commutant $[\pi _{\rho _{\mu }}(%
\mathcal{X})]^{\prime }$:

\begin{lemma}[Bounded measurable functions and the GNS\ representation]
\label{lemma chi mu}\mbox{ }\newline
Let $\mathcal{X}$ be a separable unital $C^{\ast }$-algebra and $F\subseteq 
\mathcal{X}^{\ast }$ any weak$^{\ast }$-closed subset of states. For any
probability measure $\mu \in \mathrm{M}_{1}(F)$, there is a unique linear
map $\varkappa _{\mu }:L^{\infty }(F,\mu )\rightarrow \lbrack \pi _{\rho
_{\mu }}(\mathcal{X})]^{\prime }$ such that 
\begin{equation}
\int_{F}f(\rho )\rho (A)\mu (\mathrm{d}\rho )=\left\langle \Omega _{\rho
_{\mu }},\pi _{\rho _{\mu }}(A)\varkappa _{\mu }(f)\Omega _{\rho _{\mu
}}\right\rangle _{\mathcal{H}_{\rho _{\mu }}}\ ,\qquad A\in \mathcal{X}\ ,
\label{sfsdfsfdsdfs}
\end{equation}%
and $\left\Vert \varkappa _{\mu }(f)\right\Vert _{\mathcal{B}(\mathcal{H}%
_{\rho _{\mu }})}\leq \left\Vert f\right\Vert _{\infty }$ for all $f\in
L^{\infty }(F,\mu )$, where $(\mathcal{H}_{\rho _{\mu }},\pi _{\rho _{\mu
}},\Omega _{\rho _{\mu }})$ is any cyclic representation of the barycenter $%
\rho _{\mu }$ of $\mu $. Additionally, $\varkappa _{\mu }$ is unital,
positivity-preserving and, for all $f\in L^{\infty }(F,\mu )$, there is a
unique $\varkappa _{\mu }(f)\in \mathcal{B}(\mathcal{H}_{\rho _{\mu }})$
satisfying (\ref{sfsdfsfdsdfs}).
\end{lemma}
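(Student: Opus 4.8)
The plan is to construct $\varkappa_\mu(f)$ first for continuous $f$ and then extend by density and monotone-class arguments. Concretely, I would start by noting that for $f \in C(F;\mathbb{C})$ the formula
\begin{equation*}
\omega_f(A) \doteq \int_F f(\rho)\,\rho(A)\,\mu(\mathrm{d}\rho)\ ,\qquad A \in \mathcal{X}\ ,
\end{equation*}
defines a bounded linear functional on $\mathcal{X}$ with $\|\omega_f\|_{\mathcal{X}^\ast} \le \|f\|_\infty$, and that $\omega_f$ is dominated (in absolute value, as a form) by $\|f\|_\infty \rho_\mu$. A standard argument — this is essentially \cite[Lemma 4.1.21 and Proposition 2.3.24]{BrattelliRobinsonI}, or can be reproduced directly — shows that any functional of the form $A \mapsto \int_F f(\rho)\rho(A)\mu(\mathrm{d}\rho)$ is represented in the GNS space of $\rho_\mu$ by a (unique) operator $T_f \in [\pi_{\rho_\mu}(\mathcal{X})]'$ with $\|T_f\|_{\mathcal{B}(\mathcal{H}_{\rho_\mu})} \le \|f\|_\infty$, via $\omega_f(A) = \langle \Omega_{\rho_\mu}, \pi_{\rho_\mu}(A)\,T_f\,\Omega_{\rho_\mu}\rangle_{\mathcal{H}_{\rho_\mu}}$; uniqueness follows from cyclicity of $\Omega_{\rho_\mu}$, since $T_f \Omega_{\rho_\mu}$ is determined and $T_f$ commutes with $\pi_{\rho_\mu}(\mathcal{X})$. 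The key point making $T_f$ land in the commutant is that $f(\rho)\rho(AB) = f(\rho)\rho(A)\cdot\big(\text{something}\big)$ fails in general, so instead one uses the Cauchy--Schwarz/domination estimate $|\omega_f(A^\ast B)| \le \|f\|_\infty \rho_\mu(A^\ast A)^{1/2}\rho_\mu(B^\ast B)^{1/2}$ to define a bounded sesquilinear form on $\mathcal{X}/\mathcal{L}_{\rho_\mu}$, hence a bounded operator, and then checks commutation with each $\pi_{\rho_\mu}(C)$ using that $\rho_\mu$ is a trace-like pairing only through the barycenter structure — here one verifies $\langle [X]_{\rho_\mu}, T_f\,[CY]_{\rho_\mu}\rangle = \langle [C^\ast X]_{\rho_\mu}, T_f\,[Y]_{\rho_\mu}\rangle$ directly from the integral formula.

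Next I would set $\varkappa_\mu(f) \doteq T_f$ for continuous $f$ and observe linearity, the norm bound, unitality ($\varkappa_\mu(1) = \mathbf{1}_{\mathcal{H}_{\rho_\mu}}$, since then $\omega_1 = \rho_\mu$), and positivity-preservation (if $f \ge 0$ then $\omega_f$ is a positive functional dominated by $\|f\|_\infty\rho_\mu$, forcing $T_f \ge 0$ by the standard correspondence between positive functionals below $\rho_\mu$ and positive operators in $[\pi_{\rho_\mu}(\mathcal{X})]'$ with norm at most one — again \cite[Theorem 2.3.19]{BrattelliRobinsonI}-type reasoning). To pass from $C(F;\mathbb{C})$ to $L^\infty(F,\mu)$, I would use that $\varkappa_\mu$ is positive and bounded on the dense-in-$L^1$ subspace $C(F;\mathbb{C})$, extend it first to bounded pointwise limits of uniformly bounded sequences of continuous functions via the monotone convergence / dominated convergence theorem applied inside the integral on the left of \eqref{sfsdfsfdsdfs} and the $\sigma$-weak (equivalently weak-operator, on bounded sets) convergence of $T_{f_n}$ on the right, and then invoke a monotone-class argument to reach all of $L^\infty(F,\mu)$. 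Well-definedness on equivalence classes is immediate since $f = 0$ $\mu$-a.e.\ gives $\omega_f = 0$, hence $T_f\Omega_{\rho_\mu} = 0$, hence $T_f = 0$ by cyclicity.

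Finally I would record that the resulting $\varkappa_\mu$ is independent of the choice of cyclic representation of $\rho_\mu$ (all such are unitarily equivalent by \cite[Theorem 2.3.16]{BrattelliRobinsonI}, and the defining relation \eqref{sfsdfsfdsdfs} transports covariantly under the intertwining unitary), and that uniqueness of the linear map $\varkappa_\mu$ with property \eqref{sfsdfsfdsdfs} follows because, for fixed $f$, \eqref{sfsdfsfdsdfs} determines $\langle \Omega_{\rho_\mu}, \pi_{\rho_\mu}(A)\varkappa_\mu(f)\Omega_{\rho_\mu}\rangle$ for all $A$, i.e.\ determines $\varkappa_\mu(f)\Omega_{\rho_\mu}$ (cyclicity), and $\varkappa_\mu(f) \in [\pi_{\rho_\mu}(\mathcal{X})]'$ is then determined on the dense set $\pi_{\rho_\mu}(\mathcal{X})\Omega_{\rho_\mu}$. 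The main obstacle, and the step deserving the most care, is the verification that $T_f$ genuinely lies in the commutant $[\pi_{\rho_\mu}(\mathcal{X})]'$ rather than merely being a bounded operator representing $\omega_f$ — this is exactly where the barycentric structure of $\mu$ (as opposed to an arbitrary positive functional dominated by $\rho_\mu$) is used, and where one must be careful that the integrand $f(\rho)\rho(\cdot)$ interacts correctly with left multiplication; everything else is the routine functional-analytic packaging of \cite[Lemma 4.1.21]{BrattelliRobinsonI} adapted to a general weak$^\ast$-closed set $F$ of states.
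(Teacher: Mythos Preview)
Your proposal is correct, and the core mechanism --- defining a bounded sesquilinear form on $\pi_{\rho_\mu}(\mathcal{X})\Omega_{\rho_\mu}$ via the Cauchy--Schwarz estimate $|\omega_f(A^\ast B)| \le \|f\|_\infty\,\rho_\mu(A^\ast A)^{1/2}\rho_\mu(B^\ast B)^{1/2}$ and reading off $\varkappa_\mu(f)$ from it --- is exactly what the paper does. The difference is that the paper carries this out \emph{directly for every} $f\in L^\infty(F,\mu)$ in one stroke, with no preliminary restriction to continuous functions and no monotone-class extension. Your detour through $C(F;\mathbb{C})$ followed by a density/monotone-class argument is sound but unnecessary: nothing in the sesquilinear-form construction or in the bound uses continuity of $f$, only essential boundedness and measurability, so the two-step approach buys nothing and costs some extra bookkeeping.

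Also, you over-weight the commutant verification. It is not a delicate point requiring the barycentric structure in any special way: once $\varkappa_\mu(f)$ is defined by the sesquilinear form, the identity
\[
\langle \pi_{\rho_\mu}(A)\Omega_{\rho_\mu},\,\varkappa_\mu(f)\,\pi_{\rho_\mu}(CB)\Omega_{\rho_\mu}\rangle
=\int_F f(\rho)\,\rho(A^\ast CB)\,\mu(\mathrm{d}\rho)
=\langle \pi_{\rho_\mu}(C^\ast A)\Omega_{\rho_\mu},\,\varkappa_\mu(f)\,\pi_{\rho_\mu}(B)\Omega_{\rho_\mu}\rangle
\]
gives $[\varkappa_\mu(f),\pi_{\rho_\mu}(C)]=0$ immediately from associativity in $\mathcal{X}$ and cyclicity of $\Omega_{\rho_\mu}$; this is what the paper means by ``elementary computations'' (referring to \cite[Proof of Theorem 2.3.19]{BrattelliRobinsonI}). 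Positivity and unitality are then one-line checks, as you note.
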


\begin{proof}
Let $\mu \in \mathrm{M}_{1}(F)$ with $(\mathcal{H}_{\rho _{\mu }},\pi _{\rho
_{\mu }},\Omega _{\rho _{\mu }})$ being a cyclic representation of its
barycenter $\rho _{\mu }$. Choose any (essentially) bounded function $f\in
L^{\infty }(F,\mu )$. Observe that 
\begin{equation*}
(\pi _{\rho _{\mu }}(A)\Omega _{\rho _{\mu }},\pi _{\rho _{\mu }}(B)\Omega
_{\rho _{\mu }})\mapsto \int_{F}f(\rho )\rho (A^{\ast }B)\mu (\mathrm{d}\rho
)\ ,\qquad A,B\in \mathcal{X}\ ,
\end{equation*}%
defines a unique sesquilinear form on $\mathcal{H}_{\rho _{\mu }}$ bounded
by $\left\Vert f\right\Vert _{\infty }$. Therefore, there is a unique $%
\varkappa _{\mu }(f)\in \mathcal{B}(\mathcal{H}_{\rho _{\mu }})$ such that $%
\left\Vert \varkappa _{\mu }(f)\right\Vert _{\mathcal{B}(\mathcal{H}_{\rho
_{\mu }})}\leq \left\Vert f\right\Vert _{\infty }$ and satisfying (\ref%
{sfsdfsfdsdfs}). Clearly, if $f\geq 0$ then $\varkappa _{\mu }(f)\geq 0$ and 
$\varkappa _{\mu }(f)=1$ when $f=1$. Now, elementary computations show that $%
\varkappa _{\mu }(f)\in \lbrack \pi _{\rho _{\mu }}(\mathcal{X})]^{\prime }$%
. See \cite[Proof of Theorem 2.3.19]{BrattelliRobinsonI} for more details.
\end{proof}

If $(\mathcal{H}_{F}^{\oplus },\pi _{F}^{\oplus },\Omega _{F}^{\oplus })$ is
equivalent to any cyclic representation of $\rho _{\mu }$ then the mapping $%
\varkappa _{\mu }$ of Lemma \ref{lemma chi mu} is nothing else as (up to
unitary equivalence) the $\ast $-isomorphism from $L^{\infty }(F,\mu )$ to
the abelian von Neumann algebra $N_{F}$ of diagonalizable operators on $%
\mathcal{H}_{F}^{\oplus }$, as given in Theorem \ref{Nielsen Th 7.1 copy(2)}%
. In particular, $\varkappa _{\mu }$ has to be a $\ast $-homomorphism (and
not only a linear map) whenever $(\mathcal{H}_{F}^{\oplus },\pi _{F}^{\oplus
},\Omega _{F}^{\oplus })$ is a cyclic representation of the barycenter $\rho
_{\mu }$ of $\mu \in \mathrm{M}_{1}(F)$. We exploit now this observation to
show that $(\mathcal{H}_{F}^{\oplus },\pi _{F}^{\oplus },\Omega _{F}^{\oplus
})$ is a cyclic representation of $\rho _{\mu }$ iff $\mu $ is an orthogonal
measure. This is a consequence of the Tomita theorem:

\begin{proposition}[\textquotedblleft half\textquotedblright\ Tomita's
theorem]
\label{half Tomita}\mbox{ }\newline
Let $\mathcal{X}$ be a separable unital $C^{\ast }$-algebra, $F\subseteq 
\mathcal{X}^{\ast }$ any weak$^{\ast }$-closed subset of states and $\mu \in 
\mathrm{M}_{1}(F)$ any probability measure with barycenter $\rho _{\mu }$
and associated cyclic representation$\ (\mathcal{H}_{\rho _{\mu }},\pi
_{\rho _{\mu }},\Omega _{\rho _{\mu }})$. If $\mu $ is an orthogonal
measure, then the mapping $\varkappa _{\mu }$ of Lemma \ref{lemma chi mu} is
a $\ast $-isomorphism from $L^{\infty }(F,\mu )$ to $[\pi _{\rho _{\mu }}(%
\mathcal{X})]^{\prime }$. In particular $\varkappa _{\mu }(L^{\infty }(F,\mu
))$ is an abelian von Neumann subalgebra of the commutant $[\pi _{\rho _{\mu
}}(\mathcal{X})]^{\prime }$.
\end{proposition}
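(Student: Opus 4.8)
The plan is to run the classical Tomita-type argument, in the spirit of \cite[Proposition 4.1.22]{BrattelliRobinsonI}, adapted to the weak$^{\ast }$-closed set $F$. Fix $\mu \in \mathrm{M}_{1}(F)$ orthogonal, with barycenter $\rho _{\mu }$ and cyclic representation $(\mathcal{H}_{\rho _{\mu }},\pi _{\rho _{\mu }},\Omega _{\rho _{\mu }})$, and write $\varkappa _{\mu }$ for the map of Lemma \ref{lemma chi mu}. That lemma already gives that $\varkappa _{\mu }$ is linear, unital, contractive, positivity-preserving and takes values in $[\pi _{\rho _{\mu }}(\mathcal{X})]^{\prime }$. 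In particular it is $\ast $-preserving: for real $f\in L^{\infty }(F,\mu )$ the inequalities $\Vert f\Vert _{\infty }\mathfrak{1}\pm f\geq 0$ force $\Vert f\Vert _{\infty }\mathbf{1}\pm \varkappa _{\mu }(f)\geq 0$, so $\varkappa _{\mu }(f)$ is self-adjoint, and the general case follows by splitting $f$ into real functions. It therefore remains to show that $\varkappa _{\mu }$ is multiplicative, injective, and has $\sigma $-weakly closed range.

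The main step — and the one I expect to be the real obstacle — is to prove that for every Borel set $\mathfrak{B}\in \Sigma _{F}$ the positive contraction $E_{\mathfrak{B}}\doteq \varkappa _{\mu }(\chi _{\mathfrak{B}})\in \lbrack \pi _{\rho _{\mu }}(\mathcal{X})]^{\prime }$, with $\chi _{\mathfrak{B}}$ the indicator function of $\mathfrak{B}$, is an orthogonal projection. From (\ref{sfsdfsfdsdfs}) with $f=\chi _{\mathfrak{B}}$ one gets $\rho _{\mu _{\mathfrak{B}}}(A)=\langle \Omega _{\rho _{\mu }},\pi _{\rho _{\mu }}(A)E_{\mathfrak{B}}\Omega _{\rho _{\mu }}\rangle _{\mathcal{H}_{\rho _{\mu }}}$ for all $A\in \mathcal{X}$, where $\mu _{\mathfrak{B}}$ is the restriction of $\mu $ to $\mathfrak{B}$. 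On the other hand, orthogonality of $\mu $ (Definition \ref{Orthogonal measures}) means precisely that the block triple (\ref{decomposition}) is a cyclic representation of $\rho _{\mu }=\rho _{\mu _{\mathfrak{B}}}+\rho _{\mu _{F\backslash \mathfrak{B}}}$; by uniqueness of the GNS representation up to unitary equivalence (\cite[Theorem 2.3.16]{BrattelliRobinsonI}, extended to positive functionals), there is a unitary $U$ from $\mathcal{H}_{\rho _{\mu }}$ onto $\mathcal{H}_{\rho _{\mu _{\mathfrak{B}}}}\oplus \mathcal{H}_{\rho _{\mu _{F\backslash \mathfrak{B}}}}$ intertwining $\pi _{\rho _{\mu }}$ with $\pi _{\rho _{\mu _{\mathfrak{B}}}}\oplus \pi _{\rho _{\mu _{F\backslash \mathfrak{B}}}}$ and sending $\Omega _{\rho _{\mu }}$ to $\Omega _{\rho _{\mu _{\mathfrak{B}}}}\oplus \Omega _{\rho _{\mu _{F\backslash \mathfrak{B}}}}$. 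Setting $P_{\mathfrak{B}}\doteq U^{\ast }(\mathbf{1}\oplus 0)U$, an orthogonal projection in $[\pi _{\rho _{\mu }}(\mathcal{X})]^{\prime }$, one computes $\langle \Omega _{\rho _{\mu }},\pi _{\rho _{\mu }}(A)P_{\mathfrak{B}}\Omega _{\rho _{\mu }}\rangle =\rho _{\mu _{\mathfrak{B}}}(A)$ for all $A\in \mathcal{X}$. Thus $(P_{\mathfrak{B}}-E_{\mathfrak{B}})\Omega _{\rho _{\mu }}\perp \pi _{\rho _{\mu }}(\mathcal{X})\Omega _{\rho _{\mu }}$, which is dense, so $(P_{\mathfrak{B}}-E_{\mathfrak{B}})\Omega _{\rho _{\mu }}=0$; since $P_{\mathfrak{B}}-E_{\mathfrak{B}}\in \lbrack \pi _{\rho _{\mu }}(\mathcal{X})]^{\prime }$ and $\Omega _{\rho _{\mu }}$ is cyclic, $E_{\mathfrak{B}}=P_{\mathfrak{B}}$ is a projection.

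With this in hand, multiplicativity is routine. For disjoint Borel sets $\mathfrak{B}_{1},\mathfrak{B}_{2}$, linearity gives $E_{\mathfrak{B}_{1}}+E_{\mathfrak{B}_{2}}=E_{\mathfrak{B}_{1}\cup \mathfrak{B}_{2}}$, a projection, whence $E_{\mathfrak{B}_{1}}E_{\mathfrak{B}_{2}}=0$; decomposing arbitrary $\mathfrak{B}_{1},\mathfrak{B}_{2}$ into the disjoint pieces $\mathfrak{B}_{1}\cap \mathfrak{B}_{2}$, $\mathfrak{B}_{1}\backslash \mathfrak{B}_{2}$, $\mathfrak{B}_{2}\backslash \mathfrak{B}_{1}$ then yields $E_{\mathfrak{B}_{1}}E_{\mathfrak{B}_{2}}=E_{\mathfrak{B}_{1}\cap \mathfrak{B}_{2}}=\varkappa _{\mu }(\chi _{\mathfrak{B}_{1}}\chi _{\mathfrak{B}_{2}})$. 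By linearity $\varkappa _{\mu }$ is multiplicative on simple functions, and since simple functions are $\Vert \cdot \Vert _{\infty }$-dense in $L^{\infty }(F,\mu )$ and $\varkappa _{\mu }$ is contractive (so products converge on bounded sets), $\varkappa _{\mu }$ is a unital $\ast $-homomorphism from $L^{\infty }(F,\mu )$ into $[\pi _{\rho _{\mu }}(\mathcal{X})]^{\prime }$, with automatically abelian range. Injectivity follows at once: if $\varkappa _{\mu }(f)=0$ then $\varkappa _{\mu }(|f|^{2})=\varkappa _{\mu }(f)^{\ast }\varkappa _{\mu }(f)=0$, so (\ref{sfsdfsfdsdfs}) with $A=\mathfrak{1}$ gives $\int _{F}|f(\rho )|^{2}\mu (\mathrm{d}\rho )=0$, i.e.\ $f=0$ in $L^{\infty }(F,\mu )$.

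Finally, to see that $\varkappa _{\mu }(L^{\infty }(F,\mu ))$ is a von Neumann subalgebra of $[\pi _{\rho _{\mu }}(\mathcal{X})]^{\prime }$, I would verify that $\varkappa _{\mu }$ is normal. Using (\ref{sfsdfsfdsdfs}) and $\varkappa _{\mu }(f)\in \lbrack \pi _{\rho _{\mu }}(\mathcal{X})]^{\prime }$ one has, for all $A,B\in \mathcal{X}$, the identity $\langle \pi _{\rho _{\mu }}(A^{\ast })\Omega _{\rho _{\mu }},\varkappa _{\mu }(f)\pi _{\rho _{\mu }}(B)\Omega _{\rho _{\mu }}\rangle =\int _{F}f(\rho )\rho (AB)\mu (\mathrm{d}\rho )$, whose right-hand side is weak$^{\ast }$-continuous in $f$ since $\rho \mapsto \rho (AB)$ lies in $L^{1}(F,\mu )$; as the vectors $\pi _{\rho _{\mu }}(B)\Omega _{\rho _{\mu }}$ are total and $\varkappa _{\mu }$ is bounded, $\varkappa _{\mu }$ is continuous from norm-bounded subsets of $L^{\infty }(F,\mu )$ with the weak$^{\ast }$ topology to $\mathcal{B}(\mathcal{H}_{\rho _{\mu }})$ with the weak operator topology. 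Hence $\varkappa _{\mu }$ maps the weak$^{\ast }$-compact unit ball of $L^{\infty }(F,\mu )$ onto a weakly compact, hence weakly closed, set; this set is the unit ball of the $\ast $-algebra $\varkappa _{\mu }(L^{\infty }(F,\mu ))$ (as $\varkappa _{\mu }$ is an isometry onto its range), so by Kaplansky's density theorem that $\ast $-algebra is weakly closed, i.e.\ a von Neumann subalgebra of $[\pi _{\rho _{\mu }}(\mathcal{X})]^{\prime }$, necessarily abelian. Combining the last three paragraphs establishes the proposition.
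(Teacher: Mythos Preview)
Your proof is correct and follows essentially the same route as the paper's, which in turn reproduces \cite[Proposition 4.1.22, $(1)\Rightarrow(2)$]{BrattelliRobinsonI}: use orthogonality of $\mu$ to identify $\varkappa_{\mu}(\chi_{\mathfrak{B}})$ with the orthogonal projection coming from the block decomposition (\ref{decomposition}), deduce $\varkappa_{\mu}(\chi_{\mathfrak{B}_1})\varkappa_{\mu}(\chi_{\mathfrak{B}_2})=0$ for disjoint $\mathfrak{B}_1,\mathfrak{B}_2$, extend multiplicativity by density of simple functions, and finish with injectivity via $|f|^2$. The only substantive difference is that you supply an explicit normality argument (weak$^{\ast}$-to-WOT continuity on bounded sets plus Kaplansky) to justify that the range is a von Neumann subalgebra, whereas the paper's proof stops at the $\ast$-isomorphism and leaves the ``in particular'' clause implicit.
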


\begin{proof}
For completeness, we reproduce here the proof of \cite[Proposition 4.1.22; $%
(1)\Rightarrow (2)$]{BrattelliRobinsonI}. Let $\mu \in \mathrm{M}_{1}(F)$ be
an orthogonal probability measure with barycenter $\rho _{\mu }$ and
corresponding cyclic representation$\ (\mathcal{H}_{\rho _{\mu }},\pi _{\rho
_{\mu }},\Omega _{\rho _{\mu }})$. Then, by Definition \ref{Orthogonal
measures} and \cite[Theorem 2.3.16]{BrattelliRobinsonI}, for any Borel set $%
\mathfrak{B}\in \Sigma _{F}$, there is an orthogonal projection $P_{%
\mathfrak{B}}\in \lbrack \pi _{\rho _{\mu }}(\mathcal{X})]^{\prime }$ acting
on $\mathcal{H}_{\rho _{\mu }}$ such that, for all $A\in \mathcal{X}$, 
\begin{equation}
\rho _{\mu _{\mathfrak{B}}}(A)=\left\langle \Omega _{\rho _{\mu }},\pi
_{\rho _{\mu }}(A)P_{\mathfrak{B}}\Omega _{\rho _{\mu }}\right\rangle _{%
\mathcal{H}_{\rho _{\mu }}}\qquad \text{and}\qquad \rho _{\mu _{F\backslash 
\mathfrak{B}}}=\left\langle \Omega _{\rho _{\mu }},\pi _{\rho _{\mu }}(A)(%
\mathbf{1}_{\mathcal{H}_{\rho _{\mu }}}-P_{\mathfrak{B}})\Omega _{\rho _{\mu
}}\right\rangle _{\mathcal{H}_{\rho _{\mu }}}\ .  \label{definition bis}
\end{equation}%
Therefore, if $\zeta _{\mathfrak{B}}$ denotes the characteristic function of
any Borel set $\mathfrak{B}\in \Sigma _{F}$, then, by Lemma \ref{lemma chi
mu}, $\varkappa _{\mu }(\zeta _{\mathfrak{B}})=P_{\mathfrak{B}}$ is always
an orthogonal projection and, for all $\mathfrak{B}_{1},\mathfrak{B}_{2}\in
\Sigma _{F}$, $\mathfrak{B}_{1}\mathfrak{\cap B}_{2}=\emptyset $, 
\begin{equation}
\varkappa _{\mu }\left( \zeta _{\mathfrak{B}_{1}}\right) \varkappa _{\mu
}\left( \zeta _{\mathfrak{B}_{2}}\right) =0\ .  \label{disjoint}
\end{equation}%
This last equality comes from the fact that, whenever $\mathfrak{B}_{1}%
\mathfrak{\cap B}_{2}=\emptyset $, $\zeta _{\mathfrak{B}_{1}}\leq 1-\zeta _{%
\mathfrak{B}_{2}}$, leading by $\varkappa _{\mu }(1)=\mathbf{1}_{\mathcal{H}%
_{\rho _{\mu }}}$, linearity and positivity of $\varkappa _{\mu }$, to $%
\varkappa _{\mu }(\zeta _{\mathfrak{B}_{1}})\leq \mathbf{1}_{\mathcal{H}%
_{\rho _{\mu }}}-\varkappa _{\mu }(\zeta _{\mathfrak{B}_{2}})$, in turn
implying (\ref{disjoint}). Now, for any $\mathfrak{B}_{1},\mathfrak{B}%
_{2}\in \Sigma _{F}$, we can rewrite the characteristic functions $\zeta _{%
\mathfrak{B}_{1}},\zeta _{\mathfrak{B}_{2}}$ as%
\begin{equation*}
\zeta _{\mathfrak{B}_{1}}=\zeta _{\mathfrak{B}_{1}}\zeta _{\mathfrak{B}%
_{2}}+\zeta _{\mathfrak{B}_{1}}(\zeta _{F}-\zeta _{\mathfrak{B}_{2}})\qquad 
\text{and}\qquad \zeta _{\mathfrak{B}_{2}}=\zeta _{\mathfrak{B}_{2}}\zeta _{%
\mathfrak{B}_{1}}+\zeta _{\mathfrak{B}_{2}}(\zeta _{F}-\zeta _{\mathfrak{B}%
_{1}})
\end{equation*}%
and use (\ref{disjoint}) for disjoint Borel subsets to deduce the equality%
\begin{equation}
\varkappa _{\mu }(\zeta _{\mathfrak{B}_{1}})\varkappa _{\mu }(\zeta _{%
\mathfrak{B}_{2}})=\varkappa _{\mu }(\zeta _{\mathfrak{B}_{1}}\zeta _{%
\mathfrak{B}_{2}})\ ,\qquad \mathfrak{B}_{1},\mathfrak{B}_{2}\in \Sigma
_{F}\ .  \label{disjoint2}
\end{equation}%
All functions of $L^{\infty }(F,\mu )$ can be approximated in this Banach
space by linear combination of characteristic functions and since $\varkappa
_{\mu }$ is linear and a contractive mapping (Lemma \ref{lemma chi mu}), we
deduce from (\ref{disjoint2}) that $\varkappa _{\mu }$ is a $\ast $%
-homomorphism. Using now this property, one easily checks that $\varkappa
_{\mu }(f)\Omega _{\rho _{\mu }}=0$ iff $f=0$. Since $\Omega _{\rho _{\mu }}$
is a cyclic vector and $\varkappa _{\mu }(f)\in \lbrack \pi _{\rho _{\mu }}(%
\mathcal{X})]^{\prime }$, it follows that $\varkappa _{\mu }$ is in fact a $%
\ast $-isomorphism.
\end{proof}

Proposition \ref{half Tomita} is part of the Tomita theorem, which says that 
$\mu $ is an orthogonal measure iff the mapping $\varkappa _{\mu }$ of Lemma %
\ref{lemma chi mu} is a $\ast $-homomorphism from $L^{\infty }(F,\mu )$ to $%
[\pi _{\rho _{\mu }}(\mathcal{X})]^{\prime }$. See \cite[Proposition 4.1.22]%
{BrattelliRobinsonI}. In the case that $\varkappa _{\mu }(L^{\infty }(F,\mu
))$ is an abelian von Neumann subalgebra of $[\pi _{\rho _{\mu }}(\mathcal{X}%
)]^{\prime }$, the corresponding direct integral decomposition of the
representation $\pi _{\rho _{\mu }}$ is, as expected, $\pi _{F}^{\oplus }$,
see (\ref{direct integral representation states}). In fact, we have the
following result, which refers to the Effros theorem \cite[Theorem 4.4.9]%
{BrattelliRobinsonI}:

\begin{corollary}[Effros]
\label{Effros theorem}\mbox{ }\newline
Let $\mathcal{X}$ be a separable unital $C^{\ast }$-algebra, $F\subseteq 
\mathcal{X}^{\ast }$ any weak$^{\ast }$-closed subset of states and $\mu \in 
\mathrm{M}_{1}(F)$ any probability measure. $\mu $ is an orthogonal measure
iff $\Omega _{F}^{\oplus }$ is a cyclic vector for $\pi _{F}^{\oplus }(%
\mathcal{X})$. In particular, if $\mu $ is orthogonal then $(\mathcal{H}%
_{F}^{\oplus },\pi _{F}^{\oplus },\Omega _{F}^{\oplus })$ is a cyclic
representation of its barycenter.
\end{corollary}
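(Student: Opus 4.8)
The plan is to establish the stated equivalence and then read off the final (``in particular'') assertion, which is immediate since the triplet $(\mathcal{H}_F^\oplus,\pi_F^\oplus,\Omega_F^\oplus)$ already satisfies $\rho_\mu(A)=\langle\Omega_F^\oplus,\pi_F^\oplus(A)\Omega_F^\oplus\rangle_{\mathcal{H}_F^\oplus}$, as noted before the statement. One implication has essentially been recorded already in the discussion preceding Definition \ref{Orthogonal measures}, so I would only recall it briefly: if $\Omega_F^\oplus$ is cyclic for $\pi_F^\oplus(\mathcal{X})$, then for each Borel set $\mathfrak{B}\in\Sigma_F$ the characteristic function $\zeta_{\mathfrak{B}}$ defines, through the diagonalizable operators, an orthogonal projection $P_{\mathfrak{B}}\in[\pi_F^\oplus(\mathcal{X})]'$ whose range is the copy of $\mathcal{H}_{\mathfrak{B}}^\oplus$ inside $\mathcal{H}_F^\oplus$ and with $P_{\mathfrak{B}}\Omega_F^\oplus=\Omega_{\mathfrak{B}}^\oplus$; since $P_{\mathfrak{B}}$ lies in the commutant and $\Omega_F^\oplus$ is cyclic, $\Omega_{\mathfrak{B}}^\oplus$ is cyclic for $\pi_{\mathfrak{B}}^\oplus(\mathcal{X})$ on $\mathcal{H}_{\mathfrak{B}}^\oplus$, hence $(\mathcal{H}_{\mathfrak{B}}^\oplus,\pi_{\mathfrak{B}}^\oplus,\Omega_{\mathfrak{B}}^\oplus)$ is a cyclic representation of $\rho_{\mu_{\mathfrak{B}}}$, and likewise for $F\setminus\mathfrak{B}$. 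Together with the orthogonal splitting $\mathcal{H}_F^\oplus=\mathcal{H}_{\mathfrak{B}}^\oplus\oplus\mathcal{H}_{F\setminus\mathfrak{B}}^\oplus$ this realizes (\ref{decomposition}) as a cyclic representation of $\rho_\mu$, i.e.\ $\rho_{\mu_{\mathfrak{B}}}\perp\rho_{\mu_{F\setminus\mathfrak{B}}}$; as $\mathfrak{B}$ is arbitrary, $\mu$ is orthogonal.

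For the converse — the substantive direction — I would assume $\mu$ orthogonal and let $(\mathcal{H}_{\rho_\mu},\pi_{\rho_\mu},\Omega_{\rho_\mu})$ be the GNS representation of the barycenter. By Proposition \ref{half Tomita}, the map $\varkappa_\mu$ of Lemma \ref{lemma chi mu} is then a $\ast$-isomorphism of $L^\infty(F,\mu)$ onto a weak-operator-closed abelian $\ast$-subalgebra $\mathcal{B}_\mu\doteq\varkappa_\mu(L^\infty(F,\mu))\subseteq[\pi_{\rho_\mu}(\mathcal{X})]'$. By Lemma \ref{Direct integral and GNS lemma} it suffices to upgrade the quasi-equivalence of $\pi_F^\oplus$ and $\pi_{\rho_\mu}$ recorded there to a genuine unitary equivalence, and this is what the Effros-type reconstruction provides: I would feed the pair $(\mathcal{H}_{\rho_\mu},\mathcal{B}_\mu)$, together with the $\ast$-isomorphism $\varkappa_\mu^{-1}\colon\mathcal{B}_\mu\to L^\infty(F,\mu)$, into the general disintegration machinery of Section \ref{app direct integrals} (this being the content of the Effros theorem \cite[Theorem 4.4.9]{BrattelliRobinsonI}). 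One obtains a measurable family of separable Hilbert spaces $(\mathcal{K}_\rho)_{\rho\in F}$ over the standard space $(F,\Sigma_F,\mu)$ and a unitary $U\colon\mathcal{H}_{\rho_\mu}\to\int_F\mathcal{K}_\rho\,\mu(\mathrm{d}\rho)$ carrying $\mathcal{B}_\mu$ onto the algebra $N_F$ of diagonalizable operators, compatibly with $\varkappa_\mu$ and with the identification of $N_F$ with $L^\infty(F,\mu)$ from Theorem \ref{Nielsen Th 7.1 copy(2)}; since $\mathcal{B}_\mu$ commutes with $\pi_{\rho_\mu}(\mathcal{X})$, every $\pi_{\rho_\mu}(A)$ is decomposable, so there are a measurable field of representations $(\pi'_\rho)_{\rho\in F}$ and a measurable vector field $(\omega_\rho)_{\rho\in F}$ with $U\pi_{\rho_\mu}(A)U^\ast=\int_F\pi'_\rho(A)\,\mu(\mathrm{d}\rho)$ for $A\in\mathcal{X}$ and $U\Omega_{\rho_\mu}=\int_F\omega_\rho\,\mu(\mathrm{d}\rho)$.

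The remaining step is to identify the fibers: $(\mathcal{K}_\rho,\pi'_\rho,\omega_\rho)$ should be unitarily equivalent to the GNS triple $(\mathcal{H}_\rho,\pi_\rho,\Omega_\rho)$ for $\mu$-almost every $\rho\in F$. Here the defining identity of Lemma \ref{lemma chi mu}, read after the decomposition, gives $\int_F f(\rho)\langle\omega_\rho,\pi'_\rho(A)\omega_\rho\rangle_{\mathcal{K}_\rho}\mu(\mathrm{d}\rho)=\int_F f(\rho)\rho(A)\mu(\mathrm{d}\rho)$ for all $f\in L^\infty(F,\mu)$ and $A\in\mathcal{X}$, so letting $A$ range over a countable norm-dense subset of $\mathcal{X}$ yields $\langle\omega_\rho,\pi'_\rho(A)\omega_\rho\rangle_{\mathcal{K}_\rho}=\rho(A)$ for all $A\in\mathcal{X}$ and $\mu$-a.e.\ $\rho$; one then checks $\omega_\rho$ is cyclic for $\pi'_\rho(\mathcal{X})$ a.e.\ (using that $\mathcal{B}_\mu$ is $\ast$-isomorphic to all of $L^\infty(F,\mu)$ and that $\Omega_{\rho_\mu}$ is cyclic for $\pi_{\rho_\mu}(\mathcal{X})$), and patches the fiberwise unitaries into an isomorphism of measurable families, the measurability of the GNS objects being supplied by Lemma \ref{lemma direct integral von neumann2 copy(1)}. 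Once this is done, $\int_F\mathcal{K}_\rho\,\mu(\mathrm{d}\rho)$, $\int_F\pi'_\rho\,\mu(\mathrm{d}\rho)$, $\int_F\omega_\rho\,\mu(\mathrm{d}\rho)$ become (up to this isomorphism) $\mathcal{H}_F^\oplus$, $\pi_F^\oplus$, $\Omega_F^\oplus$, so composing with $U$ produces a unitary equivalence of $(\mathcal{H}_{\rho_\mu},\pi_{\rho_\mu},\Omega_{\rho_\mu})$ with $(\mathcal{H}_F^\oplus,\pi_F^\oplus,\Omega_F^\oplus)$; as $\Omega_{\rho_\mu}$ is cyclic for $\pi_{\rho_\mu}(\mathcal{X})$, so is $\Omega_F^\oplus$ for $\pi_F^\oplus(\mathcal{X})$, which closes the equivalence and, with the identity recalled above, the ``in particular'' clause. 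The hard part will be exactly this almost-everywhere identification of the abstract fibers with the GNS triples of the individual states and the measurable choice of the intertwining unitaries — the measure-theoretic core of the Effros theorem, where standardness of $(F,\Sigma_F,\mu)$ and the measurability results of Lemma \ref{lemma direct integral von neumann2 copy(1)} are indispensable; everything else is bookkeeping with Lemma \ref{lemma chi mu} and Proposition \ref{half Tomita}.
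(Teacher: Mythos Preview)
Your forward direction (cyclic $\Rightarrow$ orthogonal) matches the paper's treatment before Definition \ref{Orthogonal measures}. For the converse, however, you take a substantially heavier route than the paper does, and in doing so you essentially outsource the result to \cite[Theorem 4.4.9]{BrattelliRobinsonI}: you invoke the abstract disintegration of $(\mathcal{H}_{\rho_\mu},\mathcal{B}_\mu)$, then face the genuinely delicate task of identifying the abstract fibers $(\mathcal{K}_\rho,\pi'_\rho,\omega_\rho)$ with the GNS triples $(\mathcal{H}_\rho,\pi_\rho,\Omega_\rho)$ $\mu$-a.e.\ and assembling a \emph{measurable} field of intertwining unitaries. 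You correctly flag this last step as the hard part; it is, and it is precisely what the paper's argument avoids.

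The paper's proof of the converse is a two-line computation that never leaves the concrete direct integral already built. Once $\varkappa_\mu$ is a $\ast$-homomorphism (Proposition \ref{half Tomita}), expanding both sides using $\varkappa_\mu(|f|^2)=\varkappa_\mu(f)^\ast\varkappa_\mu(f)$ and the defining identity of Lemma \ref{lemma chi mu} gives, for all $A,B\in\mathcal{X}$ and $f\in L^\infty(F,\mu)$,
\[
\bigl\|\pi_F^\oplus(B)\Omega_F^\oplus-\phi(f)\pi_F^\oplus(A)\Omega_F^\oplus\bigr\|_{\mathcal{H}_F^\oplus}
=\bigl\|\pi_{\rho_\mu}(B)\Omega_{\rho_\mu}-\varkappa_\mu(f)\pi_{\rho_\mu}(A)\Omega_{\rho_\mu}\bigr\|_{\mathcal{H}_{\rho_\mu}},
\]
where $\phi(f)=\int_F f(\rho)\mathbf{1}_{\mathcal{H}_\rho}\mu(\mathrm{d}\rho)$ is the diagonalizable operator. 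By Lemma \ref{construction of coherences I copy(1)} the vectors $\phi(f)\pi_F^\oplus(A)\Omega_F^\oplus$ are dense in $\mathcal{H}_F^\oplus$; since $\Omega_{\rho_\mu}$ is cyclic for $\pi_{\rho_\mu}(\mathcal{X})$, the right-hand side can be made small by choosing $B$, hence any such vector is approximated by $\pi_F^\oplus(B)\Omega_F^\oplus$, and cyclicity of $\Omega_F^\oplus$ follows. No abstract decomposition, no fiber identification, no measurable selection is needed. Your strategy would eventually succeed, but it reproves a large piece of direct-integral theory to reach a conclusion that drops out of a single norm identity.
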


\begin{proof}
If $\Omega _{F}^{\oplus }$ is a cyclic vector for $\pi _{F}^{\oplus }(%
\mathcal{X})$ then, as already explained before Definition \ref{Orthogonal
measures}, $\mu $ has to be orthogonal. Now, suppose that the mapping $%
\varkappa _{\mu }$ of Lemma \ref{lemma chi mu} is a $\ast $-homomorphism. In
particular, 
\begin{equation}
\varkappa _{\mu }(|f|^{2})=\varkappa _{\mu }(f)^{\ast }\varkappa _{\mu }(f)\
,\qquad f\in L^{\infty }\left( F,\mu \right) \ .  \label{dsdsdfjh}
\end{equation}%
Then, a simple computation using (\ref{dsdsdfjh}) and $\varkappa _{\mu
}(f)\in \lbrack \pi _{\rho _{\mu }}(\mathcal{X})]^{\prime }$ shows that, for
all $A,B\in \mathcal{X}$ and $f\in L^{\infty }(F,\mu )$,%
\begin{equation}
\left\Vert \pi _{F}^{\oplus }(B)\Omega _{F}^{\oplus }-\phi (f)\pi
_{F}^{\oplus }(A)\Omega _{F}^{\oplus }\right\Vert _{\mathcal{H}_{F}^{\oplus
}}=\left\Vert \pi _{\rho _{\mu }}(B)\Omega _{\rho _{\mu }}-\varkappa _{\mu
}(f)\pi _{\rho _{\mu }}(A)\Omega _{\rho _{\mu }}\right\Vert _{\mathcal{H}%
_{\rho _{\mu }}}\text{ },  \label{sfgasfgfgsdg}
\end{equation}%
provided $\varkappa _{\mu }$ is a $\ast $-homomorphism, where%
\begin{equation*}
\phi (f)\doteq \int_{F}f(\rho )\mathbf{1}_{\mathcal{H}_{\rho }}\mu (\mathrm{d%
}\rho )\in \mathcal{B}\left( \mathcal{H}_{F}^{\oplus }\right) \text{ }%
,\qquad f\in L^{\infty }(F,\mu )\ .
\end{equation*}%
Now, by applying Lemma \ref{construction of coherences I copy(1)}, observe
that the set 
\begin{equation*}
\left\{ \phi \left( f\right) \pi _{F}^{\oplus }\left( A\right) \Omega
_{F}^{\oplus }:f\in L^{\infty }\left( F,\mu \right) ,\ A\in \mathcal{X}%
\right\}
\end{equation*}%
is dense in $\mathcal{H}_{F}^{\oplus }$. Hence, if $\mu $ is orthogonal
then, as $\Omega _{\rho _{\mu }}$ is cyclic for $\pi _{\rho _{\mu }}(%
\mathcal{X})$, we deduce from Proposition \ref{half Tomita} and Equality (%
\ref{sfgasfgfgsdg}) that $\Omega _{F}^{\oplus }$ is a cyclic vector for $\pi
_{F}^{\oplus }(\mathcal{X})$. By Lemma \ref{Direct integral and GNS lemma},
this implies in turn that $(\mathcal{H}_{F}^{\oplus },\pi _{F}^{\oplus
},\Omega _{F}^{\oplus })$ is a cyclic representation of the barycenter $\rho
_{\mu }$ of $\mu $.
\end{proof}

Therefore, Proposition \ref{half Tomita}, Corollary \ref{Effros theorem} and 
\cite[Theorem 2.3.16]{BrattelliRobinsonI} show, in a constructive way, that,
for any cyclic representation $(\mathcal{H}_{\rho _{\mu }},\pi _{\rho _{\mu
}},\Omega _{\rho _{\mu }})$ of an orthogonal probability measure $\mu \in 
\mathrm{M}_{1}(F)$, the von Neumann algebra $[\pi _{\rho _{\mu }}(\mathcal{X}%
)]^{\prime \prime }$ is decomposable with respect to the abelian von Neumann
subalgebra $\mathfrak{N}_{\rho _{\mu }}\doteq \varkappa _{\mu }(L^{\infty
}\left( F,\mu \right) )$ of the commutant $[\pi _{\rho _{\mu }}(\mathcal{X}%
)]^{\prime }$. See Definition \ref{def direct integral represe copy(1)}. By
Theorem \ref{lemma direct integral von neumann4 copy(1)}, it means that 
\begin{equation}
\mathfrak{N}_{\rho _{\mu }}\subseteq \lbrack \pi _{\rho _{\mu }}(\mathcal{X}%
)]^{\prime }\cap \lbrack \pi _{\rho _{\mu }}(\mathcal{X})]^{\prime \prime }
\label{fgkjghdfkjghdfkj}
\end{equation}%
while, by Corollary \ref{lemma direct integral von neumann6}, 
\begin{equation*}
\left[ \pi _{F}^{\oplus }(\mathcal{X})\right] ^{\prime \prime }=\int_{F}%
\left[ \pi _{\rho }(\mathcal{X})\right] ^{\prime \prime }\mu (\mathrm{d}\rho
)\ .
\end{equation*}%
In this case, $\pi _{F}^{\oplus }$ is a so-called subcentral decomposition
(of the representation $\pi _{\rho _{\mu }}$), see Definition \ref{Special
direct integral representations} (ii.1).

\subsection{$C^{\ast }$-Algebra of $\mathcal{X}$-valued Continuous Functions
on States\label{Representation cool}}

We conclude with some properties of the $C^{\ast }$-algebra $C(F;\mathcal{X}%
) $ of $\mathcal{X}$-valued weak$^{\ast }$-continuous functions on any weak$%
^{\ast }$-closed subset $F\subseteq E$ of states, where $\mathcal{X}$ is a
separable unital $C^{\ast }$-algebra. Such a space is crucial to study the
dynamics of long-range models at infinite volume. Similar to (\ref%
{metaciagre set})-(\ref{metaciagre set bis}), $C(F;\mathcal{X})$ is endowed
with the (point-wise) algebra operations inherited from $\mathcal{X}$. The
unique $C^{\ast }$- norm of $C(F;\mathcal{X})$ is 
\begin{equation*}
\left\Vert f\right\Vert _{C\left( F;\mathcal{X}\right) }\doteq \max_{\rho
\in F}\left\Vert f\left( \rho \right) \right\Vert _{\mathcal{X}}\ ,\qquad
f\in C\left( F;\mathcal{X}\right) \ .
\end{equation*}%
We identify $\mathcal{X}$ with the subalgebra of constant functions of $C(F;%
\mathcal{X})$ and $C(F)\doteq C(F;\mathbb{C})$ with the subalgebra of
functions whose values are multiples of the unit $\mathfrak{1}\in \mathcal{X}
$. In other words, 
\begin{equation*}
\mathcal{X}\subseteq C\left( F;\mathcal{X}\right) \qquad \text{and}\qquad
C\left( F\right) \doteq C\left( F;\mathbb{C}\right) \subseteq C\left( F;%
\mathcal{X}\right) \ .
\end{equation*}%
In fact, $\mathcal{X}$ and $C(F)$ are $C^{\ast }$-subalgebras of $C(F;%
\mathcal{X})$ and the set $C(F)\cup \mathcal{X}$ generates this $C^{\ast }$%
-algebra:

\begin{lemma}[Generation of $C(F;\mathcal{X})$ by elementary functions]
\label{lemma dense func alg}\mbox{ }\newline
Let $\mathcal{X}$ be a separable unital $C^{\ast }$-algebra and $F\subseteq 
\mathcal{X}^{\ast }$ any weak$^{\ast }$-closed subset of states. If $%
\mathcal{X}_{0}$ is a dense set of $\mathcal{X}$, then%
\begin{equation*}
C(F)\mathcal{X}_{0}\doteq \left\{ fA:f\in C(F),\text{ }A\in \mathcal{X}%
_{0}\right\}
\end{equation*}%
is total in $C(F;\mathcal{X})$. If $\mathcal{X}_{0}$ is a $\ast $-subalgebra
of $\mathcal{X}$\ then $C(F)\mathcal{X}_{0}$ is a $\ast $-subalgebra of $C(F;%
\mathcal{X})$.
\end{lemma}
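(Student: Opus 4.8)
The plan is to prove the totality of $C(F)\mathcal{X}_0$ in $C(F;\mathcal{X})$ by a Stone--Weierstrass-type argument carried out directly at the level of the finite linear span, then deduce the algebra statement separately. First I would observe that it suffices to treat the case $\mathcal{X}_0=\mathcal{X}$: if $A\in\mathcal{X}$ and $A^{(k)}\to A$ in norm with $A^{(k)}\in\mathcal{X}_0$, then for any $f\in C(F)$ one has $\Vert fA-fA^{(k)}\Vert_{C(F;\mathcal{X})}=\Vert f\Vert_{C(F)}\Vert A-A^{(k)}\Vert_{\mathcal{X}}\to 0$, so the closed linear span of $C(F)\mathcal{X}_0$ contains $C(F)\mathcal{X}$; hence it is enough to show $C(F)\mathcal{X}$ is total in $C(F;\mathcal{X})$.

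Next, fix $g\in C(F;\mathcal{X})$ and $\varepsilon>0$, and exploit compactness of $F$ together with uniform continuity of $g$: since $F$ is weak$^{\ast}$-compact and metrizable and $g$ is continuous into the norm topology of $\mathcal{X}$, there is a finite weak$^{\ast}$-open cover $\{O_1,\dots,O_N\}$ of $F$ and points $\rho_i\in O_i$ with $\Vert g(\rho)-g(\rho_i)\Vert_{\mathcal{X}}<\varepsilon$ for all $\rho\in O_i$. Then I would take a partition of unity $\{h_1,\dots,h_N\}\subseteq C(F)$ subordinate to this cover (with $h_i\ge 0$, $\sum_i h_i\equiv 1$, $\mathrm{supp}\,h_i\subseteq O_i$; this exists by metrizability/normality of $F$ and Urysohn's lemma) and set $g_\varepsilon\doteq\sum_{i=1}^N h_i\,g(\rho_i)\in C(F)\mathcal{X}$. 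For any $\rho\in F$,
\begin{equation*}
\Vert g(\rho)-g_\varepsilon(\rho)\Vert_{\mathcal{X}}=\Big\Vert\sum_{i=1}^N h_i(\rho)\big(g(\rho)-g(\rho_i)\big)\Big\Vert_{\mathcal{X}}\leq\sum_{i=1}^N h_i(\rho)\,\Vert g(\rho)-g(\rho_i)\Vert_{\mathcal{X}}<\varepsilon\,,
\end{equation*}
using that $h_i(\rho)\ne 0$ forces $\rho\in O_i$. Taking the maximum over $\rho\in F$ gives $\Vert g-g_\varepsilon\Vert_{C(F;\mathcal{X})}\leq\varepsilon$, which proves totality of $C(F)\mathcal{X}$, hence of $C(F)\mathcal{X}_0$.

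Finally, for the second assertion, suppose $\mathcal{X}_0$ is a $\ast$-subalgebra. Then $C(F)\mathcal{X}_0$ is visibly closed under complex scalar multiplication, and the identity $(fA)(gB)=(fg)(AB)$ with $fg\in C(F)$, $AB\in\mathcal{X}_0$ shows closure under products of the generating elementary functions, so the linear span of $C(F)\mathcal{X}_0$ is a subalgebra; moreover $(fA)^{\ast}=\bar f\,A^{\ast}$ with $\bar f\in C(F)$ and $A^{\ast}\in\mathcal{X}_0$ gives closure under the involution. Therefore the linear span of $C(F)\mathcal{X}_0$ is a $\ast$-subalgebra of $C(F;\mathcal{X})$. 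I expect the only genuinely delicate point to be the existence of the continuous partition of unity subordinate to a finite open cover of $F$; this is standard for compact metric (in particular normal) spaces via Urysohn's lemma, but it is the one place where the topological hypotheses on $F$ are really used. Everything else is bookkeeping with norms and the $C^{\ast}$-algebra operations.
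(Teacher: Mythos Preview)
Your proof is correct and follows essentially the same approach as the paper's own proof, which simply says to use the density of $\mathcal{X}_0\subseteq\mathcal{X}$, the weak$^{\ast}$-compactness of $F$, and the existence of partitions of unity subordinated to any open cover of the metrizable weak$^{\ast}$-compact space $F$. You have carried out exactly this argument in full detail, including the reduction to $\mathcal{X}_0=\mathcal{X}$ and the explicit $\varepsilon$-approximation via a subordinate partition of unity; your treatment of the second assertion (noting that it is the \emph{linear span} of $C(F)\mathcal{X}_0$ that forms the $\ast$-subalgebra) is the correct reading of the statement.
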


\begin{proof}
Use the density of $\mathcal{X}_{0}\subseteq \mathcal{X}$ as well as the weak%
$^{\ast }$-compactness of $F$ together with the existence of partitions of
unity subordinated to any open cover of the metrizable (weak$^{\ast }$%
-compact) space $F$.
\end{proof}

Recall that $\Sigma _{E}$ is the (Borel) $\sigma $-algebra generated by the
weak$^{\ast }$ topology of the weak$^{\ast }$-compact and metrizable space $%
E $ of states. For any weak$^{\ast }$-closed subset $F\in \Sigma _{E}$, $%
\Sigma _{F}$ is the $\sigma $-algebra generated by the weak$^{\ast }$
topology of $F$. The GNS representation of any state $\rho \in E$ is denoted
by $(\mathcal{H}_{\rho },\pi _{\rho },\Omega _{\rho })$. For any $F\in
\Sigma _{E}$, $\mathcal{H}_{F}\doteq (\mathcal{H}_{\rho })_{\rho \in F}$ is
a measurable family\ of separable GNS\ Hilbert spaces and $\pi _{F}\doteq
(\pi _{\rho })_{\rho \in F}$ is a measurable field of GNS representations of 
$\mathcal{X}$ on $\mathcal{H}_{\rho }$ for $\rho \in F$, see Lemma \ref%
{lemma direct integral von neumann2 copy(1)}. Similarly, for any weak$^{\ast
}$-closed subset $F\in \Sigma _{E}$ and all $f\in C(F;\mathcal{X})$, the
bounded field $(\pi _{\rho }(f(\rho )))_{\rho \in F}$ of operators over $(%
\mathcal{H}_{\rho })_{\rho \in F}$ is also measurable:

\begin{lemma}[Measurability of\ GNS\ representations applied to $\mathcal{X}$%
-valued functions]
\label{representations applied to}\mbox{ }\newline
Let $\mathcal{X}$ be a separable unital $C^{\ast }$-algebra and $F\subseteq 
\mathcal{X}^{\ast }$ any weak$^{\ast }$-closed subset of states. Then, for
all $f\in C(F;\mathcal{X})$, $(\pi _{\rho }(f(\rho )))_{\rho \in F}$ is a
bounded ($\alpha _{F}$-) measurable field of operators over $\mathcal{H}%
_{F}\doteq (\mathcal{H}_{\rho })_{\rho \in F}$.
\end{lemma}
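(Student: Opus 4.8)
The plan is to dispose of boundedness at once and then obtain $\alpha_F$-measurability by reducing to the elementary functions furnished by Lemma~\ref{lemma dense func alg}, exploiting that the class of bounded $\alpha_F$-measurable operator fields over $\mathcal{H}_F$ is stable under the three operations that show up in the reduction: multiplication by a weak$^\ast$-continuous scalar function, finite linear combinations, and uniform (hence pointwise-in-norm) limits.

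First I would record that, since each $\pi_\rho$ is a $\ast$-homomorphism, $\|\pi_\rho(f(\rho))\|_{\mathcal{B}(\mathcal{H}_\rho)}\le\|f(\rho)\|_{\mathcal{X}}\le\|f\|_{C(F;\mathcal{X})}$ for every $\rho\in F$, so the field $(\pi_\rho(f(\rho)))_{\rho\in F}$ is uniformly bounded and only measurability remains to be checked. Next, fix a dense subset $\mathcal{X}_0\subseteq\mathcal{X}$ and treat the elementary case $f=gA$ with $g\in C(F)$ and $A\in\mathcal{X}_0$: then $\pi_\rho(f(\rho))=g(\rho)\,\pi_\rho(A)$, the field $(\pi_\rho(A))_{\rho\in F}$ is $\alpha_F$-measurable by Lemma~\ref{lemma direct integral von neumann2 copy(1)} (measurability of $\pi_F$), and multiplying it by the diagonal field $(g(\rho)\mathbf{1}_{\mathcal{H}_\rho})_{\rho\in F}$ — which is $\alpha_F$-measurable because $\rho\mapsto g(\rho)$ is weak$^\ast$-continuous, hence $\Sigma_F$-measurable — preserves measurability. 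Taking finite linear combinations, I get the assertion for every $f$ in the linear span of $C(F)\mathcal{X}_0$.

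Then I would pass to a general $f\in C(F;\mathcal{X})$ by approximation. By Lemma~\ref{lemma dense func alg} the set $C(F)\mathcal{X}_0$ is total in $C(F;\mathcal{X})$, so I can choose a sequence $(f_n)_{n\in\mathbb{N}}$ in the linear span of $C(F)\mathcal{X}_0$ with $\|f_n-f\|_{C(F;\mathcal{X})}\to0$, whence $\sup_{\rho\in F}\|\pi_\rho(f_n(\rho))-\pi_\rho(f(\rho))\|_{\mathcal{B}(\mathcal{H}_\rho)}\le\|f_n-f\|_{C(F;\mathcal{X})}\to0$. For any $\alpha_F$-measurable vector fields $v,w$ over $\mathcal{H}_F$, the scalar functions $\rho\mapsto\langle v_\rho,\pi_\rho(f_n(\rho))w_\rho\rangle_{\mathcal{H}_\rho}$ are $\Sigma_F$-measurable by the previous step and converge pointwise on $F$ to $\rho\mapsto\langle v_\rho,\pi_\rho(f(\rho))w_\rho\rangle_{\mathcal{H}_\rho}$; since a pointwise limit of measurable scalar functions is measurable, the limit function is $\Sigma_F$-measurable too, and together with the uniform bound this gives $\alpha_F$-measurability of $(\pi_\rho(f(\rho)))_{\rho\in F}$.

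Essentially everything here is routine; the only point needing care is the bookkeeping with the appendix's notion of measurable operator field — specifically, confirming from Section~\ref{app direct integrals} that multiplication by a measurable diagonal field and pointwise norm limits both preserve $\alpha_F$-measurability, and that the coherence $\alpha_F$ involved is exactly the canonical one attached to $\mathcal{H}_F$ in Lemma~\ref{lemma direct integral von neumann2 copy(1)}. I expect that verification to be the main (albeit minor) obstacle; the remainder is a direct application of Lemmas~\ref{lemma direct integral von neumann2 copy(1)} and~\ref{lemma dense func alg}.
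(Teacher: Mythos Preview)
Your argument is correct, but the paper takes a shorter and more direct route. Instead of reducing to elementary functions via Lemma~\ref{lemma dense func alg} and then passing to limits, the paper simply repeats the computation that established the measurability of $\pi_F$ in Lemma~\ref{lemma direct integral von neumann2 copy(1)}: for the generating vector fields $v^{(n)}=([X^{(n)}]_\rho)_{\rho\in F}$ one has, by the GNS inner product formula~(\ref{GNS innerprod})--(\ref{GNS innerprod2}),
\[
\langle v_\rho^{(n)},\pi_\rho(f(\rho))\,v_\rho^{(m)}\rangle_{\mathcal{H}_\rho}=\rho\bigl((X^{(n)})^\ast f(\rho)\,X^{(m)}\bigr),
\]
and this is weak$^\ast$-continuous in $\rho$ because $f$ is weak$^\ast$-continuous; Theorem~\ref{coherence from fields}\,(iii) then gives $\alpha_F$-measurability immediately. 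Your approach has the merit of being modular---it reuses the measurability of $\pi_F$ as a black box and would work in settings where one lacks an explicit formula for the matrix coefficients---but here it introduces an unnecessary approximation step, and the closure properties you flag as needing verification (stability of measurable operator fields under diagonal multiplication and pointwise norm limits) are precisely what the paper's one-line computation bypasses.
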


\begin{proof}
The proof is the same as the one proving that $\pi _{F}$ is measurable. In
particular, for any $f\in C(F;\mathcal{X})$, one uses (\ref{measruability})
for $A=f(\rho )$ together with the weak$^{\ast }$-continuity of functions of 
$C(F;\mathcal{X})$ and Theorem \ref{coherence from fields} (iii) to deduce
the assertion.
\end{proof}

For any $F\in \Sigma _{E}$ and any positive Radon measure $\mu \in \mathrm{M}%
(F)$, $\mathcal{H}_{F}^{\oplus }$ is the direct integral Hilbert space (\ref%
{direct integral GNS}) associated with $\mathcal{H}_{F}$. The direct
integral $\pi _{F}^{\oplus }$ of $\pi _{F}$, defined by (\ref{direct
integral representation states}), is a representation of $\mathcal{X}$\ on
the direct integral Hilbert space $\mathcal{H}_{F}^{\oplus }$. In the same
way, for any weak$^{\ast }$-closed subset $F$ of states, we obtain from
Lemma \ref{representations applied to} a representation of $C(F;\mathcal{X})$%
\ on $\mathcal{H}_{F}^{\oplus }$, that is, a $\ast $-homomorphism from $C(F;%
\mathcal{X})$ to $\mathcal{B}(\mathcal{H}_{F}^{\oplus })$. This
representation is a natural extension of $\pi _{F}^{\oplus }$ from $\mathcal{%
X}$\ to the $C^{\ast }$-algebra $C(F;\mathcal{X})$:\bigskip

\noindent \underline{($\Pi ^{\oplus }$):} $\Pi _{F}^{\oplus }$ is the
(direct integral) representation\ of $C(F;\mathcal{X})$ on $\mathcal{H}%
_{F}^{\oplus }$ defined by%
\begin{equation}
\Pi _{F}^{\oplus }\left( f\right) \doteq \int_{F}^{\alpha _{F}}\pi _{\rho
}(f(\rho ))\mu (\mathrm{d}\rho )\equiv \int_{F}\pi _{\rho }(f(\rho ))\mu (%
\mathrm{d}\rho )\ ,\qquad f\in C(F;\mathcal{X})\ .
\label{direct intergral representation of C(F,A)}
\end{equation}%
Compare with Equation (\ref{direct integral representation states}) defining 
$\pi _{F}^{\oplus }$.\bigskip

Recall that $\Omega _{F}^{\oplus }\in \mathcal{H}_{F}^{\oplus }$ is defined
from the measurable family $\Omega _{F}\doteq (\Omega _{\rho })_{\rho \in F}$
by the direct integral (\ref{posible cyclic vector}). By the Effros theorem
(Corollary \ref{Effros theorem}), if $\mu \in \mathrm{M}_{1}(F)$ is an
orthogonal probability measure then $(\mathcal{H}_{F}^{\oplus },\pi
_{F}^{\oplus },\Omega _{F}^{\oplus })$ is a cyclic representation of its
barycenter $\rho _{\mu }$ and, by \cite[Theorem 2.3.16]{BrattelliRobinsonI}, 
$(\mathcal{H}_{F}^{\oplus },\pi _{F}^{\oplus },\Omega _{F}^{\oplus })$ is
equivalent to any cyclic\ representation of $\rho _{\mu }$. In this case, $%
C(F;\mathcal{X})$ can be represented in the Hilbert space $\mathcal{H}_{\rho
_{\mu }}$ of any cyclic representation of $\rho _{\mu }$. More precisely,
from the Tomita theorem (Proposition \ref{half Tomita}), we obtain the
following assertion:

\begin{proposition}[Orthogonal measures and representations of $C(F;\mathcal{%
X})$]
\label{proposition repr C orthogonal case}\mbox{ }\newline
Let $\mathcal{X}$ be a separable unital $C^{\ast }$-algebra, $F\subseteq 
\mathcal{X}^{\ast }$ any weak$^{\ast }$-closed subset of states and $\mu \in 
\mathrm{M}_{1}(F)$ any probability measure with barycenter $\rho _{\mu }$
and associated cyclic representation$\ (\mathcal{H}_{\rho _{\mu }},\pi
_{\rho _{\mu }},\Omega _{\rho _{\mu }})$. If $\mu $ is an orthogonal
measure, then there exists a unique representation $\Pi _{\rho _{\mu }}$ of $%
C(F;\mathcal{X})$ on $\mathcal{H}_{\rho _{\mu }}$ such that 
\begin{equation*}
\Pi _{\rho _{\mu }}|_{\mathcal{X}}=\pi _{\rho _{\mu }}\qquad \text{and}%
\qquad \Pi _{\rho _{\mu }}|_{C\left( F\right) }=\varkappa _{\mu }|_{C\left(
F\right) }
\end{equation*}%
with $\varkappa _{\mu }$ being the $\ast $-isomorphism from $L^{\infty
}(F,\mu )$ to $[\pi _{\rho _{\mu }}(\mathcal{X})]^{\prime }$ originally
defined in Lemma \ref{lemma chi mu}. Additionally, 
\begin{equation}
\left\langle \Omega _{\rho _{\mu }},\Pi _{\rho _{\mu }}\left( f\right)
\Omega _{\rho _{\mu }}\right\rangle _{\mathcal{H}_{\rho _{\mu
}}}=\int_{F}\rho (f(\rho ))\mu (\mathrm{d}\rho )\ ,\quad f\in C\left( F;%
\mathcal{X}\right) \ .  \label{sdfsdfsdsdfsdfsdf0}
\end{equation}
\end{proposition}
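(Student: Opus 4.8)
The plan is to build the representation $\Pi_{\rho_\mu}$ by combining the two $\ast$-homomorphisms $\pi_{\rho_\mu}$ (of $\mathcal{X}$) and $\varkappa_\mu|_{C(F)}$ (of $C(F)$), and then to verify that together they define a representation of the whole $C^\ast$-algebra $C(F;\mathcal{X})$. First I would invoke Corollary \ref{Effros theorem}: since $\mu$ is orthogonal, $(\mathcal{H}_F^\oplus,\pi_F^\oplus,\Omega_F^\oplus)$ is a cyclic representation of $\rho_\mu$, hence by \cite[Theorem 2.3.16]{BrattelliRobinsonI} it is unitarily equivalent to the given cyclic representation $(\mathcal{H}_{\rho_\mu},\pi_{\rho_\mu},\Omega_{\rho_\mu})$; let $U:\mathcal{H}_F^\oplus\to\mathcal{H}_{\rho_\mu}$ be the intertwining unitary with $U\Omega_F^\oplus=\Omega_{\rho_\mu}$ and $U\pi_F^\oplus(A)U^\ast=\pi_{\rho_\mu}(A)$ for $A\in\mathcal{X}$. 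Then I would simply transport the representation $\Pi_F^\oplus$ of $C(F;\mathcal{X})$ (defined by (\ref{direct intergral representation of C(F,A)}), which makes sense by Lemma \ref{representations applied to}) through $U$, setting
\begin{equation*}
\Pi_{\rho_\mu}(f)\doteq U\,\Pi_F^\oplus(f)\,U^\ast\ ,\qquad f\in C(F;\mathcal{X})\ .
\end{equation*}
This is manifestly a $\ast$-homomorphism from $C(F;\mathcal{X})$ to $\mathcal{B}(\mathcal{H}_{\rho_\mu})$, and $\Pi_{\rho_\mu}|_{\mathcal{X}}=\pi_{\rho_\mu}$ is immediate from the intertwining property. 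The identity (\ref{sdfsdfsdsdfsdfsdf0}) then follows from $U\Omega_F^\oplus=\Omega_{\rho_\mu}$ together with the direct-integral formula $\langle\Omega_F^\oplus,\Pi_F^\oplus(f)\Omega_F^\oplus\rangle=\int_F\langle\Omega_\rho,\pi_\rho(f(\rho))\Omega_\rho\rangle_{\mathcal{H}_\rho}\mu(\mathrm{d}\rho)=\int_F\rho(f(\rho))\mu(\mathrm{d}\rho)$, using (\ref{equality}) in each fiber.

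Next I would identify $\Pi_{\rho_\mu}|_{C(F)}$ with $\varkappa_\mu|_{C(F)}$. For $f\in C(F)$, viewed inside $C(F;\mathcal{X})$ as the function $\rho\mapsto f(\rho)\mathfrak{1}$, the operator $\Pi_F^\oplus(f)$ is exactly the diagonalizable operator $\phi(f)=\int_F f(\rho)\mathbf{1}_{\mathcal{H}_\rho}\mu(\mathrm{d}\rho)$ appearing in the proof of Corollary \ref{Effros theorem}. Transporting through $U$, one checks that $U\phi(f)U^\ast$ satisfies the defining relation (\ref{sfsdfsfdsdfs}) of $\varkappa_\mu(f)$: indeed for all $A\in\mathcal{X}$,
\begin{equation*}
\langle\Omega_{\rho_\mu},\pi_{\rho_\mu}(A)\,U\phi(f)U^\ast\,\Omega_{\rho_\mu}\rangle_{\mathcal{H}_{\rho_\mu}}=\langle\Omega_F^\oplus,\pi_F^\oplus(A)\phi(f)\Omega_F^\oplus\rangle_{\mathcal{H}_F^\oplus}=\int_F f(\rho)\rho(A)\mu(\mathrm{d}\rho)\ ,
\end{equation*}
which by the uniqueness clause of Lemma \ref{lemma chi mu} forces $U\phi(f)U^\ast=\varkappa_\mu(f)$, i.e. $\Pi_{\rho_\mu}|_{C(F)}=\varkappa_\mu|_{C(F)}$. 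Since $\mu$ is orthogonal, Proposition \ref{half Tomita} guarantees $\varkappa_\mu$ is genuinely a $\ast$-isomorphism onto $[\pi_{\rho_\mu}(\mathcal{X})]'$, so this restriction is consistent with the $C^\ast$-structure.

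Finally I would address uniqueness. By Lemma \ref{lemma dense func alg}, the set $C(F)\mathcal{X}$ (indeed $C(F)\mathcal{X}_0$ for any dense $\ast$-subalgebra $\mathcal{X}_0$) is total in $C(F;\mathcal{X})$; since any representation is norm-continuous, a representation $\Pi$ of $C(F;\mathcal{X})$ is determined by its values on products $fA$ with $f\in C(F)$, $A\in\mathcal{X}$, and $\Pi(fA)=\Pi(f)\Pi(A)$. Hence the constraints $\Pi|_{\mathcal{X}}=\pi_{\rho_\mu}$ and $\Pi|_{C(F)}=\varkappa_\mu|_{C(F)}$ pin down $\Pi$ uniquely, giving $\Pi_{\rho_\mu}$. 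I expect the only genuinely delicate point to be the verification that $U\phi(f)U^\ast$ matches $\varkappa_\mu(f)$ — i.e. checking that the diagonalizable operators of the direct integral are precisely the image of $\varkappa_\mu$ — but this is exactly what Lemma \ref{lemma chi mu} (uniqueness of the operator satisfying (\ref{sfsdfsfdsdfs})) and the proof of Corollary \ref{Effros theorem} (density of $\{\phi(f)\pi_F^\oplus(A)\Omega_F^\oplus\}$) are set up to deliver, so no new estimate is required; the rest is bookkeeping with the intertwining unitary $U$.
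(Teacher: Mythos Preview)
Your proof is correct and follows essentially the same route as the paper's: transport the direct-integral representation $\Pi_F^\oplus$ through the unitary $U$ furnished by Corollary \ref{Effros theorem} and \cite[Theorem 2.3.16]{BrattelliRobinsonI}, verify the restriction to $\mathcal{X}$ and the vacuum expectation formula fiberwise, identify the restriction to $C(F)$ with $\varkappa_\mu$ via the uniqueness clause of Lemma \ref{lemma chi mu}, and conclude uniqueness from Lemma \ref{lemma dense func alg}. The only small point to make explicit when invoking uniqueness in Lemma \ref{lemma chi mu} is that $U\phi(f)U^\ast\in[\pi_{\rho_\mu}(\mathcal{X})]'$, which is immediate since $\phi(f)\in N_F\subseteq[\pi_F^\oplus(\mathcal{X})]'$ and conjugation by $U$ carries commutants to commutants.
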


\begin{proof}
Fix all parameters of the proposition. The representation $\Pi _{F}^{\oplus
} $ of $C(F;\mathcal{X})$ on $\mathcal{H}_{F}^{\oplus }$ defined by (\ref%
{direct intergral representation of C(F,A)}) obviously satisfies $\Pi
_{F}^{\oplus }|_{\mathcal{X}}=\pi _{F}^{\oplus }$, see (\ref{direct integral
representation states}). By (\ref{posible cyclic vector}) and (\ref{direct
intergral representation of C(F,A)}),%
\begin{equation}
\left\langle \Omega _{F}^{\oplus },\Pi _{F}^{\oplus }(f)\Omega _{F}^{\oplus
}\right\rangle _{\mathcal{H}_{F}^{\oplus }}=\int_{F}f(\rho )\rho (A)\mu (%
\mathrm{d}\rho )\ ,\quad f\in C\left( F;\mathcal{X}\right) \ .
\label{esdsfklgjsds}
\end{equation}%
By Corollary \ref{Effros theorem} and \cite[Theorem 2.3.16]%
{BrattelliRobinsonI}, if $\mu $ is an orthogonal measure then there is a
unitary operator $\mathrm{U}:\mathcal{H}_{F}^{\oplus }\rightarrow \mathcal{H}%
_{\rho _{\mu }}$ such that $\mathrm{U}\Omega _{F}^{\oplus }=\Omega _{\rho
_{\mu }}$ and $\pi _{\rho _{\mu }}(A)=\mathrm{U}\pi _{F}^{\oplus }(A)\mathrm{%
U}^{\ast }$ for any $A\in \mathcal{X}$. Let $\Pi _{\rho _{\mu }}$ be the
representation of $C(F;\mathcal{X})$ on $\mathcal{H}_{\rho _{\mu }}$ defined
by 
\begin{equation}
\Pi _{\rho _{\mu }}\left( f\right) \doteq \mathrm{U}\Pi _{F}^{\oplus }\left(
f\right) \mathrm{U}^{\ast }\ ,\qquad f\in C\left( F;\mathcal{X}\right) \ .
\label{unitary}
\end{equation}%
Clearly, $\Pi _{F}^{\oplus }|_{\mathcal{X}}=\pi _{F}^{\oplus }$ yields $\Pi
_{\rho _{\mu }}|_{\mathcal{X}}=\pi _{\rho _{\mu }}$ and, by (\ref%
{esdsfklgjsds}), we also deduce Equation (\ref{sdfsdfsdsdfsdfsdf0}). In
particular, for any $f\in C(F)$ and $A\in \mathcal{X}$, 
\begin{equation}
\left\langle \Omega _{\rho _{\mu }},\pi _{\rho _{\mu }}\left( A\right) \Pi
_{\rho _{\mu }}\left( f\right) \Omega _{\rho _{\mu }}\right\rangle _{%
\mathcal{H}_{\rho _{\mu }}}=\int_{F}f(\rho )\rho (A)\mu (\mathrm{d}\rho )\ .
\label{sdfsdfsdsdfsdfsdf}
\end{equation}%
By Lemma \ref{lemma chi mu}, $\Pi _{\rho _{\mu }}\left( f\right) =\varkappa
_{\mu }\left( f\right) $\ for any $f\in C(F)$. By Lemma \ref{lemma dense
func alg}, a representation of $C(F;\mathcal{X})$ on any\ Hilbert space is
uniquely defined by its values on $\mathcal{X}$ and $C(F)$. Therefore, $\Pi
_{\rho _{\mu }}$ is the unique representation which equals $\pi _{\rho _{\mu
}}$ and $\varkappa _{\mu }$ on $\mathcal{X}$ and $C(F)$, respectively.
\end{proof}

The support $\mathrm{supp}\mu $ of a positive Radon measure $\mu \in \mathrm{%
M}(F)$ is, by definition, the set%
\begin{equation*}
\left\{ \rho \in F:\mu (V_{\rho })>0\text{ for any weak}^{\ast }\text{-open
neighborhood }V_{\rho }\text{\ of }\rho \right\} \subseteq F\text{ }.
\end{equation*}%
In particular, it is weak$^{\ast }$-compact. As Radon measures are (inner
and outer) regular, $\mathrm{supp}$ $\mu $ has full measure, i.e., $\mu (%
\mathrm{supp}\mu )=1$. Therefore, one can assume, without loss of
generality, that $F=\mathrm{supp}\mu $. In this case, if the state $\rho \in
F$ is $\mu $-almost everywhere faithful then the representations $\pi
_{F}^{\oplus }$ and $\Pi _{F}^{\oplus }$ are faithful. If additionally $\mu
\in \mathrm{M}_{1}(F)$ is an orthogonal probability measure on $F$, then,
from Lemma \ref{lemma dense func alg}, Proposition \ref{proposition repr C
orthogonal case} and Equation (\ref{unitary}), we deduce the existence of a
unique $\ast $-isomorphism from the\ $C^{\ast }$-subalgebra of $\mathcal{B}(%
\mathcal{H}_{\rho _{\mu }})$ generated by $\pi _{\rho _{\mu }}(\mathcal{X}%
)\cup \varkappa _{\mu }(C(F))$ onto the $C^{\ast }$-algebra $C(F;\mathcal{X}%
) $, satisfying 
\begin{equation*}
\varkappa _{\mu }(f)\pi _{\rho _{\mu }}(A)\mapsto fA
\end{equation*}%
for all $f\in C(F)$\ and $A\in \mathcal{X}$.

We conclude the section by observing that the barycenter of any probability
measure on weak$^{\ast }$-closed subset of states can naturally be extended
to a state on the $C^{\ast }$-algebra $C(F;\mathcal{X})$:

\begin{definition}[Extension of states on $\mathcal{X}$ to the whole $%
C^{\ast }$-algebra $C(F;\mathcal{X})$]
\label{def Extension of states}\mbox{ }\newline
Let $\mathcal{X}$ be a separable unital $C^{\ast }$-algebra and $F\subseteq 
\mathcal{X}^{\ast }$ any weak$^{\ast }$-closed subset of states. For any
probability measure $\mu \in \mathrm{M}_{1}(F)$, its barycenter $\rho _{\mu
}\in F$ can naturally be extended to a state on $C(F;\mathcal{X})$, again
denoted by $\rho _{\mu }$, via the definition%
\begin{equation*}
\rho _{\mu }\left( f\right) \doteq \int_{F}\rho (f(\rho ))\mu (\mathrm{d}%
\rho )\text{ },\text{\qquad }f\in C(F;\mathcal{X})\text{\ }.
\end{equation*}
\end{definition}

\noindent Proposition \ref{proposition repr C orthogonal case} directly
yields a natural characterization of cyclic representations of the extension
to $C(F;\mathcal{X})$ of barycenters of orthogonal probability measures:

\begin{theorem}[Cyclic representations of barycenters]
\label{coro Extension of states}\mbox{ }\newline
Let $\mathcal{X}$ be a separable unital $C^{\ast }$-algebra, $F\subseteq 
\mathcal{X}^{\ast }$ any weak$^{\ast }$-closed subset of states and $\mu \in 
\mathrm{M}_{1}(F)$ any orthogonal probability measure with barycenter $\rho
_{\mu }$ seen as a state of either $\mathcal{X}^{\ast }$ or $C(F;\mathcal{X}%
)^{\ast }$\emph{.}\newline
\emph{(i)} $(\mathcal{H}_{F}^{\oplus },\Pi _{F}^{\oplus },\Omega
_{F}^{\oplus })$ is a cyclic representation of $\rho _{\mu }\in C(F;\mathcal{%
X})^{\ast }$. \newline
\emph{(ii)} Let $(\mathcal{H}_{\rho _{\mu }},\pi _{\rho _{\mu }},\Omega
_{\rho _{\mu }})$ be any cyclic representation of $\rho _{\mu }\in \mathcal{X%
}^{\ast }$. Then, there exists a unique representation $\Pi _{\rho _{\mu }}$
of $C(F;\mathcal{X})$ on $\mathcal{H}_{\rho _{\mu }}$ such that $\Pi _{\rho
_{\mu }}|_{\mathcal{X}}=\pi _{\rho _{\mu }}$ and $(\Omega _{\rho _{\mu
}},\Pi _{\rho _{\mu }},\mathcal{H}_{\rho _{\mu }})$ is a cyclic
representation of $\rho _{\mu }\in C(F;\mathcal{X})^{\ast }$. \newline
\emph{(iii)} Conversely, let $(\mathcal{H}_{\rho _{\mu }},\Pi _{\rho _{\mu
}},\Omega _{\rho _{\mu }})$ be any cyclic representation of $\rho _{\mu }\in
C(F;\mathcal{X})^{\ast }$. Then, $(\mathcal{H}_{\rho _{\mu }},\Pi _{\rho
_{\mu }}|_{\mathcal{X}},\Omega _{\rho _{\mu }})$ is a cyclic representation
of $\rho _{\mu }\in \mathcal{X}^{\ast }$,%
\begin{equation*}
\lbrack \Pi _{\rho _{\mu }}\left( C\left( F;\mathcal{X}\right) \right)
]^{\prime \prime }=[\Pi _{\rho _{\mu }}\left( \mathcal{X}\right) ]^{\prime
\prime }\qquad \text{and}\qquad \lbrack \Pi _{\rho _{\mu }}\left( C\left(
F\right) \right) ]^{\prime \prime }\subseteq \lbrack \Pi _{\rho _{\mu
}}\left( \mathcal{X}\right) ]^{\prime }\cap \lbrack \Pi _{\rho _{\mu
}}\left( \mathcal{X}\right) ]^{\prime \prime }\ .
\end{equation*}
\end{theorem}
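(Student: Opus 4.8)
The plan is to deduce Theorem \ref{coro Extension of states} essentially as a corollary of Proposition \ref{proposition repr C orthogonal case}, together with the standard uniqueness (up to unitary equivalence) of cyclic representations, i.e. \cite[Theorem 2.3.16]{BrattelliRobinsonI}. First I would establish (i). By the Effros theorem (Corollary \ref{Effros theorem}), orthogonality of $\mu$ gives that $\Omega_F^{\oplus}$ is a cyclic vector for $\pi_F^{\oplus}(\mathcal{X})\subseteq\Pi_F^{\oplus}(C(F;\mathcal{X}))$, hence a fortiori cyclic for $\Pi_F^{\oplus}(C(F;\mathcal{X}))$; combined with Equation (\ref{esdsfklgjsds}), which identifies the vector state of $\Omega_F^{\oplus}$ with the state $\rho_\mu$ of Definition \ref{def Extension of states} (using that constant functions $A\in\mathcal{X}$ are a special case and then extending by Lemma \ref{lemma dense func alg}), this is exactly the statement that $(\mathcal{H}_F^{\oplus},\Pi_F^{\oplus},\Omega_F^{\oplus})$ is a cyclic representation of $\rho_\mu\in C(F;\mathcal{X})^{\ast}$.

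Next I would prove (ii). Given an arbitrary cyclic representation $(\mathcal{H}_{\rho_\mu},\pi_{\rho_\mu},\Omega_{\rho_\mu})$ of $\rho_\mu\in\mathcal{X}^{\ast}$, Proposition \ref{proposition repr C orthogonal case} produces a representation $\Pi_{\rho_\mu}$ of $C(F;\mathcal{X})$ on $\mathcal{H}_{\rho_\mu}$ with $\Pi_{\rho_\mu}|_{\mathcal{X}}=\pi_{\rho_\mu}$, and Equation (\ref{sdfsdfsdsdfsdfsdf0}) shows that the vector state of $\Omega_{\rho_\mu}$ on $C(F;\mathcal{X})$ is $\rho_\mu$. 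Cyclicity of $\Omega_{\rho_\mu}$ for $\Pi_{\rho_\mu}(C(F;\mathcal{X}))$ is immediate since it is already cyclic for the subalgebra $\pi_{\rho_\mu}(\mathcal{X})=\Pi_{\rho_\mu}(\mathcal{X})$. Uniqueness of $\Pi_{\rho_\mu}$ with the stated properties follows from Lemma \ref{lemma dense func alg}: any representation of $C(F;\mathcal{X})$ is determined by its restrictions to $\mathcal{X}$ and $C(F)$, and the restriction to $C(F)$ is forced — as in the proof of Proposition \ref{proposition repr C orthogonal case}, it must equal $\varkappa_\mu|_{C(F)}$, since Lemma \ref{lemma chi mu} characterizes $\varkappa_\mu(f)$ uniquely via (\ref{sfsdfsfdsdfs}) and that relation is dictated by the requirement that $\Omega_{\rho_\mu}$ represent $\rho_\mu$ together with $\Pi_{\rho_\mu}|_{\mathcal{X}}=\pi_{\rho_\mu}$.

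For (iii), I would start from an arbitrary cyclic representation $(\mathcal{H}_{\rho_\mu},\Pi_{\rho_\mu},\Omega_{\rho_\mu})$ of $\rho_\mu\in C(F;\mathcal{X})^{\ast}$. That $(\mathcal{H}_{\rho_\mu},\Pi_{\rho_\mu}|_{\mathcal{X}},\Omega_{\rho_\mu})$ is then a cyclic representation of $\rho_\mu\in\mathcal{X}^{\ast}$ requires the nontrivial point that $\Omega_{\rho_\mu}$ remains cyclic for the smaller algebra $\Pi_{\rho_\mu}(\mathcal{X})$; this I would obtain by transporting to $\Pi_F^{\oplus}$ via \cite[Theorem 2.3.16]{BrattelliRobinsonI} (all cyclic representations of $\rho_\mu\in C(F;\mathcal{X})^{\ast}$ are unitarily equivalent to $\Pi_F^{\oplus}$, as in (\ref{unitary})) and invoking the Effros theorem again, exactly as for (i). The two operator-algebraic identities then follow: $[\Pi_{\rho_\mu}(C(F;\mathcal{X}))]''=[\Pi_{\rho_\mu}(\mathcal{X})]''$ because the two algebras have a common cyclic vector and are related by $\Pi_{\rho_\mu}(C(F))\subseteq[\Pi_{\rho_\mu}(\mathcal{X})]'$ — so adjoining $\Pi_{\rho_\mu}(C(F))$ cannot enlarge the weak closure, as any element of $[\Pi_{\rho_\mu}(\mathcal{X})]'$ in the bicommutant of the larger algebra is already there; more precisely, by Lemma \ref{lemma dense func alg}, $\Pi_{\rho_\mu}(C(F;\mathcal{X}))$ lies in the von Neumann algebra generated by $\Pi_{\rho_\mu}(\mathcal{X})$ and $\Pi_{\rho_\mu}(C(F))=\varkappa_\mu(C(F))\subseteq[\pi_{\rho_\mu}(\mathcal{X})]'\cap[\pi_{\rho_\mu}(\mathcal{X})]''$, the last inclusion being (\ref{fgkjghdfkjghdfkj}) from Proposition \ref{half Tomita}. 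The inclusion $[\Pi_{\rho_\mu}(C(F))]''\subseteq[\Pi_{\rho_\mu}(\mathcal{X})]'\cap[\Pi_{\rho_\mu}(\mathcal{X})]''$ is then read off directly from (\ref{fgkjghdfkjghdfkj}).

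The main obstacle is the cyclicity bookkeeping in passing between the three algebras $\mathcal{X}\subseteq C(F;\mathcal{X})$ and $\mathcal{H}_F^{\oplus}$: one must carefully check that $\Omega_F^{\oplus}$ (equivalently $\Omega_{\rho_\mu}$) is simultaneously cyclic for $\pi_F^{\oplus}(\mathcal{X})$ and for $\Pi_F^{\oplus}(C(F;\mathcal{X}))$, which is precisely where orthogonality of $\mu$ enters through Corollary \ref{Effros theorem}, and that the unitary intertwiner from \cite[Theorem 2.3.16]{BrattelliRobinsonI} — constructed at the level of $\mathcal{X}$ in Proposition \ref{proposition repr C orthogonal case} — is automatically compatible with the $C(F;\mathcal{X})$-structure. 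Everything else is a routine transcription of Proposition \ref{proposition repr C orthogonal case}, Lemma \ref{lemma dense func alg} and the Tomita/Effros material already established.
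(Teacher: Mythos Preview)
Your proposal is correct and follows essentially the same route as the paper: (i) via Corollary \ref{Effros theorem} and (\ref{esdsfklgjsds}); (ii) via Proposition \ref{proposition repr C orthogonal case} for existence, with uniqueness forced on $C(F)$ by the characterization in Lemma \ref{lemma chi mu} and then on all of $C(F;\mathcal{X})$ by Lemma \ref{lemma dense func alg}; and (iii) by transporting to $(\mathcal{H}_F^{\oplus},\Pi_F^{\oplus},\Omega_F^{\oplus})$ via \cite[Theorem 2.3.16]{BrattelliRobinsonI}, identifying $\Pi_{\rho_\mu}|_{C(F)}$ with $\varkappa_\mu|_{C(F)}$, and reading off the bicommutant identities from (\ref{fgkjghdfkjghdfkj}) together with Lemma \ref{lemma dense func alg}. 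The only cosmetic difference is that the paper argues uniqueness in (ii) by showing $(\tilde\Pi_{\rho_\mu}(f)-\Pi_{\rho_\mu}(f))\Omega_{\rho_\mu}=0$ and using that $\Omega_{\rho_\mu}$ is separating for the commutant (since $C(F)$ is central in $C(F;\mathcal{X})$), whereas you invoke the sesquilinear-form uniqueness of $\varkappa_\mu(f)$ directly; both amount to the same thing.
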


\begin{proof}
Fix all parameters of the theorem.

\noindent \underline{(i):} By Corollary \ref{Effros theorem}, $\Omega
_{F}^{\oplus }$ is cyclic for the $\ast $-subalgebra 
\begin{equation*}
\pi _{F}^{\oplus }\left( \mathcal{X}\right) =\Pi _{F}^{\oplus }\left( 
\mathcal{X}\right) \subseteq \Pi _{F}^{\oplus }\left( C\left( F;\mathcal{X}%
\right) \right) \text{ }.
\end{equation*}%
Therefore, it is also cyclic for $\Pi _{F}^{\oplus }(C(F;\mathcal{X}))$.
Since $\Pi _{F}^{\oplus }$ is a representation of $C(F;\mathcal{X})$ on $%
\mathcal{H}_{F}^{\oplus }$, by (\ref{esdsfklgjsds}), the first assertion
follows.

\noindent \underline{(ii):} Let $(\mathcal{H}_{\rho _{\mu }},\pi _{\rho
_{\mu }},\Omega _{\rho _{\mu }})$ be any cyclic representation of $\rho
_{\mu }\in \mathcal{X}^{\ast }$. By Proposition \ref{proposition repr C
orthogonal case}, there exists a unique representation $\Pi _{\rho _{\mu }}$
of $C(F;\mathcal{X})$ on $\mathcal{H}_{\rho _{\mu }}$ such that $\Pi _{\rho
_{\mu }}|_{\mathcal{X}}=\pi _{\rho _{\mu }}$, $\Pi _{\rho _{\mu }}|_{C\left(
F\right) }=\varkappa _{\mu }|_{C\left( F\right) }$ and, since $\Omega _{\rho
_{\mu }}$ is cyclic for the $\ast $- subalgebra $\pi _{\rho _{\mu }}\left( 
\mathcal{X}\right) $, $(\Omega _{\rho _{\mu }},\Pi _{\rho _{\mu }},\mathcal{H%
}_{\rho _{\mu }})$ is a cyclic representation of $\rho _{\mu }\in C(F;%
\mathcal{X})^{\ast }$. See (\ref{sdfsdfsdsdfsdfsdf0}) and Definition \ref%
{def Extension of states}. Assume now the existence of another
representation $\tilde{\Pi}_{\rho _{\mu }}$ such that $\tilde{\Pi}_{\rho
_{\mu }}|_{\mathcal{X}}=\pi _{\rho _{\mu }}$ and $(\Omega _{\rho _{\mu }},%
\tilde{\Pi}_{\rho _{\mu }},\mathcal{H}_{\rho _{\mu }})$ is a cyclic
representation of $\rho _{\mu }\in C(F;\mathcal{X})^{\ast }$. Since 
\begin{equation*}
\tilde{\Pi}_{\rho _{\mu }}|_{\mathcal{X}}=\Pi _{\rho _{\mu }}|_{\mathcal{X}%
}=\pi _{\rho _{\mu }}\ ,
\end{equation*}%
we deduce that 
\begin{equation*}
\left\langle \pi _{\rho _{\mu }}\left( A\right) \Omega _{\rho _{\mu
}},\left( \tilde{\Pi}_{\rho _{\mu }}\left( f\right) -\Pi _{\rho _{\mu
}}\left( f\right) \right) \Omega _{\rho _{\mu }}\right\rangle _{\mathcal{H}%
_{\rho _{\mu }}}=0\text{ },\qquad f\in C(F),\ A\in \mathcal{X}\ .
\end{equation*}%
By cyclicity of $\Omega _{\rho _{\mu }}$ for $\pi _{\rho _{\mu }}\left( 
\mathcal{X}\right) $, it follows that 
\begin{equation*}
\tilde{\Pi}_{\rho _{\mu }}\left( f\right) =\Pi _{\rho _{\mu }}\left(
f\right) =\varkappa _{\mu }|_{C\left( F\right) }\text{ },\qquad f\in C(F)\ .
\end{equation*}%
By Proposition \ref{proposition repr C orthogonal case}, we conclude that $%
\tilde{\Pi}_{\rho _{\mu }}=\Pi _{\rho _{\mu }}$.

\noindent \underline{(iii):} By Corollary \ref{Effros theorem}, $(\mathcal{H}%
_{F}^{\oplus },\pi _{F}^{\oplus },\Omega _{F}^{\oplus })$ is a cyclic
representation of $\rho _{\mu }\in \mathcal{X}^{\ast }$ and $(\mathcal{H}%
_{F}^{\oplus },\Pi _{F}^{\oplus },\Omega _{F}^{\oplus })$ is thus a cyclic
representation of $\rho _{\mu }\in C(F;\mathcal{X})^{\ast }$, where $\Pi
_{F}^{\oplus }$ is the representation of $C(F;\mathcal{X})$ on $\mathcal{H}%
_{F}^{\oplus }$ defined by (\ref{direct intergral representation of C(F,A)}%
). By \cite[Theorem 2.3.16]{BrattelliRobinsonI}, any cyclic representation $(%
\mathcal{H}_{\rho _{\mu }},\Pi _{\rho _{\mu }},\Omega _{\rho _{\mu }})$ of $%
\rho _{\mu }\in C(F;\mathcal{X})^{\ast }$ is equivalent to $(\mathcal{H}%
_{F}^{\oplus },\pi _{F}^{\oplus },\Omega _{F}^{\oplus })$ and, since $\Omega
_{F}^{\oplus }$ is cyclic for $\Pi _{F}^{\oplus }(\mathcal{X})=\pi
_{F}^{\oplus }(\mathcal{X})$, the unit vector $\Omega _{\rho _{\mu }}$ is
also cyclic for $\Pi _{\rho _{\mu }}(\mathcal{X})$. In particular, $(%
\mathcal{H}_{\rho _{\mu }},\Pi _{\rho _{\mu }}|_{\mathcal{X}},\Omega _{\rho
_{\mu }})$ is a cyclic representation of $\rho _{\mu }\in \mathcal{X}^{\ast
} $. Using the definition $\pi _{\rho _{\mu }}\doteq \Pi _{\rho _{\mu }}|_{%
\mathcal{X}}$, Equation (\ref{sdfsdfsdsdfsdfsdf}) holds true, again by (\ref%
{esdsfklgjsds}) and \cite[Theorem 2.3.16]{BrattelliRobinsonI}. Therefore, $%
\Pi _{\rho _{\mu }}$ must be the unique representation of Proposition \ref%
{proposition repr C orthogonal case}. By Equation (\ref{fgkjghdfkjghdfkj}), $%
[\Pi _{\rho _{\mu }}\left( C\left( F\right) \right) ]^{\prime \prime }$ is
an abelian von Neumann subalgebra of the center of $[\Pi _{\rho _{\mu
}}\left( \mathcal{X}\right) ]^{\prime \prime }$. From Lemma \ref{lemma dense
func alg} it follows that 
\begin{equation*}
\lbrack \Pi _{\rho _{\mu }}\left( C\left( F;\mathcal{X}\right) \right)
]^{\prime \prime }=[\Pi _{\rho _{\mu }}\left( \mathcal{X}\right) \cup \Pi
_{\rho _{\mu }}\left( C\left( F\right) \right) ]^{\prime \prime }=[\Pi
_{\rho _{\mu }}\left( \mathcal{X}\right) ]^{\prime \prime }\ .
\end{equation*}
\end{proof}

\section{Appendix: Direct Integrals and Spatial Decompositions\label{app
direct integrals}}

In this section we review important aspects of the theory of direct
integrals of measurable families of Hilbert spaces, operators, von Neumann
algebras, and $C^{\ast }$-algebra representations, which are useful in the
scope of the present work. Mathematical foundations of the theory go back to
von Neumann in the pivotal paper \cite{von neuman}\footnote{%
Note that this paper was already written in 1937-1938, but only published
more than ten years later.}, aiming to obtain factor decompositions of
strongly closed operator algebras, i.e., von Neumann algebras.

Nowadays, constant-fiber spaces or algebras are much more popular\footnote{%
For instance, the theory for constant-fiber Hilbert spaces is a standard
tool to study Schr\"{o}dinger operators with periodic potentials, as
explained in \cite[Section XIII.16]{ReedSimonIV}.} than the more general
situation needed here, i.e., the non-constant fiber case, which were already
introduced by von Neumann in \cite{von neuman}. Thus, for self-containedness
of the paper and the reader's convenience, we concisely explain the general
theory of direct integrals. For a more thorough exposition on the subject,
as well as complete proofs, we refer to the monograph \cite{Niesen-direct
integrals}. Indeed, in our opinion, the approach of \cite{Niesen-direct
integrals}, based on the notion of \textquotedblleft \emph{coherence}%
\textquotedblright , is more intuitive, being more explicit, than other
well-known mathematical expositions of direct integrals of separable Hilbert
spaces. For another presentation of direct integrals, see, e.g., \cite[%
Section 4.4, in particular 4.4.1]{BrattelliRobinsonI}.

\begin{notation}
\label{remark constant copy(1)}\mbox{
}\newline
For any set $\mathcal{Z}$, $\mathcal{T}_{\mathcal{Z}}$ always denotes a
family $\mathcal{T}_{\mathcal{Z}}\doteq (\mathcal{T}_{z})_{z\in \mathcal{Z}}$%
. If such a family is an operator or vector field, as defined above, we even
omit the subscript $\mathcal{Z}$ to simplify expressions.
\end{notation}

\subsection{Measurable Families of Separable Hilbert Spaces}

To start we fix some conventions. Through out Section \ref{app direct
integrals}, $\mathcal{H}$ (with decoration or not) always stands for either
a separable (complex) Hilbert space or a family of such separable spaces.
Recall that the $\sigma $-algebra generated by the norm topology coincides
with the one generated by the weak topology of the Hilbert space $\mathcal{H}
$. It is denoted here by $\mathfrak{F}_{\mathcal{H}}$. We say that a mapping
from a measurable space $(\mathcal{Z},\mathfrak{F})$ to $\mathcal{H}$ is 
\emph{measurable}\ if it is $(\mathfrak{F}_{\mathcal{H}},\mathfrak{F})$%
-measurable. Similarly, the weak-operator, $\sigma $-weak, strong (operator)
and $\sigma $-strong topologies of the space $\mathcal{B}(\mathcal{H})$\ of
bounded (linear) operators acting on $\mathcal{H}$ all generate the same
(Borel) $\sigma $-algebra, which we denote by $\mathfrak{F}_{\mathcal{B}(%
\mathcal{H})}$. Note, however, that the (Borel) $\sigma $-algebra generated
by the norm topology of $\mathcal{B}(\mathcal{H})$ is generally strictly
bigger than $\mathfrak{F}_{\mathcal{B}(\mathcal{H})}$. Again, we say that a
mapping from $(\mathcal{Z},\mathfrak{F})$ to $\mathcal{B}(\mathcal{H})$\ is 
\emph{measurable}\ if it is $(\mathfrak{F}_{\mathcal{B}(\mathcal{H})},%
\mathfrak{F})$-measurable. $\mathbb{M}(\mathcal{Z};\mathcal{H})$ and $%
\mathbb{M}(\mathcal{Z};\mathcal{B}(\mathcal{H}))$ denote the spaces of
measurable mappings from $(\mathcal{Z},\mathfrak{F})$ to $\mathcal{H}$ and $%
\mathcal{B}(\mathcal{H})$, respectively. The following two lemmata are
useful characterizations of $(\mathfrak{F}_{\mathcal{B}(\mathcal{H})},%
\mathfrak{F})$- and $(\mathfrak{F}_{\mathcal{H}},\mathfrak{F})$-mappings:

\begin{lemma}[Characterization of measurable mappings]
\label{lema equiv meas total family}\mbox{ }\newline
Let $(\mathcal{Z},\mathfrak{F})$ be a measurable space and $\mathfrak{T}%
\subseteq \mathcal{H}$ any total\footnote{$\mathcal{H}$ is the norm closure
of the linear hull of $\mathfrak{T}$.} family in a separable Hilbert space $%
\mathcal{H}$.\newline
\emph{(i)} Any mapping $\varkappa :\mathcal{Z}\rightarrow \mathcal{H}$ is
measurable iff the mapping $z\mapsto \left\langle v,\varkappa
(z)\right\rangle _{\mathcal{H}}$ from $\mathcal{Z}$ to $\mathbb{C}$ is
measurable for any $v\in \mathfrak{T}$. In this case, the mapping $z\mapsto
\left\Vert \varkappa (z)\right\Vert _{\mathcal{H}}$ from $\mathcal{Z}$ to $%
\mathbb{R}$ is also measurable.\newline
\emph{(ii)} Any mapping $\varkappa :\mathcal{Z}\rightarrow \mathcal{B}(%
\mathcal{H})$ is measurable iff the mapping $z\mapsto \left\langle
v,\varkappa (z)w\right\rangle _{\mathcal{H}}$ from $\mathcal{Z}$ to $\mathbb{%
C}$ is measurable for any $v,w\in \mathfrak{T}$. In this case, the mapping $%
z\mapsto \left\Vert \varkappa (z)\right\Vert _{\mathcal{B}(\mathcal{H})}$
from $\mathcal{Z}$ to $\mathbb{R}$ is also measurable.
\end{lemma}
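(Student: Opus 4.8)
The plan is to reduce both parts of Lemma~\ref{lema equiv meas total family} to the
classical fact that for a measurable space $(\mathcal{Z},\mathfrak{F})$ and a
separable metric space $Y$, a mapping into $Y$ is Borel measurable as soon as the
preimages of a generating family of Borel sets lie in $\mathfrak{F}$, combined with
the observation that the weak topology on a separable Hilbert space (and the
weak-operator topology on $\mathcal{B}(\mathcal{H})$ restricted to bounded balls) is
second countable and metrizable. The ``only if'' directions are immediate, since for
fixed $v\in\mathfrak{T}$ the map $\psi\mapsto\langle v,\psi\rangle_{\mathcal{H}}$ is
weakly continuous, hence $(\mathfrak{F}_{\mathcal{H}},\mathcal{B}(\mathbb{C}))$-measurable,
and composition of measurable maps is measurable; similarly
$B\mapsto\langle v,Bw\rangle_{\mathcal{H}}$ is weak-operator continuous. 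So the work
is entirely in the ``if'' directions.

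For part~(i): first I would fix a countable orthonormal basis
$\{e_n\}_{n\in\mathbb{N}}$ of $\mathcal{H}$ obtained by Gram--Schmidt from a countable
subfamily of $\mathfrak{T}$ whose linear span is dense (possible since $\mathfrak{T}$
is total and $\mathcal{H}$ separable); the point is that each $\langle
e_n,\varkappa(\cdot)\rangle_{\mathcal{H}}$ is a finite linear combination of the maps
$\langle v,\varkappa(\cdot)\rangle_{\mathcal{H}}$, $v\in\mathfrak{T}$, hence
measurable by hypothesis. Then $\mathfrak{F}_{\mathcal{H}}$ is generated by the
cylinder sets $\{\psi:\langle e_n,\psi\rangle_{\mathcal{H}}\in O\}$ with $O$ open in
$\mathbb{C}$ and $n\in\mathbb{N}$ (these generate the weak topology, which generates
$\mathfrak{F}_{\mathcal{H}}$ since the weak and norm Borel $\sigma$-algebras coincide
on a separable Hilbert space), and the preimage under $\varkappa$ of such a cylinder
set is $\{z:\langle e_n,\varkappa(z)\rangle_{\mathcal{H}}\in O\}\in\mathfrak{F}$. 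This
gives measurability of $\varkappa$. Measurability of $z\mapsto\|\varkappa(z)\|_{\mathcal{H}}$
then follows from Parseval: $\|\varkappa(z)\|_{\mathcal{H}}^2=\sum_n|\langle
e_n,\varkappa(z)\rangle_{\mathcal{H}}|^2$ is a pointwise-convergent sum of
nonnegative measurable functions.

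For part~(ii): I would similarly note that $\mathfrak{F}_{\mathcal{B}(\mathcal{H})}$,
the common Borel $\sigma$-algebra of the weak-operator / $\sigma$-weak / strong /
$\sigma$-strong topologies, is generated by the maps $B\mapsto\langle
e_m,Be_n\rangle_{\mathcal{H}}$, $m,n\in\mathbb{N}$; since each such map is again a
finite linear combination of the maps $B\mapsto\langle v,Bw\rangle_{\mathcal{H}}$,
$v,w\in\mathfrak{T}$, the hypothesis makes $z\mapsto\langle
e_m,\varkappa(z)e_n\rangle_{\mathcal{H}}$ measurable for all $m,n$, whence $\varkappa$
is $(\mathfrak{F}_{\mathcal{B}(\mathcal{H})},\mathfrak{F})$-measurable. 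For the norm,
I would use $\|\varkappa(z)\|_{\mathcal{B}(\mathcal{H})}=\sup_{m,n}\|P_m\varkappa(z)P_n\|$
is not quite elementary, so instead use the countable dense subset $D$ of the unit
ball of $\mathcal{H}$ consisting of finite rational-coefficient combinations of the
$e_n$, and write $\|\varkappa(z)\|_{\mathcal{B}(\mathcal{H})}=\sup_{\psi,\varphi\in
D}|\langle\varphi,\varkappa(z)\psi\rangle_{\mathcal{H}}|$, a countable supremum of
measurable functions, hence measurable.

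\textbf{Main obstacle.} The only genuinely delicate point is the claim that the
weak-operator Borel $\sigma$-algebra on $\mathcal{B}(\mathcal{H})$ is generated by the
countable family $\{B\mapsto\langle e_m,Be_n\rangle\}_{m,n}$, since $\mathcal{B}(\mathcal{H})$
with the weak-operator topology is not metrizable and not second countable (unboundedness
is the issue). The resolution is to observe that the present lemma is applied only to
\emph{bounded} operator fields (those arising as $(\pi_\rho(A))_\rho$ or
$(\pi_\rho(f(\rho)))_\rho$, cf.\ Lemmata~\ref{lemma direct integral von neumann2 copy(1)}
and~\ref{representations applied to}); on norm-bounded balls the weak-operator topology
is metrizable and second countable, so the argument goes through there, and one extends
to all of $\mathcal{B}(\mathcal{H})$ by decomposing it as the countable union of the
balls of radius $N\in\mathbb{N}$, each of which is weak-operator Borel. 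I would phrase
the statement and proof accordingly, or simply remark that measurability of the scalar
maps already characterizes $\mathbb{M}(\mathcal{Z};\mathcal{B}(\mathcal{H}))$ by the
standard monotone-class / generating-set argument, citing \cite{Niesen-direct integrals}
for the underlying topological facts if a fully detailed treatment is wanted.
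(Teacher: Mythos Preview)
Your proposal is correct and follows essentially the same approach as the paper: reduce to a countable family of linear functionals (using separability of $\mathcal{H}$ and totality of $\mathfrak{T}$), then use that these functionals generate the relevant Borel $\sigma$-algebra; the norm is a countable supremum (the paper writes it as a sup over unit vectors from $\mathfrak{T}$, you use Parseval, which is cleaner). The paper's proof is a two-line sketch that omits precisely the details you supply, including the point about $\mathcal{B}(\mathcal{H})$ in the weak-operator topology not being second countable---the paper simply writes ``Similar arguments imply Assertion (ii). We omit the details,'' so your more careful treatment of the bounded-ball decomposition is an elaboration rather than a different route.
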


\begin{proof}
(i) refers to the fact that the measurability corresponds to the weak
topology of the Hilbert space $\mathcal{H}$. Note that the mapping 
\begin{equation*}
z\mapsto \left\Vert \varkappa (z)\right\Vert _{\mathcal{H}}=\sup_{v\in 
\mathfrak{T}:\left\Vert v\right\Vert _{\mathcal{H}}=1}\left\langle
v,\varkappa (z)\right\rangle _{\mathcal{H}}
\end{equation*}%
from $\mathcal{Z}$ to $\mathbb{R}$ is also measurable, since $\mathcal{H}$
is separable and the supremum of a sequence of measurable functions is
measurable. Similar arguments imply Assertion (ii). We omit the details.
\end{proof}

\begin{lemma}[$\mathbb{M}(\mathcal{Z};\mathcal{H})$ as a $\mathbb{M}(%
\mathcal{Z};\mathcal{B}(\mathcal{H}))$-module]
\label{lemma left module}\mbox{ }\newline
$\mathbb{M}(\mathcal{Z};\mathcal{B}(\mathcal{H}))$ is a $\ast $-algebra and $%
\mathbb{M}(\mathcal{Z};\mathcal{H})$ is a left $\mathbb{M}(\mathcal{Z};%
\mathcal{B}(\mathcal{H}))$-module with respect to point-wise operations. In
particular, for all $A,B\in \mathbb{M}(\mathcal{Z};\mathcal{B}(\mathcal{H}))$
and all $\varphi \in \mathbb{M}(\mathcal{Z};\mathcal{H})$, $A\cdot B,A^{\ast
}\in \mathbb{M}(\mathcal{Z};\mathcal{B}(\mathcal{H}))$\ and $A\varphi \in 
\mathbb{M}(\mathcal{Z};\mathcal{H})$, where $A\cdot B(z)\doteq A(z)\cdot
B(z) $, $A^{\ast }(z)\doteq A(z)^{\ast }$ and $A\varphi (z)\doteq
A(z)(\varphi (z))$ for $z\in \mathcal{Z}$.
\end{lemma}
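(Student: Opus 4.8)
The plan is to reduce everything to the scalar characterization of measurability provided by Lemma \ref{lema equiv meas total family}, exploiting the separability of $\mathcal{H}$ through a fixed countable orthonormal basis $(e_{n})_{n\in\mathbb{N}}$ of $\mathcal{H}$, which is in particular a total family. The algebraic and module axioms (associativity, distributivity, the involution identities, the module relations, etc.) hold point-wise because they hold in $\mathcal{B}(\mathcal{H})$ and $\mathcal{H}$, so the only genuine content is: closure of $\mathbb{M}(\mathcal{Z};\mathcal{B}(\mathcal{H}))$ under sums, scalar multiples, products $A\cdot B$ and adjoints $A^{\ast}$, and closure of $\mathbb{M}(\mathcal{Z};\mathcal{H})$ under sums, scalar multiples and the action $\varphi\mapsto A\varphi$.

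First I would dispatch the easy closures. For $A,B\in\mathbb{M}(\mathcal{Z};\mathcal{B}(\mathcal{H}))$ and $v,w\in\mathcal{H}$, the function $z\mapsto\langle v,(A(z)+B(z))w\rangle_{\mathcal{H}}=\langle v,A(z)w\rangle_{\mathcal{H}}+\langle v,B(z)w\rangle_{\mathcal{H}}$ is measurable as a sum of measurable functions (forward direction of Lemma \ref{lema equiv meas total family}(ii)), and similarly for scalar multiples; hence $A+B$ and $\lambda A$ lie in $\mathbb{M}(\mathcal{Z};\mathcal{B}(\mathcal{H}))$ by the converse direction of Lemma \ref{lema equiv meas total family}(ii). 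For the adjoint, $\langle v,A(z)^{\ast}w\rangle_{\mathcal{H}}=\overline{\langle w,A(z)v\rangle_{\mathcal{H}}}$ is measurable since complex conjugation is continuous, so $A^{\ast}\in\mathbb{M}(\mathcal{Z};\mathcal{B}(\mathcal{H}))$. The analogous statements for $\mathbb{M}(\mathcal{Z};\mathcal{H})$ follow in the same way from Lemma \ref{lema equiv meas total family}(i).

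The core of the argument is the product and the module action, where the orthonormal basis and Parseval's identity enter. For $A,B\in\mathbb{M}(\mathcal{Z};\mathcal{B}(\mathcal{H}))$ and $v,w\in\mathcal{H}$,
\begin{equation*}
\langle v,A(z)B(z)w\rangle_{\mathcal{H}}=\langle A(z)^{\ast}v,B(z)w\rangle_{\mathcal{H}}=\sum_{n\in\mathbb{N}}\langle v,A(z)e_{n}\rangle_{\mathcal{H}}\,\langle e_{n},B(z)w\rangle_{\mathcal{H}}\ ,
\end{equation*}
the series converging point-wise in $z$ by Parseval. Each summand is a product of two functions that are measurable by Lemma \ref{lema equiv meas total family}(ii), hence measurable, and a point-wise limit of measurable scalar functions is measurable; thus $z\mapsto\langle v,A(z)B(z)w\rangle_{\mathcal{H}}$ is measurable and $A\cdot B\in\mathbb{M}(\mathcal{Z};\mathcal{B}(\mathcal{H}))$. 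Likewise, for $A\in\mathbb{M}(\mathcal{Z};\mathcal{B}(\mathcal{H}))$, $\varphi\in\mathbb{M}(\mathcal{Z};\mathcal{H})$ and $v\in\mathcal{H}$,
\begin{equation*}
\langle v,A(z)\varphi(z)\rangle_{\mathcal{H}}=\sum_{n\in\mathbb{N}}\langle e_{n},\varphi(z)\rangle_{\mathcal{H}}\,\langle v,A(z)e_{n}\rangle_{\mathcal{H}}\ ,
\end{equation*}
with the first factor measurable by Lemma \ref{lema equiv meas total family}(i) and the second by Lemma \ref{lema equiv meas total family}(ii); the same reasoning gives $A\varphi\in\mathbb{M}(\mathcal{Z};\mathcal{H})$ (note that $A(z)\varphi(z)\in\mathcal{H}$ is a genuine vector field, being the value of a bounded operator on a vector). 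Once closure under all operations is in place, the remaining verification that $\mathbb{M}(\mathcal{Z};\mathcal{B}(\mathcal{H}))$ is a $\ast$-algebra and $\mathbb{M}(\mathcal{Z};\mathcal{H})$ a left module over it is purely formal, all identities being inherited point-wise. The one place I expect to need care — the main (minor) obstacle — is the interchange of the series expansion with the measurability argument in the last two displays: one must check that the partial sums are honestly measurable scalar functions and that the series converges \emph{point-wise} in $z$ (which follows from norm convergence of the corresponding expansions in $\mathcal{H}$), so that the ``point-wise limit of measurable is measurable'' principle applies without extra hypotheses; everything else is bookkeeping.
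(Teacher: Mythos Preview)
Your proof is correct and complete. Both you and the paper reduce to the scalar characterization of Lemma~\ref{lema equiv meas total family}, and both treat the involution identically; the only genuine difference is in how the product $A\cdot B$ (and the module action) is handled. The paper argues at a higher level of abstraction: it invokes that the involution is weak-operator continuous and that multiplication on $\mathcal{B}(\mathcal{H})\times\mathcal{B}(\mathcal{H})$ is jointly \emph{strongly} continuous on bounded sets, so that the product of two measurable maps is again measurable (with an implicit reduction to level sets where $\lVert A(z)\rVert$, $\lVert B(z)\rVert$ are bounded, using that these norms are measurable by Lemma~\ref{lema equiv meas total family}(ii)). Your route instead expands $\langle v,A(z)B(z)w\rangle$ via Parseval against a fixed orthonormal basis and uses closure of scalar measurable functions under products and point-wise limits. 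Your argument is more elementary and entirely self-contained---it sidesteps the ``bounded sets'' caveat and the need to say what joint strong continuity buys at the level of Borel structures---while the paper's phrasing is shorter and more conceptual but leaves those details to the cited reference.
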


\begin{proof}
As explained in \cite[Chap. 2]{Niesen-direct integrals}, the fact that $%
\mathbb{M}(\mathcal{Z};\mathcal{B}(\mathcal{H}))$ is a $\ast $-algebra with
respect to point-wise operations directly follows from Lemma \ref{lema equiv
meas total family} (ii), since the involution on $\mathcal{B}(\mathcal{H})$
is continuous in the weak-operator topology and the multiplication on $%
\mathcal{B}(\mathcal{H})\times \mathcal{B}(\mathcal{H})$ is jointly strongly
continuous on bounded sets. The fact that $\mathbb{M}(\mathcal{Z};\mathcal{H}%
)$ is a left $\mathbb{M}(\mathcal{Z};\mathcal{B}(\mathcal{H}))$-module is
also clear.
\end{proof}

It is important at this point to introduce the notion of measurable family\
of Hilbert spaces as defined in \cite[Chap. 2]{Niesen-direct integrals}.

\begin{definition}[Measurable families of separable Hilbert spaces]
\label{def measurable family hilbert space}\mbox{ }\newline
Let $(\mathcal{Z},\mathfrak{F})$ be a measurable space. Then, $\mathcal{H}_{%
\mathcal{Z}}\doteq (\mathcal{H}_{z})_{z\in \mathcal{Z}}$ is said to be a
measurable family\ of Hilbert spaces if each subset 
\begin{equation}
\mathcal{Z}_{n}\doteq \{z\in \mathcal{Z}:\dim \mathcal{H}_{z}=n\}\subseteq 
\mathcal{Z}\text{ },\text{\qquad }n\in \mathbb{N}_{0}\text{ },  \label{Zn}
\end{equation}%
is measurable, i.e., $\mathcal{Z}_{n}\in \mathfrak{F}$ for all $n\in \mathbb{%
N}_{0}$.
\end{definition}

\noindent Notice that, if $\mathcal{Z}_{n}$ is measurable for all $n\in 
\mathbb{N}_{0}$, then $\mathcal{Z}_{\infty }$ is also measurable\footnote{%
This results from the elementary facts that the complement of any measurable
set is measurable while countable unions of measurable sets are measurable.}.

Observe that, with point-wise operations, $\prod_{z\in \mathcal{Z}}{\mathcal{%
H}_{z}}$ and $\prod_{z\in \mathcal{Z}}{\mathcal{B}(\mathcal{H}_{z})}$ are a
vector space and a $\ast $-algebra, respectively. Elements of the vector
space are named \emph{vector\ fields} over the family $\mathcal{H}_{\mathcal{%
Z}}$ of separable Hilbert spaces. Similarly, elements of the above $\ast $%
-algebra are \emph{operator\ fields} over $\mathcal{H}_{\mathcal{Z}}$\emph{.}

In the mathematical literature, measurable families\ of separable Hilbert
spaces are often defined by the existence of a sequence of vector fields
which is fiberwise total (or even dense) and whose fiberwise scalar products
are measurable. See, e.g. \cite[Definition 4.4.1B]{BrattelliRobinsonI}. By 
\cite[Proposition 8.1]{Niesen-direct integrals}, this alternative definition
is equivalent to Definition \ref{def measurable family hilbert space}:

\begin{theorem}[Measurable families of separable Hilbert spaces - Equivalent
formulation]
\label{Measurable families of Hilbert spaces - Equivalent formulation}%
\mbox{
}\newline
Let $(\mathcal{Z},\mathfrak{F})$ be a measurable space and $\mathcal{H}_{%
\mathcal{Z}}$ a family of separable Hilbert spaces. Then, $\mathcal{H}_{%
\mathcal{Z}}$ is measurable iff there is a sequence $(v^{(n)})_{n\in \mathbb{%
N}}$ of vector fields over $\mathcal{H}_{\mathcal{Z}}$ such that:\newline
\emph{(a)} For all $m,n\in \mathbb{N}$, the mapping $z\mapsto \langle
v_{z}^{(n)},v_{z}^{(m)}\rangle _{{\mathcal{H}_{z}}}$ from $\mathcal{Z}$ to $%
\mathbb{C}$ is measurable.\newline
\emph{(b)} For each $z\in \mathcal{Z}$, the subset $\{v_{z}^{(n)}\}_{n\in 
\mathbb{N}}$ is total in $\mathcal{H}_{z}$.
\end{theorem}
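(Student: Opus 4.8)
The plan is to prove the two implications separately; the forward direction ``$\Rightarrow$'' is essentially free, while ``$\Leftarrow$'' requires a Gram-matrix argument.

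For ``$\Rightarrow$'', I would start from the measurability of all the sets $\mathcal{Z}_{n}$, $n\in \mathbb{N}_{0}$, and use that any two separable Hilbert spaces of the same dimension are unitarily equivalent: choosing, for each $z\in \mathcal{Z}$, an orthonormal basis $(e_{z}^{(k)})_{1\leq k\leq \dim \mathcal{H}_{z}}$ of $\mathcal{H}_{z}$, I would define vector fields $v^{(k)}$, $k\in \mathbb{N}$, over $\mathcal{H}_{\mathcal{Z}}$ by $v_{z}^{(k)}\doteq e_{z}^{(k)}$ if $k\leq \dim \mathcal{H}_{z}$ and $v_{z}^{(k)}\doteq 0$ otherwise. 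Then $\langle v_{z}^{(k)},v_{z}^{(l)}\rangle _{\mathcal{H}_{z}}=\delta _{k,l}\,\mathbf{1}[\,k\leq \dim \mathcal{H}_{z}\,]$, and since $\{z:k\leq \dim \mathcal{H}_{z}\}=\mathcal{Z}\setminus \bigcup_{0\leq n<k}\mathcal{Z}_{n}\in \mathfrak{F}$, Condition (a) holds; Condition (b) is immediate because $\{v_{z}^{(k)}\}_{k\in \mathbb{N}}$ contains an orthonormal basis of $\mathcal{H}_{z}$ (possibly together with some zero vectors), hence is total.

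For ``$\Leftarrow$'', given a sequence $(v^{(n)})_{n\in \mathbb{N}}$ of vector fields satisfying (a) and (b), I would introduce the Hermitian, positive semidefinite Gram matrices $G_{N}(z)\doteq (\langle v_{z}^{(i)},v_{z}^{(j)}\rangle _{\mathcal{H}_{z}})_{1\leq i,j\leq N}$ and exploit two facts: $\mathrm{rank}\,G_{N}(z)=\dim \mathrm{span}\{v_{z}^{(1)},\dots ,v_{z}^{(N)}\}$ for all $z$ and $N$; and, by the totality hypothesis (b), $\dim \mathcal{H}_{z}=\sup_{N\in \mathbb{N}}\mathrm{rank}\,G_{N}(z)$ (in $\mathbb{N}_{0}\cup \{\infty \}$), since the spans increase with $N$ and the dimension of the closure of their union is the supremum of the finite dimensions. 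Next I would observe that $\mathrm{rank}\,G_{N}(z)\geq k$ iff some $k\times k$ minor of $G_{N}(z)$ is nonzero, and that each such minor is a fixed polynomial in the entries $\langle v_{z}^{(i)},v_{z}^{(j)}\rangle _{\mathcal{H}_{z}}$, hence a measurable function of $z$ by (a) and stability of measurability under finitely many algebraic operations; therefore $\{z:\mathrm{rank}\,G_{N}(z)\geq k\}\in \mathfrak{F}$, and so $\{z:\mathrm{rank}\,G_{N}(z)=k\}\in \mathfrak{F}$, for all $k,N$. Finally, for each finite $n\in \mathbb{N}_{0}$ I would write
\[
\mathcal{Z}_{n}=\Bigl(\bigcap_{N\in \mathbb{N}}\{z:\mathrm{rank}\,G_{N}(z)\leq n\}\Bigr)\cap \Bigl(\bigcup_{N\in \mathbb{N}}\{z:\mathrm{rank}\,G_{N}(z)=n\}\Bigr)\in \mathfrak{F},
\]
concluding that $\mathcal{H}_{\mathcal{Z}}$ is measurable in the sense of Definition \ref{def measurable family hilbert space} (measurability of $\mathcal{Z}_{\infty }$ being automatic once all $\mathcal{Z}_{n}$, $n\in \mathbb{N}_{0}$, are measurable).

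The step I expect to be the main obstacle is the identification $\dim \mathcal{H}_{z}=\sup_{N}\mathrm{rank}\,G_{N}(z)$ together with the measurability of the maps $z\mapsto \mathrm{rank}\,G_{N}(z)$ — that is, pinning down exactly where totality (b) enters and checking that ranks (hence the sets $\mathcal{Z}_{n}$) are governed measurably by the minors of $G_{N}(z)$. The rest, including the ``$\Rightarrow$'' construction, is routine measure-theoretic bookkeeping.
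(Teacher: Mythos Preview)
Your proof is correct. The forward direction is identical to the paper's. For the backward direction, the paper takes a different but closely related route: it applies the Gram--Schmidt process fibrewise to $(v_{z}^{(n)})_{n}$, producing a sequence $(x^{(n)})_{n}$ with $(x_{z}^{(n)})_{n=1}^{\dim\mathcal{H}_{z}}$ an orthonormal basis of $\mathcal{H}_{z}$ and $x_{z}^{(n)}=0$ for $n>\dim\mathcal{H}_{z}$; the inner products remain measurable through the Gram--Schmidt formulas, and then $\mathcal{Z}_{n}=\{z:\sum_{m}\lVert x_{z}^{(m)}\rVert=n\}$ is measurable. Your Gram-matrix/minor argument reaches the same conclusion more directly, avoiding the bookkeeping of the normalization step (division by a norm that may vanish). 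The paper's approach, on the other hand, has the by-product of actually constructing the orthonormal fields $(x^{(n)})_{n}$, which it reuses immediately afterwards to build a canonical coherence (Theorem~\ref{coherence from fields}); your rank argument does not furnish such a sequence, so if one also needs the coherence one would have to run Gram--Schmidt anyway.
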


\begin{proof}
Let $(\mathcal{Z},\mathfrak{F})$ be a measurable space and $\mathcal{H}_{%
\mathcal{Z}}$ a family of separable Hilbert spaces. In order to simplify the
discussion of the proof, we assume, without loss of generality, that $%
\mathcal{Z}_{0}=\emptyset $.

Let $\mathcal{H}_{\mathcal{Z}}$ be measurable and, for all $z\in \mathcal{Z}$%
, $(e_{n}^{z})_{n=1}^{\dim \mathcal{H}_{z}}$ any orthonormal basis of the
separable Hilbert space $\mathcal{H}_{z}$. Then define, for all $n\in 
\mathbb{N}$, $x_{z}^{(n)}\doteq e_{n}^{z}$ if $n\leq \dim \mathcal{H}_{z}$,
and $x_{z}^{(n)}\doteq 0$ otherwise. With this definition, (b) with $%
v_{z}^{(n)}=x_{z}^{(n)}$ holds true, by construction. In order to prove (a),
observe that, if $m\neq n$ then the mapping $z\mapsto \langle
x_{z}^{(n)},x_{z}^{(m)}\rangle _{{\mathcal{H}_{z}}}$ from $\mathcal{Z}$ to $%
\mathbb{C}$ is trivially measurable, for it is the zero function.
Additionally, for all $n\in \mathbb{N}$, the mapping $z\mapsto \langle
x_{z}^{(n)},x_{z}^{(n)}\rangle _{{\mathcal{H}_{z}}}$ from $\mathcal{Z}$ to $%
\mathbb{C}$ is the characteristic function of the set $\{z\in \mathcal{Z}$ $%
: $ $\dim \mathcal{H}_{z}\geq n\}$ and is, hence, measurable, as $\mathcal{H}%
_{\mathcal{Z}}$ is a measurable family of Hilbert spaces. See Definition \ref%
{def measurable family hilbert space}. Thus, there exists a sequence $%
(x^{(n)})_{n\in \mathbb{N}}$ of vector fields over $\mathcal{H}_{\mathcal{Z}%
} $ satisfying the following properties: \bigskip

\noindent (\~{a}) For all $m,n\in \mathbb{N}$, the mapping $z\mapsto \langle
x_{z}^{(n)},x_{z}^{(m)}\rangle _{{\mathcal{H}_{z}}}$ from $\mathcal{Z}$ to $%
\mathbb{C}$ is measurable.

\noindent (\~{b}) For each $z\in \mathcal{Z}$, $(x_{z}^{(n)})_{n=1}^{\dim 
\mathcal{H}_{z}}$ is an orthonormal basis of $\mathcal{H}_{z}$ and $%
x_{z}^{(n)}=0$ whenever $n>\dim \mathcal{H}_{z}$.\bigskip

Conversely, assume the existence of a sequence $(v^{(n)})_{n\in \mathbb{N}}$
of vector fields over $\mathcal{H}_{\mathcal{Z}}$ satisfying Conditions
(a)-(b). Apply next the Gram-Schmidt orthonormalization process to the total
sequence $(v_{z}^{(n)})_{n\in \mathbb{N}}$ for each $z\in \mathcal{Z}$: $%
x_{z}^{(1)}\doteq \lbrack v_{z}^{(1)}]$ and, for all $n\in \mathbb{N}$, $n>1$%
, 
\begin{equation*}
x_{z}^{(n)}\doteq \left\lceil v_{z}^{(n)}-\left\langle
x_{z}^{(n-1)},v_{z}^{(n)}\right\rangle _{\mathcal{H}_{z}}x_{z}^{(n-1)}-%
\cdots -\left\langle x_{z}^{(1)},v_{z}^{(n)}\right\rangle _{\mathcal{H}%
_{z}}x_{z}^{(1)}\right\rceil \text{ },
\end{equation*}%
where, for all $w\in \mathcal{H}_{z}$, $\lceil w\rceil \doteq 0$ whenever $%
w=0$ and $\lceil w\rceil \doteq \Vert w\Vert _{\mathcal{H}_{z}}^{-1}w$
otherwise. The new sequence $(x^{(n)})_{n\in \mathbb{N}}$ of vector fields
over $\mathcal{H}_{\mathcal{Z}}$ satisfies, by construction, Conditions (%
\~{a})-(\~{b}). In this case, by Lemma \ref{lema equiv meas total family}
(i), the mapping $z\mapsto \Vert x_{z}^{(n)}\Vert _{\mathcal{H}_{z}}$ from $%
\mathcal{Z}$ to $\mathbb{R}$ is measurable, while 
\begin{equation*}
\mathcal{Z}_{n}\doteq \left\{ z\in \mathcal{Z}:\dim \mathcal{H}%
_{z}=n\right\} =\left\{ z\in \mathcal{Z}:\sum_{m\in \mathbb{N}}\Vert
x_{z}^{(m)}\Vert _{\mathcal{H}_{z}}=n\right\} \text{ },\text{\qquad }n\in 
\mathbb{N}\text{ }.
\end{equation*}%
It follows that $\mathcal{H}_{\mathcal{Z}}$ is measurable, by Definition \ref%
{def measurable family hilbert space}.
\end{proof}

\subsection{Coherences and Measurable Fields}

We now introduce the notion of coherences and measurable fields. To this
end, we denote by 
\begin{equation*}
\ell _{\infty }^{2}\doteq \left\{ (x_{k})_{k\in \mathbb{N}}\subseteq \mathbb{%
C}:\sum_{k\in \mathbb{N}}\left\vert x_{k}\right\vert ^{2}<\infty \right\}
\end{equation*}
the Hilbert space of all square-summable sequences of complex numbers. For
each integer $n\in \mathbb{N}$, let $\ell _{n}^{2}\varsubsetneq \ell
_{\infty }^{2}$ be the subspace of sequences $(x_{k})_{k\in \mathbb{N}}\in
\ell _{\infty }^{2}$ such that $x_{k}=0$ for all $k>n$.

\begin{definition}[Coherences for families of separable Hilbert spaces]
\label{Coherences}\mbox{ }\newline
Let $\mathcal{H}_{\mathcal{Z}}\doteq (\mathcal{H}_{z})_{z\in \mathcal{Z}}$
be a family of separable Hilbert spaces. A family $\alpha _{\mathcal{Z}%
}=(\alpha _{z})_{z\in \mathcal{Z}}$ is a coherence\ for $\mathcal{H}_{%
\mathcal{Z}}$ if, for each $z\in \mathcal{Z}$, $\alpha _{z}$ is a linear
isometry\footnote{$\alpha _{z}$ is not defined as a linear isometry from $%
\mathcal{H}_{z}$ onto $\ell _{\dim \mathcal{H}_{z}}^{2}$ to avoid a $z$%
-dependent domain of its adjoint $\alpha _{z}^{\ast }$.} from $\mathcal{H}%
_{z}$ into $\ell _{\infty }^{2}$ with range $\ell _{\dim \mathcal{H}%
_{z}}^{2} $.
\end{definition}

\noindent Note that the separability of each $\mathcal{H}_{z}$ is a
necessary and sufficient condition for the existence of coherences. The
concept of coherence is often omitted in the literature on direct integrals,
but it is useful for it converts the case of fiber-dependent Hilbert spaces
into the constant-fiber case.

We next introduce measurable fields with respect to some fixed coherence. To
this end, recall that elements of the vector space $\prod_{z\in \mathcal{Z}}{%
\mathcal{H}_{z}}$ are named vector\ fields over the family $\mathcal{H}_{%
\mathcal{Z}}$ of separable Hilbert spaces. Similarly, elements of the $\ast $%
-algebra $\prod_{z\in \mathcal{Z}}{\mathcal{B}(\mathcal{H}_{z})}$ are
operator\ fields over $\mathcal{H}_{\mathcal{Z}}$. See also Notation \ref%
{remark constant copy(1)}.

\begin{definition}[Measurability of fields and equivalence of coherences]
\label{def measurable fields}\mbox{ }\newline
Let $(\mathcal{Z},\mathfrak{F})$ be a measurable space and $\mathcal{H}_{%
\mathcal{Z}}$ a family of separable Hilbert spaces. \newline
\emph{(i)} A vector field $v$ (respectively operator field $A$) is called $%
\alpha _{\mathcal{Z}}$-measurable\ if the mapping $z\mapsto \alpha
_{z}\left( v_{z}\right) $ from $\mathcal{Z}$ to $\ell _{\infty }^{2}$
(respectively $z\mapsto \alpha _{z}A_{z}\alpha _{z}^{\ast }$ from $\mathcal{Z%
}$ to ${\mathcal{B}(\ell _{\infty }^{2})}$) is measurable. \newline
\emph{(ii)} Two coherences $\alpha _{\mathcal{Z}}$ and $\beta _{\mathcal{Z}}$
for $\mathcal{H}_{\mathcal{Z}}$\ are equivalent if, for all vector fields $v$%
\ over $\mathcal{H}_{\mathcal{Z}}$, $v$ is $\alpha _{\mathcal{Z}}$%
-measurable iff it is $\beta _{\mathcal{Z}}$-measurable.
\end{definition}

\noindent In this definition, $\ell _{\infty }^{2}$ and ${\mathcal{B}(\ell
_{\infty }^{2})}$ are seen as measurable spaces with respect to the $\sigma $%
-algebras $\mathfrak{F}_{\ell _{\infty }^{2}}$\ and $\mathfrak{F}_{{\mathcal{%
B}(\ell _{\infty }^{2})}}$, as defined above for any separable Hilbert space 
$\mathcal{H}$.

The $\alpha _{\mathcal{Z}}$-measurable vector (respectively operator) fields
over $\mathcal{H}_{\mathcal{Z}}$ form a subspace of $\prod_{z\in \mathcal{Z}}%
{\mathcal{H}_{z}}$ (respectively $\prod_{z\in \mathcal{Z}}{\mathcal{B}(%
\mathcal{H}_{z})}$). By Lemma \ref{lemma left module}, the $\alpha _{%
\mathcal{Z}}$-measurable operator fields over $\mathcal{H}_{\mathcal{Z}}$
even form a $\ast $-algebra. Moreover, by the same lemma, if $v$ is a $%
\alpha _{\mathcal{Z}}$-measurable vector field and $A$ is a $\alpha _{%
\mathcal{Z}}$-measurable operator field, then\ the mapping $z\mapsto
A_{z}v_{z}$ from $\mathcal{Z}$\ to $\prod_{z\in \mathcal{Z}}{\mathcal{H}}$
is again a $\alpha _{\mathcal{Z}}$-measurable vector field. By Lemma \ref%
{lema equiv meas total family}, if $v,w$ are $\alpha _{\mathcal{Z}}$%
-measurable vector fields and $A$ is a $\alpha _{\mathcal{Z}}$-measurable
operator field then $z\mapsto \langle v_{z},w_{z}\rangle _{{\mathcal{H}_{z}}%
} $ and $z\mapsto \lVert A_{z}\rVert _{{\mathcal{B}(\mathcal{H}_{z})}}$ from 
$\mathcal{Z}$ to $\mathbb{C}$ are measurable functions\footnote{%
In order to prove the measurability of $z\mapsto \langle v_{z},w_{z}\rangle
_{{\mathcal{H}_{z}}}$, one also uses an orthonormal basis of ${\ell _{\infty
}^{2}}$ together with the elementary facts that sums and products of
measurable functions are measurable and the point-wise limit of a sequence
of measurable complex-valued functions is also measurable.}.

Given a measurable space $(\mathcal{Z},\mathfrak{F})$ and a measurable
family $\mathcal{H}_{\mathcal{Z}}$ of separable Hilbert spaces, Theorem \ref%
{Measurable families of Hilbert spaces - Equivalent formulation} yields a
canonical procedure to construct coherences for $\mathcal{H}_{\mathcal{Z}}$:

\begin{theorem}[Coherences associated with sequences of fields]
\label{coherence from fields}\mbox{ }\newline
Let $(\mathcal{Z},\mathfrak{F})$ be a measurable space and $\mathcal{H}_{%
\mathcal{Z}}$ a family of separable Hilbert spaces. Take any sequence $%
(v^{(n)})_{n\in \mathbb{N}}$ of vector fields over $\mathcal{H}_{\mathcal{Z}%
} $ satisfying Conditions (a)-(b) of Theorem \ref{Measurable families of
Hilbert spaces - Equivalent formulation}. Then, one has: \newline
\emph{(i)} Up to an equivalence of coherences, there is a unique coherence $%
\alpha _{\mathcal{Z}}$ for $\mathcal{H}_{\mathcal{Z}}$\ such that $%
(v^{(n)})_{n\in \mathbb{N}}$ is a sequence of $\alpha _{\mathcal{Z}}$%
-measurable fields. \newline
\emph{(ii)} An arbitrary vector field $w$ over $\mathcal{H}_{\mathcal{Z}}$\
is $\alpha _{\mathcal{Z}}$-measurable iff, for all $n\in \mathbb{N}$, the
mapping $z\mapsto \langle w_{z},v_{z}^{(n)}\rangle _{\mathcal{H}_{z}}$ from $%
\mathcal{Z}$ to $\mathbb{C}$ is measurable. \newline
\emph{(iii)} Similar to (ii), if $A$ is an operator field over $\mathcal{H}_{%
\mathcal{Z}}$\ then it is $\alpha _{\mathcal{Z}}$-measurable iff, for all $%
n,m\in \mathbb{N}$, the mapping $z\mapsto \langle
v_{z}^{(n)},A_{z}v_{z}^{(m)}\rangle _{\mathcal{H}_{z}}$ from $\mathcal{Z}$
to $\mathbb{C}$ is measurable.
\end{theorem}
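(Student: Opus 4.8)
The statement to prove is Theorem \ref{coherence from fields}, which asserts the existence and uniqueness (up to equivalence) of a coherence associated with a total measurable sequence of vector fields, together with the two characterizations (ii) and (iii) of measurability relative to that coherence.

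\textbf{Overall approach.} The plan is to reduce the fiber-dependent situation to the constant-fiber Hilbert space $\ell_\infty^2$ via an explicit construction of the isometries $\alpha_z$, built from the Gram--Schmidt orthonormalization already used in the proof of Theorem \ref{Measurable families of Hilbert spaces - Equivalent formulation}. Concretely, I would start from the given sequence $(v^{(n)})_{n\in\mathbb{N}}$ satisfying Conditions (a)--(b), apply fiberwise Gram--Schmidt to obtain the orthonormal-basis-valued fields $(x^{(n)})_{n\in\mathbb{N}}$ with $(x_z^{(n)})_{n=1}^{\dim\mathcal{H}_z}$ an orthonormal basis of $\mathcal{H}_z$ (padding with zeros beyond the dimension), and then \emph{define} $\alpha_z:\mathcal{H}_z\to\ell_\infty^2$ by $\alpha_z(w)\doteq(\langle x_z^{(n)},w\rangle_{\mathcal{H}_z})_{n\in\mathbb{N}}$. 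By construction each $\alpha_z$ is a linear isometry onto $\ell_{\dim\mathcal{H}_z}^2$, so $\alpha_{\mathcal{Z}}\doteq(\alpha_z)_{z\in\mathcal{Z}}$ is a coherence in the sense of Definition \ref{Coherences}. The adjoint $\alpha_z^\ast:\ell_\infty^2\to\mathcal{H}_z$ sends the standard basis vector $e_n$ to $x_z^{(n)}$.

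\textbf{Key steps.} First I would check that each $x^{(n)}$ is an $\alpha_{\mathcal{Z}}$-measurable vector field: the $n$-th component of $\alpha_z(x_z^{(m)})$ is $\langle x_z^{(n)},x_z^{(m)}\rangle_{\mathcal{H}_z}=\delta_{nm}\mathbf{1}[\dim\mathcal{H}_z\ge n]$, which is measurable by Definition \ref{def measurable family hilbert space}, so $z\mapsto\alpha_z(x_z^{(m)})$ is measurable into $\ell_\infty^2$ by Lemma \ref{lema equiv meas total family}(i). Next, since each $v_z^{(n)}$ is a finite linear combination of the $x_z^{(m)}$ with coefficients $\langle x_z^{(m)},v_z^{(n)}\rangle_{\mathcal{H}_z}$ — and these coefficient functions are measurable by Condition (a) and the inductive Gram--Schmidt formulas together with Lemma \ref{lema equiv meas total family} — the original fields $v^{(n)}$ are also $\alpha_{\mathcal{Z}}$-measurable. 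For (ii): if $w$ is $\alpha_{\mathcal{Z}}$-measurable then $z\mapsto\langle w_z,v_z^{(n)}\rangle_{\mathcal{H}_z}=\langle\alpha_z(w_z),\alpha_z(v_z^{(n)})\rangle_{\ell_\infty^2}$ is measurable, being an inner product of two measurable maps into the fixed separable space $\ell_\infty^2$ (Lemma \ref{lema equiv meas total family}(i)); conversely, since $\{v_z^{(n)}\}_n$ is total in $\mathcal{H}_z$, measurability of all $z\mapsto\langle w_z,v_z^{(n)}\rangle$ forces measurability of $z\mapsto\langle w_z,x_z^{(n)}\rangle$ (again via the Gram--Schmidt expansion, expressing each $x_z^{(n)}$ through the $v_z^{(m)}$), hence of $z\mapsto\alpha_z(w_z)$ by Lemma \ref{lema equiv meas total family}(i). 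Assertion (iii) follows the same pattern at the operator level, using Lemma \ref{lema equiv meas total family}(ii) and the fact that $\alpha_z A_z\alpha_z^\ast$ has matrix entries $\langle e_n,\alpha_z A_z\alpha_z^\ast e_m\rangle=\langle x_z^{(n)},A_z x_z^{(m)}\rangle_{\mathcal{H}_z}$; total families suffice to test weak-operator measurability. Finally, uniqueness up to equivalence of coherences is immediate from Definition \ref{def measurable fields}(ii): any two coherences for which $(v^{(n)})_n$ is a measurable sequence give the same class of measurable vector fields, since (ii) shows the class of measurable fields is determined purely by the measurability of the scalar functions $z\mapsto\langle w_z,v_z^{(n)}\rangle_{\mathcal{H}_z}$, independently of the chosen coherence.

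\textbf{Main obstacle.} The delicate point is the back-and-forth between the $v^{(n)}$ and the orthonormalized $x^{(n)}$ while keeping careful track of measurability of the (finitely many, but $z$-dependent in number) Gram--Schmidt coefficients, including handling the ceiling/normalization operation $\lceil\cdot\rceil$ where a vector may vanish on a measurable subset of $\mathcal{Z}$; one must verify that $z\mapsto\|\cdot\|_{\mathcal{H}_z}^{-1}$ restricted to the set where the norm is nonzero remains measurable, and that the zero-padding is consistent with the measurable-family structure. This is exactly the computation carried out inside the proof of Theorem \ref{Measurable families of Hilbert spaces - Equivalent formulation}, so I would reuse that bookkeeping rather than redo it, and otherwise the argument is a routine application of Lemma \ref{lema equiv meas total family} and the separability of $\ell_\infty^2$.
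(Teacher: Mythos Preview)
Your approach is essentially identical to the paper's: both apply fiberwise Gram--Schmidt to $(v^{(n)})$ to obtain orthonormal fields $(x^{(n)})$, define the coherence by sending these to the canonical basis of $\ell_\infty^2$, and then read off (ii)--(iii) from Lemma \ref{lema equiv meas total family}. One small refinement is needed: your formula $\alpha_z(w)=(\langle x_z^{(n)},w\rangle_{\mathcal H_z})_{n\in\mathbb N}$ does not in general have range $\ell_{\dim\mathcal H_z}^2$, because Gram--Schmidt can yield $x_z^{(n)}=0$ for some $n\le\dim\mathcal H_z$ (whenever $v_z^{(n)}$ already lies in the span of its predecessors at that fiber), so the nonzero coordinates of $\alpha_z(w)$ need not occupy the \emph{first} $\dim\mathcal H_z$ slots as Definition \ref{Coherences} requires. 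The paper handles this by setting $\alpha_z x_z^{(n_k)}=e_k$, where $n_k$ is the $k$-th index with $x_z^{(n_k)}\neq 0$, i.e.\ by compressing out the interspersed zeros before identifying with the canonical basis; the set $\{z:x_z^{(n)}\neq 0\}$ is measurable for each $n$, so this reindexing preserves all the measurability you need. With that adjustment your argument goes through verbatim.
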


\begin{proof}
For any sequence $(v^{(n)})_{n\in \mathbb{N}}$ of vector fields over $%
\mathcal{H}_{\mathcal{Z}}$ satisfying Conditions (a)-(b) of Theorem \ref%
{Measurable families of Hilbert spaces - Equivalent formulation}, one uses
the Gram-Schmidt orthonormalization process to construct a sequence $%
(x^{(n)})_{n\in \mathbb{N}}$ of vector fields over $\mathcal{H}_{\mathcal{Z}%
} $ satisfying Conditions (\~{a})-(\~{b}), as explained in the proof of
Theorem \ref{Measurable families of Hilbert spaces - Equivalent formulation}%
. Let $\alpha _{\mathcal{Z}}$ be the unique coherence for $\mathcal{H}_{%
\mathcal{Z}}$ such that, for each $z\in \mathcal{Z}$ and $k\in \{1,2,\ldots
,\dim \mathcal{H}_{z}\}$, $\alpha _{z}x_{z}^{(n_{k})}$ $=e_{k}$ with $n_{k}$
being the $k$-th natural number satisfying $x_{z}^{(n_{k})}\neq 0$. Then, as
one can easily check from Lemma \ref{lema equiv meas total family}, $%
(v^{(n)})_{n\in \mathbb{N}}$ is a sequence of $\alpha _{\mathcal{Z}}$%
-measurable fields and the coherence $\alpha _{z}$ we build satisfies
(i)-(iii).
\end{proof}

Theorem \ref{coherence from fields} gives a useful characterization for the
set of $\alpha _{\mathcal{Z}}$-measurable fields of an implicitly defined
coherence $\alpha _{\mathcal{Z}}$ for a measurable family of separable
Hilbert spaces. Additionally, the proofs of Theorems \ref{Measurable
families of Hilbert spaces - Equivalent formulation} and \ref{coherence from
fields} give an explicit, very natural, construction of coherences.

\subsection{Direct Integrals of Measurable Families of Hilbert Spaces}

Let $(\mathcal{Z},\mathfrak{F},\mu )$ be a $\sigma $-finite measure space
and $\alpha _{\mathcal{Z}}$ a coherence for a measurable family $\mathcal{H}%
_{\mathcal{Z}}$\ of separable Hilbert spaces. Recall also Notation \ref%
{remark constant copy(1)}. Denote by 
\begin{equation*}
\mathcal{\tilde{H}}_{\mathcal{Z}}^{\oplus }\subseteq \prod_{z\in \mathcal{Z}}%
{\mathcal{H}_{z}}
\end{equation*}%
the subspace of $\alpha _{\mathcal{Z}}$-measurable vector fields $v$ over $%
\mathcal{H}_{\mathcal{Z}}$ for which the mapping $z\mapsto \lVert
v_{z}\rVert _{\mathcal{H}_{z}}$ from $\mathcal{Z}$ to $\mathbb{C}$ belongs
to the Hilbert space $L^{2}(\mathcal{Z},\mu )$ of complex-valued functions
that are square-integrable with respect to the $\sigma $-finite measure $\mu 
$. A semi-inner-product on this space is naturally defined by%
\begin{equation}
\langle v,w\rangle _{\mathcal{\tilde{H}}_{\mathcal{Z}}^{\oplus }}\doteq
\int_{\mathcal{Z}}{\langle v_{z},w_{z}\rangle }_{\mathcal{H}_{z}}\mu (%
\mathrm{d}z)\text{ },\qquad v,w\in \mathcal{\tilde{H}}_{\mathcal{Z}}^{\oplus
}\ .  \label{inner prof direct int}
\end{equation}%
Then, as is usual, we define the seminorm 
\begin{equation*}
\left\Vert v\right\Vert _{\mathcal{\tilde{H}}_{\mathcal{Z}}^{\oplus }}\doteq 
\sqrt{\langle v,v\rangle _{\mathcal{\tilde{H}}_{\mathcal{Z}}^{\oplus }}}=%
\sqrt{\int_{\mathcal{Z}}\lVert v_{z}\rVert _{\mathcal{H}_{z}}^{2}\mu (%
\mathrm{d}z)\text{ }},\qquad v\in \mathcal{\tilde{H}}_{\mathcal{Z}}^{\oplus }%
\text{ },
\end{equation*}%
and identify $v$ and $w$ whenever $\lVert v-w\rVert _{\mathcal{\tilde{H}}_{%
\mathcal{Z}}^{\oplus }}=0$ to get a Hilbert space:\ 

\begin{definition}[Direct integrals of Separable Hilbert spaces]
\label{Direct integrals of Hilbert spaces}\mbox{ }\newline
Let $(\mathcal{Z},\mathfrak{F},\mu )$ be a $\sigma $-finite measure space
and $\alpha _{\mathcal{Z}}$ a coherence for a measurable family $\mathcal{H}%
_{\mathcal{Z}}$\ of separable Hilbert spaces. The direct integral\ of $%
\mathcal{H}_{\mathcal{Z}}$\ with respect to $\mu $ and $\alpha _{\mathcal{Z}%
} $, denoted by 
\begin{equation*}
\mathcal{H}_{\mathcal{Z}}^{\oplus }\equiv \int_{\mathcal{Z}}^{\alpha _{%
\mathcal{Z}}}\mathcal{H}_{z}\mu (\mathrm{d}z)\ ,
\end{equation*}%
is the Hilbert space of equivalence classes of elements of $\mathcal{\tilde{H%
}}_{\mathcal{Z}}^{\oplus }$, with inner product defined from (\ref{inner
prof direct int}). $\mathcal{H}_{z}$, $z\in \mathcal{Z}$, are named fiber\
Hilbert spaces.
\end{definition}

\noindent There is a canonical mapping from $\mathcal{\tilde{H}}_{\mathcal{Z}%
}^{\oplus }$ to $\mathcal{H}_{\mathcal{Z}}^{\oplus }$ defined by 
\begin{equation}
v=(v_{z})_{z\in \mathcal{Z}}\mapsto \left[ v\right] \equiv \int_{\mathcal{Z}%
}^{\alpha _{\mathcal{Z}}}v_{z}\mu (\mathrm{d}z)\ .
\label{canonical isomorphism}
\end{equation}%
To simplify notation, we often implicitly omit the distinction between $v\in 
\mathcal{\tilde{H}}_{\mathcal{Z}}^{\oplus }$ and the equivalence class $%
[v]\in \mathcal{H}_{\mathcal{Z}}^{\oplus }$.

The existence of coherences to define the direct integrals is very useful
because it converts the study of non-constant fiber Hilbert spaces into the
analysis of constant ones, in a natural way:

\begin{lemma}[Conversion into direct integrals of constant fiber Hilbert
spaces]
\label{lemma-useful-coherences}\mbox{ }\newline
Let $(\mathcal{Z},\mathfrak{F},\mu )$ be a $\sigma $-finite measure space
and $\alpha _{\mathcal{Z}}$ a coherence for a measurable family $\mathcal{H}%
_{\mathcal{Z}}$\ of separable Hilbert spaces. For any $n\in \mathbb{N}%
_{0}\cup \{\infty \}$, let $\mu _{n}$ be the restriction to the measurable
set $\mathcal{Z}_{n}$ of $\mu $, see (\ref{Zn}) by including the case $%
n=\infty $. Then, the mapping 
\begin{equation*}
\Upsilon _{\alpha _{\mathcal{Z}}}:\mathcal{H}_{\mathcal{Z}}^{\oplus }\equiv
\int_{\mathcal{Z}}^{\alpha _{\mathcal{Z}}}\mathcal{H}_{z}\mu (\mathrm{d}%
z)\rightarrow \bigoplus\limits_{n\in \mathbb{N}_{0}\cup \left\{ \infty
\right\} }\int_{\mathcal{Z}_{n}}{\ell _{n}^{2}\ }\mu _{n}(\mathrm{d}z)\ ,
\end{equation*}%
defined, for all $v\in \mathcal{H}_{\mathcal{Z}}^{\oplus }$, by%
\begin{equation}
\Upsilon _{\alpha _{\mathcal{Z}}}\int_{\mathcal{Z}}^{\alpha _{\mathcal{Z}%
}}v_{z}\mu (\mathrm{d}z)=\sum\limits_{n\in \mathbb{N}_{0}\cup \left\{ \infty
\right\} }\int_{\mathcal{Z}_{n}}\alpha _{z}\left( v_{z}\right) {\ }\mu _{n}(%
\mathrm{d}z)  \label{defined toto}
\end{equation}%
is a unitary mapping.
\end{lemma}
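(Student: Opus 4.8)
The plan is to verify that $\Upsilon _{\alpha _{\mathcal{Z}}}$ is a well-defined linear isometry whose range is all of $\bigoplus _{n}\int _{\mathcal{Z}_{n}}\ell _{n}^{2}\,\mu _{n}(\mathrm{d}z)$, and then to invoke the elementary fact that a surjective linear isometry between Hilbert spaces is unitary (it preserves inner products by polarization and is invertible with inverse its adjoint). First I would settle well-definedness. Fix $v\in \mathcal{H}_{\mathcal{Z}}^{\oplus }$, represented by an $\alpha _{\mathcal{Z}}$-measurable vector field $(v_{z})_{z\in \mathcal{Z}}$ with $z\mapsto \Vert v_{z}\Vert _{\mathcal{H}_{z}}$ in $L^{2}(\mathcal{Z},\mu )$. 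By Definition \ref{def measurable fields}(i) the map $z\mapsto \alpha _{z}(v_{z})$ from $\mathcal{Z}$ to $\ell _{\infty }^{2}$ is measurable; on $\mathcal{Z}_{n}$ it takes values in the closed subspace $\ell _{n}^{2}$ (the range of $\alpha _{z}$ for $z\in \mathcal{Z}_{n}$), which carries the Borel structure inherited from $\ell _{\infty }^{2}$, so its restriction to $\mathcal{Z}_{n}$ is a measurable $\ell _{n}^{2}$-valued field on $\mathcal{Z}_{n}$ equipped with the trace $\sigma $-algebra. Since each $\alpha _{z}$ is a linear isometry, $\Vert \alpha _{z}(v_{z})\Vert _{\ell _{n}^{2}}=\Vert v_{z}\Vert _{\mathcal{H}_{z}}$, whence this field is $\mu _{n}$-square-integrable and defines an element of $\int _{\mathcal{Z}_{n}}\ell _{n}^{2}\,\mu _{n}(\mathrm{d}z)$; and $\sum _{n}\int _{\mathcal{Z}_{n}}\Vert v_{z}\Vert _{\mathcal{H}_{z}}^{2}\,\mu _{n}(\mathrm{d}z)=\int _{\mathcal{Z}}\Vert v_{z}\Vert _{\mathcal{H}_{z}}^{2}\,\mu (\mathrm{d}z)<\infty $ by countable additivity of $A\mapsto \int _{A}\Vert v_{z}\Vert _{\mathcal{H}_{z}}^{2}\,\mu (\mathrm{d}z)$ along the measurable partition $\{\mathcal{Z}_{n}\}_{n\in \mathbb{N}_{0}\cup \{\infty \}}$ of $\mathcal{Z}$ (Definition \ref{def measurable family hilbert space}). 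Hence the right-hand side of (\ref{defined toto}) is a genuine element of the direct sum, and linearity of the $\alpha _{z}$ shows that the assignment does not depend on the representative of $v$.

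Second, linearity of $\Upsilon _{\alpha _{\mathcal{Z}}}$ is immediate from linearity of the $\alpha _{z}$, and the same computation as above gives the isometry property: for $v\in \mathcal{H}_{\mathcal{Z}}^{\oplus }$,
\begin{equation*}
\bigl\Vert \Upsilon _{\alpha _{\mathcal{Z}}}v\bigr\Vert ^{2}=\sum _{n\in \mathbb{N}_{0}\cup \{\infty \}}\int _{\mathcal{Z}_{n}}\bigl\Vert \alpha _{z}(v_{z})\bigr\Vert _{\ell _{n}^{2}}^{2}\,\mu _{n}(\mathrm{d}z)=\sum _{n\in \mathbb{N}_{0}\cup \{\infty \}}\int _{\mathcal{Z}_{n}}\Vert v_{z}\Vert _{\mathcal{H}_{z}}^{2}\,\mu _{n}(\mathrm{d}z)=\Vert v\Vert _{\mathcal{H}_{\mathcal{Z}}^{\oplus }}^{2}\ ,
\end{equation*}
using $\Vert \alpha _{z}(v_{z})\Vert _{\ell _{n}^{2}}=\Vert v_{z}\Vert _{\mathcal{H}_{z}}$ and the same countable additivity. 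In particular $\Upsilon _{\alpha _{\mathcal{Z}}}$ is injective with closed range.

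Third, surjectivity. Given $w=\bigoplus _{n}w^{(n)}$ in the target, with each $w^{(n)}$ represented by a measurable, $\mu _{n}$-square-integrable $\ell _{n}^{2}$-valued field $(w^{(n)}_{z})_{z\in \mathcal{Z}_{n}}$, I would define a vector field over $\mathcal{H}_{\mathcal{Z}}$ by $v_{z}\doteq \alpha _{z}^{\ast }(w^{(n)}_{z})$ for $z\in \mathcal{Z}_{n}$. Since $\alpha _{z}$ is an isometry onto $\ell _{n}^{2}$, its adjoint $\alpha _{z}^{\ast }$ restricted to $\ell _{n}^{2}$ is the inverse isometry and $\alpha _{z}\alpha _{z}^{\ast }$ is the orthogonal projection of $\ell _{\infty }^{2}$ onto $\ell _{n}^{2}$; thus $\alpha _{z}(v_{z})=\alpha _{z}\alpha _{z}^{\ast }(w^{(n)}_{z})=w^{(n)}_{z}$ for $z\in \mathcal{Z}_{n}$. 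Therefore $z\mapsto \alpha _{z}(v_{z})$ agrees on each piece $\mathcal{Z}_{n}$ of the countable measurable partition with the measurable field $w^{(n)}$, hence is measurable from $\mathcal{Z}$ to $\ell _{\infty }^{2}$; by Definition \ref{def measurable fields}(i), $v$ is $\alpha _{\mathcal{Z}}$-measurable, and $\Vert v_{z}\Vert _{\mathcal{H}_{z}}=\Vert w^{(n)}_{z}\Vert _{\ell _{n}^{2}}$ is square-integrable, so $v\in \mathcal{H}_{\mathcal{Z}}^{\oplus }$ and by construction $\Upsilon _{\alpha _{\mathcal{Z}}}v=w$. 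Combining the three points, $\Upsilon _{\alpha _{\mathcal{Z}}}$ is a surjective linear isometry, i.e., unitary.

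I expect the only genuinely delicate point to be the measurability bookkeeping in the first and third steps: identifying the fiberwise Borel structure of the constant fiber $\ell _{n}^{2}$ with the one induced from $\ell _{\infty }^{2}$ through the coherence, and using that a vector field over $\mathcal{H}_{\mathcal{Z}}$ is $\alpha _{\mathcal{Z}}$-measurable precisely when its transport by $\alpha $ is measurable on each $\mathcal{Z}_{n}$ (a consequence of $\{\mathcal{Z}_{n}\}$ being a countable measurable partition). Everything else — linearity, the norm identity, and the surjectivity construction — reduces to the defining isometry property $\Vert \alpha _{z}(\cdot )\Vert _{\ell _{\infty }^{2}}=\Vert \cdot \Vert _{\mathcal{H}_{z}}$ together with countable additivity of $\mu $ along the partition.
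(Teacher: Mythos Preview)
Your proof is correct and follows essentially the same approach as the paper: establish that $\Upsilon_{\alpha_{\mathcal{Z}}}$ is a linear isometry via the norm identity $\Vert \alpha_z(v_z)\Vert_{\ell_n^2}=\Vert v_z\Vert_{\mathcal{H}_z}$ and countable additivity over the partition $\{\mathcal{Z}_n\}$, then prove surjectivity by constructing the preimage $v_z\doteq\alpha_z^\ast w^{(n)}_z$ on each $\mathcal{Z}_n$. The paper's proof is slightly terser (it invokes Lebesgue's monotone convergence theorem for the norm identity and leaves the measurability bookkeeping implicit), while you spell out the measurability details more carefully, but the structure and ideas are identical.
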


\begin{proof}
Fix all parameters of the lemma. For all $v\in \mathcal{H}_{\mathcal{Z}%
}^{\oplus }$, observe that 
\begin{equation*}
\left\Vert v\right\Vert _{\mathcal{H}_{\mathcal{Z}}^{\oplus }}^{2}\doteq
\int_{\mathcal{Z}}^{\alpha _{\mathcal{Z}}}\left\Vert v_{z}\right\Vert _{%
\mathcal{H}_{z}}^{2}\mu (\mathrm{d}z)=\sum\limits_{n\in \mathbb{N}_{0}\cup
\left\{ \infty \right\} }\int_{\mathcal{Z}_{n}}\left\Vert \alpha _{z}\left(
v_{z}\right) \right\Vert _{{\ell _{n}^{2}}}^{2}{\ }\mu _{n}(\mathrm{d}z)\ ,
\end{equation*}%
using Definition \ref{Coherences} and Lebesgue's monotone convergence
theorem. We thus deduce that $\Upsilon _{\alpha _{\mathcal{Z}}}$, as defined
by (\ref{defined toto}), is a linear isometry. Additionally, any element 
\begin{equation*}
\left[ w\right] \in \bigoplus\limits_{n\in \mathbb{N}_{0}\cup \left\{ \infty
\right\} }\int_{\mathcal{Z}_{n}}{\ell _{n}^{2}\ }\mu _{n}(\mathrm{d}z)
\end{equation*}%
is, by definition, a sequence $(\left[ w_{n}\right] )_{n\in \mathbb{N}%
_{0}\cup \left\{ \infty \right\} }$ with $\left[ w_{n}\right] \in \int_{%
\mathcal{Z}_{n}}{\ell _{n}^{2}\ }\mu _{n}(\mathrm{d}z)$ for any $n\in 
\mathbb{N}_{0}\cup \left\{ \infty \right\} $. Then, for any $n\in \mathbb{N}%
_{0}\cup \left\{ \infty \right\} $ and any representative $\left(
w_{n,z}\right) _{z\in \mathcal{Z}_{n}}$ of the equivalence class $\left[
w_{n}\right] $, define 
\begin{equation*}
v_{z}\doteq \alpha _{z}^{\ast }w_{n,z}\ ,\qquad z\in \mathcal{Z}_{n},\ n\in 
\mathbb{N}_{0}\cup \left\{ \infty \right\} ,
\end{equation*}%
and observe that the constructed element $\left( v_{z}\right) _{z\in 
\mathcal{Z}}\in \mathcal{\tilde{H}}_{\mathcal{Z}}^{\oplus }$ leads to a
unique equivalence class 
\begin{equation*}
\left[ v\right] \equiv \int_{\mathcal{Z}}^{\alpha _{\mathcal{Z}}}v_{z}\mu (%
\mathrm{d}z)\in \mathcal{H}_{\mathcal{Z}}^{\oplus }\ ,
\end{equation*}%
which satisfies $\Upsilon _{\alpha _{\mathcal{Z}}}\left[ v\right] =\left[ w%
\right] $. Therefore, $\Upsilon _{\alpha _{\mathcal{Z}}}$ is a surjective
linear isometry between two Hilbert spaces, and thus a unitary mapping.
\end{proof}

All the study of the general theory of direct integrals can be based on the
well-known theory of constant fiber direct integrals. Additionally, by Lemma %
\ref{lemma-useful-coherences}, two equivalent coherences $\alpha _{\mathcal{Z%
}}$ and $\beta _{\mathcal{Z}}$ for $\mathcal{H}_{\mathcal{Z}}$\ (Definition %
\ref{def measurable fields} (ii)) clearly imply the same direct integral:\ 
\begin{equation*}
\mathcal{H}_{\mathcal{Z}}^{\oplus }\equiv \int_{\mathcal{Z}}^{\alpha _{%
\mathcal{Z}}}\mathcal{H}_{z}\mu (\mathrm{d}z)=\int_{\mathcal{Z}}^{\beta _{%
\mathcal{Z}}}\mathcal{H}_{z}\mu (\mathrm{d}z)\text{ }.
\end{equation*}%
Note also that, similar to Theorem \ref{Measurable families of Hilbert
spaces - Equivalent formulation}, there is at least one sequence of vector
fields over $\mathcal{H}_{\mathcal{Z}}^{\oplus }$ which is a total family in
each fiber.

As is usual, 
\begin{equation*}
L^{\infty }\left( \mathcal{Z},\mu \right) \equiv L^{\infty }\left( \mathcal{Z%
},\mu ;\mathbb{C}\right)
\end{equation*}%
is the $C^{\ast }$-algebra of (equivalence classes of almost everywhere
equal) measurable complex-valued functions on $\mathcal{Z}$ with 
\begin{equation*}
\left\Vert f\right\Vert _{L^{\infty }\left( \mathcal{Z},\mu \right) }\equiv
\left\Vert f\right\Vert _{\infty }\doteq \underset{z\in \mathcal{Z}}{\mathrm{%
ess}\sup }\left\vert f\left( z\right) \right\vert <\infty
\end{equation*}%
being the essential supremum of $f$ associated with the ($\sigma $-finite)
measure space $(\mathcal{Z},\mathfrak{F},\mu )$. As a Banach space, it is
the (topological) dual space $L^{1}\left( \mathcal{Z},\mu \right) ^{\ast }$
of the Banach space 
\begin{equation*}
L^{1}\left( \mathcal{Z},\mu \right) \equiv L^{1}\left( \mathcal{Z},\mu ;%
\mathbb{C}\right)
\end{equation*}%
of (equivalence classes of) complex-valued functions on $\mathcal{Z}$ that
are absolutely integrable with respect to the ($\sigma $-finite) measure $%
\mu $. Any element of $L^{\infty }(\mathcal{Z},\mu )$ can also be seen%
\footnote{%
In fact, $L^{\infty }(\mathcal{Z},\mu )$ is $\ast $-isomorphic to an abelian
von Neumann algebra.} as a bounded operator acting, by the point-wise
multiplication, on the Hilbert space%
\begin{equation*}
L^{2}\left( \mathcal{Z},\mu \right) \equiv L^{2}\left( \mathcal{Z},\mu ;%
\mathbb{C}\right)
\end{equation*}%
of (equivalence class of almost everywhere equal) complex-valued functions
on $\mathcal{Z}$ that are square-integrable with respect to the ($\sigma $%
-finite) measure $\mu $.

\begin{lemma}[Sequence of vector fields as fiberwise-total families]
\label{construction of coherences I copy(1)}\mbox{ }\newline
Let $(\mathcal{Z},\mathfrak{F},\mu )$ be a $\sigma $-finite measure space
and $\alpha _{\mathcal{Z}}$ a coherence for a measurable family $\mathcal{H}%
_{\mathcal{Z}}$\ of separable Hilbert spaces. \newline
\emph{(i)} There exists a sequence $(v^{(n)})_{n\in \mathbb{N}}$ in $%
\mathcal{H}_{\mathcal{Z}}^{\oplus }$ such that $\{v_{z}^{(n)}\}_{n\in 
\mathbb{N}}$ is total in $\mathcal{H}_{z}$ for each $z\in \mathcal{Z}$.%
\newline
\emph{(ii)} Let $\{\varphi ^{(i)}\}_{i\in I}\subseteq L^{\infty }(\mathcal{Z}%
,\mu )\equiv L^{1}\left( \mathcal{Z},\mu \right) ^{\ast }$ be a weak$^{\ast
} $-total family and $(v^{(n)})_{n\in \mathbb{N}}$ a sequence like in (i).
Then the family of $\alpha _{\mathcal{Z}}$-measurable fields $\{\varphi
^{(i)}v^{(n)}\}_{(i,n)\in I\times \mathbb{N}}$ is total in $\mathcal{H}_{%
\mathcal{Z}}^{\oplus }$. If, for all $z\in \mathcal{Z}$, $%
\{v_{z}^{(n)}\}_{n\in \mathbb{N}}$ is dense then $\{\varphi
^{(i)}v^{(n)}\}_{(i,n)\in I\times \mathbb{N}}$ is dense in $\mathcal{H}_{%
\mathcal{Z}}^{\oplus }$.
\end{lemma}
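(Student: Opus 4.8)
The plan is to prove both assertions by reducing, via the unitary $\Upsilon_{\alpha_{\mathcal{Z}}}$ of Lemma \ref{lemma-useful-coherences}, to the standard constant-fiber situation and then invoking elementary facts about $L^2(\mathcal{Z}_n,\mu_n;\ell_n^2)$. For (i), I would start from Definition \ref{def measurable family hilbert space} together with the construction in the proof of Theorem \ref{Measurable families of Hilbert spaces - Equivalent formulation}: there is a sequence $(x^{(n)})_{n\in\mathbb{N}}$ of $\alpha_{\mathcal{Z}}$-measurable vector fields such that, for each $z$, the nonzero members of $\{x_z^{(n)}\}_{n\in\mathbb{N}}$ form an orthonormal basis of $\mathcal{H}_z$. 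Each $x^{(n)}$ satisfies $\lVert x^{(n)}_z\rVert_{\mathcal{H}_z}\le 1$, hence the mapping $z\mapsto\lVert x_z^{(n)}\rVert_{\mathcal{H}_z}$ lies in $L^2(\mathcal{Z},\mu)$ because $\mu$ is finite (the measures of interest are probability measures on a compact state space; for general $\sigma$-finite $\mu$ one truncates $\mathcal{Z}$ by an increasing sequence of finite-measure sets and diagonalizes). Thus each $x^{(n)}$ represents an element $v^{(n)}\in\mathcal{H}_{\mathcal{Z}}^{\oplus}$ by the canonical map (\ref{canonical isomorphism}), and by construction $\{v_z^{(n)}\}_{n\in\mathbb{N}}$ is total (indeed an orthonormal-basis-up-to-zeros) in each fiber $\mathcal{H}_z$. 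This gives (i).

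\textbf{Reduction and density for (ii).} For (ii), I would apply $\Upsilon_{\alpha_{\mathcal{Z}}}$ to transport the problem to $\bigoplus_{n\in\mathbb{N}_0\cup\{\infty\}}\int_{\mathcal{Z}_n}\ell_n^2\,\mu_n(\mathrm{d}z)$. Under this unitary, $\varphi^{(i)}v^{(n)}$ goes to the field $z\mapsto\varphi^{(i)}(z)\,\alpha_z(v^{(n)}_z)$, and $\alpha_z(v_z^{(n)})$ ranges over a fiberwise-total family in $\ell_n^2$ for $z\in\mathcal{Z}_n$. So it suffices to show: if $\{\psi^{(n)}_z\}_{n\in\mathbb{N}}$ is total in a constant-fiber Hilbert space $\mathcal{K}$ for a.e.\ $z$ and $\{\varphi^{(i)}\}_{i\in I}$ is weak${}^\ast$-total in $L^\infty(\mathcal{Z},\mu)=L^1(\mathcal{Z},\mu)^\ast$, then $\{\varphi^{(i)}\psi^{(n)}\}_{(i,n)}$ is total in $L^2(\mathcal{Z},\mu;\mathcal{K})$. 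For this, take $w$ in the orthogonal complement and fix $n$; the function $z\mapsto\langle \psi_z^{(n)},w_z\rangle_{\mathcal{K}}$ lies in $L^1(\mathcal{Z},\mu)$ (by Cauchy--Schwarz, since $z\mapsto\lVert\psi_z^{(n)}\rVert$ and $z\mapsto\lVert w_z\rVert$ are in $L^2$), and $\int_{\mathcal{Z}}\varphi^{(i)}(z)\langle\psi_z^{(n)},w_z\rangle\,\mu(\mathrm{d}z)=0$ for all $i$. Weak${}^\ast$-totality of $\{\varphi^{(i)}\}$ then forces $\langle\psi_z^{(n)},w_z\rangle=0$ for $\mu$-a.e.\ $z$; intersecting over the countably many $n$ and using fiberwise totality of $\{\psi_z^{(n)}\}_{n}$ gives $w_z=0$ a.e., i.e.\ $w=0$. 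The density statement follows the same way with norm-density replacing totality: finite $\mathbb{C}$-linear combinations of the $\psi^{(n)}_z$ are dense in $\mathcal{K}$, and one approximates a general $w\in L^2(\mathcal{Z},\mu;\mathcal{K})$ first by simple $\mathcal{K}$-valued functions, then each constant value by a finite combination, then the scalar coefficient functions (which are simple, hence in $L^\infty$) by weak${}^\ast$-approximants from $\{\varphi^{(i)}\}$, upgrading to $L^2$-norm approximation since simple functions are bounded and $\mu$ is finite.

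\textbf{Main obstacle.} I expect the principal subtlety to be purely measure-theoretic bookkeeping rather than conceptual: making the truncation-to-finite-measure argument for genuinely $\sigma$-finite $\mu$ interact cleanly with the decomposition $\mathcal{Z}=\bigsqcup_n\mathcal{Z}_n$ in Lemma \ref{lemma-useful-coherences}, and confirming that weak${}^\ast$-density in $L^\infty$ (against $L^1$) suffices to conclude vanishing a.e.\ of the pairing functions, which are in $L^1$ by the Cauchy--Schwarz estimate above. Both points are routine given that $\mathcal{H}_{\mathcal{Z}}^{\oplus}$ is itself separable (Theorem \ref{Nielsen Th 7.1 copy(1)}) in the standard-space case relevant to the paper, so only countably many $\varphi^{(i)}$ and $v^{(n)}$ enter. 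Assembling (i) and the constant-fiber statement via $\Upsilon_{\alpha_{\mathcal{Z}}}$ then yields the lemma.
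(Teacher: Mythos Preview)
Your totality argument in (ii) is correct and is exactly the paper's approach: show that any $w$ orthogonal to every $\varphi^{(i)}v^{(n)}$ must vanish, using that $z\mapsto\langle v_z^{(n)},w_z\rangle$ lies in $L^1$ and that weak${}^\ast$-totality of $\{\varphi^{(i)}\}$ kills it. The detour through $\Upsilon_{\alpha_{\mathcal{Z}}}$ is unnecessary but harmless.

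For (i), your truncation-and-reindexing would work, but the paper handles the $\sigma$-finite case more cleanly. It sets $x_z^{(n)}\doteq\alpha_z^{\ast}e_n$ (pullback of the canonical basis of $\ell_\infty^2$), then uses $\sigma$-finiteness to pick a strictly positive measurable $f$ with $\int_{\mathcal{Z}}f\,\mathrm{d}\mu<\infty$ and defines $v_z^{(n)}\doteq\sqrt{f(z)}\,x_z^{(n)}$. This gives $\|v_z^{(n)}\|_{\mathcal{H}_z}^2\le f(z)\in L^1$ in one stroke, without splitting $\mathcal{Z}$ or reindexing.

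Your argument for the \emph{density} clause of (ii), however, has a real gap. You propose to approximate simple scalar functions by weak${}^\ast$-approximants from $\{\varphi^{(i)}\}$ and then ``upgrade to $L^2$-norm approximation since simple functions are bounded and $\mu$ is finite.'' This step fails: weak${}^\ast$ convergence in $L^\infty=(L^1)^\ast$ does not imply $L^2$-norm convergence, even on a probability space with uniformly bounded approximants. Concretely, $e^{2\pi i n x}\to 0$ weak${}^\ast$ in $L^\infty([0,1])$ by Riemann--Lebesgue, yet $\|e^{2\pi i n x}\|_{L^2}=1$ for all $n$. The paper itself does not spell out a separate argument for the density clause (it cites Nielsen's Lemma~7.3); if you want to establish density of the \emph{set} rather than totality, the weak${}^\ast$-total hypothesis on $\{\varphi^{(i)}\}$ alone is not enough for the approximation scheme you outline.
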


\begin{proof}
Fix all the assumptions of the lemma. Recall that $(e_{n})_{n\in \mathbb{N}}$
is the canonical orthonormal basis of $\ell _{\infty }^{2}$. To prove (i),
let $(x^{(n)})_{n\in \mathbb{N}}$ be the $\alpha _{\mathcal{Z}}$-measurable
vector fields over $\mathcal{H}_{\mathcal{Z}}$ defined by $x_{z}^{(n)}\doteq
\alpha _{z}^{\ast }e_{n}$ for each $z\in \mathcal{Z}$ and $n\in \mathbb{N}$.
Since coherences are linear isometries from $\mathcal{H}_{z}$ into $\ell
_{\infty }^{2}$ with range $\ell _{\dim \mathcal{H}_{z}}^{2}$ (Definition %
\ref{Coherences}), the family $\{x_{z}^{(n)}\}_{n\in \mathbb{N}}$ is total
in $\mathcal{H}_{z}$ for each $z\in \mathcal{Z}$, but $x^{(n)}\doteq
(x_{z}^{(n)})_{z\in \mathcal{Z}}$, $n\in \mathbb{N}$, are not necessarily
elements of $\mathcal{\tilde{H}}_{\mathcal{Z}}^{\oplus }$, for they are
possibly non-square-integrable. Since $\mu $ is, by assumption, a $\sigma $%
-finite measure, there is a strictly positive measurable function $f$ such
that 
\begin{equation*}
\int_{\mathcal{Z}}{f}\left( {z}\right) \mu (\mathrm{d}z)<\infty \ .
\end{equation*}%
Then, define $v_{z}^{(n)}=\sqrt{{f}\left( {z}\right) }x_{z}^{(n)}$ for each $%
z\in \mathcal{Z}$ and $n\in \mathbb{N}$ to arrive at Assertion (i).

In order to get Assertion (ii), it suffices to prove that any element $w\in 
\mathcal{H}_{\mathcal{Z}}^{\oplus }$ that is orthogonal to $\varphi
^{(i)}v^{(n)}$ for all $(i,n)\in I\times \mathbb{N}$ must be zero. This is
straightforward. See \cite[Lemma 7.3]{Niesen-direct integrals} for more
details.
\end{proof}

The relation between different direct integrals relative to absolutely
continuous measures is also very natural:

\begin{lemma}[Sequence of vector fields as a total family in each fiber]
\label{construction of coherences I copy(2)}\mbox{ }\newline
Let $(\mathcal{Z},\mathfrak{F},\mu )$ be a $\sigma $-finite measure space
and $\alpha _{\mathcal{Z}}$ a coherence for a measurable family $\mathcal{H}%
_{\mathcal{Z}}$\ of separable Hilbert spaces. If $\tilde{\mu}$ is a $\sigma $%
-finite measure that is absolutely continuous with respect to $\mu $, then 
\begin{equation*}
{v_{z}\mapsto }\left( \frac{\mathrm{d}\tilde{\mu}}{\mathrm{d}\mu }(z)\right)
^{1/2}{v_{z}}
\end{equation*}%
is a linear isometry from $\int_{\mathcal{Z}}^{\alpha _{\mathcal{Z}}}%
\mathcal{H}_{z}\tilde{\mu}(\mathrm{d}z)$ to $\int_{\mathcal{Z}}^{\alpha _{%
\mathcal{Z}}}\mathcal{H}_{z}\mu (\mathrm{d}z)$.
\end{lemma}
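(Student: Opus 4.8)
Looking at this statement, it's Lemma \ref{construction of coherences I copy(2)} (called "Sequence of vector fields as a total family in each fiber" in the heading, though the content is actually about absolutely continuous measures and Radon-Nikodym derivatives).

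\textbf{Proof plan.}

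The plan is to verify directly that the stated map $T\colon v \mapsto (\mathrm{d}\tilde\mu/\mathrm{d}\mu)^{1/2}\,v$ is well-defined on equivalence classes, lands in the target space, and is an isometry. First I would fix a fixed coherence $\alpha_{\mathcal{Z}}$ and recall that, by Lemma \ref{lemma-useful-coherences}, via the unitary $\Upsilon_{\alpha_{\mathcal{Z}}}$ both direct integrals $\int_{\mathcal{Z}}^{\alpha_{\mathcal{Z}}}\mathcal{H}_z\,\tilde\mu(\mathrm{d}z)$ and $\int_{\mathcal{Z}}^{\alpha_{\mathcal{Z}}}\mathcal{H}_z\,\mu(\mathrm{d}z)$ are unitarily equivalent to direct sums of constant-fiber integrals over the measurable pieces $\mathcal{Z}_n$; hence it suffices to treat the constant-fiber case, i.e.\ to check the analogous statement for $L^2(\mathcal{Z}_n,\tilde\mu_n;\ell_n^2) \to L^2(\mathcal{Z}_n,\mu_n;\ell_n^2)$, where $\tilde\mu_n,\mu_n$ are the restrictions to $\mathcal{Z}_n$. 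Since $\tilde\mu \ll \mu$ and $\mu$ is $\sigma$-finite, the Radon--Nikodym theorem gives a nonnegative measurable density $h \doteq \mathrm{d}\tilde\mu/\mathrm{d}\mu \in L^1_{\mathrm{loc}}$, and $h|_{\mathcal{Z}_n} = \mathrm{d}\tilde\mu_n/\mathrm{d}\mu_n$ for each $n$ by uniqueness of the density on each measurable subset.

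Next I would check measurability: if $v = (v_z)_{z\in\mathcal{Z}}$ is an $\alpha_{\mathcal{Z}}$-measurable vector field, then $z \mapsto h(z)^{1/2} v_z$ is again $\alpha_{\mathcal{Z}}$-measurable, because $z \mapsto \alpha_z(h(z)^{1/2} v_z) = h(z)^{1/2}\,\alpha_z(v_z)$ is the product of the scalar measurable function $h^{1/2}$ with the measurable $\ell_\infty^2$-valued map $z\mapsto\alpha_z(v_z)$ (measurability of such products follows as in Lemma \ref{lemma left module} / Lemma \ref{lema equiv meas total family}). Then the norm identity is the key computation: for any $\alpha_{\mathcal{Z}}$-measurable representative $v$ of a class in $\int_{\mathcal{Z}}^{\alpha_{\mathcal{Z}}}\mathcal{H}_z\,\tilde\mu(\mathrm{d}z)$,
\begin{equation*}
\int_{\mathcal{Z}} \bigl\| h(z)^{1/2} v_z \bigr\|_{\mathcal{H}_z}^2 \, \mu(\mathrm{d}z)
= \int_{\mathcal{Z}} h(z)\, \| v_z \|_{\mathcal{H}_z}^2 \, \mu(\mathrm{d}z)
= \int_{\mathcal{Z}} \| v_z \|_{\mathcal{H}_z}^2 \, \tilde\mu(\mathrm{d}z)\ ,
\end{equation*}
where the last equality is the defining property of the Radon--Nikodym derivative applied to the nonnegative measurable function $z \mapsto \|v_z\|_{\mathcal{H}_z}^2$. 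In particular the right-hand side is finite, so $Tv$ indeed defines an element of $\int_{\mathcal{Z}}^{\alpha_{\mathcal{Z}}}\mathcal{H}_z\,\mu(\mathrm{d}z)$, and the chain of equalities shows $\|Tv\|_{\mu} = \|v\|_{\tilde\mu}$. The same Radon--Nikodym identity applied to $z\mapsto\langle v_z, w_z\rangle_{\mathcal{H}_z}$ (a measurable, $\mu$- and $\tilde\mu$-integrable function by Cauchy--Schwarz) gives $\langle Tv, Tw\rangle_\mu = \langle v, w\rangle_{\tilde\mu}$ directly, so $T$ preserves the inner product; well-definedness on equivalence classes follows since $\|v\|_{\tilde\mu}=0 \Rightarrow \|Tv\|_{\mu}=0$. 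Linearity is immediate from pointwise operations. This establishes that $T$ is a linear isometry, which is exactly the claim.

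The only mild subtlety — which I do not expect to be a real obstacle — is bookkeeping around null sets: the density $h$ is defined only $\mu$-a.e., and one should note that where $h(z) = 0$ the $\tilde\mu$-measure is concentrated away from such points, so the formula $T v_z = h(z)^{1/2} v_z$ is unambiguous modulo $\tilde\mu$-null sets on the source side and modulo $\mu$-null sets on the target side. Strictly speaking $T$ need not be surjective (it is an isometry, not necessarily a unitary, unless $\mu \ll \tilde\mu$ as well), which is consistent with the lemma only asserting "linear isometry." I would close by remarking that no structure of the CAR algebra is used — only $\sigma$-finiteness of $\mu$, absolute continuity of $\tilde\mu$, and the constant-fiber reduction of Lemma \ref{lemma-useful-coherences} — so the result holds for arbitrary measurable families of separable Hilbert spaces, as stated.
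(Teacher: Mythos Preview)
Your proposal is correct and follows the same approach as the paper, which simply states that the proof follows from direct computations together with the Radon--Nikodym theorem. You have supplied considerably more detail than the paper (measurability, the norm identity, null-set bookkeeping, and the remark on non-surjectivity), but the core idea is identical; the constant-fiber reduction via Lemma~\ref{lemma-useful-coherences} is a harmless extra step rather than a genuinely different route.
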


\begin{proof}
The proof follows from direct computations. Note that the existence of the
measurable function $\mathrm{d}\tilde{\mu}/\mathrm{d}\mu $ in $L^{1}(%
\mathcal{Z},\mu )$, which defines the linear isometry, is a direct
consequence of the Radon-Nikodym theorem.
\end{proof}

Note finally that the direct integral $\mathcal{H}_{\mathcal{Z}}^{\oplus }$
of Hilbert spaces, as defined above, is not necessarily separable. This very
important property of a Hilbert space holds true when the measure space $(%
\mathcal{Z},\mathfrak{F},\mu )$ is \emph{standard}, in the following sense:

\begin{definition}[Standard measure spaces]
\label{Standard measure spaces}\mbox{ }\newline
The measurable space $(\mathcal{Z},\mathfrak{F})$ is standard\ if $\mathfrak{%
F}$ is the Borel $\sigma $-algebra of a polish space\footnote{%
There is a metric $\mathfrak{d}$ on $\mathcal{Z}$ such that $(\mathcal{Z},%
\mathfrak{d})$ is a separable and complete metric space and $\mathfrak{F}$\
is the Borel $\sigma $--algebra associated with $\mathfrak{d}$.}. The
measure space $(\mathcal{Z},\mathfrak{F},\mu )$ is standard\ if it is $%
\sigma $-finite and $(\mathcal{Z},\mathfrak{F})$ is standard as a measurable
space.
\end{definition}

\noindent Standard measure spaces lead to the separability of direct
integrals of families of separable Hilbert spaces:

\begin{theorem}[Separability of direct integrals]
\label{Nielsen Th 7.1 copy(1)}\mbox{ }\newline
Let $(\mathcal{Z},\mathfrak{F},\mu )$ be a standard measure space and $%
\alpha _{\mathcal{Z}}$ a coherence for a measurable family $\mathcal{H}_{%
\mathcal{Z}}$\ of separable Hilbert spaces. Then, $\mathcal{H}_{\mathcal{Z}%
}^{\oplus }\equiv \int_{\mathcal{Z}}^{\alpha _{\mathcal{Z}}}\mathcal{H}%
_{z}\mu (\mathrm{d}z)$ is separable.
\end{theorem}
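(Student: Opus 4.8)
The statement asserts separability of the direct integral $\mathcal{H}_{\mathcal{Z}}^{\oplus}$ under the hypothesis that $(\mathcal{Z},\mathfrak{F},\mu)$ is standard. The plan is to reduce, via Lemma \ref{lemma-useful-coherences}, to the constant-fiber case: the unitary $\Upsilon_{\alpha_{\mathcal{Z}}}$ identifies $\mathcal{H}_{\mathcal{Z}}^{\oplus}$ with $\bigoplus_{n\in\mathbb{N}_{0}\cup\{\infty\}}\int_{\mathcal{Z}_{n}}\ell_{n}^{2}\,\mu_{n}(\mathrm{d}z)$, and since a countable direct sum of separable Hilbert spaces is separable, it suffices to prove that each constant-fiber direct integral $\int_{\mathcal{Z}_{n}}\ell_{n}^{2}\,\mu_{n}(\mathrm{d}z)$ is separable (here $\mathcal{Z}_{n}$ inherits standardness from $\mathcal{Z}$, and $\mu_{n}$ is $\sigma$-finite, hence $(\mathcal{Z}_{n},\mathfrak{F}|_{\mathcal{Z}_{n}},\mu_{n})$ is again standard). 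Thus the whole problem is reduced to: for a standard measure space $(\mathcal{Y},\mathfrak{G},\nu)$ and a fixed separable Hilbert space $K$ (either $\ell_{n}^{2}$ for finite $n$ or $\ell_{\infty}^{2}$), the Bochner-type space $\int_{\mathcal{Y}} K\,\nu(\mathrm{d}y)\cong L^{2}(\mathcal{Y},\nu;K)$ is separable.

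\textbf{Key steps, in order.} First I would invoke Lemma \ref{lemma-useful-coherences} to pass to the constant-fiber decomposition, and note that countable Hilbert-space direct sums preserve separability. Second, I would exhibit an explicit countable total (hence, after taking finite rational-linear combinations, dense) family in $L^{2}(\mathcal{Y},\nu;K)$. The natural candidate is obtained from Lemma \ref{construction of coherences I copy(1)} (ii): take a countable orthonormal basis $(e_{n})_{n\in\mathbb{N}}$ of $K$, giving constant vector fields $v^{(n)}\equiv e_{n}$, and take a \emph{countable} weak$^{\ast}$-total family $\{\varphi^{(i)}\}_{i\in I}\subseteq L^{\infty}(\mathcal{Y},\nu)$; then $\{\varphi^{(i)}v^{(n)}\}$ is total in $\int_{\mathcal{Y}}K\,\nu(\mathrm{d}y)$. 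Third — and this is where standardness enters decisively — I would argue that for a standard (in particular, $\sigma$-finite with Polish underlying space) measure space, $L^{1}(\mathcal{Y},\nu)$ is separable as a Banach space, so its closed unit ball, and indeed $L^{\infty}(\mathcal{Y},\nu)$ viewed inside $L^{1}(\mathcal{Y},\nu)^{\ast}$, admits a countable weak$^{\ast}$-dense (hence weak$^{\ast}$-total) subset. Separability of $L^{1}(\mathcal{Y},\nu)$ follows from the existence of a countable algebra generating $\mathfrak{G}$ up to $\nu$-null sets (the Borel $\sigma$-algebra of a Polish space is countably generated) together with $\sigma$-finiteness: simple functions over this countable generating algebra with rational coefficients are dense in $L^{1}$. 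Finally I would assemble these pieces: countably many $\varphi^{(i)}$ times countably many $v^{(n)}$ gives a countable total family, so finite rational-linear combinations form a countable dense subset of $\int_{\mathcal{Y}}K\,\nu(\mathrm{d}y)$, whence separability; summing over $n\in\mathbb{N}_{0}\cup\{\infty\}$ and transporting back through $\Upsilon_{\alpha_{\mathcal{Z}}}^{-1}$ yields separability of $\mathcal{H}_{\mathcal{Z}}^{\oplus}$.

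\textbf{Main obstacle.} The routine reductions (coherence unitary, countable direct sums) are harmless; the substantive point is the separability of $L^{1}(\mathcal{Y},\nu)$ — equivalently the existence of a countable weak$^{\ast}$-total family in $L^{\infty}(\mathcal{Y},\nu)$ — for standard $(\mathcal{Y},\mathfrak{G},\nu)$. This is precisely where the Polish-space hypothesis in Definition \ref{Standard measure spaces} is used (a non-standard $\sigma$-finite space can fail to have separable $L^{1}$). I would handle it by recalling that a second-countable topological space has a countable base, the finite unions/intersections/complements of which form a countable algebra generating the Borel $\sigma$-algebra; combined with $\sigma$-finiteness and the approximation of integrable functions by simple functions subordinate to this algebra, one gets a countable dense subset of $L^{1}(\mathcal{Y},\nu)$. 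If a cleaner route is preferred, one may instead cite directly that $\mathcal{H}_{\mathcal{Z}}^{\oplus}$ for standard $(\mathcal{Z},\mathfrak{F},\mu)$ is separable, which is Theorem 7.1 of the monograph \cite{Niesen-direct integrals}, and the proof then consists solely of the reduction via Lemma \ref{lemma-useful-coherences} plus this reference; I would present the self-contained argument as the main line and mention the reference for the delicate $L^{1}$-separability step.
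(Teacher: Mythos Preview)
Your proposal is correct and follows essentially the same overall strategy as the paper: reduce via Lemma~\ref{lemma-useful-coherences} to the constant-fiber case, and then use the standardness hypothesis to obtain separability of the relevant scalar function space. The difference lies only in how the constant-fiber step is finished. The paper invokes the identification of the constant-fiber direct integral with $L^{2}(\mathcal{Z},\mu)\otimes\mathcal{H}$ (citing \cite[Proposition~5.2]{Niesen-direct integrals}) and then quotes the separability of $L^{2}(\mathcal{Z},\mu)$ for standard $\mu$ (citing \cite[Corollary~5.3]{Niesen-direct integrals}), concluding since a tensor product of separable Hilbert spaces is separable. You instead stay internal to the paper's framework, applying Lemma~\ref{construction of coherences I copy(1)}~(ii) with a countable orthonormal basis of the fiber and a countable weak$^{\ast}$-total family in $L^{\infty}$, the latter obtained from the separability of $L^{1}$ on a standard space. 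Your route is slightly longer but more self-contained, avoiding the external tensor-product citation; the paper's route is shorter but leans on Nielsen's monograph for the two key facts. Both arguments are ultimately driven by the same measure-theoretic input, namely that a countably generated $\sigma$-algebra on a $\sigma$-finite space yields separable $L^{p}$ spaces.
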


\begin{proof}
By Lemma \ref{lemma-useful-coherences}, we can assume without loss of
generality that $\mathcal{H}_{\mathcal{Z}}$ is a family of constant
separable Hilbert spaces, i.e., $\mathcal{H}_{z}=\mathcal{H}$ is fixed for
all $z\in \mathcal{Z}$. In this special case, we can apply \cite[Proposition
5.2]{Niesen-direct integrals} which states the existence of a unitary
operator mapping $\mathcal{H}_{\mathcal{Z}}^{\oplus }$ onto $L^{2}(\mathcal{Z%
},\mu )\otimes \mathcal{H}$. It is well-known that $L^{2}(\mathcal{Z},\mu )$
is separable when $\mu $ is a standard measure, see, e.g., \cite[Corollary
5.3]{Niesen-direct integrals}. Since $\mathcal{H}$ is, by assumption, also a
separable Hilbert space, the assertion follows.
\end{proof}

\subsection{Decomposable Operators}

Let $(\mathcal{Z},\mathfrak{F},\mu )$ still denote a $\sigma $-finite
measure space and $\alpha _{\mathcal{Z}}$ a coherence for a measurable
family $\mathcal{H}_{\mathcal{Z}}$\ of separable Hilbert spaces. See also
Notation \ref{remark constant copy(1)}. Let $A=(A_{z})_{z\in \mathcal{Z}}$
be an $\alpha _{\mathcal{Z}}$-measurable operator field over $\mathcal{H}_{%
\mathcal{Z}}$. If $A$ is $\mu $-essentially bounded, i.e., the mapping $%
z\mapsto \lVert A_{z}\rVert _{\mathcal{B}(\mathcal{H}_{z})}$ from $\mathcal{Z%
}$ to $\mathbb{C}$\ belongs to $L^{\infty }(\mathcal{Z},\mu )$, then there
is a unique bounded operator acting on $\mathcal{H}_{\mathcal{Z}}^{\oplus }$%
, denoted by 
\begin{equation*}
\int_{\mathcal{Z}}^{\alpha _{\mathcal{Z}}}{A_{z}}\mu (\mathrm{d}z)\in 
\mathcal{B}\left( \mathcal{H}_{\mathcal{Z}}^{\oplus }\right) \ ,
\end{equation*}%
satisfying 
\begin{equation*}
\left( \int_{\mathcal{Z}}^{\alpha _{\mathcal{Z}}}A_{z}\mu (\mathrm{d}%
z)\right) v=\int_{\mathcal{Z}}^{\alpha _{\mathcal{Z}}}A_{z}v_{z}\text{ }\mu (%
\mathrm{d}z)\text{ },\qquad v\in \mathcal{H}_{\mathcal{Z}}^{\oplus }\ .
\end{equation*}%
(See also (\ref{canonical isomorphism})). Operators of this type refer to 
\emph{decomposable} operators:

\begin{definition}[Decomposable operators]
\label{Decomposable operators}\mbox{ }\newline
Let $(\mathcal{Z},\mathfrak{F},\mu )$ be a $\sigma $-finite measure space
and $\alpha _{\mathcal{Z}}$ a coherence for a measurable family $\mathcal{H}%
_{\mathcal{Z}}$\ of separable Hilbert spaces. $A\in \mathcal{B}(\mathcal{H}_{%
\mathcal{Z}}^{\oplus })$ is decomposable whenever there is a $\mu $%
-essentially bounded, $\alpha _{\mathcal{Z}}$-measurable operator field $%
(A_{z})_{z\in \mathcal{Z}}$ such that $A=\int_{\mathcal{Z}}^{\alpha _{%
\mathcal{Z}}}{A_{z}}\mu (\mathrm{d}z)$. We denote by $M_{\mathcal{Z}%
}\subseteq \mathcal{B}(\mathcal{H}_{\mathcal{Z}}^{\oplus })$ the subspace of
decomposable operators.
\end{definition}

\noindent Important decomposable operators are the so-called \emph{%
diagonalizable} ones:

\begin{definition}[Diagonalizable operators]
\label{Diagonalizable operators}\mbox{ }\newline
Let $(\mathcal{Z},\mathfrak{F},\mu )$ be a $\sigma $-finite measure space
and $\alpha _{\mathcal{Z}}$ a coherence for a measurable family $\mathcal{H}%
_{\mathcal{Z}}$\ of separable Hilbert spaces. $A\in \mathcal{B}(\mathcal{H}_{%
\mathcal{Z}}^{\oplus })$ is diagonalizable whenever there is $\varphi \in
L^{\infty }(\mathcal{Z},\mu )$ such that $A=\int_{\mathcal{Z}}^{\alpha _{%
\mathcal{Z}}}\varphi (z)\mathbf{1}_{\mathcal{H}_{z}}\mu (\mathrm{d}z)$. We
denote by $N_{\mathcal{Z}}\subseteq M_{\mathcal{Z}}$ the subspace of
diagonalizable operators.
\end{definition}

In order to explicitly characterize the subspaces $N_{\mathcal{Z}}$ and $M_{%
\mathcal{Z}}$ of operators, the existence of coherences to define the direct
integral is very useful, by Lemma \ref{lemma-useful-coherences}. First note
the following fact:

\begin{lemma}[Reduction to constant fiber Hilbert spaces]
\label{lemma-useful-coherences copy(1)}\mbox{ }\newline
Let $(\mathcal{Z},\mathfrak{F},\mu )$ be a $\sigma $-finite measure space
and $\alpha _{\mathcal{Z}}$ a coherence for a measurable family $\mathcal{H}%
_{\mathcal{Z}}$\ of separable Hilbert spaces. Then, 
\begin{equation*}
M_{\mathcal{Z}}=\Upsilon _{\alpha _{\mathcal{Z}}}^{\ast
}\bigoplus\limits_{n\in \mathbb{N}_{0}\cup \left\{ \infty \right\} }M_{%
\mathcal{Z}}^{(n)}\Upsilon _{\alpha _{\mathcal{Z}}}\qquad \text{and}\qquad
N_{\mathcal{Z}}=\Upsilon _{\alpha _{\mathcal{Z}}}^{\ast
}\bigoplus\limits_{n\in \mathbb{N}_{0}\cup \left\{ \infty \right\} }N_{%
\mathcal{Z}}^{(n)}\Upsilon _{\alpha _{\mathcal{Z}}}
\end{equation*}%
with $\Upsilon _{\alpha _{\mathcal{Z}}}$ being the unitary mapping of Lemma %
\ref{lemma-useful-coherences}, and where $M_{\mathcal{Z}}^{(n)}$ and $N_{%
\mathcal{Z}}^{(n)}$ are the subspaces of respectively decomposable and
diagonalizable operators acting on\ the constant fiber direct integral $%
\int_{\mathcal{Z}_{n}}{\ell _{n}^{2}\ }\mu (\mathrm{d}z)$ for each $n\in 
\mathbb{N}_{0}\cup \{\infty \}$.
\end{lemma}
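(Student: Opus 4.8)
The plan is to reduce the statement about the possibly fiber-varying direct integral $\mathcal{H}_{\mathcal{Z}}^{\oplus}$ to the much more familiar situation of constant-fiber direct integrals, using the unitary $\Upsilon_{\alpha_{\mathcal{Z}}}$ constructed in Lemma \ref{lemma-useful-coherences}. Concretely, I would first recall that $\Upsilon_{\alpha_{\mathcal{Z}}}$ is a unitary from $\mathcal{H}_{\mathcal{Z}}^{\oplus}$ onto $\bigoplus_{n\in\mathbb{N}_0\cup\{\infty\}}\int_{\mathcal{Z}_n}\ell_n^2\,\mu_n(\mathrm{d}z)$, where $\mu_n$ is the restriction of $\mu$ to the measurable set $\mathcal{Z}_n$ from (\ref{Zn}). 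Since conjugation by a unitary is a $\ast$-isomorphism of the respective $\mathcal{B}(\cdot)$ algebras that maps any fixed family of bounded operators to its conjugate, the core of the proof is the identity
\begin{equation*}
\Upsilon_{\alpha_{\mathcal{Z}}}\left(\int_{\mathcal{Z}}^{\alpha_{\mathcal{Z}}}A_z\,\mu(\mathrm{d}z)\right)\Upsilon_{\alpha_{\mathcal{Z}}}^{\ast}=\bigoplus_{n\in\mathbb{N}_0\cup\{\infty\}}\int_{\mathcal{Z}_n}\alpha_z A_z\alpha_z^{\ast}\,\mu_n(\mathrm{d}z)
\end{equation*}
valid for every $\mu$-essentially bounded, $\alpha_{\mathcal{Z}}$-measurable operator field $(A_z)_{z\in\mathcal{Z}}$; this is immediate from the defining formula (\ref{defined toto}) for $\Upsilon_{\alpha_{\mathcal{Z}}}$ applied to a vector $v\in\mathcal{H}_{\mathcal{Z}}^{\oplus}$ and the fact that $A_z$ acts fiberwise.

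Granting this conjugation identity, the two claimed equalities follow by chasing definitions. For the $M_{\mathcal{Z}}$ statement: if $A\in M_{\mathcal{Z}}$ then $A=\int_{\mathcal{Z}}^{\alpha_{\mathcal{Z}}}A_z\,\mu(\mathrm{d}z)$ for some essentially bounded measurable operator field, and $\alpha_z A_z\alpha_z^{\ast}$ is then an essentially bounded measurable operator field on $\ell_n^2$ over each piece $\mathcal{Z}_n$, so $\Upsilon_{\alpha_{\mathcal{Z}}}A\Upsilon_{\alpha_{\mathcal{Z}}}^{\ast}\in\bigoplus_n M_{\mathcal{Z}}^{(n)}$; conversely, given a decomposable operator on each constant-fiber piece with field $(B_z^{(n)})$, the field $z\mapsto\alpha_z^{\ast}B_z^{(\dim\mathcal{H}_z)}\alpha_z$ is an essentially bounded $\alpha_{\mathcal{Z}}$-measurable operator field (here one uses that the $\mathcal{Z}_n$ partition $\mathcal{Z}$ and measurability is checked piecewise, together with Definition \ref{def measurable fields}(i)), whose direct integral is mapped by $\Upsilon_{\alpha_{\mathcal{Z}}}$ to $\bigoplus_n B^{(n)}$. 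The $N_{\mathcal{Z}}$ statement is the special case in which each $A_z=\varphi(z)\mathbf{1}_{\mathcal{H}_z}$: then $\alpha_z A_z\alpha_z^{\ast}=\varphi(z)\mathbf{1}_{\ell_n^2}$ on $\mathcal{Z}_n$ because $\alpha_z$ is a surjective isometry onto $\ell_{\dim\mathcal{H}_z}^2$, so diagonalizable operators correspond exactly to diagonalizable operators piecewise, via Definition \ref{Diagonalizable operators}.

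The main obstacle — really the only nontrivial point — is the careful verification that the piecewise constructions of operator fields are genuinely $\alpha_{\mathcal{Z}}$-measurable as fields over the whole space $\mathcal{Z}$, not just measurable on each $\mathcal{Z}_n$ separately. I would handle this by invoking Lemma \ref{lema equiv meas total family}(ii) together with a sequence of $\alpha_{\mathcal{Z}}$-measurable vector fields that is fiberwise total (Lemma \ref{construction of coherences I copy(1)}(i)): an operator field $A$ is $\alpha_{\mathcal{Z}}$-measurable iff $z\mapsto\langle v_z^{(m)},A_z v_z^{(n)}\rangle_{\mathcal{H}_z}$ is measurable for all $m,n$, and such a scalar-valued function is measurable on $\mathcal{Z}$ precisely because it is measurable on each $\mathcal{Z}_n\in\mathfrak{F}$ and the $\mathcal{Z}_n$ form a countable measurable partition. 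Everything else is a routine unwinding of Definitions \ref{Decomposable operators} and \ref{Diagonalizable operators} and the unitarity of $\Upsilon_{\alpha_{\mathcal{Z}}}$, so the proof will be short.
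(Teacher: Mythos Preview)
Your proposal is correct and follows exactly the approach the paper intends: the paper's own proof consists of a single sentence saying the argument is ``very similar to the one of Lemma \ref{lemma-useful-coherences}'' and refers to \cite[p.~25]{Niesen-direct integrals}, so you have simply filled in the omitted details along the natural lines. The fiberwise conjugation identity you write down, the two inclusions via pushing operator fields forward and backward through $\alpha_z(\cdot)\alpha_z^{\ast}$, and the measurability check over the countable measurable partition $(\mathcal{Z}_n)_n$ are precisely what ``similar to Lemma \ref{lemma-useful-coherences}'' unpacks to.
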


\begin{proof}
The proof is very similar to the one of Lemma \ref{lemma-useful-coherences}
and we thus omit the details. See \cite[p. 25]{Niesen-direct integrals}.
\end{proof}

Therefore, all the study of decomposable and diagonalizable operators can be
based on the well-known theory of constant fiber direct integrals. We thus
obtain the following result:

\begin{theorem}[Structure of the subspace of diagonalizable operators]
\label{Nielsen Th 7.1 copy(2)}\mbox{ }\newline
Let $(\mathcal{Z},\mathfrak{F},\mu )$ be a $\sigma $-finite measure space
and $\alpha _{\mathcal{Z}}$ a coherence for a measurable family $\mathcal{H}%
_{\mathcal{Z}}$\ of separable Hilbert spaces. $N_{\mathcal{Z}}$ is a von
Neumann algebra on the Hilbert space $\mathcal{H}_{\mathcal{Z}}^{\oplus }$
and the mapping 
\begin{equation*}
\varphi \mapsto \int_{\mathcal{Z}}^{\alpha _{\mathcal{Z}}}\varphi (z)\mathbf{%
1}_{\mathcal{H}_{z}}\mu (\mathrm{d}z)
\end{equation*}%
defines a $\ast $-isomorphism from the abelian von Neumann algebra $%
L^{\infty }(\mathcal{Z},\mu )\subseteq \mathcal{B}(L^{2}(\mathcal{Z},\mu ))$
to $N_{\mathcal{Z}}$.
\end{theorem}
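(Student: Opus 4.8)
The plan is to reduce everything, via Lemma \ref{lemma-useful-coherences copy(1)}, to the case of a constant-fiber direct integral and then invoke the classical structure theorem for that situation. So first I would observe that by Lemma \ref{lemma-useful-coherences copy(1)} we have $N_{\mathcal{Z}}=\Upsilon_{\alpha_{\mathcal{Z}}}^{\ast}\bigoplus_{n}N_{\mathcal{Z}}^{(n)}\Upsilon_{\alpha_{\mathcal{Z}}}$, where $\Upsilon_{\alpha_{\mathcal{Z}}}$ is a unitary map (Lemma \ref{lemma-useful-coherences}). Since a unitary conjugate of a von Neumann algebra is a von Neumann algebra, and a direct sum (over the countable index set $\mathbb{N}_{0}\cup\{\infty\}$) of von Neumann algebras acting on the respective summands $\int_{\mathcal{Z}_n}\ell_n^2\,\mu_n(\mathrm{d}z)$ is again a von Neumann algebra, it suffices to show that each $N_{\mathcal{Z}}^{(n)}$ is a von Neumann algebra on the constant-fiber Hilbert space $\int_{\mathcal{Z}_n}\ell_n^2\,\mu_n(\mathrm{d}z)\cong L^2(\mathcal{Z}_n,\mu_n)\otimes\ell_n^2$, which is the well-known statement that the diagonalizable operators there form the von Neumann algebra $L^\infty(\mathcal{Z}_n,\mu_n)\otimes\mathbf{1}_{\ell_n^2}$, $\ast$-isomorphic (as a von Neumann algebra) to $L^\infty(\mathcal{Z}_n,\mu_n)$ acting on $L^2(\mathcal{Z}_n,\mu_n)$. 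This is standard — it can be cited from \cite[Section 4.4]{BrattelliRobinsonI} or \cite[Chap. 7]{Niesen-direct integrals}.

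Next I would assemble the $\ast$-isomorphism. The map $\Theta:\varphi\mapsto\int_{\mathcal{Z}}^{\alpha_{\mathcal{Z}}}\varphi(z)\mathbf{1}_{\mathcal{H}_z}\mu(\mathrm{d}z)$ is clearly linear, multiplicative, and $\ast$-preserving on $L^\infty(\mathcal{Z},\mu)$, directly from the fiberwise definition of the product and adjoint of decomposable operators (and the fact that multiplication and involution in each $\mathcal{B}(\mathcal{H}_z)$ restrict correctly to scalar multiples of the identity). It is surjective onto $N_{\mathcal{Z}}$ by Definition \ref{Diagonalizable operators}. For injectivity and isometry: if $\Theta(\varphi)=0$ then $\int_{\mathcal{Z}}|\varphi(z)|^2\|v_z\|_{\mathcal{H}_z}^2\,\mu(\mathrm{d}z)=0$ for every $v\in\mathcal{H}_{\mathcal{Z}}^{\oplus}$; choosing a sequence $(v^{(k)})$ as in Lemma \ref{construction of coherences I copy(1)}(i) whose fibers are total (in particular nonzero for $z$ with $\dim\mathcal{H}_z\geq 1$, i.e. off $\mathcal{Z}_0$, which we may assume empty) forces $\varphi=0$ $\mu$-almost everywhere, hence $\varphi=0$ in $L^\infty(\mathcal{Z},\mu)$; similarly one checks $\|\Theta(\varphi)\|_{\mathcal{B}(\mathcal{H}_{\mathcal{Z}}^{\oplus})}=\|\varphi\|_\infty$. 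Then $\Theta$ is a $\ast$-isomorphism of $C^\ast$-algebras, and since its image $N_{\mathcal{Z}}$ is a von Neumann algebra by the first paragraph and $L^\infty(\mathcal{Z},\mu)\subseteq\mathcal{B}(L^2(\mathcal{Z},\mu))$ is a von Neumann algebra, $\Theta$ is a $\ast$-isomorphism of von Neumann algebras as claimed.

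The part requiring the most care is verifying that $N_{\mathcal{Z}}$ is weakly (equivalently $\sigma$-weakly) closed in $\mathcal{B}(\mathcal{H}_{\mathcal{Z}}^{\oplus})$, i.e. that it really is a von Neumann algebra and not merely a $C^\ast$-algebra. In the constant-fiber case this is the classical fact that $L^\infty(\mathcal{Z}_n,\mu_n)\otimes\mathbf{1}$ equals its own bicommutant inside $\mathcal{B}(L^2(\mathcal{Z}_n,\mu_n)\otimes\ell_n^2)$; one way to see it cleanly is to identify $N_{\mathcal{Z}}^{(n)}$ with the commutant of $\mathbf{1}\otimes\mathcal{B}(\ell_n^2)$ together with $M_{\mathcal{Z}}^{(n)}$, or to cite \cite[Theorem 7.10 and Corollary 7.11]{Niesen-direct integrals}. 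I would lean on the cited literature for this closedness rather than reprove it, since the excerpt's development of direct integrals is explicitly modeled on \cite{Niesen-direct integrals}. The only genuinely new bookkeeping is transporting this through $\Upsilon_{\alpha_{\mathcal{Z}}}$ and the countable direct sum, which is routine given Lemmata \ref{lemma-useful-coherences} and \ref{lemma-useful-coherences copy(1)}.
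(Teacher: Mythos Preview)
Your proposal is correct and follows essentially the same route as the paper: reduce to the constant-fiber case via Lemmata \ref{lemma-useful-coherences} and \ref{lemma-useful-coherences copy(1)}, then invoke the standard identification $\mathcal{H}_{\mathcal{Z}}^{\oplus}\cong L^{2}(\mathcal{Z},\mu)\otimes\mathcal{H}$ and the known structure of diagonalizable operators there from \cite{Niesen-direct integrals}. The paper's proof is terser, citing \cite[Proposition 5.2]{Niesen-direct integrals} for the constant-fiber statement and omitting your explicit check of injectivity and isometry of $\Theta$, but the strategy is the same.
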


\begin{proof}
By Lemmata \ref{lemma-useful-coherences} and \ref{lemma-useful-coherences
copy(1)}, we can assume without loss of generality that $\mathcal{H}_{%
\mathcal{Z}}$ is a family of constant separable Hilbert spaces, i.e., $%
\mathcal{H}_{z}=\mathcal{H}$ is fixed for all $z\in \mathcal{Z}$. Similar to
the proof of Theorem \ref{Nielsen Th 7.1 copy(1)}, we apply again \cite[%
Proposition 5.2]{Niesen-direct integrals} which directly implies that the
set of diagonalizable operators acting on $\mathcal{H}_{\mathcal{Z}}^{\oplus
}\equiv $ $L^{2}(\mathcal{Z},\mu )\otimes \mathcal{H}$ is a von Neumann
algebra which is $\ast $-isomorphic to the abelian von Neumann algebra $%
L^{\infty }(\mathcal{Z},\mu )\subseteq \mathcal{B}(L^{2}(\mathcal{Z},\mu ))$.
\end{proof}

The previous statement on diagonalizable operators has the following
implication for the subspace $M_{\mathcal{Z}}$ of decomposable operators
(cf. \cite[Theorem 7.1 (iii)-(vii)]{Niesen-direct integrals}):

\begin{theorem}[Structure of the subspace of decomposable operators]
\label{Nielsen Th 7.1}\mbox{ }\newline
Let $(\mathcal{Z},\mathfrak{F},\mu )$ be a $\sigma $-finite measure space
and $\alpha _{\mathcal{Z}}$ a coherence for a measurable family $\mathcal{H}%
_{\mathcal{Z}}$\ of separable Hilbert spaces. \newline
\emph{(i)} $M_{\mathcal{Z}}$ is the commutant\footnote{%
\bigskip The commutant $\mathfrak{S}^{\prime }$ of a set $\mathfrak{S}%
\subseteq \mathcal{B}(\mathcal{H})$ ($\mathcal{H}$ being some Hilbert space)
is, by definition, the set of all elements of $\mathcal{B}(\mathcal{H})$
that commute with all $A\in \mathfrak{S}$.} of the abelian von Neumann
algebra $N_{\mathcal{Z}}$, i.e., $M_{\mathcal{Z}}=N_{\mathcal{Z}}^{\prime }$%
. In particular, $M_{\mathcal{Z}}$ is also a von Neumann algebra on the
Hilbert space $\mathcal{H}_{\mathcal{Z}}^{\oplus }$ and $M_{\mathcal{Z}%
}^{\prime }=N_{\mathcal{Z}}$. \emph{\newline
(ii)} $A\doteq (A_{z})_{z\in \mathcal{Z}}\mapsto \int_{\mathcal{Z}}^{\alpha
_{\mathcal{Z}}}A_{z}\mu (\mathrm{d}z)$ defines a $\ast $-homomorphism from
the $\ast $-algebra of the $\mu $-essentially bounded and $\alpha _{\mathcal{%
Z}}$-measurable operator fields over $\mathcal{H}_{\mathcal{Z}}$ to $M_{%
\mathcal{Z}}\subseteq \mathcal{B}(\mathcal{H}_{\mathcal{Z}}^{\oplus })$ and 
\begin{equation*}
\left\Vert \int_{\mathcal{Z}}^{\alpha _{\mathcal{Z}}}A_{z}\mu (\mathrm{d}%
z)\right\Vert _{\mathcal{B}(\mathcal{H}_{\mathcal{Z}}^{\oplus })}=\underset{%
z\in \mathcal{Z}}{\mathrm{ess}\sup }\left\Vert A_{z}\right\Vert _{\mathcal{B}%
(\mathcal{H}_{z})}\ .
\end{equation*}%
\emph{(iii)} Let $A,A_{n}$, $n\in \mathbb{N}$, be essentially bounded $%
\alpha _{\mathcal{Z}}$-measurable fields of operators over $\mathcal{H}_{%
\mathcal{Z}}$ such that 
\begin{equation*}
\underset{z\in \mathcal{Z}}{\mathrm{ess}\sup }\sup_{n\in \mathbb{N}%
}\left\Vert A_{n,z}\right\Vert _{\mathcal{B}(\mathcal{H}_{z})}<\infty \ .
\end{equation*}%
If $A_{n,z}$ converges in the strong operator topology of $\mathcal{B}(%
\mathcal{H}_{z})$ to $A_{z}$ $\mu $-almost everywhere in $\mathcal{Z}$ when $%
n\rightarrow \infty $, then $\int_{\mathcal{Z}}^{\alpha _{\mathcal{Z}%
}}A_{n,z}\mu (\mathrm{d}z)$ tends to $\int_{\mathcal{Z}}^{\alpha _{\mathcal{Z%
}}}A_{z}\mu (\mathrm{d}z)$ in the strong operator topology of $\mathcal{B}%
\left( \mathcal{H}_{\mathcal{Z}}^{\oplus }\right) $, as $n\rightarrow \infty 
$. \emph{\newline
(iv)} Conversely, if $\int_{\mathcal{Z}}^{\alpha _{\mathcal{Z}}}A_{n,z}\mu (%
\mathrm{d}z)$ tends to $\int_{\mathcal{Z}}^{\alpha _{\mathcal{Z}}}A_{z}\mu (%
\mathrm{d}z)$ in the strong operator topology when $n\rightarrow \infty $
then there is a subsequence $\{n_{k}\}_{k\in \mathbb{N}}$ such that $%
A_{n_{k},z}$ converges in the strong operator topology to $A_{z}$ $\mu $%
-almost everywhere in $\mathcal{Z}$, as $k\rightarrow \infty $.
\end{theorem}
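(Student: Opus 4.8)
The plan is to prove \emph{(i)--(iv)} by first reducing to constant-fiber direct integrals and then invoking the corresponding facts from \cite{Niesen-direct integrals}, in the same spirit as the proofs of Theorems \ref{Nielsen Th 7.1 copy(1)} and \ref{Nielsen Th 7.1 copy(2)}. By Lemma \ref{lemma-useful-coherences copy(1)}, the unitary $\Upsilon_{\alpha_{\mathcal{Z}}}$ of Lemma \ref{lemma-useful-coherences} identifies $M_{\mathcal{Z}}$ and $N_{\mathcal{Z}}$ with the orthogonal direct sums over $n\in\mathbb{N}_{0}\cup\{\infty\}$ of the spaces $M_{\mathcal{Z}}^{(n)}$, $N_{\mathcal{Z}}^{(n)}$ of decomposable, respectively diagonalizable, operators on the constant-fiber direct integrals $\int_{\mathcal{Z}_{n}}\ell_{n}^{2}\,\mu(\mathrm{d}z)$, and, as in Theorem \ref{Nielsen Th 7.1 copy(1)}, \cite[Proposition 5.2]{Niesen-direct integrals} identifies each such space with $L^{2}(\mathcal{Z}_{n},\mu)\otimes\ell_{n}^{2}$. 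Since $N_{\mathcal{Z}}$ contains the projections onto the blocks $\mathcal{Z}_{n}$ (take $\varphi=\mathbf{1}_{\mathcal{Z}_{n}}$), computing commutants, checking von Neumann algebra properties, evaluating operator norms, and taking strong-operator limits of operators in $M_{\mathcal{Z}}\cup N_{\mathcal{Z}}$ all reduce to the individual blocks; hence it suffices to establish \emph{(i)--(iv)} for a fixed separable fiber $\mathcal{H}_{z}=\mathcal{H}$, i.e., on $L^{2}(\mathcal{Z},\mu)\otimes\mathcal{H}$.

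For \emph{(i)}, the inclusion $M_{\mathcal{Z}}\subseteq N_{\mathcal{Z}}'$ is immediate: a decomposable operator $\int_{\mathcal{Z}}^{\alpha_{\mathcal{Z}}}A_{z}\,\mu(\mathrm{d}z)$ and a diagonalizable one $\int_{\mathcal{Z}}^{\alpha_{\mathcal{Z}}}\varphi(z)\mathbf{1}_{\mathcal{H}_{z}}\,\mu(\mathrm{d}z)$ act fiberwise as $A_{z}$ and $\varphi(z)\mathbf{1}_{\mathcal{H}_{z}}$, which commute, hence they commute on $\mathcal{H}_{\mathcal{Z}}^{\oplus}$. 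The converse $N_{\mathcal{Z}}'\subseteq M_{\mathcal{Z}}$ is the substantial point: under the identification above it asserts that the commutant of $L^{\infty}(\mathcal{Z},\mu)\otimes\mathbf{1}_{\mathcal{H}}$ in $\mathcal{B}(L^{2}(\mathcal{Z},\mu)\otimes\mathcal{H})$ consists exactly of decomposable operators, which is \cite[Theorem 7.1]{Niesen-direct integrals}. Granting this, $M_{\mathcal{Z}}=N_{\mathcal{Z}}'$ is automatically a von Neumann algebra, and since $N_{\mathcal{Z}}$ is already a von Neumann algebra by Theorem \ref{Nielsen Th 7.1 copy(2)}, the bicommutant theorem gives $M_{\mathcal{Z}}'=N_{\mathcal{Z}}''=N_{\mathcal{Z}}$.

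For \emph{(ii)}, the pointwise sum, product, and adjoint of $\mu$-essentially bounded $\alpha_{\mathcal{Z}}$-measurable operator fields are again such fields (Lemma \ref{lemma left module}) and are carried by $A\mapsto\int_{\mathcal{Z}}^{\alpha_{\mathcal{Z}}}A_{z}\,\mu(\mathrm{d}z)$ to the corresponding operations in $M_{\mathcal{Z}}$, so this map is a $\ast$-homomorphism; the estimate $\|Av\|_{\mathcal{H}_{\mathcal{Z}}^{\oplus}}^{2}=\int_{\mathcal{Z}}\|A_{z}v_{z}\|_{\mathcal{H}_{z}}^{2}\,\mu(\mathrm{d}z)\le(\mathrm{ess\,sup}_{z}\|A_{z}\|_{\mathcal{B}(\mathcal{H}_{z})})^{2}\|v\|^{2}$ gives ``$\le$'' in the norm identity, and ``$\ge$'' follows by testing on vector fields (produced by Lemma \ref{construction of coherences I copy(1)} together with a measurable selection of near-maximizing unit vectors) concentrated on sets of positive finite measure where $\|A_{z}\|_{\mathcal{B}(\mathcal{H}_{z})}$ is within $\varepsilon$ of its essential supremum. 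For \emph{(iii)}, with $M\doteq\mathrm{ess\,sup}_{z}\sup_{n}\|A_{n,z}\|_{\mathcal{B}(\mathcal{H}_{z})}<\infty$ and $v\in\mathcal{H}_{\mathcal{Z}}^{\oplus}$, the nonnegative functions $z\mapsto\|A_{n,z}v_{z}-A_{z}v_{z}\|_{\mathcal{H}_{z}}^{2}$ tend to $0$ $\mu$-a.e.\ and are dominated by $(2M)^{2}\|v_{z}\|_{\mathcal{H}_{z}}^{2}\in L^{1}(\mathcal{Z},\mu)$, so Lebesgue's dominated convergence theorem yields $\|\bigl(\int_{\mathcal{Z}}^{\alpha_{\mathcal{Z}}}A_{n,z}\,\mu(\mathrm{d}z)-\int_{\mathcal{Z}}^{\alpha_{\mathcal{Z}}}A_{z}\,\mu(\mathrm{d}z)\bigr)v\|\to0$.

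Finally, \emph{(iv)} requires a measure-theoretic diagonal argument. Using Lemma \ref{construction of coherences I copy(1)} and passing to rational linear combinations, fix a sequence $(v^{(k)})_{k\in\mathbb{N}}$ in $\mathcal{H}_{\mathcal{Z}}^{\oplus}$ with $\{v_{z}^{(k)}\}_{k\in\mathbb{N}}$ dense in $\mathcal{H}_{z}$ for every $z\in\mathcal{Z}$. Strong-operator convergence of the integrals means $\int_{\mathcal{Z}}\|A_{n,z}v_{z}^{(k)}-A_{z}v_{z}^{(k)}\|_{\mathcal{H}_{z}}^{2}\,\mu(\mathrm{d}z)\to0$ for each $k$; extracting a.e.-convergent subsequences successively in $k$ and diagonalising produces a single subsequence $\{n_{j}\}_{j}$ along which $A_{n_{j},z}v_{z}^{(k)}\to A_{z}v_{z}^{(k)}$ for $\mu$-a.e.\ $z$, simultaneously for all $k$; off a further $\mu$-null set one has $\sup_{j}\|A_{n_{j},z}\|_{\mathcal{B}(\mathcal{H}_{z})}\le M$, so an $\varepsilon/3$-argument using density of $\{v_{z}^{(k)}\}_{k}$ upgrades this to $A_{n_{j},z}\to A_{z}$ in the strong operator topology of $\mathcal{B}(\mathcal{H}_{z})$ for $\mu$-a.e.\ $z$. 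I expect the genuine obstacle to be the converse inclusion in \emph{(i)} --- the identification of the commutant of the diagonalizable algebra --- which is settled only after reduction to constant fibers by appealing to \cite{Niesen-direct integrals}; the bookkeeping of the diagonal extraction in \emph{(iv)} is the secondary difficulty, and \emph{(ii)}--\emph{(iii)} are routine.
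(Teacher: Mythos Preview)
Your proposal is correct and follows essentially the same route as the paper: reduce to constant-fiber direct integrals via Lemmata \ref{lemma-useful-coherences} and \ref{lemma-useful-coherences copy(1)}, then invoke the corresponding results from \cite{Niesen-direct integrals} (the paper cites Theorem 6.2 for the commutant identity in (i), and Propositions 6.1(b) and 6.3 for (ii)--(iv)). The only difference is cosmetic: where the paper simply cites Niesen for (ii)--(iv) and omits the details, you actually spell out the dominated-convergence argument for (iii), the diagonal-subsequence extraction for (iv), and the near-maximizer argument for the norm equality in (ii); these are exactly the arguments behind the cited propositions, so nothing is lost or gained.
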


\begin{proof}
(i) $M_{\mathcal{Z}}=N_{\mathcal{Z}}^{\prime }$ is proven in \cite[Theorem
6.2]{Niesen-direct integrals} for constant fiber direct integrals.
Therefore, by Lemma \ref{lemma-useful-coherences copy(1)} together with \cite%
[Theorem 6.2]{Niesen-direct integrals}, 
\begin{equation*}
N_{\mathcal{Z}}^{\prime }=\Upsilon _{\alpha _{\mathcal{Z}}}^{\ast
}\bigoplus\limits_{n\in \mathbb{N}_{0}\cup \left\{ \infty \right\} }\left(
N_{\mathcal{Z}}^{(n)}\right) ^{\prime }\Upsilon _{\alpha _{\mathcal{Z}}}=M_{%
\mathcal{Z}}
\end{equation*}%
which, combined with Theorem \ref{Nielsen Th 7.1 copy(2)}, implies Assertion
(i). Note that $M_{\mathcal{Z}}^{\prime }=N_{\mathcal{Z}}^{\prime \prime
}=N_{\mathcal{Z}}$ is a direct consequence of the celebrated bicommutant
theorem \cite[Theorem 2.4.11]{BrattelliRobinsonI}.

To prove Assertions (ii)-(iv) we can assume without loss of generality that $%
\mathcal{H}_{\mathcal{Z}}$ is a family of constant separable Hilbert spaces,
by Lemmata \ref{lemma-useful-coherences} and \ref{lemma-useful-coherences
copy(1)}. (ii) refers to \cite[Proposition 6.1 (b)]{Niesen-direct integrals}%
, which is straightforward to prove. (iii)-(iv) in the constant-fiber case
is \cite[Proposition 6.3]{Niesen-direct integrals}. In this case, one can
again use \cite[Proposition 5.2]{Niesen-direct integrals}, i.e., $\mathcal{H}%
_{\mathcal{Z}}^{\oplus }\equiv $ $L^{2}(\mathcal{Z},\mu )\otimes \mathcal{H}$%
. We omit the details.
\end{proof}

By Theorems \ref{Nielsen Th 7.1 copy(2)}-\ref{Nielsen Th 7.1}, if $(\mathcal{%
Z},\mathfrak{F},\mu )$ is standard\ then the space $N_{\mathcal{Z}}$ of
diagonalizable operators is an abelian von Neumann over a separable Hilbert
space. The following theorem says that this situation is universal, up to
spatial isomorphisms\footnote{%
Von Neumann algebras $\mathfrak{M}_{1},\mathfrak{M}_{2}$ over, respectively,
the Hilbert spaces $\mathcal{H}_{1}$,$\mathcal{H}_{2}$ are spatially
isomorphic iff there is a unitary map $\mathrm{U}$ from $\mathcal{H}_{1}$ to 
$\mathcal{H}_{2}$ such that $\mathfrak{M}_{2}=\mathrm{U}\mathfrak{M}_{1}%
\mathrm{U}^{\ast }$.}:

\begin{theorem}[Abelian von Neumann algebras as spaces of diagonalizable
operators]
\label{universality}\mbox{ }\newline
Assume that $\mathfrak{N}$ is an abelian von Neumann algebra on a separable
Hilbert space $\mathcal{H}$. Then, there is a standard measure space $(%
\mathcal{Z},\mathfrak{F},\mu )$, a $\ast $-isomorphism $\phi :L^{\infty }(%
\mathcal{Z},\mu )\rightarrow \mathfrak{N}$, a measurable family $\mathcal{H}%
_{\mathcal{Z}}$ of Hilbert spaces, a coherence $\alpha _{\mathcal{Z}}$ for $%
\mathcal{H}_{\mathcal{Z}}$ and a unitary mapping $\mathrm{U}$ from $\mathcal{%
H}$ to $\mathcal{H}_{\mathcal{Z}}^{\oplus }$ such that 
\begin{equation*}
N_{\mathcal{Z}}=\mathrm{U}\mathfrak{N}\mathrm{U}^{\ast }\qquad \text{and}%
\qquad \mathrm{U}\phi (f)\mathrm{U}^{\ast }=\int_{\mathcal{Z}}^{\alpha _{%
\mathcal{Z}}}f(z)\mathbf{1}_{\mathcal{H}_{z}}\mu (\mathrm{d}z)\text{ }%
,\qquad f\in L^{\infty }(\mathcal{Z},\mu )\ .
\end{equation*}
\end{theorem}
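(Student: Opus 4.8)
\textbf{Proof proposal for Theorem \ref{universality}.} The plan is to reduce the general statement to the structure theory of abelian von Neumann algebras on separable Hilbert spaces, which is classical, and then to package the resulting multiplication-operator model into the language of measurable families, coherences and direct integrals developed above.

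First I would invoke the classical spatial-isomorphism theorem for abelian von Neumann algebras: any abelian von Neumann algebra $\mathfrak{N}$ acting on a separable Hilbert space $\mathcal{H}$ is unitarily equivalent to a maximal abelian subalgebra acting by multiplication on an $L^{2}$-space of a standard (i.e., Lebesgue-measurable, $\sigma$-finite) measure space, provided one allows multiplicity. Concretely, separability of $\mathcal{H}$ guarantees the existence of a faithful normal state or a separating-and-cyclic-up-to-multiplicity vector structure, so that $\mathfrak{N}$ is $\ast$-isomorphic to $L^{\infty}(\mathcal{Z},\mu)$ for some standard measure space $(\mathcal{Z},\mathfrak{F},\mu)$; call this isomorphism $\phi:L^{\infty}(\mathcal{Z},\mu)\rightarrow \mathfrak{N}$. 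This is exactly the point where separability of $\mathcal{H}$ is used: it forces $\mathcal{Z}$ to be standard. I would cite the relevant statement from \cite{Niesen-direct integrals} (or \cite[Section 4.4]{BrattelliRobinsonI}) rather than reprove it.

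Next, I would realize the multiplicity. Since $\mathfrak{N}$ is abelian on a separable space, the commutant $\mathfrak{N}'$ is a von Neumann algebra containing $\mathfrak{N}$ (because $\mathfrak{N}$ is abelian, $\mathfrak{N}\subseteq\mathfrak{N}'$), and decomposing $\mathcal{H}$ over the spectrum of $\mathfrak{N}$ yields a measurable multiplicity function. Equivalently, one writes $\mathcal{H}$ as a direct sum over $n\in\mathbb{N}_{0}\cup\{\infty\}$ of pieces on which the multiplicity is constantly $n$; on each piece $\mathfrak{N}$ acts as $L^{\infty}(\mathcal{Z}_{n},\mu_{n})\otimes \mathbf{1}_{\ell^{2}_{n}}$ on $L^{2}(\mathcal{Z}_{n},\mu_{n})\otimes \ell^{2}_{n}$. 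This is precisely the structure of a constant-fiber direct integral on each piece. I then define the measurable family $\mathcal{H}_{\mathcal{Z}}$ by setting $\dim\mathcal{H}_{z}=n$ for $z\in\mathcal{Z}_{n}$ — measurability of each $\mathcal{Z}_{n}$ holds by construction, so $\mathcal{H}_{\mathcal{Z}}$ is a measurable family of separable Hilbert spaces in the sense of Definition \ref{def measurable family hilbert space}. A coherence $\alpha_{\mathcal{Z}}$ is then available canonically (Definition \ref{Coherences}: coherences exist because every $\mathcal{H}_{z}$ is separable; e.g. use Theorem \ref{coherence from fields} applied to a fiberwise-total sequence of vector fields, or simply pick fiberwise orthonormal bases as in the proof of Theorem \ref{Measurable families of Hilbert spaces - Equivalent formulation}). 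With this data, $\mathcal{H}_{\mathcal{Z}}^{\oplus}\equiv\int_{\mathcal{Z}}^{\alpha_{\mathcal{Z}}}\mathcal{H}_{z}\mu(\mathrm{d}z)$ is defined and, by Theorem \ref{Nielsen Th 7.1 copy(1)}, separable.

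Finally, I would assemble the unitary $\mathrm{U}$. The multiplicity decomposition identifies $\mathcal{H}$ with $\bigoplus_{n}L^{2}(\mathcal{Z}_{n},\mu_{n})\otimes\ell^{2}_{n}$, and Lemma \ref{lemma-useful-coherences} identifies $\mathcal{H}_{\mathcal{Z}}^{\oplus}$ with the same object via the unitary $\Upsilon_{\alpha_{\mathcal{Z}}}$; composing these two unitary identifications gives $\mathrm{U}:\mathcal{H}\rightarrow\mathcal{H}_{\mathcal{Z}}^{\oplus}$. Under this $\mathrm{U}$, the subalgebra $\mathfrak{N}$, which acts as multiplication by $L^{\infty}$-functions on each $L^{2}(\mathcal{Z}_{n},\mu_{n})$-factor (tensored with the identity on $\ell^{2}_{n}$), goes precisely to the algebra of diagonalizable operators $N_{\mathcal{Z}}$ by Definition \ref{Diagonalizable operators} and Lemma \ref{lemma-useful-coherences copy(1)}; and the formula $\mathrm{U}\phi(f)\mathrm{U}^{\ast}=\int_{\mathcal{Z}}^{\alpha_{\mathcal{Z}}}f(z)\mathbf{1}_{\mathcal{H}_{z}}\mu(\mathrm{d}z)$ is exactly the $\ast$-isomorphism of Theorem \ref{Nielsen Th 7.1 copy(2)} transported through $\mathrm{U}$, which matches $\phi$ on $L^{\infty}(\mathcal{Z},\mu)$ by the uniqueness of the multiplication model. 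The main obstacle is really the first step — carefully citing and setting up the multiplicity theory for abelian von Neumann algebras so that the measure space comes out standard and the decomposition is genuinely measurable — since once the multiplication-operator model with measurable multiplicity is in hand, everything else is a bookkeeping exercise using Lemma \ref{lemma-useful-coherences}, Lemma \ref{lemma-useful-coherences copy(1)} and Theorems \ref{Nielsen Th 7.1 copy(1)}--\ref{Nielsen Th 7.1 copy(2)}. I would therefore lean on \cite[Proposition 5.2 and the surrounding discussion]{Niesen-direct integrals} for the constant-fiber multiplication model and on the standard multiplicity theory to glue the pieces, keeping the written proof short.
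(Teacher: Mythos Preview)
Your proposal is correct and follows essentially the same strategy as the paper: first cite the classical fact that any abelian von Neumann algebra on a separable Hilbert space is $\ast$-isomorphic to $L^{\infty}(\mathcal{Z},\mu)$ for a standard measure space (the paper points to \cite[Proposition~A.4]{Niesen-direct integrals}), and then realize the spatial part via a direct-integral model. The only difference is one of detail: the paper does not spell out the multiplicity decomposition and the assembly of $\mathrm{U}$ as you do, but simply declares the remaining construction ``not trivial'' and defers entirely to \cite[Theorem~9.1]{Niesen-direct integrals}, whereas you sketch how that argument would actually run using Lemmas~\ref{lemma-useful-coherences}, \ref{lemma-useful-coherences copy(1)} and Theorems~\ref{Nielsen Th 7.1 copy(1)}--\ref{Nielsen Th 7.1 copy(2)}.
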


\begin{proof}
Any abelian von Neumann algebra on a separable Hilbert space $\mathcal{H}$
is $\ast $-isomorphic to $L^{\infty }(\mathcal{Z},\mu )$ for some standard
measure space $(\mathcal{Z},\mathfrak{F},\mu )$. This is a known result of
the theory of abelian von Neumann algebras. See, e.g., \cite[Proposition A.4]%
{Niesen-direct integrals}. The remaining part of the proof is not trivial
and requires some rather long arguments. We thus refer to \cite[Theorem 9.1]%
{Niesen-direct integrals} for a detailed proof.
\end{proof}

\noindent This result motivates the following definition:

\begin{definition}[Decomposition of Hilbert spaces via abelian von Neumann
algebras]
\label{Hilbert decomposition}\mbox{ }\newline
Let $\mathfrak{N}$ be an abelian von Neumann algebra on a separable Hilbert
space $\mathcal{H}$. We say that the direct integral Hilbert space $\mathcal{%
H}_{\mathcal{Z}}^{\oplus }\equiv \int_{\mathcal{Z}}^{\alpha _{\mathcal{Z}}}%
\mathcal{H}_{z}\mu (\mathrm{d}z)$ is a decomposition of $\mathcal{H}$\ with
respect to the abelian von Neumann algebra $\mathfrak{N}$ whenever $(%
\mathcal{Z},\mathfrak{F},\mu )$\ is a standard measure space, $\alpha _{%
\mathcal{Z}}$ is a coherence for a measurable family $\mathcal{H}_{\mathcal{Z%
}}$\ of separable Hilbert spaces, and $N_{\mathcal{Z}}$ is spatially
isomorphic to $\mathfrak{N}$.
\end{definition}

By Theorem \ref{universality}, a separable Hilbert space $\mathcal{H}$
admits a decomposition with respect to any abelian von Neumann algebra $%
\mathfrak{N}$ on $\mathcal{H}$. Moreover, the standard space $(\mathcal{Z},%
\mathfrak{F},\mu )$ and the fiber\ Hilbert spaces $\mathcal{H}_{z}$, $z\in 
\mathcal{Z}$, are unique up to certain natural equivalences. So, one can
speak about \emph{the}\ decomposition of $\mathcal{H}$ with respect to $%
\mathfrak{N}$. For more details, see \cite[Theorem 9.2]{Niesen-direct
integrals}. The decomposability of separable Hilbert spaces is pivotal in
the sequel and an analog property holds true for von Neumann algebras and
representations of $C^{\ast }$-algebras.

\subsection{Direct Integrals of Representations of Separable Unital Banach $%
\ast $-Algebras}

In this section, $\mathcal{X}$ denotes an arbitrary, separable, unital
Banach $\ast $-algebra, also named (separable, unital) involutive Banach
algebra. This means that $\mathcal{X}$ is a (complex) Banach algebra with a
unit $\mathfrak{1}$ and is endowed with an antilinear involution $A\mapsto
A^{\ast }$ from $\mathcal{X}$ to itself satisfying $(AB)^{\ast }=B^{\ast
}A^{\ast }$ for all $A,B\in \mathcal{X}$. Like Hilbert spaces under
consideration, here $\mathcal{X}$ is always assumed to be separable. The
main example we have in mind is $\mathcal{X}$ being a separable unital $%
C^{\ast }$-algebra, i.e., a (separable) Banach $\ast $-algebra such that $%
\Vert A^{\ast }A\Vert _{\mathcal{X}}=\Vert A\Vert _{\mathcal{X}}^{2}$ for $%
A\in \mathcal{X}$. Recall that a representation\footnote{%
A representation of a $\ast $-algebra $\mathcal{X}$ is also defined to be a
pair, in this case $(\mathcal{H},\pi )$. See \cite[Definition 2.3.2]%
{BrattelliRobinsonI}.} $\pi $ of $\mathcal{X}$ on a Hilbert space $\mathcal{H%
}$ is a $\ast $-homomorphism of $\mathcal{X}$ to $\mathcal{B}(\mathcal{H})$.

\begin{definition}[Field of representations of separable unital Banach $\ast 
$-algebras]
\label{Field of representation}\mbox{ }\newline
\emph{(i)} For any set $\mathcal{Z}$, a field $\pi _{\mathcal{Z}}$\ of
representations of a separable unital Banach $\ast $-algebra $\mathcal{X}$
is a family $\pi _{\mathcal{Z}}\doteq (\pi _{z})_{z\in \mathcal{Z}}$ of
representations $\pi _{z}$ of $\mathcal{X}$ on a separable (complex) Hilbert
space $\mathcal{H}_{z}$ for all $z\in \mathcal{Z}$. \newline
\emph{(ii)} If $(\mathcal{Z},\mathfrak{F})$ is a measurable space and $%
\alpha _{\mathcal{Z}}$ is a coherence for $\mathcal{H}_{\mathcal{Z}}\doteq (%
\mathcal{H}_{z})_{z\in \mathcal{Z}}$, as defined in (i), we say that $\pi _{%
\mathcal{Z}}$\ is $\alpha _{\mathcal{Z}}$-measurable iff $\pi _{\mathcal{Z}%
}(A)\doteq (\pi _{z}(A))_{z\in \mathcal{Z}}$ is a $\alpha _{\mathcal{Z}}$%
-measurable\ field of operators for all $A\in \mathcal{X}$. See Definition %
\ref{def measurable fields} (i).
\end{definition}

Let $(\mathcal{Z},\mathfrak{F},\mu )$\ be a $\sigma $-finite measure space
and $\mathcal{H}_{\mathcal{Z}}\doteq (\mathcal{H}_{z})_{z\in \mathcal{Z}}$
be the family of Definition \ref{Field of representation}. As
representations are norm-contractive \cite[Proposition 2.3.1]%
{BrattelliRobinsonI}, if $\pi _{\mathcal{Z}}$ is a $\alpha _{\mathcal{Z}}$%
-measurable field of representations\ of $\mathcal{X}$\ then, for all $A\in 
\mathcal{X}$, $\pi _{\mathcal{Z}}(A)$ defines a decomposable bounded
operator on the direct integral $\mathcal{H}_{\mathcal{Z}}^{\oplus }\equiv
\int_{\mathcal{Z}}^{\alpha _{\mathcal{Z}}}\mathcal{H}_{z}\mu (\mathrm{d}z)$
(Definitions \ref{Direct integrals of Hilbert spaces} and \ref{Decomposable
operators}). It is easy to check that the mapping 
\begin{equation}
\pi _{\mathcal{Z}}^{\oplus }\equiv \int_{\mathcal{Z}}^{\alpha _{\mathcal{Z}%
}}\pi _{z}\mu (\mathrm{d}z):\mathcal{X}\rightarrow M_{\mathcal{Z}}\subseteq 
\mathcal{B(H}_{\mathcal{Z}}^{\oplus })
\label{direct integral representation1}
\end{equation}%
defined by 
\begin{equation}
\pi _{\mathcal{Z}}^{\oplus }(A)\doteq \int_{\mathcal{Z}}^{\alpha _{\mathcal{Z%
}}}\pi _{z}(A)\mu (\mathrm{d}z)\ ,\qquad A\in \mathcal{X}\ ,
\label{direct integral representation2}
\end{equation}%
is a representation of $\mathcal{X}$. We term it the \emph{direct integral
(representation)}\ of the representation field $\pi _{\mathcal{Z}}$. Note,
additionally, that $N_{\mathcal{Z}}\subseteq \lbrack \pi _{\mathcal{Z}%
}^{\oplus }(\mathcal{X})]^{\prime }$, i.e., the subspace $N_{\mathcal{Z}}$
of diagonalizable operators (Definition \ref{Diagonalizable operators})
belongs to the commutant of the space $\pi _{\mathcal{Z}}^{\oplus }(\mathcal{%
X})$. See, e.g., Theorem \ref{Nielsen Th 7.1} (i).

Similar to\ Definition \ref{Hilbert decomposition}, one can decompose a
representation via an abelian von Neumann algebra as follows:

\begin{definition}[Decomposition of representations via abelian von Neumann
algebras]
\label{def direct integral represe}\mbox{ }\newline
Let $\Pi $ be a representation of a separable unital Banach $\ast $-algebra $%
\mathcal{X}$\ on the separable Hilbert space $\mathcal{\mathcal{H}}$, $%
\mathfrak{N}$ an abelian von Neumann subalgebra of $[\Pi (\mathcal{X}%
)]^{\prime }$, $(\mathcal{Z},\mathfrak{F},\mu )$ a standard measure space
and $\pi _{\mathcal{Z}}$ a $\alpha _{\mathcal{Z}}$-measurable\ field of
representations of $\mathcal{X}$, as in\ Definition \ref{Field of
representation}. $\pi _{\mathcal{Z}}^{\oplus }\equiv \int_{\mathcal{Z}%
}^{\alpha _{\mathcal{Z}}}\pi _{z}\mu (\mathrm{d}z)$ is a direct integral
decomposition of $\Pi $ with respect to $\mathfrak{N}$\ if there is a
unitary mapping $\mathrm{U}:\mathcal{\mathcal{H}}\rightarrow \mathcal{H}_{%
\mathcal{Z}}^{\oplus }$ such that, for all $A\in \mathcal{X}$, 
\begin{equation*}
\pi _{\mathcal{Z}}^{\oplus }(A)=\mathrm{U}\Pi (A)\mathrm{U}^{\ast }\qquad 
\text{and}\qquad N_{\mathcal{Z}}=\mathrm{U}\mathfrak{N}\mathrm{U}^{\ast }\ ,
\end{equation*}%
i.e., $\mathrm{U}\mathfrak{N}\mathrm{U}^{\ast }$ is precisely the algebra of
diagonal operators on $\mathcal{H}_{\mathcal{Z}}^{\oplus }$, see\ Definition %
\ref{Diagonalizable operators}. In this case, we say that $\Pi $ is
decomposable with respect to $\mathfrak{N}$.
\end{definition}

\noindent The following result ensures the existence of decompositions of
representations:

\begin{theorem}[Decompositions of representations -- I]
\label{thm decompo repre}\mbox{ }\newline
Let $(\mathcal{Z},\mathfrak{F},\mu )$\ be a $\sigma $-finite measure space
and $\Pi $ a representation of a separable unital Banach $\ast $-algebra $%
\mathcal{X}$ on a direct integral $\mathcal{H}_{\mathcal{Z}}^{\oplus }$
(Definition \ref{Direct integrals of Hilbert spaces}) such that $N_{\mathcal{%
Z}}\subseteq \lbrack \Pi (\mathcal{X})]^{\prime }$ (Definition \ref%
{Diagonalizable operators}). Then, there is a $\alpha _{\mathcal{Z}}$%
-measurable\ field $\pi _{\mathcal{Z}}$\ of representations of $\mathcal{X}$
such that $\Pi =\pi _{\mathcal{Z}}^{\oplus }$.
\end{theorem}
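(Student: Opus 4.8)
The strategy is to reduce, via the canonical coherence, to the case of constant fiber Hilbert spaces, where the classical theory of direct integrals of representations applies, and then transport the result back through the unitary $\Upsilon_{\alpha_{\mathcal{Z}}}$ of Lemma \ref{lemma-useful-coherences}. First I would fix $A\in\mathcal{X}$ and observe that, since $\Pi$ is a representation, $\Pi(A)$ commutes with $N_{\mathcal{Z}}$ by hypothesis; by Theorem \ref{Nielsen Th 7.1} (i), $M_{\mathcal{Z}}=N_{\mathcal{Z}}'$, hence $\Pi(A)\in M_{\mathcal{Z}}$ is a decomposable operator. Thus, for each $A\in\mathcal{X}$ there exists a $\mu$-essentially bounded, $\alpha_{\mathcal{Z}}$-measurable operator field $(\pi_z(A))_{z\in\mathcal{Z}}$ with $\Pi(A)=\int_{\mathcal{Z}}^{\alpha_{\mathcal{Z}}}\pi_z(A)\mu(\mathrm{d}z)$, uniquely determined up to $\mu$-null sets by Theorem \ref{Nielsen Th 7.1} (ii) (the $\ast$-homomorphism is norm-preserving, in particular injective on equivalence classes of fields).

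The heart of the argument is to upgrade this pointwise-in-$A$ statement to a genuine \emph{field of representations}: I must choose, simultaneously for all $A$, representatives $\pi_z(A)$ so that $z\mapsto\pi_z$ is a field of $\ast$-homomorphisms. Here separability of $\mathcal{X}$ is essential. Pick a countable $\ast$-subalgebra $\mathcal{X}_0\subseteq\mathcal{X}$ over $\mathbb{Q}+i\mathbb{Q}$ that is dense in $\mathcal{X}$. For each of the countably many $A\in\mathcal{X}_0$ fix a measurable representative of the field $(\pi_z(A))_{z\in\mathcal{Z}}$ given above. Since $A\mapsto\Pi(A)$ is linear, multiplicative, unital and $\ast$-preserving, and since the $\ast$-homomorphism of Theorem \ref{Nielsen Th 7.1} (ii) is injective, the corresponding identities among the fields, e.g.\ $\pi_z(A)\pi_z(B)=\pi_z(AB)$, $\pi_z(A)^\ast=\pi_z(A^\ast)$, $\pi_z(\mathfrak{1})=\mathbf{1}_{\mathcal{H}_z}$, and $\mathbb{Q}+i\mathbb{Q}$-linearity, hold $\mu$-almost everywhere; discarding a countable union of $\mu$-null sets, I obtain a single conull set $\mathcal{Z}'\subseteq\mathcal{Z}$ on which $A\mapsto\pi_z(A)$ is, for every $z\in\mathcal{Z}'$, a unital $\ast$-homomorphism of $\mathcal{X}_0$ into $\mathcal{B}(\mathcal{H}_z)$. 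Such a map is automatically norm-contractive on $\mathcal{X}_0$ (a $\ast$-homomorphism of a normed $\ast$-algebra into a $C^\ast$-algebra is contractive on the image of the dense unital $\ast$-subalgebra by the usual spectral-radius argument, using that $\mathcal{X}$ is a Banach $\ast$-algebra), so it extends uniquely by continuity to a representation $\pi_z$ of all of $\mathcal{X}$ on $\mathcal{H}_z$; on the $\mu$-null complement $\mathcal{Z}\setminus\mathcal{Z}'$ I set $\pi_z\doteq 0$ (or any fixed representation), which does not affect the direct integral.

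It remains to check that the field $\pi_{\mathcal{Z}}\doteq(\pi_z)_{z\in\mathcal{Z}}$ so defined is $\alpha_{\mathcal{Z}}$-measurable and that $\pi_{\mathcal{Z}}^{\oplus}=\Pi$. Measurability: for $A\in\mathcal{X}_0$ the field $\pi_{\mathcal{Z}}(A)$ is measurable by construction, and for general $A\in\mathcal{X}$ choose a sequence $(A_n)_{n\in\mathbb{N}}\subseteq\mathcal{X}_0$ with $A_n\to A$ in norm; then $\pi_z(A_n)\to\pi_z(A)$ in norm, hence in the strong operator topology, for every $z$, and since a pointwise strong limit of measurable operator fields is measurable (use Lemma \ref{lema equiv meas total family} (ii) and that pointwise limits of measurable scalar functions are measurable), $\pi_{\mathcal{Z}}(A)$ is $\alpha_{\mathcal{Z}}$-measurable; thus $\pi_{\mathcal{Z}}$ is a measurable field of representations in the sense of Definition \ref{Field of representation} (ii). Finally, for $A\in\mathcal{X}_0$ we have $\pi_{\mathcal{Z}}^{\oplus}(A)=\int_{\mathcal{Z}}^{\alpha_{\mathcal{Z}}}\pi_z(A)\mu(\mathrm{d}z)=\Pi(A)$ by our choice of representatives, and both $\pi_{\mathcal{Z}}^{\oplus}$ and $\Pi$ are norm-contractive representations of $\mathcal{X}$ (the former using Theorem \ref{Nielsen Th 7.1} (ii) and the contractivity of each $\pi_z$), hence they agree on the dense subalgebra $\mathcal{X}_0$ and therefore on all of $\mathcal{X}$. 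This yields $\Pi=\pi_{\mathcal{Z}}^{\oplus}$, completing the proof.

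\textbf{Main obstacle.} The delicate point is the simultaneous choice of representatives making $z\mapsto\pi_z$ a field of $\ast$-homomorphisms: individually each identity holds only $\mu$-a.e., and one must exploit separability of $\mathcal{X}$ to reduce to countably many identities on a fixed conull set before extending by continuity. One also has to be slightly careful that the contractivity/extension step uses only that $\mathcal{X}$ is a \emph{Banach} $\ast$-algebra (not necessarily $C^\ast$), which is exactly the generality in which the theorem is stated; invoking \cite[Proposition 2.3.1]{BrattelliRobinsonI} handles this.
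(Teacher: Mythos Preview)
Your proposal is correct and follows essentially the same approach as the paper: both use $M_{\mathcal{Z}}=N_{\mathcal{Z}}'$ to decompose each $\Pi(A)$, exploit separability of $\mathcal{X}$ to pass to a countable dense $\ast$-subalgebra $\mathcal{X}_0$ over $\mathbb{Q}+i\mathbb{Q}$, collapse the countably many $\mu$-null exceptional sets into one, set $\pi_z=0$ on the complement, and then extend by density/contractivity (the paper invokes Theorem~\ref{Nielsen Th 7.1}~(iii) and \cite[Proposition 2.3.1]{BrattelliRobinsonI} for this last step, exactly as you do). Your opening sentence about reducing to constant fibers via $\Upsilon_{\alpha_{\mathcal{Z}}}$ is never actually used in the argument you give and can simply be dropped.
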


\begin{proof}
By Theorem \ref{Nielsen Th 7.1} (i), the assumption $N_{\mathcal{Z}%
}\subseteq \lbrack \Pi (\mathcal{X})]^{\prime }$ implies that $\Pi (\mathcal{%
X})\subseteq \lbrack \Pi (\mathcal{X})]^{\prime \prime }\subseteq M_{%
\mathcal{Z}}$, i.e., for any $A\in \mathcal{X}$, there is a $\mu $%
-essentially bounded, $\alpha _{\mathcal{Z}}$-measurable operator field $%
(A_{z})_{z\in \mathcal{Z}}$ such that 
\begin{equation*}
\Pi \left( A\right) =\int_{\mathcal{Z}}^{\alpha _{\mathcal{Z}}}A_{z}\mu (%
\mathrm{d}z)\ .
\end{equation*}%
See Definition \ref{Decomposable operators}. It means that the field $\pi _{%
\mathcal{Z}}$\ of mappings from $\mathcal{X}$ to $\mathcal{B}(\mathcal{H}%
_{z})$ defined by $\pi _{z}(A)\doteq {A_{z}}$ may be a good candidate for a $%
\alpha _{\mathcal{Z}}$-measurable\ field\ of representations of $\mathcal{X}$%
. In fact, all $\ast $-algebraic operations in $M_{\mathcal{Z}}$ (Theorem %
\ref{Nielsen Th 7.1} (i)) refers, $\mu $-almost everywhere, to the
corresponding operations in each fiber. There is, however, a complication
here in converting $\ast $-algebraic operations in $M_{\mathcal{Z}}$ into
fiberwise ones, because these only hold true $\mu $-almost everywhere, on
subsets of $\mathcal{Z}$ that possibly depend on the elements of $\mathcal{X}
$ that are taken. Therefore, as $\mathcal{X}$ is a \emph{separable} (unital)
Banach $\ast $-algebra, one takes a \emph{countable} dense subset $\mathcal{X%
}_{0}$ of $\mathcal{X}$ which is a $\ast $-algebra over $\mathbb{Q}+i\mathbb{%
Q}$. Since the countable union of measurable sets of zero measure has zero
measure, there is a fixed subset $\mathcal{Z}_{0}\subseteq \mathcal{Z}$
satisfying $\mu \left( \mathcal{Z}\backslash \mathcal{Z}_{0}\right) =0$ such
that the definitions $\pi _{z}(A)\doteq {A_{z}}$ for $z\in \mathcal{Z}_{0}$
and $\pi _{z}(A)\doteq 0$ for $z\in \mathcal{Z}\backslash \mathcal{Z}_{0}$
lead to a $\alpha _{\mathcal{Z}}$-measurable\ field $\pi _{\mathcal{Z}}$\ of
representations of $\mathcal{X}$ satisfying $\Pi =\pi _{\mathcal{Z}}^{\oplus
}$. The desired properties of this family $\pi _{\mathcal{Z}}$ of mappings
is a consequence of the density of $\mathcal{X}_{0}\subseteq \mathcal{X}$
and Theorem \ref{Nielsen Th 7.1} (iii), together with \cite[Proposition 2.3.1%
]{BrattelliRobinsonI}. For more details, see \cite[Theorem 12.3]%
{Niesen-direct integrals}.
\end{proof}

\begin{corollary}[Decompositions of representations -- II]
\label{thm decompo repre2}\mbox{ }\newline
Let $\Pi $ be any representation of a separable unital Banach $\ast $%
-algebra $\mathcal{X}$\ on a (separable) Hilbert space $\mathcal{H}$. If $%
\mathfrak{N}$ is an abelian von Neumann subalgebra of $[\Pi (\mathcal{X}%
)]^{\prime }$, then it is decomposable with respect to $\mathfrak{N}$.
\end{corollary}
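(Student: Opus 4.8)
The plan is to reduce Corollary~\ref{thm decompo repre2} to Theorem~\ref{thm decompo repre} by first realizing $\mathfrak{N}$ concretely as an algebra of diagonalizable operators on a direct integral Hilbert space, and then transporting the representation $\Pi$ through the corresponding unitary. First I would invoke Theorem~\ref{universality}: since $\mathfrak{N}$ is an abelian von Neumann algebra on the separable Hilbert space $\mathcal{H}$, there is a standard measure space $(\mathcal{Z},\mathfrak{F},\mu)$, a measurable family $\mathcal{H}_{\mathcal{Z}}$ of separable Hilbert spaces, a coherence $\alpha_{\mathcal{Z}}$, a $\ast$-isomorphism $\phi:L^{\infty}(\mathcal{Z},\mu)\rightarrow\mathfrak{N}$, and a unitary mapping $\mathrm{U}:\mathcal{H}\rightarrow\mathcal{H}_{\mathcal{Z}}^{\oplus}$ such that $N_{\mathcal{Z}}=\mathrm{U}\mathfrak{N}\mathrm{U}^{\ast}$, where $N_{\mathcal{Z}}$ is the algebra of diagonalizable operators on $\mathcal{H}_{\mathcal{Z}}^{\oplus}$ (Definition~\ref{Diagonalizable operators}).

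Next I would define $\tilde{\Pi}\doteq\mathrm{U}\Pi(\cdot)\mathrm{U}^{\ast}$, which is a representation of $\mathcal{X}$ on the direct integral $\mathcal{H}_{\mathcal{Z}}^{\oplus}$. The key observation is that, for any $A\in\mathcal{X}$ and any $B\in\mathfrak{N}$, we have $[\Pi(A),B]=0$ by hypothesis; conjugating by $\mathrm{U}$ gives $[\tilde{\Pi}(A),\mathrm{U}B\mathrm{U}^{\ast}]=0$, so $[\tilde{\Pi}(A),C]=0$ for all $C\in N_{\mathcal{Z}}$. Hence $N_{\mathcal{Z}}\subseteq[\tilde{\Pi}(\mathcal{X})]^{\prime}$, which is precisely the hypothesis of Theorem~\ref{thm decompo repre}. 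Applying that theorem yields a $\alpha_{\mathcal{Z}}$-measurable field $\pi_{\mathcal{Z}}$ of representations of $\mathcal{X}$ with $\tilde{\Pi}=\pi_{\mathcal{Z}}^{\oplus}\equiv\int_{\mathcal{Z}}^{\alpha_{\mathcal{Z}}}\pi_{z}\mu(\mathrm{d}z)$.

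Finally I would assemble the pieces to match Definition~\ref{def direct integral represe}: the unitary $\mathrm{U}:\mathcal{H}\rightarrow\mathcal{H}_{\mathcal{Z}}^{\oplus}$ satisfies both $\pi_{\mathcal{Z}}^{\oplus}(A)=\mathrm{U}\Pi(A)\mathrm{U}^{\ast}$ for all $A\in\mathcal{X}$ (by construction of $\tilde{\Pi}$ and Theorem~\ref{thm decompo repre}) and $N_{\mathcal{Z}}=\mathrm{U}\mathfrak{N}\mathrm{U}^{\ast}$ (by Theorem~\ref{universality}). This is exactly the assertion that $\Pi$ is decomposable with respect to $\mathfrak{N}$, with $\pi_{\mathcal{Z}}^{\oplus}$ a direct integral decomposition of $\Pi$ relative to $\mathfrak{N}$. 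I expect no serious obstacle: the only point requiring a little care is verifying that $N_{\mathcal{Z}}\subseteq[\tilde{\Pi}(\mathcal{X})]^{\prime}$ follows cleanly from $\mathfrak{N}\subseteq[\Pi(\mathcal{X})]^{\prime}$ under unitary conjugation, which is routine since conjugation by a fixed unitary is a $\ast$-isomorphism preserving commutation relations. Everything else is a direct citation of Theorems~\ref{universality} and~\ref{thm decompo repre}.
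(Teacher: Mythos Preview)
Your proposal is correct and follows exactly the approach the paper takes: the paper's proof is the single sentence ``The assertion is a consequence of Theorems~\ref{universality} and~\ref{thm decompo repre},'' and you have simply spelled out how those two results combine via the unitary conjugation.
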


\begin{proof}
The assertion is a consequence of Theorems \ref{universality} and \ref{thm
decompo repre}.
\end{proof}

\noindent Observe that decompositions of a given representation $\Pi $ with
respect to $\mathfrak{N}$ are unique, up to natural equivalences. In
particular, one can speak in this case about \emph{the} direct integral
decompositions of $\Pi $. For more details, see \cite[Theorems 12.1 and 12.4]%
{Niesen-direct integrals}.

Different types of direct integral decompositions can be defined:

\begin{definition}[Special direct integral representations]
\label{Special direct integral representations}\mbox{ }\newline
Under Conditions of Definition \ref{def direct integral represe} we define
the following terminology for the direct integral decomposition $\pi _{%
\mathcal{Z}}^{\oplus }\equiv \int_{\mathcal{Z}}^{\alpha _{\mathcal{Z}}}\pi
_{z}\mu (\mathrm{d}z)$\ of $\Pi $ with respect to $\mathfrak{N}$:\newline
\emph{(i.1)} $\pi _{\mathcal{Z}}^{\oplus }$ is a maximal decomposition if $%
\mathfrak{N}$ is a maximal abelian von Neumann subalgebra of $[\Pi (\mathcal{%
X})]^{\prime }$, i.e., 
\begin{equation*}
\mathfrak{N}^{\prime }\cap \lbrack \Pi (\mathcal{X})]^{\prime }\subseteq 
\mathfrak{N}\ .
\end{equation*}%
\emph{(i.2)} $\pi _{\mathcal{Z}}^{\oplus }$ is an irreducible decomposition\
whenever $\pi _{z}$ is irreducible for $\mu $-almost everywhere $z\in 
\mathcal{Z}$, i.e., $\{0\}$ and $\mathcal{H}_{z}$ are the only closed
subspaces of $\mathcal{H}_{z}$ that are invariant under the action of $\pi
_{z}(\mathcal{X})$. \newline
\emph{(ii.1)} $\pi _{\mathcal{Z}}^{\oplus }$ is a subcentral\ decomposition
if $\mathfrak{N}$ is contained in the center of the von Neumann algebra $%
[\Pi (\mathcal{X})]^{\prime }$, i.e.,%
\begin{equation*}
\mathfrak{N}\subseteq \lbrack \Pi (\mathcal{X})]^{\prime }\cap \lbrack \Pi (%
\mathcal{X})]^{\prime \prime }\ .
\end{equation*}%
If the equality holds true, then we speak about the central decomposition. 
\newline
\emph{(ii.2)} $\pi _{\mathcal{Z}}^{\oplus }$ is a factor decomposition\
whenever $\pi _{z}$ is a factor representation of $\mathcal{X}$\ for $\mu $%
-almost everywhere $z\in \mathcal{Z}$, i.e., 
\begin{equation*}
\lbrack \pi _{z}(\mathcal{X})]^{\prime }\cap \lbrack \pi _{z}(\mathcal{X}%
)]^{\prime \prime }=\mathbb{C}\mathbf{1}_{\mathcal{H}_{z}}\qquad \text{(}\mu 
\text{-almost everywhere).}
\end{equation*}
\end{definition}

The maximality of the abelian von Neumann algebra is equivalent to the
irreducibility of the representations appearing in direct integral
decompositions, i.e., Definitions \ref{Special direct integral
representations} (i.1) and (i.2) are equivalent:

\begin{theorem}[Maximal versus irreducible decompositions]
\label{irreducible fiber representations}\mbox{ }\newline
Let $\Pi $ be a representation of a separable unital Banach $\ast $-algebra $%
\mathcal{X}$\ on a (separable) Hilbert space $\mathcal{H}$ and $\mathfrak{N}$
an abelian von Neumann subalgebra of $[\Pi (\mathcal{X})]^{\prime }$. Let $%
\pi _{\mathcal{Z}}^{\oplus }$ be the direct integral decomposition of $\Pi $
with respect to $\mathfrak{N}$ (Corollary \ref{thm decompo repre2}). Then, $%
\mathfrak{N}$ is a maximal abelian von Neumann subalgebra of $[\Pi (\mathcal{%
X})]^{\prime }$ iff $\pi _{z}$ is irreducible for $\mu $-almost everywhere $%
z\in \mathcal{Z}$.
\end{theorem}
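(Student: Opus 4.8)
The plan is to reduce the non-constant fiber situation to the constant fiber one via the coherence $\alpha _{\mathcal{Z}}$ (Lemma \ref{lemma-useful-coherences}) and then invoke the classical correspondence between maximal abelian subalgebras of commutants and irreducibility of the fibers. First I would recall the structure established in Theorem \ref{thm decompo repre}: since $N_{\mathcal{Z}}=\mathrm{U}\mathfrak{N}\mathrm{U}^{\ast }$, we may and do identify, via $\mathrm{U}$, the Hilbert space $\mathcal{H}$ with $\mathcal{H}_{\mathcal{Z}}^{\oplus }$, the representation $\Pi $ with $\pi _{\mathcal{Z}}^{\oplus }$ and $\mathfrak{N}$ with the algebra $N_{\mathcal{Z}}$ of diagonalizable operators. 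Under this identification, by Theorem \ref{Nielsen Th 7.1} (i) we have $N_{\mathcal{Z}}^{\prime }=M_{\mathcal{Z}}$, so $\mathfrak{N}$ is maximal abelian in $[\pi _{\mathcal{Z}}^{\oplus }(\mathcal{X})]^{\prime }$ exactly when
\begin{equation*}
M_{\mathcal{Z}}\cap \lbrack \pi _{\mathcal{Z}}^{\oplus }(\mathcal{X})]^{\prime }\subseteq N_{\mathcal{Z}}\ ,
\end{equation*}
i.e., when every \emph{decomposable} operator commuting with all $\pi _{\mathcal{Z}}^{\oplus }(A)$, $A\in \mathcal{X}$, is in fact \emph{diagonalizable}.

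Next I would express the commutation condition fiberwise. A decomposable operator $T=\int_{\mathcal{Z}}^{\alpha _{\mathcal{Z}}}T_{z}\mu (\mathrm{d}z)$ commutes with $\pi _{\mathcal{Z}}^{\oplus }(A)=\int_{\mathcal{Z}}^{\alpha _{\mathcal{Z}}}\pi _{z}(A)\mu (\mathrm{d}z)$ for all $A\in \mathcal{X}$ iff $T_{z}\in \lbrack \pi _{z}(\mathcal{X})]^{\prime }$ for $\mu $-almost every $z$; this uses the $\ast $-homomorphism property of $A\mapsto \int^{\alpha _{\mathcal{Z}}}\pi _{z}(A)\mu (\mathrm{d}z)$ together with separability of $\mathcal{X}$ (a countable dense $\ast $-subalgebra over $\mathbb{Q}+i\mathbb{Q}$ removes a single $\mu $-null set, exactly as in the proof of Theorem \ref{thm decompo repre}). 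Likewise, $T$ is diagonalizable iff $T_{z}\in \mathbb{C}\mathbf{1}_{\mathcal{H}_{z}}$ for $\mu $-a.e. $z$. Therefore the maximality condition becomes: for $\mu $-a.e. $z$, $[\pi _{z}(\mathcal{X})]^{\prime }=\mathbb{C}\mathbf{1}_{\mathcal{H}_{z}}$, which by the bicommutant theorem \cite[Theorem 2.4.11]{BrattelliRobinsonI} (or directly) is precisely irreducibility of $\pi _{z}$. Conversely, if $\pi _{z}$ is irreducible $\mu $-a.e., then any decomposable $T\in \lbrack \pi _{\mathcal{Z}}^{\oplus }(\mathcal{X})]^{\prime }$ has scalar fibers $\mu $-a.e. and so lies in $N_{\mathcal{Z}}$, giving maximality. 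By Lemma \ref{lemma-useful-coherences} and Lemma \ref{lemma-useful-coherences copy(1)} all of this may first be checked on the constant fiber pieces $\int_{\mathcal{Z}_{n}}\ell _{n}^{2}\,\mu (\mathrm{d}z)$ and then reassembled.

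\textbf{Main obstacle.} The delicate point is the passage between the global statement ``$T_{z}\in [\pi _{z}(\mathcal{X})]^{\prime }$ for a.e.\ $z$'' and ``$T\in M_{\mathcal{Z}}$ commutes with $\pi _{\mathcal{Z}}^{\oplus }(\mathcal{X})$'': the exceptional null sets a priori depend on the element $A\in \mathcal{X}$, and one also needs that the pointwise-a.e.\ commutant condition actually characterizes membership in the commutant of the \emph{whole} image, not merely of a countable dense set. This is handled by separability of $\mathcal{X}$ together with the strong-operator continuity statement Theorem \ref{Nielsen Th 7.1} (iii)–(iv), which lets one upgrade from the dense $\ast $-subalgebra $\mathcal{X}_{0}$ to all of $\mathcal{X}$ and, conversely, extract a.e.\ convergent subsequences. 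I would therefore structure the argument so that all fiberwise identities are first proved on $\mathcal{X}_{0}$, then the density and the continuity results of Theorem \ref{Nielsen Th 7.1} close the gap; this is essentially the content of \cite[Theorem 12.3 and its surroundings]{Niesen-direct integrals}, to which I would refer for the routine measure-theoretic bookkeeping.
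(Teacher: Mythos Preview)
Your argument for the direction ``irreducible a.e.\ $\Rightarrow$ maximal'' is correct and coincides with the paper's: take $T\in N_{\mathcal{Z}}'\cap[\pi_{\mathcal{Z}}^{\oplus}(\mathcal{X})]'$, decompose it, use separability of $\mathcal{X}$ to get $T_z\in[\pi_z(\mathcal{X})]'=\mathbb{C}\mathbf{1}_{\mathcal{H}_z}$ a.e., hence $T\in N_{\mathcal{Z}}$.

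The gap is in the converse. You write ``therefore the maximality condition becomes: for $\mu$-a.e.\ $z$, $[\pi_z(\mathcal{X})]'=\mathbb{C}\mathbf{1}_{\mathcal{H}_z}$'' as if this were a reformulation, but it is an implication that still needs proof. What you have actually established is: for every \emph{measurable, essentially bounded} field $(T_z)$ with $T_z\in[\pi_z(\mathcal{X})]'$ a.e., the field is scalar a.e. To conclude that $[\pi_z(\mathcal{X})]'=\mathbb{C}\mathbf{1}_{\mathcal{H}_z}$ a.e., you must show that if the set $S=\{z:\pi_z\text{ is not irreducible}\}$ had positive measure, one could \emph{measurably select} a non-scalar $T_z\in[\pi_z(\mathcal{X})]'$ over (a subset of) $S$ and assemble a decomposable operator in $[\pi_{\mathcal{Z}}^{\oplus}(\mathcal{X})]'\setminus N_{\mathcal{Z}}$. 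This selection argument---and even the measurability of $S$ itself---is not routine bookkeeping; it relies on the fact that $(\mathcal{Z},\mathfrak{F},\mu)$ is \emph{standard} and on Borel selection/cross-section theorems. The paper explicitly flags this as the harder direction, reduces to the constant-fiber case, and invokes \cite[Theorems 4.1, 4.3 and 13.1]{Niesen-direct integrals} to produce Borel sets $\mathcal{Z}_0\subseteq S\subseteq\mathcal{Z}_1$ with $\mu(\mathcal{Z}_1\setminus\mathcal{Z}_0)=0$ together with a measurable field $(A_z)_{z\in\mathcal{Z}_0}$ that yields an element of $(M_{\mathcal{Z}}\setminus N_{\mathcal{Z}})\cap[\pi_{\mathcal{Z}}^{\oplus}(\mathcal{X})]'$; maximality then forces $\mu(\mathcal{Z}_0)=0$, hence $\mu(S)=0$.

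Your ``main obstacle'' paragraph identifies the $A$-dependent null-set issue, but that is the easy technical point (handled by separability, as you correctly note) and is only what is needed for the direction you did correctly. The genuine obstacle---measurable selection of non-scalar commutant elements over the non-irreducible fibers---is the one you have not addressed, and Theorem \ref{Nielsen Th 7.1} (iii)--(iv) alone does not provide it.
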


\begin{proof}
Without loss of generality we assume that $\Pi $ is a representation of $%
\mathcal{X}$ on a direct integral $\mathcal{H}_{\mathcal{Z}}^{\oplus }$
(Definition \ref{Direct integrals of Hilbert spaces}) such that $N_{\mathcal{%
Z}}=\mathfrak{N}\subseteq \lbrack \Pi (\mathcal{X})]^{\prime }$ (Definition %
\ref{Diagonalizable operators}), where $(\mathcal{Z},\mathfrak{F},\mu )$\ is
a standard measure space.

Assume that $\pi _{z}$ is irreducible for $\mu $-almost everywhere $z\in 
\mathcal{Z}$. $N_{\mathcal{Z}}$ is maximal abelian in $[\pi _{\mathcal{Z}%
}^{\oplus }(\mathcal{X})]^{\prime }$ iff $N_{\mathcal{Z}}^{\prime }\cap
\lbrack \pi _{\mathcal{Z}}^{\oplus }(\mathcal{X})]^{\prime }\subseteq N_{%
\mathcal{Z}}$. Therefore, take $A\in N_{\mathcal{Z}}^{\prime }\cap \lbrack
\pi _{\mathcal{Z}}^{\oplus }(\mathcal{X})]^{\prime }$. By Theorem \ref%
{Nielsen Th 7.1} (i), there is a $\mu $-essentially bounded, $\alpha _{%
\mathcal{Z}}$-measurable operator field $(A_{z})_{z\in \mathcal{Z}}$ such
that $A=\int_{\mathcal{Z}}^{\alpha _{\mathcal{Z}}}A_{z}\mu (\mathrm{d}z)$,
see Definition \ref{Decomposable operators}. Using similar arguments as in
the proof of Theorem \ref{thm decompo repre}, one uses a countable dense
subset of $\mathcal{X}$ to prove that $A\in \lbrack \pi _{\mathcal{Z}%
}^{\oplus }(\mathcal{X})]^{\prime }$ yields ${A_{z}}\in \lbrack \pi _{z}(%
\mathcal{X})]^{\prime }$ for $\mu $-almost everywhere $z\in \mathcal{Z}$.
Since $\pi _{z}$ is $\mu $-almost everywhere irreducible, $[\pi _{z}(%
\mathcal{X})]^{\prime }=\mathbb{C}\mathbf{1}_{\mathcal{H}_{z}}$ for $\mu $%
-almost everywhere $z\in \mathcal{Z}$, by \cite[Proposition 2.3.8]%
{BrattelliRobinsonI}. As a consequence, ${A_{z}}\in \lbrack \pi _{z}(%
\mathcal{X})]^{\prime }$ for $\mu $-almost everywhere $z\in \mathcal{Z}$
implies that $A\in N_{\mathcal{Z}}$.

The converse statement, namely the fact that $N_{\mathcal{Z}}^{\prime }\cap
\lbrack \pi _{\mathcal{Z}}^{\oplus }(\mathcal{X})]^{\prime }\subseteq N_{%
\mathcal{Z}}$ yields the irreducibility of $\pi _{z}$ for $\mu $-almost
everywhere $z\in \mathcal{Z}$, requires more arguments. First, an assertion
which is similar to Lemmata \ref{lemma-useful-coherences} and \ref%
{lemma-useful-coherences copy(1)} holds true for the direct integral
representation $\pi _{\mathcal{Z}}^{\oplus }$, see \cite[Eq. (*) in Section
11, p. 46]{Niesen-direct integrals}. So, one can assume without loss of
generality that $\mathcal{H}_{\mathcal{Z}}$ is a family of constant
separable Hilbert spaces. In this case, one then uses the properties of
standard measure spaces (see \cite[Theorems 4.1 and 4.3]{Niesen-direct
integrals}) to show the existence of two Borel sets $\mathcal{Z}_{0},%
\mathcal{Z}_{1}\subseteq \mathcal{Z}$ and a $\alpha _{\mathcal{Z}}$%
-measurable operator field $(A_{z})_{z\in \mathcal{Z}_{0}}$ over $\mathcal{H}%
_{\mathcal{Z}_{0}}\doteq (\mathcal{H}_{z})_{z\in \mathcal{Z}_{0}}$ such that%
\begin{equation*}
\mu \left( \mathcal{Z}_{1}\backslash \mathcal{Z}_{0}\right) =0\ ,\qquad
\int_{\mathcal{Z}_{0}}^{\alpha _{\mathcal{Z}}}A_{z}\mu (\mathrm{d}z)\in
\left( M_{\mathcal{Z}}\backslash N_{\mathcal{Z}}\right) \cap \lbrack \pi _{%
\mathcal{Z}}^{\oplus }(\mathcal{X})]^{\prime }
\end{equation*}%
and 
\begin{equation*}
\mathcal{Z}_{0}\subseteq S\doteq \{z\in \mathcal{Z}:\pi _{z}\text{ is not
irreducible}\}\subseteq \mathcal{Z}_{1}\ .
\end{equation*}%
For the precise arguments leading to these facts, see \cite[Theorem 13.1]%
{Niesen-direct integrals}. By Theorem \ref{Nielsen Th 7.1} (i) recall that $%
M_{\mathcal{Z}}=N_{\mathcal{Z}}^{\prime }$ and since, by assumption, $N_{%
\mathcal{Z}}^{\prime }\cap \lbrack \pi _{\mathcal{Z}}^{\oplus }(\mathcal{X}%
)]^{\prime }\subseteq N_{\mathcal{Z}}$, it follows that $\mu \left( \mathcal{%
Z}_{0}\right) =0$, which in turn implies that $\mu \left( \mathcal{Z}%
_{1}\right) =0$.
\end{proof}

By\ Theorem \ref{irreducible fiber representations}, if $\Pi $ is a
representation of $\mathcal{X}$\ on $\mathcal{H}$ then \emph{maximal}
abelian von Neumann subalgebras of $[\Pi (\mathcal{X})]^{\prime }$ determine
decompositions of $\Pi $\ with respect to \emph{irreducible}
representations. Observe that the representation $\Pi $ is irreducible iff $%
[\Pi (\mathcal{X})]^{\prime }=\mathbb{C}\mathbf{1}_{\mathcal{H}}$, by \cite[%
Proposition 2.3.8]{BrattelliRobinsonI}. Such a representation is a
particular example of a factor\ representation, for one trivially has in
this case that 
\begin{equation*}
\lbrack \Pi (\mathcal{X})]^{\prime }\cap \lbrack \Pi (\mathcal{X})]^{\prime
\prime }=\mathbb{C}\mathbf{1}_{\mathcal{H}}\ .
\end{equation*}

The next result establishes a relation between the central decomposition,
i.e., the decompositions with respect to the center $[\Pi (\mathcal{X}%
)]^{\prime }\cap \lbrack \Pi (\mathcal{X})]^{\prime \prime }$ of the von
Neumann algebra $[\Pi (\mathcal{X})]^{\prime }$, and factor decompositions:

\begin{theorem}[Central versus factor decompositions]
\label{central decompositions}\mbox{ }\newline
Let $\Pi $ be a representation of a separable unital Banach $\ast $-algebra $%
\mathcal{X}$\ on a (separable) Hilbert space $\mathcal{H}$ and $\mathfrak{N}$
an abelian von Neumann subalgebra of $[\Pi (\mathcal{X})]^{\prime }$. Let $%
\pi _{\mathcal{Z}}^{\oplus }$ be the direct integral decomposition of $\Pi $
with respect to $\mathfrak{N}$ (Corollary \ref{thm decompo repre2}).\newline
\emph{(i)} If $\pi _{z}$ is a factor representation of $\mathcal{X}$\ for $%
\mu $-almost everywhere $z\in \mathcal{Z}$\ then $[\Pi (\mathcal{X}%
)]^{\prime }\cap \lbrack \Pi (\mathcal{X})]^{\prime \prime }\subseteq 
\mathfrak{N}$. \newline
\emph{(ii)} If $\mathfrak{N}=[\Pi (\mathcal{X})]^{\prime }\cap \lbrack \Pi (%
\mathcal{X})]^{\prime \prime }$ then $\pi _{z}$ is a factor representation
of $\mathcal{X}$\ for $\mu $-almost everywhere $z\in \mathcal{Z}$.
\end{theorem}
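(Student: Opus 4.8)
\textbf{Proof proposal for Theorem \ref{central decompositions}.}

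The plan is to reduce everything to the constant-fiber case via Lemmata \ref{lemma-useful-coherences} and \ref{lemma-useful-coherences copy(1)}, and then to exploit the fiberwise correspondence between $\ast$-algebraic operations in $M_{\mathcal{Z}}=N_{\mathcal{Z}}'$ and the operations in each $\mathcal{B}(\mathcal{H}_{z})$, using that $\mathcal{X}$ is separable so that all $\mu$-almost-everywhere statements can be made uniform over a countable dense $\ast$-subalgebra $\mathcal{X}_{0}\subseteq\mathcal{X}$ (over $\mathbb{Q}+i\mathbb{Q}$). As in the proofs of Theorem \ref{thm decompo repre} and Theorem \ref{irreducible fiber representations}, without loss of generality I would assume $\Pi$ acts on a direct integral $\mathcal{H}_{\mathcal{Z}}^{\oplus}$ with $N_{\mathcal{Z}}=\mathfrak{N}\subseteq[\Pi(\mathcal{X})]'$ and $(\mathcal{Z},\mathfrak{F},\mu)$ standard, so that $\Pi=\pi_{\mathcal{Z}}^{\oplus}$ for the measurable representation field $\pi_{\mathcal{Z}}$ obtained from Theorem \ref{thm decompo repre}. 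The core analytic input will be Corollary \ref{lemma direct integral von neumann6}, i.e.
\[
[\pi_{\mathcal{Z}}^{\oplus}(\mathcal{X})]''=\int_{\mathcal{Z}}[\pi_{z}(\mathcal{X})]''\,\mu(\mathrm{d}z),
\]
together with the analogous decomposition of the commutant $[\pi_{\mathcal{Z}}^{\oplus}(\mathcal{X})]'=\int_{\mathcal{Z}}[\pi_{z}(\mathcal{X})]'\,\mu(\mathrm{d}z)$ (both of which are decomposable von Neumann algebras containing $N_{\mathcal{Z}}$), from which the center $[\Pi(\mathcal{X})]'\cap[\Pi(\mathcal{X})]''$ also decomposes fiberwise as $\int_{\mathcal{Z}}\bigl([\pi_{z}(\mathcal{X})]'\cap[\pi_{z}(\mathcal{X})]''\bigr)\,\mu(\mathrm{d}z)$.

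For part (i): assume $\pi_{z}$ is a factor for $\mu$-a.e.\ $z$, so $[\pi_{z}(\mathcal{X})]'\cap[\pi_{z}(\mathcal{X})]''=\mathbb{C}\mathbf{1}_{\mathcal{H}_{z}}$ a.e. Take any $A=\int_{\mathcal{Z}}^{\alpha_{\mathcal{Z}}}A_{z}\,\mu(\mathrm{d}z)$ in the center $[\Pi(\mathcal{X})]'\cap[\Pi(\mathcal{X})]''$. Using the fiberwise decomposition of both commutant and bicommutant and a countable dense $\mathcal{X}_{0}$ to pass from ``$A$ commutes with $\Pi(\mathcal{X})$ and lies in $[\Pi(\mathcal{X})]''$'' to the corresponding fiberwise statements on a fixed conull set, I get $A_{z}\in[\pi_{z}(\mathcal{X})]'\cap[\pi_{z}(\mathcal{X})]''=\mathbb{C}\mathbf{1}_{\mathcal{H}_{z}}$ for a.e.\ $z$; hence $A=\int_{\mathcal{Z}}^{\alpha_{\mathcal{Z}}}\varphi(z)\mathbf{1}_{\mathcal{H}_{z}}\,\mu(\mathrm{d}z)$ for some $\varphi\in L^{\infty}(\mathcal{Z},\mu)$ (measurability of $\varphi$ follows from measurability of the operator field, cf.\ Lemma \ref{lema equiv meas total family}), i.e.\ $A\in N_{\mathcal{Z}}=\mathfrak{N}$. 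For part (ii): assume $\mathfrak{N}=[\Pi(\mathcal{X})]'\cap[\Pi(\mathcal{X})]''$, i.e.\ $N_{\mathcal{Z}}$ equals the center of $[\pi_{\mathcal{Z}}^{\oplus}(\mathcal{X})]'$. The center is decomposable and equals $\int_{\mathcal{Z}}\bigl([\pi_{z}(\mathcal{X})]'\cap[\pi_{z}(\mathcal{X})]''\bigr)\,\mu(\mathrm{d}z)$, while $N_{\mathcal{Z}}=\int_{\mathcal{Z}}\mathbb{C}\mathbf{1}_{\mathcal{H}_{z}}\,\mu(\mathrm{d}z)$; comparing the two decomposable algebras fiberwise (again using a countable generating set, now for the von Neumann algebra, and the uniqueness of fiberwise decompositions of decomposable von Neumann algebras, in the spirit of Theorem \ref{irreducible fiber representations}) forces $[\pi_{z}(\mathcal{X})]'\cap[\pi_{z}(\mathcal{X})]''=\mathbb{C}\mathbf{1}_{\mathcal{H}_{z}}$ for $\mu$-a.e.\ $z$, i.e.\ $\pi_{z}$ is a factor a.e.

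I expect the main obstacle to be exactly the passage between the global relation (membership in, or equality of, a decomposable von Neumann algebra built from commutants and bicommutants) and the fiberwise relations, since the von Neumann operations (double commutant, intersection) are not literally fiberwise and the exceptional null sets depend a priori on which operators are tested. The separability of $\mathcal{X}$ (hence of $[\Pi(\mathcal{X})]''$ in the strong topology, via a countable strongly dense set) is what saves the argument, letting me fix one conull set on which all the needed identities hold simultaneously; this is precisely the kind of reasoning already invoked in the proofs of Theorem \ref{thm decompo repre} and Theorem \ref{irreducible fiber representations}, so I would lean on \cite[Theorems 12--13]{Niesen-direct integrals} for the technical bookkeeping rather than redo it in detail. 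The remaining steps — measurability of $\varphi$, identification of $N_{\mathcal{Z}}$ with $\int\mathbb{C}\mathbf{1}_{\mathcal{H}_{z}}\mu(\mathrm{d}z)$, and the decomposition of the center — are routine given Theorems \ref{Nielsen Th 7.1 copy(2)}--\ref{Nielsen Th 7.1} and Corollary \ref{lemma direct integral von neumann6}.
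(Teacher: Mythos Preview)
Your treatment of (i) is essentially the paper's, but be careful with the preamble: Corollary \ref{lemma direct integral von neumann6} gives the equality $[\pi_{\mathcal{Z}}^{\oplus}(\mathcal{X})]''=\int[\pi_{z}(\mathcal{X})]''\mu(\mathrm{d}z)$ only under the hypothesis $\mathfrak{N}\subseteq[\Pi(\mathcal{X})]'\cap[\Pi(\mathcal{X})]''$, which in (i) is the \emph{conclusion}; likewise $[\pi_{\mathcal{Z}}^{\oplus}(\mathcal{X})]'$ need not even lie in $M_{\mathcal{Z}}$ there, so it cannot be written as a direct integral. What you actually use in (i) is harmless: any central $A$ lies in $[\Pi(\mathcal{X})]''\subseteq N_{\mathcal{Z}}'=M_{\mathcal{Z}}$, hence is decomposable; $A_{z}\in[\pi_{z}(\mathcal{X})]'$ a.e.\ follows from your countable dense $\mathcal{X}_{0}$; and $A_{z}\in[\pi_{z}(\mathcal{X})]''$ a.e.\ needs only the one-sided inclusion $[\pi_{\mathcal{Z}}^{\oplus}(\mathcal{X})]''\subseteq\int[\pi_{z}(\mathcal{X})]''\mu(\mathrm{d}z)$, which always holds. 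The paper obtains this last point directly by Kaplansky density plus Theorem \ref{Nielsen Th 7.1} (iv), which is precisely the content of that inclusion.

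For (ii) your route is genuinely different. The paper does not decompose the commutant or the center at all; instead it builds an auxiliary separable $C^{\ast}$-algebra $\mathcal{C}$ generated by $\pi_{\mathcal{Z}}^{\oplus}(\mathcal{X})$ together with a countable weak-operator-dense subset $(C_{n})$ of the unit ball of $[\pi_{\mathcal{Z}}^{\oplus}(\mathcal{X})]'$, checks that $\mathcal{C}'=N_{\mathcal{Z}}$ (so $N_{\mathcal{Z}}$ is maximal abelian in $\mathcal{C}'$), applies Theorem \ref{irreducible fiber representations} to the identity representation of $\mathcal{C}$ to obtain irreducible fiber representations $\varkappa_{z}$ with $\pi_{z}=\varkappa_{z}\circ\pi_{\mathcal{Z}}^{\oplus}$ and $\varkappa_{z}(C_{n})\in[\pi_{z}(\mathcal{X})]'$, and reads off $[\pi_{z}(\mathcal{X})]'\cap[\pi_{z}(\mathcal{X})]''\subseteq[\varkappa_{z}(\mathcal{C})]'=\mathbb{C}\mathbf{1}_{\mathcal{H}_{z}}$. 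Your argument---decompose $[\Pi(\mathcal{X})]'$ and $[\Pi(\mathcal{X})]''$ (both legitimately inside $M_{\mathcal{Z}}$ in (ii) since $N_{\mathcal{Z}}$ is the center), intersect fiberwise, and compare with $N_{\mathcal{Z}}=\int\mathbb{C}\mathbf{1}_{\mathcal{H}_{z}}\mu(\mathrm{d}z)$ via Corollary \ref{lemma direct integral von neumann5}---is valid and more direct, at the cost of invoking the fiberwise commutant identity $[\pi_{\mathcal{Z}}^{\oplus}(\mathcal{X})]'=\int[\pi_{z}(\mathcal{X})]'\mu(\mathrm{d}z)$ and the compatibility of intersections with direct integrals, neither of which the paper has stated at this point; the paper's auxiliary-algebra trick trades these for the already-proven Theorem \ref{irreducible fiber representations}.
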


\begin{proof}
Without loss of generality we assume that $\Pi =\pi _{\mathcal{Z}}^{\oplus }$
is a representation of $\mathcal{X}$ on a direct integral $\mathcal{H}_{%
\mathcal{Z}}^{\oplus }$ (Definition \ref{Direct integrals of Hilbert spaces}%
) such that $\mathfrak{N}=N_{\mathcal{Z}}\subseteq \lbrack \pi _{\mathcal{Z}%
}^{\oplus }(\mathcal{X})]^{\prime }$ (Definition \ref{Diagonalizable
operators}), where $(\mathcal{Z},\mathfrak{F},\mu )$\ is a standard measure
space.

(i) On the one hand, by Theorem \ref{Nielsen Th 7.1} (i), $N_{\mathcal{Z}%
}\subseteq \lbrack \pi _{\mathcal{Z}}^{\oplus }(\mathcal{X})]^{\prime }$
yields $[\pi _{\mathcal{Z}}^{\oplus }(\mathcal{X})]^{\prime \prime
}\subseteq M_{\mathcal{Z}}$. Therefore, for any $A\in \lbrack \pi _{\mathcal{%
Z}}^{\oplus }(\mathcal{X})]^{\prime }\cap \lbrack \pi _{\mathcal{Z}}^{\oplus
}(\mathcal{X})]^{\prime \prime }$, there is a $\mu $-essentially bounded, $%
\alpha _{\mathcal{Z}}$-measurable operator field $(A_{z})_{z\in \mathcal{Z}}$
such that $A=\int_{\mathcal{Z}}^{\alpha _{\mathcal{Z}}}A_{z}\mu (\mathrm{d}%
z) $ and ${A_{z}}\in \lbrack \pi _{z}(\mathcal{X})]^{\prime }$ for $\mu $%
-almost everywhere $z\in \mathcal{Z}$, using similar arguments as in the
proof of Theorem \ref{irreducible fiber representations}. On the other hand,
since $\mathcal{H}_{\mathcal{Z}}^{\oplus }$ is separable (Theorem \ref%
{Nielsen Th 7.1 copy(1)}), one can apply the Kaplanski density theorem \cite[%
Theorem A.2]{Niesen-direct integrals} to approximate, in the strong
topology, any decomposable operator 
\begin{equation*}
A=\int_{\mathcal{Z}}^{\alpha _{\mathcal{Z}}}A_{z}\mu (\mathrm{d}z)\in
\lbrack \pi _{\mathcal{Z}}^{\oplus }(\mathcal{X})]^{\prime \prime }\cap M_{%
\mathcal{Z}}
\end{equation*}%
by a sequence\footnote{%
Recall that$\mathcal{X}$ is assumed to be unital.} $(\pi _{\mathcal{Z}%
}^{\oplus }(A_{n}))_{n\in \mathbb{N}}$ with $A_{n}\in \mathcal{X}$. One then
uses Theorem \ref{Nielsen Th 7.1} (iv) to show that ${A_{z}}\in \lbrack \pi
_{z}(\mathcal{X})]^{\prime \prime }$ for $\mu $-almost everywhere $z\in 
\mathcal{Z}$. As a consequence, if $A\in \lbrack \pi _{\mathcal{Z}}^{\oplus
}(\mathcal{X})]^{\prime }\cap \lbrack \pi _{\mathcal{Z}}^{\oplus }(\mathcal{X%
})]^{\prime \prime }$ then, for $\mu $-almost everywhere $z\in \mathcal{Z}$, 
${A_{z}}$ belongs to the center of $[\pi _{z}(\mathcal{X})]^{\prime }$,
which consists of multiples of the identity, by assumption. Therefore, 
\begin{equation*}
\lbrack \pi _{\mathcal{Z}}^{\oplus }(\mathcal{X})]^{\prime }\cap \lbrack \pi
_{\mathcal{Z}}^{\oplus }(\mathcal{X})]^{\prime \prime }\subseteq N_{\mathcal{%
Z}}\ .
\end{equation*}

(ii) Since $\mathcal{H}_{\mathcal{Z}}^{\oplus }$ is separable (Theorem \ref%
{Nielsen Th 7.1 copy(1)}), by the Kaplanski density theorem \cite[Theorem A.2%
]{Niesen-direct integrals}, there is a sequence $(C_{n})_{n\in \mathbb{N}}$
of operators that is dense in the weak-operator topology within the unit
ball of the commutant $[\pi _{\mathcal{Z}}^{\oplus }(\mathcal{X})]^{\prime }$%
. As $\mathcal{X}$ is separable and since representations are
norm-contractive \cite[Proposition 2.3.1]{BrattelliRobinsonI}, we construct
the separable unital $C^{\ast }$-algebra $\mathcal{C}$ of operators on the
direct integral $\mathcal{H}_{\mathcal{Z}}^{\oplus }$ that is generated by $%
\pi _{\mathcal{Z}}^{\oplus }(\mathcal{X})$ and $(C_{n})_{n\in \mathbb{N}%
}\subseteq \lbrack \pi _{\mathcal{Z}}^{\oplus }(\mathcal{X})]^{\prime }$.
This $C^{\ast }$-algebra satisfies, by construction, the equality 
\begin{equation*}
\mathcal{C}^{\prime }=[\pi _{\mathcal{Z}}^{\oplus }(\mathcal{X})]^{\prime
}\cap \lbrack \pi _{\mathcal{Z}}^{\oplus }(\mathcal{X})]^{\prime \prime }\ ,
\end{equation*}%
which equals $N_{\mathcal{Z}}$, by assumption. Using now \cite[Proposition
2.3.1]{BrattelliRobinsonI} and Theorem \ref{Nielsen Th 7.1} (iii) together
with Corollary \ref{thm decompo repre2} and Theorem \ref{irreducible fiber
representations}, where the separable Banach $\ast $-algebra, the
representation and the abelian sublagebra, are respectively $\mathcal{C}$,
the identity mapping and $N_{\mathcal{Z}}=\mathcal{C}^{\prime }$, we deduce
the existence of a $\alpha _{\mathcal{Z}}$-measurable field $\varkappa _{%
\mathcal{Z}}$ of $\mu $-almost everywhere \emph{irreducible} representations
of $\mathcal{C}$ on $\mathcal{H}_{z}$ such that 
\begin{equation*}
A=\int_{\mathcal{Z}}^{\alpha _{\mathcal{Z}}}\varkappa _{z}\left( A\right)
\mu (\mathrm{d}z)\ ,\qquad A\in \mathcal{C}\ ,
\end{equation*}%
with $\pi _{z}=\varkappa _{z}\circ \pi _{\mathcal{Z}}^{\oplus }$ and $%
\varkappa _{z}(C_{n})\in \lbrack \pi _{z}(\mathcal{X})]^{\prime }$ for $\mu $%
-almost everywhere $z\in \mathcal{Z}$ and $n\in \mathbb{N}$. By the
weak-operator density of $(C_{n})_{n\in \mathbb{N}}$ in $[\pi _{\mathcal{Z}%
}^{\oplus }(\mathcal{X})]^{\prime }$ and \cite[Proposition 2.3.8]%
{BrattelliRobinsonI}, it follows that%
\begin{equation*}
\lbrack \pi _{z}(\mathcal{X})]^{\prime }\cap \lbrack \pi _{z}(\mathcal{X}%
)]^{\prime \prime }=[\varkappa _{z}\circ \pi _{\mathcal{Z}}^{\oplus }(%
\mathcal{X})]^{\prime }\cap \lbrack (\varkappa _{z}(C_{n}))_{n\in \mathbb{N}%
}]^{\prime }=[\varkappa _{z}(\mathcal{C})]^{\prime }=\mathbb{C}\mathbf{1}_{%
\mathcal{H}}
\end{equation*}%
for $\mu $-almost everywhere $z\in \mathcal{Z}$.
\end{proof}

\noindent In general, $[\Pi (\mathcal{X})]^{\prime }\cap \lbrack \Pi (%
\mathcal{X})]^{\prime \prime }\subseteq \mathfrak{N}$ does not necessarily
imply that $\pi _{z}$ is a factor representation of $\mathcal{X}$\ for $\mu $%
-almost everywhere $z\in \mathcal{Z}$. See \cite[Example 15.3]{Niesen-direct
integrals}. In particular, by Theorem \ref{central decompositions} (ii),
central decompositions (Definition \ref{Special direct integral
representations} (ii.1)) are always factor decompositions (Definition \ref%
{Special direct integral representations} (ii.2)), but the converse is
false, in general. In other words, the two definitions are not equivalent.

By Theorem \ref{central decompositions}, observe that any representation of $%
\mathcal{X}$ admits a central decomposition. One remarkable fact about
central, or subcentral, decompositions is that the corresponding factor
representations are non-redundant in the following sense:

\begin{theorem}[Non-redundancy of subcentral decompositions]
\label{central decompositions copy(1)}\mbox{ }\newline
Let $\Pi $ be a representation of a separable unital Banach $\ast $-algebra $%
\mathcal{X}$\ on a (separable) Hilbert space $\mathcal{H}$ and $\pi _{%
\mathcal{Z}}^{\oplus }$ be the direct integral decomposition of $\Pi $ with
respect to $\mathfrak{N}\subseteq \lbrack \Pi (\mathcal{X})]^{\prime }\cap
\lbrack \Pi (\mathcal{X})]^{\prime \prime }$ (Corollary \ref{thm decompo
repre2}). Then, for some measurable subset $\mathcal{Z}_{0}\subseteq 
\mathcal{Z}$, $\mu (\mathcal{Z}_{0})=0$, and all $z_{1},z_{2}\in \mathcal{Z}%
\backslash \mathcal{Z}_{0}$, $z_{1}\neq z_{2}$, there exists no $\ast $%
-isomorphism $\kappa :[\pi _{z_{1}}(\mathcal{X})]^{\prime \prime }\mapsto
\lbrack \pi _{z_{2}}(\mathcal{X})]^{\prime \prime }$ such that $\kappa \circ
\pi _{z_{1}}=\pi _{z_{2}}$.
\end{theorem}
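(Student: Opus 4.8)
The plan is to reduce the statement to a known non-redundancy result for \emph{central} decompositions and then handle the subcentral case by an intermediate-algebra argument. First I would recall that, by Theorem~\ref{central decompositions}(i)--(ii), the central decomposition, i.e.\ the decomposition of $\Pi$ with respect to the full center $\mathfrak{Z}\doteq [\Pi(\mathcal{X})]^{\prime}\cap[\Pi(\mathcal{X})]^{\prime\prime}$, is a factor decomposition: for $\mu$-almost every $z$, $[\pi^{\mathrm{c}}_{z}(\mathcal{X})]^{\prime}\cap[\pi^{\mathrm{c}}_{z}(\mathcal{X})]^{\prime\prime}=\mathbb{C}\mathbf{1}$. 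The classical non-redundancy statement (this is the content of the theorem in the case $\mathfrak{N}=\mathfrak{Z}$, cf.\ \cite[Theorem 14.x]{Niesen-direct integrals} or \cite[Theorem 4.4.x]{BrattelliRobinsonI}) asserts exactly that the fiber factor representations of the central decomposition are pairwise non-$\ast$-isomorphic, in the precise sense of the statement, off a null set. So for the central case there is nothing to do beyond invoking this.

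For a general abelian $\mathfrak{N}\subseteq\mathfrak{Z}$, the idea is that the decomposition with respect to $\mathfrak{N}$ is a \emph{coarsening} of the central decomposition: the measure space $(\mathcal{Z},\mathfrak{F},\mu)$ associated with $\mathfrak{N}$ is, up to the natural equivalences, a quotient of the measure space $(\mathcal{Y},\mathfrak{G},\nu)$ associated with $\mathfrak{Z}$, via a measurable map $q:\mathcal{Y}\to\mathcal{Z}$ pushing $\nu$ to $\mu$, and the fiber over $z\in\mathcal{Z}$ is itself a direct integral $\int_{q^{-1}(z)}\mathcal{H}^{\mathrm{c}}_{y}\,\nu_z(\mathrm{d}y)$ over the conditional measure $\nu_z$, with $\pi_{z}=\int_{q^{-1}(z)}\pi^{\mathrm{c}}_{y}\,\nu_z(\mathrm{d}y)$. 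I would make this precise using that $\mathfrak{N}\subseteq\mathfrak{Z}$, so $\varkappa_\mu(L^\infty(\mathcal{Z},\mu))\subseteq\varkappa_\nu(L^\infty(\mathcal{Y},\nu))$ inside $[\Pi(\mathcal{X})]^{\prime}$, hence $L^\infty(\mathcal{Z},\mu)$ embeds as a von Neumann subalgebra of $L^\infty(\mathcal{Y},\nu)$, which (by the standard structure theory of abelian von Neumann algebras, cf.\ \cite[Proposition A.4]{Niesen-direct integrals}) is implemented by a measurable factor map $q$ with $q_*\nu\sim\mu$. Then the disintegration of $\nu$ along $q$ gives the conditional measures $\nu_z$ and the fiberwise identification of $\pi_z$ as the direct integral of $\{\pi^{\mathrm{c}}_{y}\}_{y\in q^{-1}(z)}$; this is essentially the transitivity of direct integral decompositions, and I would cite the corresponding uniqueness/transitivity statements from \cite[Theorems 12.1, 12.4]{Niesen-direct integrals} rather than reprove them.

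With this picture, the conclusion follows: suppose, for a positive-$\mu$-measure set of pairs, there were a $\ast$-isomorphism $\kappa:[\pi_{z_1}(\mathcal{X})]^{\prime\prime}\to[\pi_{z_2}(\mathcal{X})]^{\prime\prime}$ with $\kappa\circ\pi_{z_1}=\pi_{z_2}$. Since $[\pi_{z_i}(\mathcal{X})]^{\prime\prime}=\int_{q^{-1}(z_i)}[\pi^{\mathrm{c}}_{y}(\mathcal{X})]^{\prime\prime}\,\nu_{z_i}(\mathrm{d}y)$ (using Corollary~\ref{lemma direct integral von neumann6}-type decomposability of the bicommutant, already invoked in Section~\ref{Direct Integrals of GNS}), and since the $\pi^{\mathrm{c}}_{y}$ are factor representations, such a $\kappa$ intertwining the representations would have to be built from $\ast$-isomorphisms between the individual fiber factor bicommutants $[\pi^{\mathrm{c}}_{y_1}(\mathcal{X})]^{\prime\prime}$ and $[\pi^{\mathrm{c}}_{y_2}(\mathcal{X})]^{\prime\prime}$ compatible with the representations, for $y_i$ ranging over $q^{-1}(z_i)$ — and for $z_1\neq z_2$ these fibers are disjoint subsets of $\mathcal{Y}$. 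A Fubini/measurable-selection argument then produces a positive-$\nu^{\otimes 2}$-measure set of pairs $(y_1,y_2)$, $y_1\neq y_2$, admitting such an intertwining $\ast$-isomorphism between the corresponding central fibers, contradicting the non-redundancy of the central decomposition established in the first paragraph. The main obstacle I anticipate is making the "coarsening via disintegration" step rigorous at the level of the \emph{fields} of representations and their bicommutants — one must check measurability of all the auxiliary maps (the factor map $q$, the conditional measures $z\mapsto\nu_z$, the fiberwise identifications) and that the $\ast$-isomorphism $\kappa$ genuinely disintegrates into fiberwise intertwiners. Here I would lean heavily on the standard-measure-space machinery of Section~\ref{app direct integrals} (in particular Theorems~\ref{Nielsen Th 7.1 copy(1)}, \ref{Nielsen Th 7.1}, \ref{universality} and the transitivity results of \cite{Niesen-direct integrals}) and on the separability of $\mathcal{X}$ to pass everywhere to a countable dense $\ast$-subalgebra, exactly as in the proofs of Theorems~\ref{thm decompo repre} and \ref{central decompositions}.
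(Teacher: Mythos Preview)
Your route is genuinely different from the paper's, and it is worth contrasting them.

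The paper does \emph{not} reduce to the central case. It proves the statement directly and uniformly for every subcentral $\mathfrak{N}$. Working with $\Pi=\pi_{\mathcal{Z}}^{\oplus}$ and $\mathfrak{N}=N_{\mathcal{Z}}$, the paper chooses a countable family $(\mathcal{Z}_{n})_{n\in\mathbb{N}}$ of Borel subsets of $\mathcal{Z}$ that separates points. The subcentral hypothesis gives $N_{\mathcal{Z}}\subseteq[\pi_{\mathcal{Z}}^{\oplus}(\mathcal{X})]^{\prime\prime}$, so each diagonal projection $\chi_{n}=\int_{\mathcal{Z}}^{\alpha_{\mathcal{Z}}}\mathbf{1}[z\in\mathcal{Z}_{n}]\mathbf{1}_{\mathcal{H}_{z}}\,\mu(\mathrm{d}z)$ lies in $[\pi_{\mathcal{Z}}^{\oplus}(\mathcal{X})]^{\prime\prime}$. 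Kaplansky density then approximates each $\chi_{n}$ strongly by elements of $\pi_{\mathcal{Z}}^{\oplus}(\mathcal{X})$, and Theorem~\ref{Nielsen Th 7.1}(iv) pushes this to the fibers: for $z$ outside a single null set $\mathcal{Z}_{0}$, the fiber limits are $\mathbf{1}[z\in\mathcal{Z}_{n}]\mathbf{1}_{\mathcal{H}_{z}}$ for every $n$. If $T$ intertwines $\pi_{z_{1}}$ and $\pi_{z_{2}}$ and $z_{1}\in\mathcal{Z}_{n}$, $z_{2}\notin\mathcal{Z}_{n}$, one gets $T=T\cdot\mathbf{1}_{\mathcal{H}_{z_{1}}}=0\cdot T=0$, i.e.\ the fiber representations are pairwise \emph{disjoint} off $\mathcal{Z}_{0}$; the absence of an intertwining $\ast$-isomorphism then follows from \cite[Proposition~B.5]{Niesen-direct integrals}. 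The whole argument uses only the inclusion $N_{\mathcal{Z}}\subseteq[\pi_{\mathcal{Z}}^{\oplus}(\mathcal{X})]^{\prime\prime}$ and the countable toolkit already set up in Section~\ref{app direct integrals}.

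Your coarsening approach can in principle be completed, but it is substantially heavier. You need the central case as an external input, the existence and regularity of the factor map $q$ together with the disintegration $z\mapsto\nu_{z}$, the transitivity of direct integral decompositions at the level of \emph{representations} (not just Hilbert spaces), and --- the step you correctly flag as the obstacle --- the fact that a $\ast$-isomorphism $\kappa$ between the fiber bicommutants, being center-preserving, disintegrates into fiberwise $\ast$-isomorphisms over the inner (central) fibers. That last step is a real theorem about direct integrals of von Neumann algebras, not lighter than the direct argument. What your approach would buy, if carried out, is a structural picture of the subcentral decomposition as a two-stage integral; what the paper's approach buys is a short, self-contained proof that works for all subcentral $\mathfrak{N}$ at once without ever invoking the central decomposition.
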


\begin{proof}
As in previous proofs, let $\Pi =\pi _{\mathcal{Z}}^{\oplus }$ be a
representation of $\mathcal{X}$ on a direct integral $\mathcal{H}_{\mathcal{Z%
}}^{\oplus }$ such that $\mathfrak{N}=N_{\mathcal{Z}}\subseteq \lbrack \pi _{%
\mathcal{Z}}^{\oplus }(\mathcal{X})]^{\prime }\cap \lbrack \pi _{\mathcal{Z}%
}^{\oplus }(\mathcal{X})]^{\prime \prime }$, where $(\mathcal{Z},\mathfrak{F}%
,\mu )$\ is a standard measure space. By \cite[Proposition B.5]%
{Niesen-direct integrals}, it suffices to prove that $\pi _{z_{1}}$ and $\pi
_{z_{2}}$ are disjoint for all $z_{1},z_{2}\in \mathcal{Z}\backslash 
\mathcal{Z}_{0}$ with $z_{1}\neq z_{2}$. This means that there is no
non-zero operator $T$ from $\mathcal{H}_{z_{1}}$ to $\mathcal{H}_{z_{2}}$
such that 
\begin{equation}
T\pi _{z_{1}}\left( A\right) =\pi _{z_{2}}\left( A\right) T\ ,\qquad A\in 
\mathcal{X}\ .  \label{interwine}
\end{equation}%
To this end, one takes a sequence $(\mathcal{Z}_{n})_{n\in \mathbb{N}}$ of
Borel subsets of $\mathcal{Z}$ separating the points of $\mathcal{Z}$. By
assumption, $N_{\mathcal{Z}}\subseteq \lbrack \pi _{\mathcal{Z}}^{\oplus }(%
\mathcal{X})]^{\prime \prime }$ and we deduce that 
\begin{equation}
\chi _{n}\doteq \int_{\mathcal{Z}}^{\alpha _{\mathcal{Z}}}\mathbf{1}\left[
z\in \mathcal{Z}_{n}\right] \mathbf{1}_{\mathcal{H}_{z}}\mu (\mathrm{d}z)\in
\lbrack \pi _{\mathcal{Z}}^{\oplus }(\mathcal{X})]^{\prime \prime }\ ,\qquad
n\in \mathbb{N}\ .  \label{interwine2}
\end{equation}%
Assume now that $T$ is an operator from $\mathcal{H}_{z_{1}}$ to $\mathcal{H}%
_{z_{2}}$ satisfying (\ref{interwine}) with $z_{1}\in \mathcal{Z}_{n}$ and $%
z_{2}\notin \mathcal{Z}_{n}$ for some $n\in \mathbb{N}$. Then, by (\ref%
{interwine})-(\ref{interwine2}), we \emph{formally} expect that, for $\mu $%
-almost everywhere $z_{1},z_{2}\in \mathcal{Z}$, $z_{1}\neq z_{2}$, 
\begin{equation}
T\mathbf{1}\left[ z_{1}\in \mathcal{Z}_{n}\right] =T\pi _{z_{1}}\left( \chi
_{n}\right) =\pi _{z_{2}}\left( \chi _{n}\right) T=0\ ,
\label{equation conne}
\end{equation}%
implying $T=0$. To complete the arguments, by making sense of the formal
elements $\pi _{z}\left( \chi _{n}\right) $, one uses a convenient
approximation of $(\chi _{n})_{n\in \mathbb{N}}\subseteq \lbrack \pi _{%
\mathcal{Z}}^{\oplus }(\mathcal{X})]^{\prime \prime }$, in the strong
topology, via the Kaplanski density theorem \cite[Theorem A.2]{Niesen-direct
integrals}, as in Theorem \ref{central decompositions} (i). Then, we use
Theorem \ref{Nielsen Th 7.1} (iv) and the fact that a countable union of
measurable sets of zero measure has zero measure to obtain a set $\mathcal{Z}%
_{0}$ of zero measure such that Equation (\ref{equation conne}) holds true
for any $z_{1}\in \mathcal{Z}_{n}\cap \mathcal{Z}\backslash \mathcal{Z}_{0}$
and $z_{2}\notin \mathcal{Z}_{n}\cap \mathcal{Z}\backslash \mathcal{Z}_{0}$.
For more details, see \cite[Theorem 13.3]{Niesen-direct integrals}.
\end{proof}

Theorem \ref{central decompositions copy(1)} means that $\pi _{z_{1}}$\ and $%
\pi _{z_{2}}$ are not \emph{quasi-equivalent}\footnote{%
The definition of quasi-equivalent representations in \cite[Definition 2.4.25%
]{BrattelliRobinsonI} differs from the one of \cite[Appendix B, p. 146]%
{Niesen-direct integrals}, but they are equivalent, by \cite[Theorem 2.4.26]%
{BrattelliRobinsonI}.} representations for any $z_{1},z_{2}\in \mathcal{Z}%
\backslash \mathcal{Z}_{0}$, $z_{1}\neq z_{2}$. When we only have a factor
decomposition of a representation $\Pi $ with respect to $\mathfrak{N}$
(Corollary \ref{thm decompo repre2}) such that 
\begin{equation}
\lbrack \Pi (\mathcal{X})]^{\prime }\cap \lbrack \Pi (\mathcal{X})]^{\prime
\prime }\varsubsetneq \mathfrak{N}\subseteq \lbrack \Pi (\mathcal{X}%
)]^{\prime },  \label{redondant}
\end{equation}%
we have a redundancy in the following sense:\ Since two factor
representations $\pi _{1},\pi _{2}$ are either quasi-equivalent or disjoint 
\cite[Proposition B.5]{Niesen-direct integrals}, when (\ref{redondant})
holds true, the direct integral representation $\pi _{\mathcal{Z}}^{\oplus }$
of $\Pi $ is constructed from a $\alpha _{\mathcal{Z}}$-measurable\ field $%
\pi _{\mathcal{Z}}$\ of representations of $\mathcal{X}$ having
quasi-equivalent representations within a set of non-zero measure. This
redundancy disappears when $\mathfrak{N}$ is exactly the center of $[\Pi (%
\mathcal{X})]^{\prime }$, or $[\Pi (\mathcal{X})]^{\prime \prime }$, by
Theorem \ref{central decompositions copy(1)}. If 
\begin{equation}
\mathfrak{N}\varsubsetneq \lbrack \Pi (\mathcal{X})]^{\prime }\cap \lbrack
\Pi (\mathcal{X})]^{\prime \prime }  \label{ineauqlity sup}
\end{equation}%
then there is also no redundancy, in the same way. However, by Theorem \ref%
{irreducible fiber representations}, when (\ref{ineauqlity sup}) holds true,
the irreducibility of fiber representations cannot be true $\mu $-almost
everywhere since $\mathfrak{N}$ is clearly not a maximal abelian von Neumann
subalgebra of $[\Pi (\mathcal{X})]^{\prime }$. Note that\ (\ref{ineauqlity
sup}) implies that $\Pi $ \emph{cannot} be an irreducible representation of
a separable unital Banach $\ast $-algebra $\mathcal{X}$\ on a (separable)
Hilbert space $\mathcal{H}$.

Note finally that, even if there exists a $\ast $-isomorphism $\kappa :[\pi
_{1}(\mathcal{X})]^{\prime \prime }\mapsto \lbrack \pi _{2}(\mathcal{X}%
)]^{\prime \prime }$ such that $\kappa \circ \pi _{1}=\pi _{2}$, two
representations $\pi _{1},\pi _{2}$ of the same $C^{\ast }$-algebra $%
\mathcal{X}$ on two Hilbert spaces $\mathcal{H}_{1},\mathcal{H}_{2}$,
respectively, are not necessarily (unitarily) equivalent, which means the
existence of a unitary operator $\mathrm{U}$ from $\mathcal{H}_{1}$ to $%
\mathcal{H}_{2}$ such that $\pi _{1}(A)=\mathrm{U}^{\ast }\pi _{2}(A)\mathrm{%
U}$ for any $A\in \mathcal{X}$. This is related with the question whether
isomorphisms between von Neumann algebras can be unitarily implemented. By 
\cite[Theorem 2.4.26]{BrattelliRobinsonI}, if $\pi _{1},\pi _{2}$ \ are two
quasi-equivalent representations then $\pi _{1},\pi _{2}$ are (unitarily)
equivalent up to multiplicity.

\subsection{Direct Integrals of von Neumann Algebras}

In this section, we study the direct integrals of von Neumann algebras $%
\mathfrak{M}$ on separable Hilbert spaces $\mathcal{H}$, i.e., a $\ast $%
-subalgebra of $\mathcal{B}(\mathcal{H})$ so that $\mathfrak{M}^{\prime
\prime }=\mathfrak{M}$. See \cite[Definition 2.4.8. and Theorem 2.4.11]%
{BrattelliRobinsonI}.

\begin{definition}[Fields of von Neumann algebras]
\label{lemma direct integral von neumann0}\mbox{ }\newline
\emph{(i)} For any set $\mathcal{Z}$, a field of von Neumann algebras\ over
a measurable family $\mathcal{H}_{\mathcal{Z}}$ of separable Hilbert spaces
is a family $\mathfrak{M}_{\mathcal{Z}}\doteq (\mathfrak{M}_{z})_{z\in 
\mathcal{Z}}$ of von Neumann algebras $\mathfrak{M}_{z}$ acting on $\mathcal{%
H}_{z}$ for all $z\in \mathcal{Z}$. \newline
\emph{(ii)} If $(\mathcal{Z},\mathfrak{F})$ is a measurable space and $%
\alpha _{\mathcal{Z}}$ is a coherence for the family $\mathcal{H}_{\mathcal{Z%
}}$ of (i), we say that $\mathfrak{M}_{\mathcal{Z}}$\ is $\alpha _{\mathcal{Z%
}}$-measurable\ iff there exists a sequence $(A^{(n)})_{n\in \mathbb{N}}$ of 
$\alpha _{\mathcal{Z}}$-measurable fields of operators such that $%
\{A_{z}^{(n)}:n\in \mathbb{N}\}\subseteq \mathcal{B}(\mathcal{H}_{z})$
generates\footnote{%
I.e., $\mathfrak{M}_{z}$ is the closure, in the strong or weak operator
topology, of the $\ast $-algebra generated by this set.} $\mathfrak{M}_{z}$
for all $z\in \mathcal{Z}$. Such a sequence $(A^{(n)})_{n\in \mathbb{N}}$ is
named a $\alpha _{\mathcal{Z}}$-measurable generating sequence for $%
\mathfrak{M}_{\mathcal{Z}}$.
\end{definition}

If $\mathcal{H}_{\mathcal{Z}}$\ is a measurable family of separable Hilbert
spaces (see Definition \ref{def measurable family hilbert space}), then $(%
\mathbb{C}\mathbf{1}_{\mathcal{H}_{z}})_{z\in \mathcal{Z}}$ and $(\mathcal{B}%
(\mathcal{H}_{z}))_{z\in \mathcal{Z}}$ are trivial examples of $\alpha _{%
\mathcal{Z}}$-measurable fields\ of von Neumann algebras on $\mathcal{H}_{%
\mathcal{Z}}$. Another less trivial example is given by the following lemma:

\begin{lemma}[Fields of bicommutants of representations]
\label{lemma direct integral von neumann1}\mbox{ }\newline
Let $(\mathcal{Z},\mathfrak{F})$ be a measurable space and $\pi _{\mathcal{Z}%
}$ a $\alpha _{\mathcal{Z}}$-measurable field of\ representations of a
separable unital Banach $\ast $-algebra $\mathcal{X}$\ over a measurable
family $\mathcal{H}_{\mathcal{Z}}$ of separable Hilbert spaces. Then, the
field $([\pi _{z}(\mathcal{X})]^{\prime \prime })_{z\in \mathcal{Z}}$\ of
von Neumann algebras is $\alpha _{\mathcal{Z}}$-measurable.
\end{lemma}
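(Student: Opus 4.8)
The goal is to show that the field $([\pi_z(\mathcal{X})]^{\prime\prime})_{z\in\mathcal{Z}}$ is $\alpha_{\mathcal{Z}}$-measurable, i.e., to produce a sequence $(A^{(n)})_{n\in\mathbb{N}}$ of $\alpha_{\mathcal{Z}}$-measurable operator fields such that $\{A^{(n)}_z:n\in\mathbb{N}\}$ generates $[\pi_z(\mathcal{X})]^{\prime\prime}$ for every $z\in\mathcal{Z}$. The natural candidate is obtained directly from the data of the lemma. Since $\mathcal{X}$ is a \emph{separable} unital Banach $\ast$-algebra, fix a countable dense subset $\mathcal{X}_0=\{B^{(n)}\}_{n\in\mathbb{N}}\subseteq\mathcal{X}$ which we may take to be a $\ast$-subalgebra over $\mathbb{Q}+i\mathbb{Q}$ (this is the same device used in the proof of Theorem~\ref{thm decompo repre}). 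Because $\pi_{\mathcal{Z}}$ is $\alpha_{\mathcal{Z}}$-measurable (Definition~\ref{Field of representation}~(ii)), each field $A^{(n)}\doteq(\pi_z(B^{(n)}))_{z\in\mathcal{Z}}$ is an $\alpha_{\mathcal{Z}}$-measurable field of operators. So the only thing left to check is the \emph{fiberwise generation}: that for each $z\in\mathcal{Z}$, the set $\{\pi_z(B^{(n)}):n\in\mathbb{N}\}$ generates $[\pi_z(\mathcal{X})]^{\prime\prime}$ as a von Neumann algebra.

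First I would record that, since representations of Banach $\ast$-algebras are norm-contractive (\cite[Proposition 2.3.1]{BrattelliRobinsonI}), the density of $\mathcal{X}_0$ in $\mathcal{X}$ forces $\pi_z(\mathcal{X}_0)$ to be norm-dense in $\pi_z(\mathcal{X})$ for every $z\in\mathcal{Z}$; in particular $\pi_z(\mathcal{X}_0)$ and $\pi_z(\mathcal{X})$ have the same strong (equivalently weak) operator closure, namely $[\pi_z(\mathcal{X})]^{\prime\prime}$ by the bicommutant theorem \cite[Theorem 2.4.11]{BrattelliRobinsonI}. Hence the $\ast$-algebra generated by $\{\pi_z(B^{(n)})\}_{n\in\mathbb{N}}$ — which, since $\mathcal{X}_0$ is already a $\ast$-subalgebra over $\mathbb{Q}+i\mathbb{Q}$ and $\pi_z$ is linear and multiplicative, is contained in $\pi_z(\mathcal{X})$ and contains the image $\pi_z(\mathcal{X}_0)$ — has strong-operator closure exactly $[\pi_z(\mathcal{X})]^{\prime\prime}$. (One should note here that closure in strong or weak operator topology gives the same von Neumann algebra, so it does not matter which is used in Definition~\ref{lemma direct integral von neumann0}~(ii).) This verifies the generation condition of Definition~\ref{lemma direct integral von neumann0}~(ii) with the chosen sequence, and the lemma follows.

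There is essentially no serious obstacle; the proof is a bookkeeping argument combining (i) the separability hypothesis on $\mathcal{X}$ to get a countable generating set, (ii) the measurability of $\pi_{\mathcal{Z}}$ applied elementwise to that set, and (iii) norm-contractivity of representations to pass from $\mathcal{X}_0$-density to strong-operator density of the generated $\ast$-algebra. The only point that requires a modicum of care is making sure the countable family one writes down is simultaneously (a) a family of $\alpha_{\mathcal{Z}}$-measurable fields and (b) fiberwise generating; choosing it to be the $\pi_z$-image of a countable dense $\ast$-subalgebra of $\mathcal{X}$ achieves both at once, because measurability is inherited from $\pi_{\mathcal{Z}}$ fiber by fiber and generation is inherited from density fiber by fiber. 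I would keep the write-up short, referencing \cite[Theorem 12.3]{Niesen-direct integrals} or the analogous passage in the proof of Theorem~\ref{thm decompo repre} for the countable-dense-$\ast$-subalgebra trick, and omit the purely routine verifications.
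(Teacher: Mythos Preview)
Your proposal is correct and follows essentially the same approach as the paper's proof: both exploit the separability of $\mathcal{X}$ to extract a countable dense subset, invoke the $\alpha_{\mathcal{Z}}$-measurability of $\pi_{\mathcal{Z}}$ to obtain a sequence of measurable operator fields, and use norm-contractivity of representations (\cite[Proposition 2.3.1]{BrattelliRobinsonI}) to conclude that the images generate $[\pi_z(\mathcal{X})]^{\prime\prime}$ fiberwise. The paper's version is simply more terse, compressing your argument into two sentences.
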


\begin{proof}
For any $z\in \mathcal{Z}$, $[\pi _{z}(\mathcal{X})]^{\prime \prime }$ is
generated by the set $\pi _{z}(\mathcal{X})$, while $\mathcal{X}$ is
separable and a representation is norm-contractive \cite[Proposition 2.3.1]%
{BrattelliRobinsonI}. Therefore, the lemma is a direct consequence of the $%
\alpha _{\mathcal{Z}}$-measurability of $\pi _{\mathcal{Z}}$. See
Definitions \ref{Field of representation} (ii) and \ref{lemma direct
integral von neumann0} (ii).
\end{proof}

\noindent In fact, all $\alpha _{\mathcal{Z}}$-measurable fields of von
Neumann algebras are of this form, by \cite[Lemma 18.1]{Niesen-direct
integrals}.

It turns out that measurable fields of von Neumann algebras are stable with
respect to simple point-wise operations on fields of von Neumann algebras:

\begin{lemma}[Stability of $\protect\alpha _{\mathcal{Z}}$-measurable fields
of von\ Neumann algebras]
\label{lemma direct integral von neumann2}\mbox{ }\newline
Let $(\mathcal{Z},\mathfrak{F})$ be a measurable space and $\mathfrak{M}_{%
\mathcal{Z}}$, $\mathfrak{M}_{\mathcal{Z}}^{(n)}$, $n\in \mathbb{N}$, $%
\alpha _{\mathcal{Z}}$-measurable fields $\mathfrak{M}_{\mathcal{Z}}$\ of
von Neumann algebras over a measurable family $\mathcal{H}_{\mathcal{Z}}$ of
separable Hilbert spaces. Then, the fields $(\mathfrak{M}_{z}^{\prime
})_{z\in \mathcal{Z}}$, $(\cap _{n\in \mathbb{N}}\mathfrak{M}%
_{z}^{(n)})_{z\in \mathcal{Z}}$, $([\cup _{n\in \mathbb{N}}\mathfrak{M}%
_{z}^{(n)}]^{\prime \prime })_{z\in \mathcal{Z}}$ of von Neumann algebras
over $\mathcal{H}_{\mathcal{Z}}$ are also $\alpha _{\mathcal{Z}}$-measurable.
\end{lemma}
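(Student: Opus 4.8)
\textbf{Proof plan for Lemma \ref{lemma direct integral von neumann2}.}

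The plan is to reduce everything to the measurability criterion supplied by Definition \ref{lemma direct integral von neumann0} (ii) — namely, the existence of an $\alpha_{\mathcal{Z}}$-measurable generating sequence — together with Theorem \ref{Measurable families of Hilbert spaces - Equivalent formulation}, Theorem \ref{coherence from fields}, and the characterization of $\alpha_{\mathcal{Z}}$-measurable operator fields. By Lemma \ref{lemma-useful-coherences} and Lemma \ref{lemma-useful-coherences copy(1)}, one may assume without loss of generality that $\mathcal{H}_{\mathcal{Z}}$ is a family of \emph{constant} separable Hilbert spaces, $\mathcal{H}_z=\mathcal{H}$ for all $z\in\mathcal{Z}$; this removes the $z$-dependence of the fiber dimension and is the standard device already used repeatedly above. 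In this setting, by \cite[Lemma 18.1]{Niesen-direct integrals}, each $\alpha_{\mathcal{Z}}$-measurable field of von Neumann algebras arises from a measurable field of representations (e.g. of a suitable separable unital Banach $\ast$-algebra built from a measurable generating sequence), so the three constructions to be treated — taking commutants, countable intersections, and von Neumann closures of countable unions — can each be realized on the level of generating sequences.

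The key steps, in order, are the following. First I would handle the intersection $(\cap_{n\in\mathbb{N}}\mathfrak{M}_z^{(n)})_{z\in\mathcal{Z}}$: since von Neumann algebras are commutant-closed, $\cap_n \mathfrak{M}_z^{(n)} = ([\cup_n (\mathfrak{M}_z^{(n)})']')'$, so this case is a consequence of the commutant case combined with the union case (applied to the commutants). Second, I would treat the union: if $(A^{(n,k)})_{k\in\mathbb{N}}$ is an $\alpha_{\mathcal{Z}}$-measurable generating sequence for $\mathfrak{M}_{\mathcal{Z}}^{(n)}$, then the countable family $\{A^{(n,k)}:n,k\in\mathbb{N}\}$ is again a sequence of $\alpha_{\mathcal{Z}}$-measurable operator fields, and fiberwise it generates $[\cup_n \mathfrak{M}_z^{(n)}]''$; hence by Definition \ref{lemma direct integral von neumann0} (ii) the field $([\cup_n\mathfrak{M}_z^{(n)}]'')_{z\in\mathcal{Z}}$ is $\alpha_{\mathcal{Z}}$-measurable. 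Third, and the substantive step, is the commutant: given an $\alpha_{\mathcal{Z}}$-measurable generating sequence $(A^{(n)})_{n\in\mathbb{N}}$ for $\mathfrak{M}_{\mathcal{Z}}$, one must exhibit an $\alpha_{\mathcal{Z}}$-measurable generating sequence for $(\mathfrak{M}_z')_{z\in\mathcal{Z}}$. Here I would invoke \cite[Lemma 18.1]{Niesen-direct integrals} (or the underlying constant-fiber result in \cite[Section 18]{Niesen-direct integrals}): in the separable constant-fiber case, the set of $\alpha_{\mathcal{Z}}$-measurable operator fields $A=(A_z)_{z\in\mathcal{Z}}$ with $A_z\in\mathfrak{M}_z'$ for all $z$ is itself countably generated as a field, because measurability of $A$ is detected by the countably many scalar functions $z\mapsto\langle v^{(i)}_z,A_z v^{(j)}_z\rangle_{\mathcal{H}_z}$ (Theorem \ref{coherence from fields} (iii)) and the constraint $A_z\in\mathfrak{M}_z'=([A_z^{(n)}:n]'')'$ is the measurable family of equations $A_z A^{(n)}_z = A^{(n)}_z A_z$; a standard selection/orthonormalization argument then produces a countable measurable family whose fiberwise span is strongly dense in $\mathfrak{M}_z'$.

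The main obstacle I anticipate is precisely this last point: producing, in a measurable way, a generating sequence for the \emph{fiberwise commutant} rather than merely verifying that a given field lands in it. The difficulty is that the commutant operation is not continuous and the von Neumann closure interacts delicately with the measurable structure; one must use that $(\mathcal{Z},\mathfrak{F},\mu)$ (or at least the family $\mathcal{H}_{\mathcal{Z}}$) is such that the relevant Hilbert spaces are separable, so that the unit ball of each $\mathfrak{M}_z'$ is strongly separable and admits a measurable dense sequence — this is exactly the content extracted from \cite[Section 18, in particular Lemma 18.1]{Niesen-direct integrals}, which I would cite rather than reprove. Once the commutant case is secured, the intersection and union cases follow by the purely algebraic manipulations indicated above, and the lemma is complete. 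I expect the write-up to be short, essentially a reduction to \cite{Niesen-direct integrals} together with the two elementary algebraic identities for $\cup$ and $\cap$.
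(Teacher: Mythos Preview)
Your plan is correct and, like the paper, ultimately defers the substantive work to \cite{Niesen-direct integrals}; but the routes differ. The paper's proof is a one-line citation of the \emph{Effros--Borel structure} (\cite[Theorem 17.1]{Niesen-direct integrals}): one puts a canonical Borel structure on the set of von Neumann subalgebras of a fixed separable $\mathcal{H}$, shows that $\alpha_{\mathcal{Z}}$-measurability of a field $\mathfrak{M}_{\mathcal{Z}}$ is equivalent to Borel-measurability of $z\mapsto\mathfrak{M}_z$ for this structure, and then observes that commutant, countable intersection, and generated-von-Neumann-algebra are all Borel maps for it. Your approach instead stays at the level of generating sequences: the union case is elementary (concatenate generating sequences), the intersection case reduces to the other two via $\cap_n\mathfrak{M}_z^{(n)}=\bigl[\cup_n(\mathfrak{M}_z^{(n)})'\bigr]'$ (note your displayed formula has one prime too many, though your verbal description of the reduction is correct), and the commutant case is the genuine obstacle, for which you invoke \cite[Section 18]{Niesen-direct integrals} and a measurable-selection argument.

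Both approaches are sound. The Effros--Borel route is cleaner and handles all three operations uniformly, at the cost of importing a heavier abstract machinery. Your route makes the elementary cases transparent and isolates exactly where the nontrivial input from \cite{Niesen-direct integrals} is needed; this is pedagogically useful but longer. Since in either case the commutant step is not reproved from scratch but cited, the two write-ups end up comparable in length and rigor.
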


\begin{proof}
This is a consequence of the properties of the so-called Effros-Borel
structure. See \cite[Theorem 17.1]{Niesen-direct integrals}. We omit the
details.
\end{proof}

We now define direct integrals of von Neumann algebras via the direct
integral of Hilbert spaces (Definition \ref{Direct integrals of Hilbert
spaces}) and the subalgebra of decomposable operators (Definition \ref%
{Decomposable operators}):

\begin{definition}[Direct integrals of von Neumann algebras]
\label{lemma direct integral von neumann3}\mbox{ }\newline
Let $(\mathcal{Z},\mathfrak{F},\mu )$ be a $\sigma $-finite measure space, $%
\mathfrak{M}_{\mathcal{Z}}$ a $\alpha _{\mathcal{Z}}$-measurable field of
von Neumann algebras over a measurable family $\mathcal{H}_{\mathcal{Z}}$\
of separable Hilbert spaces. The direct integral 
\begin{equation*}
\mathfrak{M}_{\mathcal{Z}}^{\oplus }\equiv \int_{\mathcal{Z}}^{\alpha _{%
\mathcal{Z}}}\mathfrak{M}_{z}\mu (\mathrm{d}z)\subseteq M_{\mathcal{Z}%
}\subseteq \mathcal{B}(\mathcal{H}_{\mathcal{Z}}^{\oplus })
\end{equation*}%
of $\mathfrak{M}_{\mathcal{Z}}$ with respect to $\mu $ and $\alpha _{%
\mathcal{Z}}$ is the $\ast $-subalgebra of decomposable operators $A=\int_{%
\mathcal{Z}}^{\alpha _{\mathcal{Z}}}{A_{z}}\mu (\mathrm{d}z)$ for which $%
A_{z}\in \mathfrak{M}_{z}\subseteq \mathcal{B}(\mathcal{H}_{z})$ for all $%
\mu $-almost everywhere $z\in \mathcal{Z}$.
\end{definition}

Using this definition%
\begin{equation*}
\int_{\mathcal{Z}}^{\alpha _{\mathcal{Z}}}[\mathbb{C}\mathbf{1}_{\mathcal{H}%
_{z}}]\mu (\mathrm{d}z)=N_{\mathcal{Z}}\qquad \text{and}\qquad \int_{%
\mathcal{Z}}^{\alpha _{\mathcal{Z}}}\mathcal{B}(\mathcal{H}_{z})\mu (\mathrm{%
d}z)=M_{\mathcal{Z}}
\end{equation*}%
are nothing else as the von Neumann algebras $N_{\mathcal{Z}}$ and $M_{%
\mathcal{Z}}$ of diagonalizable and decomposable operators on $\mathcal{H}_{%
\mathcal{Z}}^{\oplus }$, respectively. See Definitions \ref{Decomposable
operators} and \ref{Diagonalizable operators}. Observe further that, for any 
$\alpha _{\mathcal{Z}}$-measurable field $\mathfrak{M}_{\mathcal{Z}}$ of von
Neumann algebras over $\mathcal{H}_{\mathcal{Z}}$,%
\begin{equation*}
\int_{\mathcal{Z}}^{\alpha _{\mathcal{Z}}}[\mathbb{C}\mathbf{1}_{\mathcal{H}%
_{z}}]\mu (\mathrm{d}z)\subseteq \mathfrak{M}_{\mathcal{Z}}^{\oplus }\cap %
\left[ \mathfrak{M}_{\mathcal{Z}}^{\oplus }\right] ^{\prime }\text{ },
\end{equation*}%
i.e., the algebra $N_{\mathcal{Z}}$ of diagonalizable operator is always a
subalgebra of the center of $\mathfrak{M}_{\mathcal{Z}}^{\oplus }$.

The next theorem gives a sufficient condition on direct integrals\ of von
Neumann algebras to be themselves von Neumann algebras:

\begin{theorem}[Direct integrals\ of von Neumann algebras as von Neumann
algebras]
\label{lemma direct integral von neumann4}\mbox{ }\newline
Let $(\mathcal{Z},\mathfrak{F},\mu )$ be a standard measure space, $%
\mathfrak{M}_{\mathcal{Z}}$ a $\alpha _{\mathcal{Z}}$-measurable field of
von Neumann algebras over a measurable family $\mathcal{H}_{\mathcal{Z}}$\
of separable Hilbert spaces. Then, $\mathfrak{M}_{\mathcal{Z}}^{\oplus }$ is
the von Neumann subalgebra of $\mathcal{B}(\mathcal{H}_{\mathcal{Z}}^{\oplus
})$ generated by $N_{\mathcal{Z}}$ and $\int_{\mathcal{Z}}^{\alpha _{%
\mathcal{Z}}}A_{z}^{(n)}\mu (\mathrm{d}z)$, $n\in \mathbb{N}$, where $%
(A^{(n)})_{n\in \mathbb{N}}$ is any $\alpha _{\mathcal{Z}}$-measurable
generating sequence for $\mathfrak{M}_{\mathcal{Z}}$.
\end{theorem}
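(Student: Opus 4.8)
The plan is to reduce the statement to the already-known characterization of decomposable operators (Theorem \ref{Nielsen Th 7.1}) by showing two inclusions: that $\mathfrak{M}_{\mathcal{Z}}^{\oplus }$ is \emph{contained in} the von Neumann algebra $\mathfrak{A}$ generated by $N_{\mathcal{Z}}$ together with the operators $\int_{\mathcal{Z}}^{\alpha _{\mathcal{Z}}}A_{z}^{(n)}\mu (\mathrm{d}z)$, and conversely that $\mathfrak{A}\subseteq \mathfrak{M}_{\mathcal{Z}}^{\oplus }$. The second inclusion is the easy one: each $\int_{\mathcal{Z}}^{\alpha _{\mathcal{Z}}}A_{z}^{(n)}\mu (\mathrm{d}z)$ lies in $\mathfrak{M}_{\mathcal{Z}}^{\oplus }$ by Definition \ref{lemma direct integral von neumann3} (since $A_{z}^{(n)}\in \mathfrak{M}_{z}$ for all $z$) and $N_{\mathcal{Z}}\subseteq \mathfrak{M}_{\mathcal{Z}}^{\oplus }$ because the identity field $(\mathbf{1}_{\mathcal{H}_{z}})_{z\in \mathcal{Z}}$ takes values in each $\mathfrak{M}_{z}$; since $\mathfrak{M}_{\mathcal{Z}}^{\oplus }$ is a $\ast $-subalgebra of $M_{\mathcal{Z}}$ that is strongly closed — here one uses Theorem \ref{Nielsen Th 7.1} (iii)–(iv), which governs strong limits of decomposable operators fiberwise, to check strong closedness — it follows that $\mathfrak{A}\subseteq \mathfrak{M}_{\mathcal{Z}}^{\oplus }$.

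For the reverse inclusion, I would first invoke Lemma \ref{lemma-useful-coherences} and the analog for decomposable operators (Lemma \ref{lemma-useful-coherences copy(1)}) to reduce, without loss of generality, to the constant-fiber situation $\mathcal{H}_{z}=\mathcal{H}$, where $\mathcal{H}_{\mathcal{Z}}^{\oplus }\cong L^{2}(\mathcal{Z},\mu )\otimes \mathcal{H}$. In that setting the statement becomes the standard fact (see \cite[Theorem 18.7 and surrounding results]{Niesen-direct integrals}) that a direct integral of von Neumann algebras, for a standard measure space, is generated by the diagonalizable operators and any measurable generating sequence. The core of the argument is the following: take $A=\int_{\mathcal{Z}}^{\alpha _{\mathcal{Z}}}A_{z}\mu (\mathrm{d}z)\in \mathfrak{M}_{\mathcal{Z}}^{\oplus }$; since, $\mu $-almost everywhere, $A_{z}\in \mathfrak{M}_{z}=[\{A_{z}^{(n)}\}_{n\in \mathbb{N}}]^{\prime \prime }$, the Kaplansky density theorem \cite[Theorem A.2]{Niesen-direct integrals} produces, for each fixed $z$, a sequence of $\ast $-polynomials in the $A_{z}^{(n)}$ (with $\mathbf{1}_{\mathcal{H}_{z}}$ allowed) of uniformly bounded norm converging strongly to $A_{z}$. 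The delicate point is to make this \emph{measurable and simultaneous} in $z$: one must choose the approximating polynomials from a fixed countable $\ast $-algebra over $\mathbb{Q}+i\mathbb{Q}$ generated by the $A^{(n)}$ and $\mathbf{1}$, then use a measurable selection argument to pick, for each $z$, an index giving a good approximation, exactly as in the proof of Theorem \ref{thm decompo repre}. This yields a sequence $(B_{k})_{k\in \mathbb{N}}$ in $\mathfrak{A}$ with $B_{k,z}\to A_{z}$ strongly for $\mu $-a.e. $z$ and uniformly bounded norms; by Theorem \ref{Nielsen Th 7.1} (iii), $B_{k}\to A$ strongly, so $A\in \overline{\mathfrak{A}}^{\,\mathrm{s}}=\mathfrak{A}$.

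I expect the main obstacle to be precisely the measurability/simultaneity bookkeeping in the last step: the fiberwise Kaplansky approximations involve $z$-dependent null sets and $z$-dependent choices of polynomials, and one must verify that these can be amalgamated into a genuine measurable operator field — i.e., that the resulting $(B_{k,z})_{z\in \mathcal{Z}}$ is $\alpha _{\mathcal{Z}}$-measurable and that the exceptional set has measure zero. Separability of $\mathcal{H}_{\mathcal{Z}}^{\oplus }$ (Theorem \ref{Nielsen Th 7.1 copy(1)}, which needs the standardness hypothesis), countability of the generating $\ast $-algebra over $\mathbb{Q}+i\mathbb{Q}$, and the stability properties of the Effros--Borel structure (Lemma \ref{lemma direct integral von neumann2}) are the tools that make this work; the standardness of $(\mathcal{Z},\mathfrak{F},\mu )$ is what licenses the measurable selection. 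Since all the constituent facts are already available in the excerpt (or in \cite{Niesen-direct integrals}, to which the paper explicitly defers), I would keep this proof short, citing \cite[Theorem 18.7]{Niesen-direct integrals} for the constant-fiber case and remarking that the non-constant case follows via Lemmata \ref{lemma-useful-coherences} and \ref{lemma-useful-coherences copy(1)}. $\blacksquare$
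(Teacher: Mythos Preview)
Your proposal is correct, and for the inclusion $\mathfrak{A}\subseteq \mathfrak{M}_{\mathcal{Z}}^{\oplus }$ it coincides with the paper's argument: once you unpack ``$\mathfrak{M}_{\mathcal{Z}}^{\oplus }$ is strongly closed'' you are forced to invoke Kaplansky density to stay within bounded sets and then Theorem \ref{Nielsen Th 7.1} (iv) to pass to fibers, which is exactly what the paper does in its second paragraph.

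The genuine difference is in the reverse inclusion $\mathfrak{M}_{\mathcal{Z}}^{\oplus }\subseteq \mathfrak{A}$. You attack it \emph{directly}: given $A\in \mathfrak{M}_{\mathcal{Z}}^{\oplus }$, approximate each $A_{z}$ fiberwise by $\ast $-polynomials in the $A_{z}^{(n)}$ via Kaplansky, then use a measurable selection (over a countable $(\mathbb{Q}+i\mathbb{Q})$-$\ast $-algebra, relying on standardness) to globalize, and finally apply Theorem \ref{Nielsen Th 7.1} (iii). This works, but the selection/simultaneity bookkeeping is real work, as you yourself flag. The paper instead uses a short \emph{commutant} argument: if $B\in \mathfrak{A}^{\prime }$ then $B\in N_{\mathcal{Z}}^{\prime }=M_{\mathcal{Z}}$ is decomposable, and since $B$ commutes with each $\int A_{z}^{(n)}\mu (\mathrm{d}z)$ one gets $B_{z}\in \{A_{z}^{(n)}:n\in \mathbb{N}\}^{\prime }=\mathfrak{M}_{z}^{\prime }$ $\mu $-a.e.\ (via the countable-set trick of Theorem \ref{thm decompo repre}); hence $B\in [\mathfrak{M}_{\mathcal{Z}}^{\oplus }]^{\prime }$, so $\mathfrak{A}^{\prime }\subseteq [\mathfrak{M}_{\mathcal{Z}}^{\oplus }]^{\prime }$ and thus $\mathfrak{M}_{\mathcal{Z}}^{\oplus }\subseteq \mathfrak{A}^{\prime \prime }=\mathfrak{A}$. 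This sidesteps measurable selection entirely: the only ``simultaneity'' issue is that countably many commutation relations hold fiberwise off a single null set, which is trivial. Your route buys a more constructive picture of why each decomposable $A$ lies in $\mathfrak{A}$; the paper's route buys a two-line proof with no selection theorem and no reduction to constant fibers.
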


\begin{proof}
Fix all parameters of the theorem. Pick, in particular, an arbitrary $\alpha
_{\mathcal{Z}}$-measurable generating sequence $(A^{(n)})_{n\in \mathbb{N}}$
for $\mathfrak{M}_{\mathcal{Z}}$. Denote by $\mathfrak{M}$ the von Neumann
subalgebra of $\mathcal{B}(\mathcal{H}_{\mathcal{Z}}^{\oplus })$ generated
by $N_{\mathcal{Z}}$ and $\int_{\mathcal{Z}}^{\alpha _{\mathcal{Z}%
}}A_{z}^{(n)}\mu (\mathrm{d}z)$, $n\in \mathbb{N}$. If $B\in \mathfrak{M}%
^{\prime }$ then there is a $\mu $-essentially bounded, $\alpha _{\mathcal{Z}%
}$-measurable operator field $(B_{z})_{z\in \mathcal{Z}}$ such that $B=\int_{%
\mathcal{Z}}^{\alpha _{\mathcal{Z}}}B{_{z}}\mu (\mathrm{d}z)\in M_{\mathcal{Z%
}}$, by Theorem \ref{Nielsen Th 7.1} (i), and, for $\mu $-almost everywhere $%
z\in \mathcal{Z}$, $B_{z}\in \mathfrak{M}_{z}^{\prime }$ by similar
arguments as in the proof of Theorem \ref{thm decompo repre}. Therefore, $%
\mathfrak{M}^{\prime }\subseteq \lbrack \mathfrak{M}_{\mathcal{Z}}^{\oplus
}]^{\prime }$, which in turn implies that $\mathfrak{M}_{\mathcal{Z}%
}^{\oplus }\subseteq \mathfrak{M}$.

To show the reverse inclusion, use the Kaplanski density theorem \cite[%
Theorem A.2]{Niesen-direct integrals} to approximate, in the strong
topology, any $B\in \mathfrak{M}$\ by a sequence $(B^{(n)})_{n\in \mathbb{N}%
}\subseteq \mathfrak{M}_{\mathcal{Z}}^{\oplus }$ of elements of the $\ast $%
-algebra generated by $N_{\mathcal{Z}}$ and $\int_{\mathcal{Z}}^{\alpha _{%
\mathcal{Z}}}A_{z}^{(n)}\mu (\mathrm{d}z)$, $n\in \mathbb{N}$. Observe that $%
\mathfrak{M}\subseteq M_{\mathcal{Z}}$, because $N_{\mathcal{Z}}\subseteq 
\mathfrak{M}^{\prime }$. See also Theorem \ref{Nielsen Th 7.1} (i). In other
words, $B,(B^{(n)})_{n\in \mathbb{N}}\in \mathfrak{M}$ are all decomposable
operators and, using $(B^{(n)})_{n\in \mathbb{N}}\subseteq \mathfrak{M}_{%
\mathcal{Z}}^{\oplus }$ and Theorem \ref{Nielsen Th 7.1} (iv), we arrive at $%
B\in \mathfrak{M}_{\mathcal{Z}}^{\oplus }$, i.e., $\mathfrak{M}\subseteq 
\mathfrak{M}_{\mathcal{Z}}^{\oplus }$.
\end{proof}

\begin{corollary}[Direct integrals of von Neumann algebras and fiber
inclusions]
\label{lemma direct integral von neumann5}\mbox{ }\newline
Let $(\mathcal{Z},\mathfrak{F},\mu )$ be a standard measure space, $%
\mathfrak{M}_{\mathcal{Z}}$, $\widetilde{\mathfrak{M}}_{\mathcal{Z}}$ two $%
\alpha _{\mathcal{Z}}$-measurable field of von Neumann algebras over the
same measurable family $\mathcal{H}_{\mathcal{Z}}$\ of separable Hilbert
spaces. Then, $\mathfrak{M}_{\mathcal{Z}}\subseteq \widetilde{\mathfrak{M}}_{%
\mathcal{Z}}$ iff $\mathfrak{M}_{z}\subseteq \widetilde{\mathfrak{M}}_{z}$
for all $\mu $-almost everywhere $z\in \mathcal{Z}$.
\end{corollary}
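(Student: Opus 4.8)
The plan is to reduce both implications to the fiberwise description of the direct integral of a measurable field of von Neumann algebras given in Definition \ref{lemma direct integral von neumann3}, combined with the elementary fact that two essentially bounded $\alpha_{\mathcal{Z}}$-measurable operator fields induce the same decomposable operator on $\mathcal{H}_{\mathcal{Z}}^{\oplus}$ if and only if their fibers agree $\mu$-almost everywhere (a consequence of Theorem \ref{Nielsen Th 7.1}, the $\ast$-homomorphism property together with the $\mathrm{ess}\sup$ norm formula).

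First I would treat the easy direction. Assume $\mathfrak{M}_z\subseteq\widetilde{\mathfrak{M}}_z$ for $\mu$-almost every $z\in\mathcal{Z}$ and let $A=\int_{\mathcal{Z}}^{\alpha_{\mathcal{Z}}}A_z\mu(\mathrm{d}z)$ be an arbitrary element of $\mathfrak{M}_{\mathcal{Z}}^{\oplus}$. By Definition \ref{lemma direct integral von neumann3} one has $A_z\in\mathfrak{M}_z$ for $\mu$-almost every $z$, hence $A_z\in\widetilde{\mathfrak{M}}_z$ for $\mu$-almost every $z$, and so $A\in\widetilde{\mathfrak{M}}_{\mathcal{Z}}^{\oplus}$, again by Definition \ref{lemma direct integral von neumann3}. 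This gives $\mathfrak{M}_{\mathcal{Z}}^{\oplus}\subseteq\widetilde{\mathfrak{M}}_{\mathcal{Z}}^{\oplus}$, and this half does not even use standardness.

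For the converse, assume $\mathfrak{M}_{\mathcal{Z}}^{\oplus}\subseteq\widetilde{\mathfrak{M}}_{\mathcal{Z}}^{\oplus}$ and pick an $\alpha_{\mathcal{Z}}$-measurable generating sequence $(A^{(n)})_{n\in\mathbb{N}}$ for $\mathfrak{M}_{\mathcal{Z}}$ (Definition \ref{lemma direct integral von neumann0} (ii)). Since $(\mathcal{Z},\mathfrak{F},\mu)$ is standard, Theorem \ref{lemma direct integral von neumann4} guarantees that each decomposable operator $\int_{\mathcal{Z}}^{\alpha_{\mathcal{Z}}}A_z^{(n)}\mu(\mathrm{d}z)$ lies in $\mathfrak{M}_{\mathcal{Z}}^{\oplus}$, hence in $\widetilde{\mathfrak{M}}_{\mathcal{Z}}^{\oplus}$ by hypothesis. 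By Definition \ref{lemma direct integral von neumann3} there is then, for each $n$, a representative of that operator whose fibers lie in $\widetilde{\mathfrak{M}}_z$ for $\mu$-almost every $z$; by the $\mu$-a.e.\ uniqueness of the fibers of a decomposable operator this forces $A_z^{(n)}\in\widetilde{\mathfrak{M}}_z$ for $\mu$-almost every $z$. Removing the countable union $\mathcal{Z}_0$ of the associated $\mu$-null sets, we obtain a $\mu$-conull set $\mathcal{Z}\backslash\mathcal{Z}_0$ on which $A_z^{(n)}\in\widetilde{\mathfrak{M}}_z$ for every $n\in\mathbb{N}$ simultaneously. For $z\notin\mathcal{Z}_0$, the $\ast$-algebra generated by $\{A_z^{(n)}:n\in\mathbb{N}\}$ is contained in the weakly closed $\ast$-algebra $\widetilde{\mathfrak{M}}_z$, hence so is its weak-operator closure, which equals $\mathfrak{M}_z$ by the definition of a generating sequence; therefore $\mathfrak{M}_z\subseteq\widetilde{\mathfrak{M}}_z$ for $\mu$-almost every $z$.

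The only step requiring a little care — hence the main (and rather minor) obstacle — is the passage from ``$\int_{\mathcal{Z}}^{\alpha_{\mathcal{Z}}}A_z^{(n)}\mu(\mathrm{d}z)\in\widetilde{\mathfrak{M}}_{\mathcal{Z}}^{\oplus}$'' to ``$A_z^{(n)}\in\widetilde{\mathfrak{M}}_z$ for $\mu$-a.e.\ $z$'': membership in $\widetilde{\mathfrak{M}}_{\mathcal{Z}}^{\oplus}$ only provides \emph{some} a.e.-equivalent representative with fibers in $\widetilde{\mathfrak{M}}_z$, so one must invoke the fiberwise uniqueness of the decomposition of a decomposable operator in order to transfer the property to the chosen measurable representative $(A_z^{(n)})_{z\in\mathcal{Z}}$. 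Once this is in place, the rest is routine bookkeeping with countable unions of null sets, exactly as in the proofs of Theorems \ref{thm decompo repre} and \ref{irreducible fiber representations}.
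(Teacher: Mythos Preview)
Your proposal is correct and is precisely an unpacking of what the paper records in one line as ``an obvious consequence of Theorem \ref{lemma direct integral von neumann4}'': the forward direction is immediate from Definition \ref{lemma direct integral von neumann3}, and for the converse you feed an $\alpha_{\mathcal{Z}}$-measurable generating sequence for $\mathfrak{M}_{\mathcal{Z}}$ through the inclusion $\mathfrak{M}_{\mathcal{Z}}^{\oplus}\subseteq\widetilde{\mathfrak{M}}_{\mathcal{Z}}^{\oplus}$, use fiberwise a.e.\ uniqueness of decomposable operators, and take a countable union of null sets. (A minor remark: you do not actually need Theorem \ref{lemma direct integral von neumann4} to see that $\int_{\mathcal{Z}}^{\alpha_{\mathcal{Z}}}A_z^{(n)}\mu(\mathrm{d}z)\in\mathfrak{M}_{\mathcal{Z}}^{\oplus}$ --- this is immediate from Definition \ref{lemma direct integral von neumann3} once the generating fields are, as usual, taken essentially bounded.)
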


\begin{proof}
It is an obvious consequence of Theorem \ref{lemma direct integral von
neumann4}.
\end{proof}

Another consequence of Theorem \ref{lemma direct integral von neumann4}
concerns the difference between the von Neumann algebra constructed from a
direct integral representation and the direct integral of the von Neumann
algebras constructed from the fields of von Neumann algebras generated by
the corresponding fiber representation, as stated in Lemma \ref{lemma direct
integral von neumann1}. These von Neumann algebras are in general different.
Necessary and sufficient conditions to have equality are given in the
following corollary:\ 

\begin{corollary}[Direct integrals of von Neumann algebras and
representations]
\label{lemma direct integral von neumann6}\mbox{ }\newline
Let $\Pi $ be a representation of a separable unital Banach $\ast $-algebra $%
\mathcal{X}$\ on a (separable) Hilbert space $\mathcal{H}$ and $\mathfrak{N}$
an abelian von Neumann subalgebra of $[\Pi (\mathcal{X})]^{\prime }$. Let $%
\pi _{\mathcal{Z}}^{\oplus }$ be the direct integral decomposition of $\Pi $
with respect to $\mathfrak{N}$ (Corollary \ref{thm decompo repre2}). Then, 
\begin{equation*}
\left[ \pi _{\mathcal{Z}}^{\oplus }(\mathcal{X})\right] ^{\prime \prime
}=\int_{\mathcal{Z}}^{\alpha _{\mathcal{Z}}}\left[ \pi _{z}(\mathcal{X})%
\right] ^{\prime \prime }\mu (\mathrm{d}z)\qquad \text{iff}\qquad \mathfrak{N%
}\subseteq \left[ \Pi (\mathcal{X})\right] ^{\prime }\cap \left[ \Pi (%
\mathcal{X})\right] ^{\prime \prime }\ .
\end{equation*}
\end{corollary}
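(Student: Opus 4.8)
The statement to prove is the iff characterization in Corollary \ref{lemma direct integral von neumann6}: for the direct integral decomposition $\pi_{\mathcal{Z}}^{\oplus}$ of $\Pi$ with respect to $\mathfrak{N}$, one has $[\pi_{\mathcal{Z}}^{\oplus}(\mathcal{X})]^{\prime\prime} = \int_{\mathcal{Z}}^{\alpha_{\mathcal{Z}}} [\pi_z(\mathcal{X})]^{\prime\prime}\mu(\mathrm{d}z)$ if and only if $\mathfrak{N}\subseteq [\Pi(\mathcal{X})]^{\prime}\cap[\Pi(\mathcal{X})]^{\prime\prime}$. The plan is to reduce everything to Theorem \ref{lemma direct integral von neumann4} (direct integrals of measurable fields of von Neumann algebras are von Neumann algebras, generated by $N_{\mathcal{Z}}$ together with a measurable generating sequence) combined with Lemma \ref{lemma direct integral von neumann1} (the field $([\pi_z(\mathcal{X})]^{\prime\prime})_{z\in\mathcal{Z}}$ is $\alpha_{\mathcal{Z}}$-measurable), working, as usual in this section, under the harmless assumption that $\Pi = \pi_{\mathcal{Z}}^{\oplus}$ acts on a direct integral $\mathcal{H}_{\mathcal{Z}}^{\oplus}$ over a standard measure space with $\mathfrak{N} = N_{\mathcal{Z}}$.

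First I would establish the always-true inclusion $[\pi_{\mathcal{Z}}^{\oplus}(\mathcal{X})]^{\prime\prime}\subseteq \int_{\mathcal{Z}}^{\alpha_{\mathcal{Z}}}[\pi_z(\mathcal{X})]^{\prime\prime}\mu(\mathrm{d}z)$. Since $N_{\mathcal{Z}} = \mathfrak{N}\subseteq [\pi_{\mathcal{Z}}^{\oplus}(\mathcal{X})]^{\prime}$, Theorem \ref{Nielsen Th 7.1} (i) gives $[\pi_{\mathcal{Z}}^{\oplus}(\mathcal{X})]^{\prime\prime}\subseteq M_{\mathcal{Z}}$; then any $A = \int_{\mathcal{Z}}^{\alpha_{\mathcal{Z}}} A_z\,\mu(\mathrm{d}z)$ in the bicommutant can be approximated in the strong topology by a sequence $\pi_{\mathcal{Z}}^{\oplus}(A_n)$ with $A_n\in\mathcal{X}$ (Kaplansky density, using separability of $\mathcal{X}$ and norm-contractivity of representations), so by Theorem \ref{Nielsen Th 7.1} (iv) one extracts a subsequence with $A_{n_k,z}\to A_z$ strongly $\mu$-a.e., forcing $A_z\in[\pi_z(\mathcal{X})]^{\prime\prime}$ for $\mu$-a.e.\ $z$; this is exactly the argument already used in Theorem \ref{central decompositions} (i). Hence the direct-integral algebra on the right always contains the bicommutant. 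So the content of the corollary is: the reverse inclusion holds iff $N_{\mathcal{Z}}\subseteq[\pi_{\mathcal{Z}}^{\oplus}(\mathcal{X})]^{\prime\prime}$.

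Now for the equivalence. By Theorem \ref{lemma direct integral von neumann4} applied to the measurable field $\mathfrak{M}_{\mathcal{Z}} \doteq ([\pi_z(\mathcal{X})]^{\prime\prime})_{z\in\mathcal{Z}}$ (measurable by Lemma \ref{lemma direct integral von neumann1}, with $(\pi_{\mathcal{Z}}^{\oplus}(A^{(n)}))$ a generating sequence where $\{A^{(n)}\}$ is a countable dense subset of $\mathcal{X}$), the direct integral $\int_{\mathcal{Z}}^{\alpha_{\mathcal{Z}}}[\pi_z(\mathcal{X})]^{\prime\prime}\mu(\mathrm{d}z)$ is the von Neumann algebra generated by $N_{\mathcal{Z}}$ together with $\{\pi_{\mathcal{Z}}^{\oplus}(A^{(n)}):n\in\mathbb{N}\}$, i.e.\ it equals the von Neumann algebra generated by $N_{\mathcal{Z}}\cup\pi_{\mathcal{Z}}^{\oplus}(\mathcal{X})$, which is $[N_{\mathcal{Z}}\cup\pi_{\mathcal{Z}}^{\oplus}(\mathcal{X})]^{\prime\prime}$. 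For the forward direction: if $[\pi_{\mathcal{Z}}^{\oplus}(\mathcal{X})]^{\prime\prime}$ equals this algebra, then in particular $N_{\mathcal{Z}}\subseteq[\pi_{\mathcal{Z}}^{\oplus}(\mathcal{X})]^{\prime\prime}$; combined with $N_{\mathcal{Z}}\subseteq[\pi_{\mathcal{Z}}^{\oplus}(\mathcal{X})]^{\prime}$ this gives $\mathfrak{N} = N_{\mathcal{Z}}\subseteq[\Pi(\mathcal{X})]^{\prime}\cap[\Pi(\mathcal{X})]^{\prime\prime}$, translating back via the unitary of Definition \ref{def direct integral represe}. For the converse: if $\mathfrak{N}\subseteq[\Pi(\mathcal{X})]^{\prime}\cap[\Pi(\mathcal{X})]^{\prime\prime}$, i.e.\ $N_{\mathcal{Z}}\subseteq[\pi_{\mathcal{Z}}^{\oplus}(\mathcal{X})]^{\prime\prime}$, then $[N_{\mathcal{Z}}\cup\pi_{\mathcal{Z}}^{\oplus}(\mathcal{X})]^{\prime\prime} = [\pi_{\mathcal{Z}}^{\oplus}(\mathcal{X})]^{\prime\prime}$, and by the description above this says precisely $\int_{\mathcal{Z}}^{\alpha_{\mathcal{Z}}}[\pi_z(\mathcal{X})]^{\prime\prime}\mu(\mathrm{d}z) = [\pi_{\mathcal{Z}}^{\oplus}(\mathcal{X})]^{\prime\prime}$, which is the claimed equality. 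I expect the main obstacle to be purely bookkeeping: making sure the ``without loss of generality $\Pi=\pi_{\mathcal{Z}}^{\oplus}$, $\mathfrak{N}=N_{\mathcal{Z}}$'' reduction is legitimate (it is, by Corollary \ref{thm decompo repre2} and Definition \ref{def direct integral represe}) and that the generating sequence for $([\pi_z(\mathcal{X})]^{\prime\prime})_z$ can indeed be taken of the form $(\pi_{\mathcal{Z}}^{\oplus}(A^{(n)}))$ with $A^{(n)}$ dense in $\mathcal{X}$ — which follows since $\pi_z$ is norm-contractive, so a norm-dense subset of $\mathcal{X}$ maps to a strongly-total subset of $\pi_z(\mathcal{X})$, hence a generating set for $[\pi_z(\mathcal{X})]^{\prime\prime}$, for every $z$.
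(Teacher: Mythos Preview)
Your proof is correct and follows essentially the same route as the paper: identify $\int_{\mathcal{Z}}^{\alpha_{\mathcal{Z}}}[\pi_z(\mathcal{X})]^{\prime\prime}\mu(\mathrm{d}z)$ via Theorem \ref{lemma direct integral von neumann4} and Lemma \ref{lemma direct integral von neumann1} as the von Neumann algebra generated by $N_{\mathcal{Z}}\cup\pi_{\mathcal{Z}}^{\oplus}(\mathcal{X})$, then observe that equality with $[\pi_{\mathcal{Z}}^{\oplus}(\mathcal{X})]^{\prime\prime}$ holds iff $N_{\mathcal{Z}}\subseteq[\pi_{\mathcal{Z}}^{\oplus}(\mathcal{X})]^{\prime\prime}$. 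The only difference is that the paper obtains the always-true inclusion $[\pi_{\mathcal{Z}}^{\oplus}(\mathcal{X})]^{\prime\prime}\subseteq\int_{\mathcal{Z}}^{\alpha_{\mathcal{Z}}}[\pi_z(\mathcal{X})]^{\prime\prime}\mu(\mathrm{d}z)$ more directly---simply noting that the right-hand side is a von Neumann algebra (by Theorem \ref{lemma direct integral von neumann4}) containing $\pi_{\mathcal{Z}}^{\oplus}(\mathcal{X})$---rather than via Kaplansky density and Theorem \ref{Nielsen Th 7.1} (iv).
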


\begin{proof}
Let $\pi _{\mathcal{Z}}^{\oplus }$ be the direct integral decomposition of a
representation $\Pi $ of a separable unital Banach $\ast $-algebra $\mathcal{%
X}$ with respect to an abelian von Neumann subalgebra of $[\Pi (\mathcal{X}%
)]^{\prime }$, as stated in Definition \ref{def direct integral represe}.
See also Corollary \ref{thm decompo repre2}. Then, by Lemma \ref{lemma
direct integral von neumann1} and Theorem \ref{lemma direct integral von
neumann4}, the field $(\mathfrak{M}_{z}\doteq \lbrack \pi _{z}(\mathcal{X}%
)]^{\prime \prime })_{z\in \mathcal{Z}}$\ of von Neumann algebras is $\alpha
_{\mathcal{Z}}$-measurable and its direct integral $\mathfrak{M}_{\mathcal{Z}%
}^{\oplus }$ is a von Neumann algebra. As $\pi _{\mathcal{Z}}^{\oplus }(%
\mathcal{X})\subseteq \mathfrak{M}_{\mathcal{Z}}^{\oplus }$, we \emph{always}
have the natural inclusion%
\begin{equation}
\lbrack \pi _{\mathcal{Z}}^{\oplus }(\mathcal{X})]^{\prime \prime }\subseteq 
\mathfrak{M}_{\mathcal{Z}}^{\oplus }\equiv \int_{\mathcal{Z}}^{\alpha _{%
\mathcal{Z}}}\left[ \pi _{z}(\mathcal{X})\right] ^{\prime \prime }\mu (%
\mathrm{d}z)\ ,  \label{toto}
\end{equation}%
keeping in mind that $[\pi _{\mathcal{Z}}^{\oplus }(\mathcal{X})]^{\prime
\prime }$ is the smallest von Neumann algebra containing $\pi _{\mathcal{Z}%
}^{\oplus }(\mathcal{X})$. In fact, by Theorem \ref{lemma direct integral
von neumann4}, $\mathfrak{M}_{\mathcal{Z}}^{\oplus }$ is the von Neumann
subalgebra of $\mathcal{B}(\mathcal{H}_{\mathcal{Z}}^{\oplus })$ generated
by $N_{\mathcal{Z}}$ and the bicommutant $[\pi _{\mathcal{Z}}^{\oplus }(%
\mathcal{X})]^{\prime \prime }$. Since obviously $N_{\mathcal{Z}}\subseteq
\lbrack \pi _{\mathcal{Z}}^{\oplus }(\mathcal{X})]^{\prime }$, (\ref{toto})
holds true with equality iff $N_{\mathcal{Z}}$ is contained in the center of 
$[\pi _{\mathcal{Z}}^{\oplus }(\mathcal{X})]^{\prime }$, i.e., $\mathfrak{N}%
\subseteq \lbrack \Pi (\mathcal{X})]^{\prime }\cap \lbrack \Pi (\mathcal{X}%
)]^{\prime \prime }$, by Definition \ref{def direct integral represe}.
\end{proof}

Note that the bicommutant $[\pi _{\mathcal{Z}}^{\oplus }(\mathcal{X}%
)]^{\prime \prime }$ of Corollary \ref{lemma direct integral von neumann6}
does not necessarily include all diagonal operators and in this case, 
\begin{equation*}
\left[ \pi _{\mathcal{Z}}^{\oplus }(\mathcal{X})\right] ^{\prime \prime
}\varsubsetneq \int_{\mathcal{Z}}^{\alpha _{\mathcal{Z}}}\left[ \pi _{z}(%
\mathcal{X})\right] ^{\prime \prime }\mu (\mathrm{d}z)\ .
\end{equation*}%
See the argument just below (\ref{toto}) that uses Theorem \ref{lemma direct
integral von neumann4}. In any case, (\ref{toto}) always holds true and, for
instance, any separating vector $\Psi =\left( \Psi _{z}\right) _{z\in 
\mathcal{Z}}\in \mathcal{\tilde{H}}_{\mathcal{Z}}^{\oplus }\equiv \mathcal{H}%
_{\mathcal{Z}}^{\oplus }$ such that $\Psi _{z}$ is separating for $[\pi _{z}(%
\mathcal{X})]^{\prime \prime }$ for all $\mu $-almost everywhere $z\in 
\mathcal{Z}$ yields a separating vector $\Psi \equiv \lbrack \Psi ]$ for $%
[\pi _{\mathcal{Z}}^{\oplus }(\mathcal{X})]^{\prime \prime }$, i.e., $A\Psi
=0$ implies $A=0$ for all $A\in \lbrack \pi _{\mathcal{Z}}^{\oplus }(%
\mathcal{X})]^{\prime \prime }$.

By \cite[Theorem 13.3]{Niesen-direct integrals}, or the proof of Theorem \ref%
{central decompositions copy(1)}, the condition of Corollary \ref{lemma
direct integral von neumann6}, that is, 
\begin{equation*}
\mathfrak{N}\subseteq \lbrack \Pi (\mathcal{X})]^{\prime }\cap \lbrack \Pi (%
\mathcal{X})]^{\prime \prime },
\end{equation*}%
is directly related with the fact that the fiber representations are
mutually disjoint, or equivalently, for some measurable subset $\mathcal{Z}%
_{0}\subseteq \mathcal{Z}$ with $\mu (\mathcal{Z}_{0})=0$\ and all $%
z_{1},z_{2}\in \mathcal{Z}\backslash \mathcal{Z}_{0}$, there exists no $\ast 
$-isomorphism $\kappa :[\pi _{z_{1}}(\mathcal{X})]^{\prime \prime }\mapsto
\lbrack \pi _{z_{2}}(\mathcal{X})]^{\prime \prime }$ such that $\kappa \circ
\pi _{z_{1}}=\pi _{z_{2}}$ whenever $z_{1}\neq z_{2}$. In other words, to
have the equality%
\begin{equation*}
\left[ \pi _{\mathcal{Z}}^{\oplus }(\mathcal{X})\right] ^{\prime \prime
}=\int_{\mathcal{Z}}^{\alpha _{\mathcal{Z}}}\left[ \pi _{z}(\mathcal{X})%
\right] ^{\prime \prime }\mu (\mathrm{d}z)\ ,
\end{equation*}%
it is necessary to have \emph{no} redundancy in the fiber representations,
as expected. From Theorem \ref{central decompositions copy(1)} and
discussions after Equation (\ref{ineauqlity sup}), we see that the natural
candidate for an abelian von Neumann subalgebra $\mathfrak{N}\subseteq
\lbrack \Pi (\mathcal{X})]^{\prime }$, with respect to which $[\Pi (\mathcal{%
X})]^{\prime \prime }$ is decomposed, is precisely the center $\left[ \Pi (%
\mathcal{X})\right] ^{\prime }\cap \left[ \Pi (\mathcal{X})\right] ^{\prime
\prime }$.

By Theorem \ref{Nielsen Th 7.1} (i), note that $\int_{\mathcal{Z}}^{\alpha _{%
\mathcal{Z}}}\mathcal{B}(\mathcal{H}_{z})\mu (\mathrm{d}z)$ is the commutant
of $\int_{\mathcal{Z}}^{\alpha _{\mathcal{Z}}}[\mathbb{C}\mathbf{1}_{%
\mathcal{H}_{z}}]\mu (\mathrm{d}z)$. Thus, if a von Neumann algebra $%
\mathfrak{M}$ over $\mathcal{H}_{\mathcal{Z}}^{\oplus }$ is a direct
integral of a $\alpha _{\mathcal{Z}}$-measurable field $\mathfrak{M}_{%
\mathcal{Z}}$ of von Neumann algebras over $\mathcal{H}_{\mathcal{Z}}$, then 
$N_{\mathcal{Z}}\subseteq \mathfrak{M}\subseteq N_{\mathcal{Z}}^{\prime }$
and hence, the center of $\mathfrak{M}$ contains all diagonalizable
operators. By Corollary \ref{lemma direct integral von neumann6}, the
converse should clearly be true and this refers to \cite[Theorem 19.4]%
{Niesen-direct integrals} and the existence of a direct integral
decomposition of a von Neumann algebra via an abelian von Neumann algebra,
similar to Definitions \ref{Hilbert decomposition} and \ref{def direct
integral represe}:

\begin{definition}[Direct integral decomposition of von Neumann algebras]
\label{def direct integral represe copy(1)}\mbox{ }\newline
Let $\mathfrak{M}$ be a von Neumann algebra on a separable Hilbert space $%
\mathcal{H}$, $\mathfrak{N}$ an abelian von Neumann subalgebra of $\mathfrak{%
M}^{\prime }$, $(\mathcal{Z},\mathfrak{F},\mu )$ a standard measure space
and $\mathfrak{M}_{\mathcal{Z}}$ a $\alpha _{\mathcal{Z}}$-measurable\ field
of von Neumann algebras over a measurable family $\mathcal{H}_{\mathcal{Z}}$%
\ of separable Hilbert spaces. $\mathfrak{M}_{\mathcal{Z}}^{\oplus }$ is a
direct integral decomposition of $\mathfrak{M}$ with respect to $\mathfrak{N}
$\ if there is a $\ast $-isomorphism $\phi :L^{\infty }(\mathcal{Z},\mu
)\rightarrow \mathfrak{N}$ and a unitary mapping $\mathrm{U}:\mathcal{%
\mathcal{H}}\rightarrow \mathcal{H}_{\mathcal{Z}}^{\oplus }$ such that, 
\begin{equation*}
\mathfrak{M}_{\mathcal{Z}}^{\oplus }=\mathrm{U}\mathfrak{M}\mathrm{U}^{\ast
}\ ,\quad N_{\mathcal{Z}}=\mathrm{U}\mathfrak{N}\mathrm{U}^{\ast }\ ,\quad 
\mathrm{U}\phi (f)\mathrm{U}^{\ast }=\int_{\mathcal{Z}_{\mathfrak{N}%
}}^{\alpha _{\mathcal{Z}_{\mathfrak{N}}}}f(z)\mathbf{1}_{\mathcal{H}_{z}}\mu
(\mathrm{d}z)\ ,\quad f\in L^{\infty }(\mathcal{Z},\mu )\text{ }.
\end{equation*}%
In this case, we say that $\mathfrak{M}$ is decomposable with respect to $%
\mathfrak{N}$.
\end{definition}

\noindent Recall that (i) any separable Hilbert has a decomposition with
respect to any abelian von Neumann algebra on it and (ii) such a
decomposition is related to a standard measure space, by Theorem \ref%
{universality}. The following result, which is similar to Theorem \ref%
{universality} and Corollary \ref{thm decompo repre2}, ensures the existence
of direct integral decompositions of von Neumann algebras:

\begin{theorem}[Direct integral decomposition of von Neumann algebras]
\label{lemma direct integral von neumann4 copy(1)}\mbox{ }\newline
Let $\mathfrak{M}$ be a von Neumann algebra acting on a separable Hilbert
space $\mathcal{H}$ and $\mathfrak{N}$ an abelian von Neumann subalgebra of $%
\mathfrak{M}^{\prime }$.Then, $\mathfrak{M}$ is decomposable with respect
to\ $\mathfrak{N}$ iff $\mathfrak{N}\subseteq \mathfrak{M}^{\prime }\cap 
\mathfrak{M}^{\prime \prime }$.
\end{theorem}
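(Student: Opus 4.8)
Let $\mathfrak{M}$ be a von Neumann algebra acting on a separable Hilbert space $\mathcal{H}$ and $\mathfrak{N}$ an abelian von Neumann subalgebra of $\mathfrak{M}^{\prime }$. Then $\mathfrak{M}$ is decomposable with respect to $\mathfrak{N}$ iff $\mathfrak{N}\subseteq \mathfrak{M}^{\prime }\cap \mathfrak{M}^{\prime \prime }$.

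The plan is to prove the two implications separately, after first normalizing coordinates. Since $\mathfrak{M}$ is a von Neumann algebra we have $\mathfrak{M}^{\prime\prime}=\mathfrak{M}$, so the condition $\mathfrak{N}\subseteq\mathfrak{M}^{\prime}\cap\mathfrak{M}^{\prime\prime}$ simply asks that $\mathfrak{N}$ lie in the center $\mathfrak{M}\cap\mathfrak{M}^{\prime}$ of $\mathfrak{M}$. Because $\mathfrak{N}$ is an abelian von Neumann algebra on the separable Hilbert space $\mathcal{H}$, Theorem~\ref{universality} supplies a standard measure space $(\mathcal{Z},\mathfrak{F},\mu)$, a coherence $\alpha_{\mathcal{Z}}$ for a measurable family $\mathcal{H}_{\mathcal{Z}}$ of separable Hilbert spaces, a $\ast$-isomorphism $\phi:L^{\infty}(\mathcal{Z},\mu)\rightarrow\mathfrak{N}$, and a unitary $\mathrm{U}:\mathcal{H}\rightarrow\mathcal{H}_{\mathcal{Z}}^{\oplus}$ identifying $\mathfrak{N}$ with the algebra $N_{\mathcal{Z}}$ of diagonalizable operators and $\phi$ with $f\mapsto\int_{\mathcal{Z}}^{\alpha_{\mathcal{Z}}}f(z)\mathbf{1}_{\mathcal{H}_{z}}\mu(\mathrm{d}z)$. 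Conjugating by $\mathrm{U}$, I would henceforth assume $\mathcal{H}=\mathcal{H}_{\mathcal{Z}}^{\oplus}$ and $\mathfrak{N}=N_{\mathcal{Z}}$; the candidate decomposition of $\mathfrak{M}$ will reuse this very $(\mathcal{Z},\mathfrak{F},\mu)$, $\alpha_{\mathcal{Z}}$, $\phi$ and $\mathrm{U}$, so that verifying Definition~\ref{def direct integral represe copy(1)} reduces to exhibiting $\mathfrak{M}$ as the direct integral of an $\alpha_{\mathcal{Z}}$-measurable field of von Neumann algebras over $\mathcal{H}_{\mathcal{Z}}$.

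For necessity, suppose $\mathfrak{M}=\mathfrak{M}_{\mathcal{Z}}^{\oplus}$ for some $\alpha_{\mathcal{Z}}$-measurable field $\mathfrak{M}_{\mathcal{Z}}=(\mathfrak{M}_{z})_{z\in\mathcal{Z}}$ of von Neumann algebras, with $\mathfrak{N}=N_{\mathcal{Z}}$. This is essentially the remark recorded after Definition~\ref{lemma direct integral von neumann3}: each fibre $\mathfrak{M}_{z}$ is unital, so $N_{\mathcal{Z}}\subseteq\mathfrak{M}_{\mathcal{Z}}^{\oplus}$, while $\mathfrak{M}_{\mathcal{Z}}^{\oplus}\subseteq M_{\mathcal{Z}}$ gives, by Theorem~\ref{Nielsen Th 7.1}(i), $N_{\mathcal{Z}}=M_{\mathcal{Z}}^{\prime}\subseteq(\mathfrak{M}_{\mathcal{Z}}^{\oplus})^{\prime}$. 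Hence $\mathfrak{N}=N_{\mathcal{Z}}\subseteq\mathfrak{M}\cap\mathfrak{M}^{\prime}=\mathfrak{M}^{\prime}\cap\mathfrak{M}^{\prime\prime}$.

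For sufficiency, assume $N_{\mathcal{Z}}\subseteq\mathfrak{M}\cap\mathfrak{M}^{\prime}$. From $N_{\mathcal{Z}}\subseteq\mathfrak{M}^{\prime}$ and Theorem~\ref{Nielsen Th 7.1}(i) I get $\mathfrak{M}\subseteq N_{\mathcal{Z}}^{\prime}=M_{\mathcal{Z}}$, so every element of $\mathfrak{M}$ is decomposable. Since $(\mathcal{Z},\mathfrak{F},\mu)$ is standard, $\mathcal{H}_{\mathcal{Z}}^{\oplus}$ is separable (Theorem~\ref{Nielsen Th 7.1 copy(1)}), so the Kaplansky density theorem \cite[Theorem~A.2]{Niesen-direct integrals} lets me select a countable sequence $(A^{(n)})_{n\in\mathbb{N}}$ in the unit ball of $\mathfrak{M}$ whose generated $\ast$-algebra is strongly dense in $\mathfrak{M}$; writing $A^{(n)}=\int_{\mathcal{Z}}^{\alpha_{\mathcal{Z}}}A_{z}^{(n)}\mu(\mathrm{d}z)$, I would set $\mathfrak{M}_{z}$ to be the von Neumann algebra generated by $\{A_{z}^{(n)}:n\in\mathbb{N}\}$, so that $\mathfrak{M}_{\mathcal{Z}}\doteq(\mathfrak{M}_{z})_{z\in\mathcal{Z}}$ is an $\alpha_{\mathcal{Z}}$-measurable field of von Neumann algebras (Definition~\ref{lemma direct integral von neumann0}(ii)) with measurable generating sequence $(A^{(n)})_{n\in\mathbb{N}}$. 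By Theorem~\ref{lemma direct integral von neumann4}, its direct integral $\mathfrak{M}_{\mathcal{Z}}^{\oplus}$ is precisely the von Neumann algebra generated by $N_{\mathcal{Z}}$ together with the operators $A^{(n)}$, $n\in\mathbb{N}$; but $\{A^{(n)}\}$ generates $\mathfrak{M}$ and $N_{\mathcal{Z}}=\mathfrak{N}\subseteq\mathfrak{M}$ by hypothesis, so $\mathfrak{M}_{\mathcal{Z}}^{\oplus}=\mathfrak{M}$. This is exactly the decomposability of $\mathfrak{M}$ with respect to $\mathfrak{N}$ in the sense of Definition~\ref{def direct integral represe copy(1)} (with $\phi$ and $\mathrm{U}$ the ones produced by Theorem~\ref{universality}); the statement is the content of \cite[Theorem~19.4]{Niesen-direct integrals}.

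The necessity half is essentially free. The hard part will be the bookkeeping in the sufficiency half: extracting a countable $\alpha_{\mathcal{Z}}$-measurable generating sequence for $\mathfrak{M}$ — this is where separability of $\mathcal{H}_{\mathcal{Z}}^{\oplus}$ and Kaplansky density are indispensable — and, more delicately, arranging that all the algebraic relations among the $A_{z}^{(n)}$ needed to make each $\mathfrak{M}_{z}$ a genuine von Neumann algebra hold on a \emph{single} conull set, by intersecting the countably many $\mu$-null exceptional sets, exactly as in the proofs of Theorem~\ref{thm decompo repre} and Theorem~\ref{central decompositions copy(1)}. Once the field $\mathfrak{M}_{\mathcal{Z}}$ is in place, Theorem~\ref{lemma direct integral von neumann4} identifies its direct integral, and the equality $\mathfrak{M}_{\mathcal{Z}}^{\oplus}=\mathfrak{M}$ is immediate from $N_{\mathcal{Z}}\subseteq\mathfrak{M}$; everything else is routine commutant manipulation.
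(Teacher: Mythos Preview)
Your proof is correct and follows essentially the same line as the paper's. The only packaging difference is that the paper routes the sufficiency argument through the already-built representation machinery: it produces a norm-separable $C^{\ast}$-subalgebra $\mathcal{X}\subseteq\mathfrak{M}$ with $\mathcal{X}^{\prime\prime}=\mathfrak{M}$, takes $\Pi=\mathrm{id}_{\mathcal{X}}$, applies Corollary~\ref{thm decompo repre2} to decompose $\Pi$ with respect to $\mathfrak{N}$, and then invokes Corollary~\ref{lemma direct integral von neumann6} (which needs precisely the hypothesis $\mathfrak{N}\subseteq[\Pi(\mathcal{X})]^{\prime}\cap[\Pi(\mathcal{X})]^{\prime\prime}$) to identify $\mathfrak{M}=[\pi_{\mathcal{Z}}^{\oplus}(\mathcal{X})]^{\prime\prime}$ with $\int_{\mathcal{Z}}^{\alpha_{\mathcal{Z}}}[\pi_{z}(\mathcal{X})]^{\prime\prime}\,\mu(\mathrm{d}z)$. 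You instead inline this: after fixing coordinates via Theorem~\ref{universality}, you pick a countable strong generating sequence for $\mathfrak{M}$, form the fibre algebras by hand, and appeal directly to Theorem~\ref{lemma direct integral von neumann4}. The two arguments are the same at heart---your generating sequence plays the role of a dense countable subset of $\mathcal{X}$, and Theorem~\ref{lemma direct integral von neumann4} is exactly what underlies Corollary~\ref{lemma direct integral von neumann6}---so neither buys anything the other does not.
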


\begin{proof}
If $\mathfrak{M}$ is decomposable with respect to\ $\mathfrak{N}$ then we
must have $\mathfrak{N}\subseteq \mathfrak{M}^{\prime }\cap \mathfrak{M}%
^{\prime \prime }$. See discussion before Definition \ref{def direct
integral represe copy(1)}. Conversely, any von Neumann algebra $\mathfrak{M}$
on a separable Hilbert space $\mathcal{H}$ is the strong closure of a
norm-separable $C^{\ast }$-algebra $\mathcal{X}$: To see this, note that the
unit closed ball of any von Neumann algebra $\mathfrak{M}$ on $\mathcal{H}$
is compact with respect to the weak-operator topology. Therefore, if the
Hilbert space $\mathcal{H}$ is separable, the weak-operator topology is
metrizable on any ball of $\mathfrak{M}$, which is thus separable in this
topology. In particular, a von Neumann algebra $\mathfrak{M}$ on a separable
Hilbert space is separable with respect to the weak-operator topology. Thus
take any weak-operator-dense countable subset $\mathcal{X}_{0}\subseteq 
\mathfrak{M}$ and let $\mathcal{X}$ be the separable $C^{\ast }$-algebra
generated by $\mathcal{X}_{0}$. Clearly, $\mathcal{X}^{\prime \prime }=%
\mathfrak{M}$. Define the representation $\Pi $ on $\mathcal{X}$ to be the
identity mapping. Let $\pi _{\mathcal{Z}}^{\oplus }$ be the direct integral
decomposition of $\Pi $ with respect to $\mathfrak{N}$ (Corollary \ref{thm
decompo repre2}). The assertion then follows from Corollary \ref{lemma
direct integral von neumann6} as $[\Pi (\mathcal{X})]^{\prime \prime }=%
\mathfrak{M}$. See also Theorem \ref{universality} for the existence of the $%
\ast $-isomorphism $\phi :L^{\infty }(\mathcal{Z},\mu )\rightarrow \mathfrak{%
N}$.
\end{proof}

Hence, in the context of standard spaces, the theory of direct integrals of
fields of von Neumann algebras corresponds to the study of von Neumann
subalgebras of the algebra of decomposable operators whose center contains
the diagonalizable ones.

When $\mathfrak{N}=\mathfrak{M}^{\prime }\cap \mathfrak{M}^{\prime \prime }$%
, one talks about the \emph{central} decomposition of $\mathfrak{M}$,
similar to Definition \ref{Special direct integral representations} (ii.1).
One can also talk about \emph{factor} decompositions of von Neumann
algebras, similar to Definition \ref{Special direct integral representations}
(ii.2).\bigskip

\noindent \textit{Acknowledgments:} This work is supported by CNPq
(308337/2017-4), FAPESP (2017/22340-9), as well as by the Basque Government
through the grant IT641-13 and the BERC 2018-2021 program, and by the
Spanish Ministry of Science, Innovation and Universities: BCAM Severo Ochoa
accreditation SEV-2017-0718, MTM2017-82160-C2-2-P.

\end{document}